\renewcommand{\@algocf@capt@plain}{above} 
\setlist{leftmargin=14pt,topsep=2pt}
\newcommand{\toolName}{Dynamo}%
\newcommand{\store}{{\tt st}}%
\newcommand{\select}{{\tt sel}}%
\newcommand{\selectMath}{\mathrm{\tt sel}}%
\newcommand{\size}{{\tt sz}}
\newcommand{\sizeMath}{\mathrm{\tt sz}}
\newcommand*\inv[1]{\tikz[baseline=(char.base)]{
\node[shape=rounded rectangle,draw,inner sep=2pt] (char) {\scriptsize {\tt #1}};}}
\newcommand*\pred[1]{\raisebox{0.3ex}{\fboxsep=0.55ex\relax\fbox{\scriptsize {\tt #1}}}}
\newcommand*\apred[1]{\raisebox{0.3ex}{{\fboxsep=1pt\relax\fbox{\fboxsep=1.6pt\relax\fbox{\scriptsize {\tt #1}}}}}}
\newcommand{\ourIR}{\ensuremath{\mathtt{LLVM}_d}}
\newcommand{\refines}{\ensuremath{\sqsupseteq}}
\definecolor{myastral}{RGB}{46,116,181}
\definecolor{myolive}{named}{olive}
\definecolor{mygreen}{rgb}{0,0.6,0.2}
\definecolor{mygray}{rgb}{0.5,0.5,0.5}
\definecolor{myred}{rgb}{0.8,0,0.2}
\newcommand{\Tuple}[1]{\ensuremath{\vv{#1}}}
\newcommand{\intervalSym}{\ensuremath{i}}
\newcommand{\asetSym}{\ensuremath{\Sigma}}
\newcommand{\asetType}{\ensuremath{2^{\bv{32}}}}
\newcommand{\ii}[2]{\ensuremath{\intervalSym^{#2}_{#1}}}
\newcommand{\il}[2]{\ensuremath{\asetSym^{#2}_{#1}}}
\newcommand{\PC}[2]{\ensuremath{p^{#2}_{#1}}}
\newcommand{\Vcall}[1]{\ensuremath{\mathtt{fcall}(#1)}}
\newcommand{\VallocA}[1]{\ensuremath{\mathtt{allocBegin}(#1)}}
\newcommand{\VallocB}[1]{\ensuremath{\mathtt{allocEnd}(#1)}}
\newcommand{\Vdealloc}[1]{\ensuremath{\mathtt{dealloc}(#1)}}
\newcommand{\Vret}[1]{\ensuremath{\mathtt{ret}(#1)}}
\newcommand{\VallocASym}{{\tt allocBegin}}
\newcommand{\VallocBSym}{{\tt allocEnd}}
\definecolor{somered}{HTML}{E3242B}
\definecolor{somegrey}{HTML}{3F3F4E}
\definecolor{someviolet}{HTML}{B026FF}
\definecolor{somelightgrey}{HTML}{E2E1E4}
\definecolor{somedarkgrey}{HTML}{C9C9CC}
\definecolor{brightyellow}{HTML}{FEFEF0}
\definecolor{brightgreen}{HTML}{DFF0D8}
\newcommand{\memType}{\ensuremath{\bv{32} \rightarrow \bv{8}}}
\newcommand{\roundedbox}[2]{%
  \tikz[baseline=(a.base)]{
    \node[rounded corners, inner sep=1pt, fill=#1, font=\normalfont] (a) {#2};
  }%
}
\newcommand{\Shaded}{\roundedbox{somelightgrey}{shaded}}
\newcommand{\Ihalt}[1]{\ensuremath{\mathtt{halt}(#1)}}
\newcommand{\IhaltN}{\ensuremath{\text{{\texttt{\textcolor{somegrey}{\textbf{halt}($\emptyset$)}}}}}}
\newcommand{\IhaltW}{\ensuremath{\text{{\texttt{\textcolor{somered}{\textbf{halt}($\mathscr{W}$)}}}}}}
\newcommand{\IhaltU}{\ensuremath{\text{\texttt{\textcolor{someviolet}{\textbf{halt}($\mathscr{U}$)}}}}}
\newcommand{\Iread}[1]{\ensuremath{\mathtt{rd}(#1)}}
\newcommand{\Iwrite}[1]{\ensuremath{\mathtt{wr}(#1)}}
\newcommand{\HLBox}[1]{\roundedbox{somelightgrey}{\ensuremath{#1}}}
\newcommand{\IreadB}[1]{\ensuremath{\texttt{\textbf{rd}}(#1)}}
\newcommand{\IwriteB}[1]{\roundedbox{somedarkgrey}{\ensuremath{\texttt{\textbf{wr}}(#1)}}}
\newcommand{\Ichoose}[1]{\ensuremath{\theta(#1)}}
\newcommand{\IchooseM}{\Ichoose{\memType}}
\newcommand{\initAddrSets}[1]{\ensuremath{\mathtt{addrSets}_{#1}()}}
\newcommand{\roMem}[3]{\ensuremath{\mathtt{ROM}_{#1}^{#2}(#3)}}
\newcommand{\rdAcc}[1]{\ensuremath{\il{#1}{\mathtt{rd}}}}
\newcommand{\wrAcc}[1]{\ensuremath{\il{#1}{\mathtt{wr}}}}
\newcommand{\Iio}[2]{\ensuremath{\mathtt{io}(\World{P},\mathtt{#2},#1)}}
\newcommand{\HighlightMath}[1]{%
  \tikz[baseline=(a.base)]{
    \node[draw, dotted, rounded corners, inner sep=1pt, fill=brightgreen, font=\normalfont] (a) {\ensuremath{#1}};
  }%
}
\newcommand{\prjSym}{\ensuremath{\pi}}
\newcommand{\prjM}[2]{\ensuremath{\prjSym_{#1}(#2)}}
\newcommand{\updM}[3]{\ensuremath{\mathtt{upd}_{#1}(#2,\allowbreak #3)}}
\newcommand{\prjMEq}[3]{\ensuremath{#2=_{#1}#3}}
\newcommand{\bv}[1]{\ensuremath{\mathtt{i}_{#1}}}
\newcommand\llangle{\langle\!|}
\newcommand\rrangle{\reflectbox{\ensuremath{\llangle}}}
\newcommand{\Itempl}[1]{\ensuremath{\llangle #1 \rrangle}}
\newcommand{\xmark}{\ding{55}}%
\newcommand{\Trace}[1]{\ensuremath{T_{#1}}}
\newcommand{\TraceP}[1]{\ensuremath{T^{\prime}_{#1}}}
\newcommand{\World}[1]{\ensuremath{\Omega_{#1}}}
\newcommand{\LBi}[1]{\ensuremath{\mathtt{lb}(#1)}}
\newcommand{\UBi}[1]{\ensuremath{\mathtt{ub}(#1)}}
\newcommand{\LBr}[1]{\ensuremath{\mathtt{lb}(#1)}}
\newcommand{\UBr}[1]{\ensuremath{\mathtt{ub}(#1)}}
\newcommand{\LB}[1]{\ensuremath{\mathtt{lb}.{#1}}}
\newcommand{\UB}[1]{\ensuremath{\mathtt{ub}.{#1}}}
\newcommand{\LSz}[1]{\ensuremath{\mathtt{lstSz}.{#1}}}
\newcommand{\Sz}[1]{\ensuremath{\mathtt{sz}.{#1}}}
\newcommand{\Empty}[1]{\ensuremath{\mathtt{em}.{#1}}}
\newcommand{\VretVal}[1]{\ensuremath{\triangle_{#1}(\mathtt{eax},\mathtt{edx})}}
\newcommand{\VgetVal}[2]{\ensuremath{\triangledown_{#1}(#2)}}
\newcommand{\BasedOnSym}{\ensuremath{\beta}}
\newcommand{\BasedOn}[1]{\ensuremath{\BasedOnSym({#1})}}
\newcommand{\BasedOnM}[1]{\ensuremath{\BasedOnSym_{M}({#1})}}
\newcommand{\BasedOnMS}[1]{\ensuremath{\BasedOnSym_{M}^*({#1})}}
\newcommand{\CalleeRegions}{\ensuremath{\beta^*}}
\newcommand{\BasedOnOp}{\ensuremath{\BasedOnSym{}^{\mathtt{op}}}}
\newcommand{\Assign}{\ensuremath{\coloneqq}}
\newcommand{\Overlap}{\ensuremath{\mathtt{ov}}}
\newcommand{\Memalloc}{\ensuremath{\mathcal{L}}}
\newcommand{\MemallocP}[1]{\ensuremath{\Memalloc_{#1}}}
\newcommand{\TRule}[1]{\ensuremath{(\textsc{#1})}}
\newcommand{\ghost}[1]{\tikz[baseline=(char.base)]{%
\node[draw,rounded corners=2pt,inner sep=2pt,densely dotted] (char) {\ensuremath{\scriptstyle #1}};
}}
\newcommand{\g}{\ensuremath{g}}
\newcommand{\Y}{\ensuremath{Y}}
\newcommand{\y}{\ensuremath{y}}
\newcommand{\yv}{\ensuremath{\mathtt{vrdc}}}
\newcommand{\z}{\ensuremath{z}}
\newcommand{\Z}{\ensuremath{Z}}
\newcommand{\zl}{\ensuremath{zl}}
\newcommand{\za}{\ensuremath{za}}
\newcommand{\Zl}{\ensuremath{Z_l}}
\newcommand{\Za}{\ensuremath{Z_a}}
\newcommand{\ilzs}[2]{\ensuremath{\il{#1}{#2} \mathord{\mid}^{s}}}
\newcommand{\ilzv}[2]{\ensuremath{\il{#1}{#2} \mathord{\mid}^{v}}}
\newcommand{\ilZv}[1]{\ensuremath{\il{#1}{\Zl{}} \mathord{\mid}^{v}}}
\newcommand{\spE}{\ensuremath{\mathtt{stk}_e}}
\newcommand{\stkE}{\ensuremath{\mathtt{cs}_e}}
\newcommand{\spV}[1]{\ensuremath{\mathtt{sp}.{#1}}}
\newcommand{\updMCSym}{\ensuremath{\mathtt{cwrite}}}
\newcommand{\EmitX}[1]{#1}
\newcommand{\Emit}[1]{\adjustbox{bgcolor=somelightgrey}{\ensuremath{#1}}}
\newcommand\hcancel[2][black]{\setbox0=\hbox{$#2$}%
\rlap{\raisebox{.45\ht0}{\textcolor{#1}{\rule{\wd0}{1pt}}}}#2} 
\newcommand{\heap}[0]{\ensuremath{hp}}
\newcommand{\stk}[0]{\ensuremath{stk}}
\newcommand{\free}[0]{\ensuremath{\mathtt{free}}}
\newcommand{\progA}{\ensuremath{\mathbb{A}}}
\newcommand{\progC}{\ensuremath{\mathbb{C}}}
\newcommand{\Ak}{\ensuremath{A}}
\newcommand{\AkP}{\ensuremath{A'}}
\newcommand{\dAk}{\ensuremath{\dot{A}}}
\newcommand{\ddAk}{\ensuremath{\ddot{A}}}
\newcommand{\ddAkP}{\ensuremath{\ddot{A'}}}
\newcommand{\Ck}{\ensuremath{C}}
\newcommand{\CkP}{\ensuremath{C'}}
\newcommand{\Xk}{\ensuremath{X}}
\newcommand{\XkP}{\ensuremath{X'}}
\newcommand{\updMA}[3]{\ensuremath{\updMCSym(#1, \lambda x {.} #2, #3)}}
\newcommand{\Inv}[1]{\ensuremath{\phi_{#1}}}
\newcommand{\IXk}[0]{\ensuremath{\Phi_{\Xk}}}
\newcommand{\IXkP}[0]{\ensuremath{\Phi_{\XkP}}}
\newcommand{\EW}{\ensuremath{\mathscr{W}}}
\newcommand{\EU}{\ensuremath{\mathscr{U}}}
\newcommand{\WP}[1]{\ensuremath{\EW_{#1}}}
\newcommand{\UAk}{\ensuremath{\mathscr{U}_{\ddAk}}}
\newcommand{\UAkP}{\ensuremath{\mathscr{U}_{\ddAkP}}}
\newcommand{\UCk}{\ensuremath{\mathscr{U}_{\Ck}}}
\newcommand{\WAk}{\ensuremath{\mathscr{W}_{\ddAk}}}
\newcommand{\WCk}{\ensuremath{\mathscr{W}_{\Ck}}}
\newcommand{\PPath}[1]{\ensuremath{\xi_{#1}}}
\newcommand{\PPathN}[2]{\ensuremath{\xi_{#1}^{#2}}}
\newcommand{\PPathsN}[2]{\ensuremath{\vv{\xi}^{#2}_{#1}}}
\newcommand{\APath}{\PPath{\ddAk}}
\newcommand{\APathO}{\PPathN{\ddAk}{o}}
\newcommand{\APathN}[1]{\PPathN{\ddAk}{#1}}
\newcommand{\APathP}[0]{\PPathN{\ddAk}{\prime}}
\newcommand{\CPath}{\ensuremath{\xi_{\Ck}}}
\newcommand{\CPathN}[1]{\ensuremath{\xi^{#1}_{\Ck}}}
\newcommand{\CPathP}[0]{\PPathN{\Ck}{\prime}}
\newcommand{\CPPathN}[1]{\ensuremath{\xi^{#1}_{\CkP}}}
\newcommand{\APathsP}[0]{\PPathsN{\ddAk}{\prime}}
\newcommand{\APathsN}[1]{\PPathsN{\ddAk}{#1}}
\newcommand{\CPathsP}[0]{\PPathsN{\Ck}{\prime}}
\newcommand{\CPathsN}[1]{\PPathsN{\Ck}{#1}}
\newcommand{\XEdgeN}[1]{\ensuremath{e^{#1}_{\Xk}}}
\newcommand{\XPEdgeN}[1]{\ensuremath{e^{#1}_{\XkP}}}
\newcommand{\PathT}[2]{\ensuremath{#1 \twoheadrightarrow{} #2}}
\newcommand{\XEdge}{\ensuremath{e_{\Xk}}}
\newcommand{\NP}[1]{\ensuremath{\mathcal{N}_{#1}}}
\newcommand{\NNP}[1]{\ensuremath{\mathcal{N}^{\bcancel{UW}}_{#1}}}
\newcommand{\NCk}{\NP{\Ck}}
\newcommand{\NXk}{\NP{\Xk}}
\newcommand{\NXkP}{\NP{\XkP}}
\newcommand{\EP}[1]{\ensuremath{\mathcal{E}_{#1}}}
\newcommand{\EXk}{\EP{\Xk}}
\newcommand{\EXkP}{\EP{\XkP}}
\newcommand{\DXk}{\ensuremath{\mathcal{D}_{\Xk}}}
\newcommand{\DXkP}{\ensuremath{\mathcal{D}_{\XkP}}}
\newcommand{\XEdgeT}[3]{\ensuremath{#1 \raisebox{-0.5ex}{$\xrightarrow[]{\text{$#2$}}$} #3}}
\newcommand{\Pathset}[1]{\ensuremath{{\langle\xi\rangle}_{#1}}}
\newcommand{\proofObligation}{O}
\newcommand{\intervalContainedInAddrSet}[3]{\ensuremath{\mathtt{intrvlInSet}(#1, #2, #3)}}
\newcommand{\intervalContainedInAddrSetAndAligned}[4]{\ensuremath{\mathtt{intrvlInSet}_{#4}(#1,\allowbreak #2, #3)}}
\newcommand{\espmin}{\ensuremath{SP_{min}}}
\newcommand{\zlunion}{\ensuremath{Zlv_{U}}}
\newcommand{\NS}{\ensuremath{B}}
\newcommand{\STACK}{\ensuremath{S}}
\newcommand{\HP}{\ensuremath{HP(\APath)}}
\newcommand{\CL}{\ensuremath{CL(\APath)}}
\newcommand{\CS}{\ensuremath{CS(\APath)}}
\newcommand{\f}{\ensuremath{f}}
\newcommand{\F}{\ensuremath{F}}
\newcommand{\Gro}{\ensuremath{G_{r}}}
\newcommand{\Grw}{\ensuremath{G_{w}}}
\newcommand{\Fro}{\ensuremath{\F_{r}}}
\newcommand{\Frw}{\ensuremath{\F_{w}}}
\newcommand{\pthcover}[4]{\ensuremath{\{#1{1},#1{2},\ldots,#1{m}\} \langle{}#3,#4\rangle{}}}
\newcommand{\hoareTriple}[3]{\ensuremath{\{#1\}#2\{#3\}}}
\newcommand{\Load}[1]{Load$_{#1}$}
\newcommand{\Store}[1]{Store$_{#1}$}
\newcommand{\Entry}[1]{Entry$_{#1}$}
\newcommand{\Call}[1]{Call$_{#1}$}
\newcommand{\Ret}[1]{Ret$_{#1}$}
\newcommand{\pathDet}[3]{\ensuremath{[#1]_{#2}^{#3}}}
\newcommand{\pathcond}[1]{\ensuremath{pathcond(#1)}}
\newcommand{\CPathD}{\pathDet{\CPath{}}{\DXk}{e_{\Xk}}}
\newcommand{\CPathDN}[1]{\ensuremath{\pathDet{\CPathN{#1}}{\DXk}{e_{\Xk}^{#1}}}}
\newcommand{\steq}[2]{#1=_{st}#2}
\newcommand{\stprefixSym}{\leq_{st}}
\newcommand{\stprefix}[2]{#1\stprefixSym{}#2}
\newcommand{\eT}[1]{e(#1)}
\newcommand{\neT}[1]{\tilde{e}(#1)}
\newcommand{\Rall}{\ensuremath{R}}
\newcommand{\R}{\ensuremath{\vv{r}}}
\newcommand{\RN}[1]{\ensuremath{\vv{r_{#1}}}}
\newcommand{\execT}[3]{\ensuremath{#1 \mathrel{\downarrow_{#2}} #3}}
\lstdefinelanguage{IR}[]{C}{
    morekeywords={call,alloc,dealloc,ret}
}
\begin{document}

\title{Modeling Dynamic (De)Allocations of Local Memory for Translation Validation}

\author{Abhishek Rose}
\orcid{0009-0002-2222-8906}
\affiliation{%
  \institution{Indian Institute of Technology Delhi}
  \city{New Delhi}
  \country{India}
}
\email{abhishek.rose@cse.iitd.ac.in}

\author{Sorav Bansal}
\orcid{0009-0004-2006-9635}
\affiliation{%
  \institution{Indian Institute of Technology Delhi}
  \city{New Delhi}
  \country{India}
}
\email{sbansal@iitd.ac.in}

\begin{abstract}
End-to-End Translation
Validation is the problem of verifying the executable code generated
by a compiler against the
corresponding input source code for a single compilation.
This becomes particularly hard in the presence of dynamically-allocated
local memory where addresses of local memory
may be observed by the program. In
the context of validating the
translation of a C procedure to executable code, a validator needs to tackle
constant-length local arrays,
address-taken local variables, address-taken formal parameters,
variable-length local arrays,
procedure-call
arguments (including variadic arguments),
and the {\tt alloca()} operator.  We provide an execution
model, a definition of refinement, and an algorithm
to soundly convert a refinement check
into first-order logic queries that an off-the-shelf SMT solver can handle
efficiently.
In our experiments, we perform
blackbox translation validation of C procedures (with up to 100+ SLOC), involving these
local memory allocation constructs, against their corresponding
assembly implementations (with up to 200+ instructions)
generated by an optimizing compiler with complex loop and vectorizing transformations.
\end{abstract}


\maketitle

\section{Introduction}\label{sec:intro}

Compiler bugs can be catastrophic, especially for
safety-critical
applications.
End-to-End Translation Validation (TV for short) checks a
single compilation to ascertain
if the machine executable code generated
by a compiler agrees with the input source program.
In our work, we validate translations from {\em unoptimized} IR of a C program
to {\em optimized} executable (or assembly) code, which forms an overwhelming
majority of the
complexity in an end-to-end compilation pipeline.
In this setting,
the presence of dynamic allocations
and deallocations due to local variables and procedure-call
arguments in the IR program
presents a special challenge --- in these cases,
the identification and modeling of relations between a local
variable
(or a procedure-call argument)
in IR and its stack address in assembly is often
required to complete the validation proof.

Unlike IR-to-assembly, modeling
dynamic local memory allocations is significantly simpler for
IR-to-IR TV
\cite{tvi,alive2,tristan_tv_eqsat11,stepp_eqsat_llvm11,llvm_verify_zhao12,ssa_verify_zhao13,aliveFP,namjoshi13,llvm_tv21}.
For example,
(pseudo)register-allocation of local variables can
be tackled by
identifying
relational invariants
that equate the value contained in a local variable's memory region (in the original
program) with the value in the corresponding pseudo-register (in the transformed program)
\cite{crellvm18}.
If the address of a
local variable is observable by the C program (e.g., for an address-taken local variable),
we need to additionally relate the variable addresses across both programs.
These address correlations
can be achieved by first correlating the corresponding allocation statements
in both programs (e.g., through their names) and then assuming
that their return values are equal.
Provenance-based syntactic
pointer analyses,
that
show separation between distinct
variables \cite{andersen94programanalysis,steensgaard},
thus suffice for translation validation across
IR-to-IR transformations.

An IR-to-assembly transformation
involves the lowering of a memory allocation (deallocation) IR instruction
to a stackpointer decrement (increment) instruction in assembly.
Further, the stack space in assembly is shared by multiple local variables, procedure-call
arguments,
and by the potential intermediate values generated by the compiler, e.g., pseudo-register spills.
Provenance-based pointer analyses
are thus inadequate for showing separation in assembly.

Prior
work on IR-to-assembly and assembly-to-assembly TV
\cite{tv_oskernel,ddec,semalign,oopsla20} assumes that local
variables are either absent or their addresses
are not observed in the program and so they are removed through
(pseudo)register-allocation. Similarly, these prior
works assume that variadic parameters
(and other cases of address-taken parameters)
are absent in the program.

Prior work on certified compilation, embodied in CompCert \cite{compcert},
validates its own transformation passes from IR to assembly, and supports
both address-taken local variables and variadic parameters.
However, CompCert sidesteps the task of having to model dynamic
allocations by ensuring that the
generated assembly code {\em preallocates} the space for
all local variables and
procedure-call arguments at the beginning of a procedure's
body.
Because preallocation is not possible
if the size of an
allocation is not known at compile time,
CompCert does not support variable-sized local variables or {\tt alloca()}.
Moreover, preallocation is prone to
stack space wastage.
In contrast to a certified compiler,
TV needs to validate
the compilation of a third-party compiler, and thus
needs to support an arbitrary (potentially dynamic) allocation strategy.

\begin{figure}
\begin{minipage}[t]{0.4\textwidth} 
\begin{subfigure}[t]{\textwidth}
\begin{myexamplesmall}
~{\tiny \textcolor{mygray}{C0:}}~ int fib(int n, int m) {
~{\tiny \textcolor{mygray}{C1:}}~   int v[n+2];
~{\tiny \textcolor{mygray}{C2:}}~   v[0]=0; v[1]=1;
~{\tiny \textcolor{mygray}{C3:}}~   for(int i=2; i<=m; i++)
~{\tiny \textcolor{mygray}{C4:}}~     v[i]=v[i-1]+v[i-2];
~{\tiny \textcolor{mygray}{C5:}}~   printf("fib(
~{\tiny \textcolor{mygray}{C6:}}~   return v[m];
~{\tiny \textcolor{mygray}{C7:}}~ }
\end{myexamplesmall}
\caption{\label{fig:example1c}C Program with VLA.}
\end{subfigure}
\begin{subfigure}[t]{\textwidth}
\begin{myexamplesmallir}
~{\tiny \textcolor{mygray}{I0: }}~ int fib(int* n, int* m):
~{\tiny \textcolor{mygray}{I1: }}~   i=~$p_{\tt I1}$~=alloc 1,int,4;
~{\tiny \textcolor{mygray}{I2: }}~   v=~$p_{\tt I2}$~=alloc *n+2,int,4;
~{\tiny \textcolor{mygray}{I3: }}~   v[0]=0; v[1]=1; *i=2;
~{\tiny \textcolor{mygray}{I4: }}~   if(*i ~$>_s$~ *m) goto I7;
~{\tiny \textcolor{mygray}{I5: }}~     v[*i]=v[*i-1]+v[*i-2];
~{\tiny \textcolor{mygray}{I6: }}~     (*i)++; goto I4;
~{\tiny \textcolor{mygray}{I7: }}~   ~$p_{\tt I7}$~=alloc 1,char*,4;
~{\tiny \textcolor{mygray}{I8: }}~   ~$p_{\tt I8}$~=alloc 1,struct{int; int;},4;
~{\tiny \textcolor{mygray}{I9: }}~   *~$p_{\tt I7}$~=__S__; *~$p_{\tt I8}$~=*m; *~($p_{\tt I8}+4)$~=v[*m];
~{\tiny \textcolor{mygray}{I10:}}~   t=call int printf(~$p_{\tt I7}$~, ~$p_{\tt I8}$~);
~{\tiny \textcolor{mygray}{I11:}}~   dealloc I8;
~{\tiny \textcolor{mygray}{I12:}}~   dealloc I7;
~{\tiny \textcolor{mygray}{I13:}}~   r=v[*m];
~{\tiny \textcolor{mygray}{I14:}}~   dealloc I2;
~{\tiny \textcolor{mygray}{I15:}}~   dealloc I1;
~{\tiny \textcolor{mygray}{I16:}}~   ret r;
\end{myexamplesmallir}
\caption{\label{fig:example1i}(Abstracted) IR.}
\end{subfigure}
\end{minipage}
\hfill
\begin{minipage}[t]{0.55\textwidth}
\begin{subfigure}[t]{\textwidth}
\begin{myexamplesmallasm}
~{\tiny \textcolor{mygray}{A0: }}~ fib:
~{\tiny \textcolor{mygray}{A1: }}~   push ebp~;~ ebp = esp~;~
~{\tiny \textcolor{mygray}{A2: }}~   push {edi, esi, ebx}~;~ esp = esp-12~;~
~{\tiny \textcolor{mygray}{A3: }}~   eax = mem~$_4$~[ebp+8]~;~ ebx = mem~$_4$~[ebp+12]~;~
~{\tiny \textcolor{mygray}{A4: }}~   esp = esp-(0xFFFFFFF0 & (4*(eax+2)+15))~;~
~{\tiny \textbf{\textcolor{red}{A4.1: }}}~  ~\textbf{\textcolor{red}{v$_{\tt I1}$ = alloc$_v$ 4,4,I1; }}~
~{\tiny \textbf{\textcolor{red}{A4.2: }}}~  ~\textbf{\textcolor{red}{alloc$_s$ esp,4*(eax+2),4,I2; }}~
~{\tiny \textcolor{mygray}{A5: }}~   esi = ((esp+3)/4)*4~;~
~{\tiny \textcolor{mygray}{A6: }}~   mem~$_4$~[esi] = 0~;~ mem~$_4$~[esi+4] = 1~;~
~{\tiny \textcolor{mygray}{A7: }}~   if(ebx ~$\le_s$~ 1) jmp A12~;~
~{\tiny \textcolor{mygray}{A8: }}~   edi = 0~;~ edx = 1~;~ eax = 2~;~
~{\tiny \textcolor{mygray}{A9:}}~     ecx = edx+edi~;~ edi = edx~;~ edx = ecx~;~
~{\tiny \textcolor{mygray}{A10:}}~    mem~$_4$~[esi+4*eax] = ecx~;~ eax = eax+1~;~
~{\tiny \textcolor{mygray}{A11:}}~    if(eax ~$\le_s$~ ebx) jmp A9~;~
~{\tiny \textcolor{mygray}{A12:}}~   edi = mem~$_4$~[esi+4*ebx]~;~  esp = esp-4~;~
~{\tiny \textcolor{mygray}{A13:}}~   push {edi, ebx, __S__}~;~ //__S__ is the ptr to format string
~{\tiny \textbf{\textcolor{red}{A13.1:}}}~  ~\textbf{\textcolor{red}{alloc$_s$ esp,\ \ 4,4,I7; }}~
~{\tiny \textbf{\textcolor{red}{A13.2:}}}~  ~\textbf{\textcolor{red}{alloc$_s$ esp+4,8,4,I8; }}~
~{\tiny \textcolor{mygray}{A14:}}~   call ~\textbf{\textcolor{teal}{int}}~ printf~\textbf{\textcolor{teal}{(<char*> esp, <struct\{int; int;\}> esp+4)}}~
           ~\textcolor{teal}{$\pmb{G \cup \{ \heap, cl, \mathtt{I7}, \mathtt{I8} \}}$};~
~{\tiny \textbf{\textcolor{red}{A14.1:}}}~  ~\textbf{\textcolor{red}{dealloc$_s$ I8;}}~
~{\tiny \textbf{\textcolor{red}{A14.2:}}}~  ~\textbf{\textcolor{red}{dealloc$_s$ I7;}}~
~{\tiny \textcolor{mygray}{A15:}}~   eax = edi~;~
~{\tiny \textbf{\textcolor{red}{A15.1:}}}~  ~\textbf{\textcolor{red}{dealloc$_s$ I2;}}~
~{\tiny \textbf{\textcolor{red}{A15.2:}}}~  ~\textbf{\textcolor{red}{dealloc$_v$ I1;}}~
~{\tiny \textcolor{mygray}{A16:}}~   esp = ebp-12~;~ pop {ebx, esi, edi, ebp}~;~
~{\tiny \textcolor{mygray}{A17:}}~   ret~;~
\end{myexamplesmallasm}
\caption{\label{fig:example1a}(Abstracted) 32-bit x86 Assembly Code.}
\end{subfigure}
\end{minipage}
\caption{\label{fig:example1}Example program with VLA and its lowerings to IR and assembly.
Subscripts $_s$ and $_u$ denote signed and unsigned comparison respectively.
Bold font (parts of) instructions are added by our algorithm.
}
\end{figure}

{\em Example}: Consider a C and a 32-bit x86 assembly program
in \cref{fig:example1}.
The {\tt fib} procedure in \cref{fig:example1c}
accepts two integers {\tt n} and {\tt m},
allocates a variable-length array (VLA) {\tt v}
of {\tt n+2} elements,
computes the first {\tt m+1} fibonacci numbers in {\tt v},
calls {\tt printf()},
and returns the $m^{th}$ fibonacci number. Notice that for an execution
free of {\em Undefined Behaviour} (UB),
both {\tt n} and {\tt m} must be non-negative and {\tt m} must be less than {\tt (n+2)}.
Note that the memory for local variables ({\tt v} and {\tt i}) and
procedure-call arguments (for the call to {\tt printf}) is
allocated dynamically through the {\tt alloc} instruction
in the IR program (\cref{fig:example1i}). In the assembly
program (\cref{fig:example1a}), memory is allocated through
instructions that manipulate the stackpointer
register {\tt esp}.

If the IR program uses an address, say $\alpha$, of a local
variable (e.g., $\alpha{}\in{}\{p_{\tt I1},p_{\tt I2}\}$)
or a procedure-call
argument (e.g., $\alpha{}\in{}\{p_{\tt I7},p_{\tt I8}\}$)
in its computation (e.g.,
for pointer arithmetic at lines {\tt I3} and {\tt I5}, or for accessing
the variadic argument at $p_{\tt I8}$
within {\tt printf}), validation requires
a relation between $\alpha$ and its corresponding
stack address
in assembly (e.g., $p_{\tt I7}=\mathtt{esp}$ at line {\tt A14}).

{\em Contributions}: We formalize
IR and assembly execution semantics in the presence
of dynamically (de)allocated memory for local variables
and procedure-call arguments, define
a notion of correct translation,
and provide an algorithm that converts the correctness check
to first-order logic queries over bitvectors, arrays, and uninterpreted
functions.
Almost all production compilers (e.g., GCC) generate
assembly code to dynamically allocate
stack space for procedure-call arguments at the callsite, e.g.,
in \cref{fig:example1a}, the arguments to {\tt printf}
are allocated at line {\tt A13}.
Ours
is perhaps the first effort to enable validation of this common
allocation strategy.
Further, our work enables translation validation for programs with
dynamically-allocated fixed-length and variable-length local variables
for a wide set of allocation strategies used
by a compiler including stack merging,
stack reallocation (if the order of allocations
is preserved), and intermittent register allocation.

\section{Execution Semantics and Notion of Correct Translation}
\label{sec:definitions}
We are interested in showing that
an x86 assembly program \progA{} is a correct translation of
the unoptimized IR representation of
a C program \progC{}.
Prior TV efforts identify a lockstep correlation between
(potentially unrolled) iterations of loops in the two programs
to show equivalence \cite{semalign}.
These correlations can be represented through a {\em product program}
that executes \progC{} and
\progA{} in lockstep, using a careful choice of program path
correlations, to keep the machine states of both programs related at the
ends of correlated paths \cite{covac,oopsla20}.

Our TV algorithm additionally
attempts to identify a lockstep correlation between the dynamic
(de)allocation events and procedure-call events performed in both
programs, i.e.,
we require the order and values of these
execution events to be identical in both programs.
To identify a lockstep correlation, our algorithm
annotates \progA{}
with (de)allocation instructions
and procedure-call arguments.
Our key insight is to define
a {\em refinement relation} between \progC{}
and \progA{} through the existence
of an annotation in \progA{}.
We also generalize the definition of a product program so it can
be used to witness
refinement
in the presence of non-determinism
due to addresses of dynamically-allocated local memory, UB, and stack overflow.

{\em Overview through example}:
In \progC{}, an {\tt alloc}
instruction
returns a non-deterministic
address of the newly allocated region with non-deterministic
contents, e.g.,
in \cref{fig:example1i},
the address ($p_{\tt I2}$)
and initial contents of VLA {\tt v} allocated at {\tt I2} are
non-deterministic.
In \cref{fig:example1a},
our algorithm annotates an {\tt alloc$_s$} instruction at {\tt A4.2}
to correlate in lockstep with {\tt I2},
so that $p_{\tt I2}$'s determinized value is identified through its
first operand ({\tt esp}).
An {\tt alloc$_s$} instruction
allocates a contiguous address interval from
the stack, starting at {\tt esp}
in this case,
to a local variable.
The second ({\tt 4*(eax+2)}), third ({\tt 4}), and fourth
({\tt I2}) operands of {\tt alloc$_s$} specify the allocation size in bytes,
required alignment, and
the PC of the correlated allocation instruction in \progC{} (which also
identifies the local variable) respectively. The determinized
values of the initial contents of VLA {\tt v} at {\tt I2}
are identified to be equal to
the contents of the stack region {\tt [esp,esp+4*(eax+2)-1]} at {\tt A4.1}.
A corresponding {\tt dealloc$_s$} instruction, that
correlates in lockstep with {\tt I14}, is
annotated at {\tt A15.1} to free
the memory allocated by {\tt A4.2} (both have operand {\tt I2})
and return it to stack.

A procedure call appears as an x86 {\tt call} instruction
and we annotate the actual
arguments as its operands in \progA{}.
In \cref{fig:example1a},
the two operands ({\tt esp} and {\tt esp+4}) annotated at {\tt A14}
are the determinized
values of $p_{\tt I7}$ and $p_{\tt I8}$, as obtained
through x86 calling conventions. The last annotation
at {\tt A14}
is the set of memory regions (e.g.,
$G$, \heap{}, $cl$, \ldots{}, as described in \cref{sec:memregions})
observable by {\tt printf} in \progA{} ---
this is equal to the set of memory regions observable by
{\tt printf} in \progC{}, as
obtained
through an over-approximate points-to analysis.
Annotations of {\tt alloc$_s$} at
{\tt A13.\{1,2\}} and {\tt dealloc$_s$}
at {\tt A14.\{1,2\}}
identify the
memory regions occupied by {\tt printf}'s parameters
during {\tt printf}'s execution.

Consider the local variable {\tt i}, allocated at {\tt I1}, with address
$p_{\tt I1}$ in \cref{fig:example1i}. Because {\tt i}'s address is never taken
in the source program, a correlation of $p_{\tt I1}$
with its determinized value in \progA{}'s stack is not necessarily required.
Further, the compiler
may register-allocate {\tt i} in which case no
stack address exists for {\tt i}, e.g., {\tt i}
lives in {\tt eax} at {\tt A8-A11}
in \cref{fig:example1a}.
The
{\tt alloc$_v$}
instruction annotated at {\tt A4.1}
performs a ``virtual allocation'' for variable {\tt i} in lockstep with
{\tt I1}.
The first ({\tt 4}), second ({\tt 4}), and third ({\tt I1})
operands of {\tt alloc$_v$} indicate the allocation size, required alignment,
and the PC of the correlated allocation in \progC{}
respectively. The corresponding {\tt dealloc$_v$} instruction,
annotated at {\tt A15.2}, correlates in lockstep with
{\tt I15}.
The address and initial contents of the memory allocated by {\tt alloc$_v$}
are chosen non-deterministically in \progA{}, and are assumed to be equal to the
address and initial contents of memory allocated by a correlated {\tt alloc}
in \progC{}, e.g., $\mathrm{v}_{\tt I1}=p_{\tt I1}$ at {\tt A4.1}.
A ``virtually-allocated region'' is never used by \progA{}.
We introduce the {\tt (de)alloc$_{s,v}$} instructions formally in \cref{sec:refnDefn}.

Consider the memory
access {\tt v[*i]} at {\tt I5} in \cref{fig:example1i},
and assume we
identify a lockstep correlation of this
memory access
with the assembly program's access {\tt mem$_4$[esi+4*eax]}
at {\tt A10} in \cref{fig:example1a}, with value
relations {\tt esi}$=${\tt v} and {\tt eax}$=${\tt *i}. We
need to cater to the possibility where
{\tt *i$>_s$*n+2} (equivalently, {\tt eax} $>_s$ {\tt mem$_4$[ebp+12]+2}),
which would trigger UB in \progC{}, and
may go out of variable bounds in stack in assembly.
Our product program encodes the necessary
UB semantics that
allow anything to happen in assembly (including out of bound stack access)
if UB is
triggered in \progC{}.

Finally, consider the stackpointer decrement instruction at {\tt A4}
in \cref{fig:example1a}. If {\tt eax} (which corresponds to {\tt *n})
is too large, this instruction at {\tt A4} may potentially overflow
the stack space.  Our product program encodes the assumption that an
assembly program will have the necessary stack space required for execution,
which is necessary to
be able to validate a translation from IR to assembly.

Thus, we are interested in identifying {\em legal} annotations of {\tt (de)alloc$_{s,v}$}
instructions and operands of procedure-call instructions in \progA{}, such that the execution
behaviours of \progA{}
can be shown to refine the execution behaviours of \progC{}, assuming \progA{}
has the required stack space for execution.
We show refinement separately
for each procedure \Ck{} in \progC{} and its corresponding implementation
\Ak{} in \progA{}.
Thereafter, a coinductive argument
shows refinement
for full
programs \progC{} and \progA{} starting at the {\tt main()} procedure.
We do not support inter-procedural transformations.

{\em Paper organization}: \Cref{sec:IRandAsm,sec:tgraph,sec:trans}
describe a procedure's execution
semantics for both IR and assembly representations.
Refinement, through annotations, is defined in \cref{sec:refnDefn}.
\Cref{sec:product}
defines a product program and its associated requirements such that
refinement can be witnessed, and \cref{sec:algo} provides an
algorithm to automatically construct such a product program.

\subsection{Intermediate and Assembly Representations}
\label{sec:IRandAsm}
\subsubsection{IR}
The unoptimized IR used to represent \progC{}
is mostly a subset of LLVM --- it supports all the primitive
types (integer, float, code labels) and the derived
types (pointer, array, struct, procedure) of LLVM.
Being unoptimized, our IR does not need to
support LLVM's {\tt undef} and {\tt poison} values, it instead
treats all error conditions as UB.
Syntactic conversion of C to LLVM IR entails the usual conversion
of types/operators.
A global variable name \g{} or a parameter name \y{}
appearing in a C procedure body is translated to the variable's start
address in IR, denoted $\ghost{\LB{\g}}$ and $\ghost{\LB{\y}}$
respectively\footnote{As we will also
see later, \ghost{\LB{v}} denotes the {\em lower bound}
of the memory addresses occupied by variable with name $v$.}.
A local variable declaration or a call to C's {\tt alloca()} operator
is converted to LLVM's {\tt alloca} instruction, and to distinguish
the two, we
henceforth refer to the latter as the
``{\tt alloc}'' instruction.
Unlike LLVM, our IR also supports
a {\tt dealloc} instruction that deallocates a variable at the
end of its scope --- we use LLVM's {\tt stack\{save,restore\}} intrinsics
(that maintain equivalent scope information for a different purpose) to
introduce explicit {\tt dealloc} instructions in our IR.
Henceforth, we refer to our IR as \ourIR{} (for LLVM + {\tt dealloc}).

We discuss our logical model in the context of compilation to 32-bit
x86 for the relative simplicity of the calling conventions in 32-bit mode.
Like LLVM, a procedure definition in \ourIR{}
can only return a scalar value ---
aggregate return value is passed in memory.
Unlike LLVM which allocates memory for a parameter only if
its address is taken, \ourIR{} allocates
memory for all parameters ---
\ourIR{} thus takes all parameters through pointers,
e.g., both {\tt n} and {\tt m} are passed through pointers
in \cref{fig:example1i}.
This makes the translation of a procedure-call from C to \ourIR{} slightly more verbose, as
explicit instructions to (de)allocate memory for the arguments are required.
An example of this translation is shown in \cref{fig:example1}
where a call to {\tt printf} at {\tt C5} of \cref{fig:example1c}
translates to instructions {\tt I7}-{\tt I12} in \cref{fig:example1i}:
the \ourIR{} program performs two allocations, one for the format string,
and another for the variable argument list;
the latter represented as an object of ``{\tt struct}'' type
containing two {\tt int}s.
The call instruction takes the pointers returned by these allocations as operands.

\Cref{fig:c2IRVariadic} shows the C-to-\ourIR{} translations
for variadic macros.
The translation rules have template holes marked by \Itempl{}
for types and variables of C which are populated at the time of translation
with appropriate \ourIR{} entities. \ourIR{}'s {\tt va\_start\_ptr}
instruction returns the
first address of the current procedure's variable argument list.

\begin{figure}
  \begin{scriptsize}
  \[
  \mprset{flushleft}
  \inferrule*[vcenter]{\mathtt{va\_start}(\mathtt{ap},last)}
  {
    \EmitX{a \Assign \mathtt{va\_start\_ptr}} 
    \\\\
    \EmitX{\mathtt{store}\ \mathtt{void*}, \  4, \ a, \ \Itempl{ap}}
  }
  \quad
  \inferrule*[vcenter]{\mathtt{va\_arg}(\mathtt{ap},\,\tau)}
  {
    \EmitX{a \Assign \mathtt{load} \ \mathtt{void*}, \ 4, \ \Itempl{ap}} 
    \\\\
    \EmitX{result \Assign \mathtt{load}\ \Itempl{\tau}, \ \Itempl{\mathtt{alignof}(\tau)}, \ a} 
    \\\\
    \EmitX{a' \Assign a + \Itempl{\mathtt{roundup_4}(\mathtt{sizeof}(\tau))}} 
    \\\\
    \EmitX{\mathtt{store} \ \mathtt{void*}, \ 4, \ a', \, \Itempl{ap}} 
  }
  \quad
  \inferrule*[vcenter]{\mathtt{va\_copy}(\mathtt{aq},\mathtt{ap})}
  {
    \EmitX{a \Assign \mathtt{load} \ \mathtt{void*}, \ 4, \ \Itempl{ap}} 
    \\\\
    \EmitX{\mathtt{store} \ \mathtt{void*}, \ 4, \ a, \, \Itempl{aq}} 
  }
  \quad
  \inferrule*[vcenter]{\mathtt{va\_end}(\mathtt{ap})}
  {
    \EmitX{\mathtt{store} \ \mathtt{void*}, \ 4, \ 0, \, \Itempl{ap}} 
  }
  \]
\end{scriptsize}
\caption{\label{fig:c2IRVariadic} Translation of C's variadic macros to \ourIR{} instructions.
$\mathtt{roundup_4}(a)$ returns the closest multiple of 4 greater than or equal to $a$.
  }
\end{figure}

\subsubsection{Assembly}
Broadly, an assembly program \progA{} consists of a code
section (with a sequence of assembly instructions), a data section (with
read-only and read-write global variables), and a symbol table
that maps string symbols to memory addresses in code and data sections.
The validator checks that the regions specified
by the symbol table
are well-aligned and non-overlapping, and uses it
to relate a global variable or
procedure in \progC{} to its address or implementation in \progA{}.

We assume that the OS guarantees the caller-side contract of the
ABI calling conventions for the entry procedure, {\tt main()}.
For
32-bit x86, this means that at the start of program execution, the stackpointer
is available in register {\tt esp}, and
the return address and input parameters {\tt (argc,argv)} to {\tt main()} are available
in the stack
region just above the stackpointer.
For
other procedure-calls, the
validator verifies the adherence to calling conventions at a callsite (in
the caller)
and assumes adherence at procedure entry (in the callee).
Heap allocation procedures like {\tt malloc()} are left
uninterpreted, and so, the only compiler-visible way to allocate (and
deallocate) memory in \progA{}
is through the decrement (and increment) of the stackpointer stored in register {\tt esp}.

\subsubsection{Allocation and Deallocation}
\label{sec:allocdealloc}
Allocation and deallocation instructions appear only in \progC{}, and do not appear in \progA{}.
Let \Ck{} represent a procedure in program \progC{}.

An \ourIR{} instruction ``{\tt \PC{\Ck}{a}: v \Assign{} alloc n, $\tau$, align}''
allocates a contiguous region of local memory with space for {\tt n} elements
of type $\tau$ aligned by {\tt align}, and returns its start address
in {\tt v}. The PC, \PC{\Ck}{a}, of an {\tt alloc} instruction is
also called an {\em allocation site} (denoted by \z{}), and
let the set of allocation sites in \Ck{} be \Z{}.
During conversion of the C program to \ourIR{}, we distinguish
between allocation sites due
to the declaration of a local variable (or a procedure-call argument)
and allocation sites due
to a call to {\tt alloca()} --- we use \Zl{} for the former
and \Za{} for the latter, so that
$\Z{}=\Zl\cup{}\Za$.

The address of an
allocated region is non-deterministic, but is subject
to two {\em Well-Formedness (WF) constraints}: (1) the newly allocated memory region should be separate
from all currently allocated memory regions, i.e., there should be no overlap; and (2) the address of the newly
allocated memory region should be aligned by {\tt align}.

An \ourIR{} instruction ``{\tt \PC{\Ck}{d}: dealloc \z{}}'' deallocates {\em all}
local memory regions allocated due to the execution of
allocation site $\z{}\in{}\Z{}$.

\subsection{Transition Graph Representation}
\label{sec:tgraph}
An \ourIR{} or assembly instruction may mutate the machine state, transfer control, perform I/O, or
terminate the execution.
We represent a C procedure, \Ck{},
as a transition graph, $\Ck=(\mathcal{N}_{\Ck},\mathcal{E}_{\Ck})$, with
a finite set of nodes $\mathcal{N}_{\Ck}=\{n^s=n_1,n_2,\ldots,n_{m}\}$, and
a finite set of labeled directed edges $\mathcal{E}_{\Ck}$. A unique node $n^s$ represents
the start node or entry point of \Ck{}, and
every other node $n_j$ must be reachable from $n^s$. A node
with no outgoing edges is a {\em terminating node}.
A variable in \Ck{} is identified by its scope-resolved unique name.
The machine state of \Ck{} consists of the set of input
parameters $\vv{y}$,
set of temporary variables $\vv{t}$, and an explicit
array variable $M_{\Ck}$ denoting the current state of memory.
We use \bv{N} to denote a bitvector type of size $N>0$.
$M_{\Ck}$ is of type $\mathtt{T}(M_\Ck)=\memType{}$.

An assembly implementation of the C procedure \Ck{}, identified through the
symbol table, is the assembly procedure \Ak{}.
The machine state of \Ak{} consists of its hardware
registers $\vv{regs}$
and memory $M_{\Ak}$. Similarly
to \Ck{}, $\Ak=(\mathcal{N}_{\Ak},\mathcal{E}_{\Ak})$ is also represented as
a transition graph.

Let $P\in{}\{\Ck,\Ak\}$.
In addition to the memory (data) state $M_P$, we
also need to track the allocation
state, i.e., the set of intervals
of addresses that have been allocated by the procedure.
We use $\alpha$ (potentially with a subscript)
to denote a memory address of bitvector type.
Let $\intervalSym=[\alpha_b,\alpha_e]$ represent an {\em address interval} starting at $\alpha_b$
and ending at $\alpha_e$ (both inclusive),
such that $\alpha_b \le_u \alpha_e$
(where $\le_u$ is unsigned comparison operator).
Let $[\alpha]_{w}$ be a shorthand for the address interval $[\alpha,\alpha + w - 1_{\bv{32}}]$, where $n_{\bv{32}}$ is the two's complement representation of integer $n$ using $32$ bits.

\subsubsection{Address Set}
Let \asetSym{} (potentially with a sub- or superscript) represent a set of addresses, or an {\em address set}.
An empty address set is represented by $\emptyset$, and an
address set of contiguous
addresses is an interval $\intervalSym$.
Two address sets overlap, written $\Overlap(\asetSym_1,\asetSym_2)$,
iff $\asetSym_1\cap{}\asetSym_2\neq{}\emptyset{}$.
Extended to $m\geq{}2$ sets,
$\Overlap(\asetSym_1,\asetSym_2,\ldots,\asetSym_m)\Leftrightarrow{}\exists_{1\leq{}j_1<j_2\leq{}m}{\Overlap(\asetSym_{j_1},\asetSym_{j_2})}$.
$|\asetSym{}|$ represents the number of distinct addresses in \asetSym{}.
For a non-empty address set, \LBi{\asetSym{}} and
\UBi{\asetSym{}} represent the smallest and largest address respectively in \asetSym{}.
${\tt comp}(\asetSym{})$
represents the {\em complement} of \asetSym{}, so that:
$\forall\alpha:{(\alpha{}\in{}\asetSym{})\Leftrightarrow{}(\alpha{}\notin{}{\tt comp}(\asetSym{}))}$.

\subsubsection{Memory Regions}
\label{sec:memregions}
To support dynamic (de)allocation, an execution model
needs to individually track regions of memory belonging to each
variable, heap, stack, etc. We next describe
the memory regions tracked by our model.
\begin{enumerate}
\item Let $G$ be the
set of names of all global variables in \progC{}.
For each global variable $\g{}\in{}G$, we
track the memory region belonging to that variable.
We use the name of a global variable $\g \in G$ as its {\em region identifier}
to identify the region belonging to $g$ in both \Ck{} and \Ak{}.

\item For a procedure \Ck{},
  let \Y{} be the set of
names of formal parameters, including
the variadic parameter, if present. We use
the special name \yv{} for the variadic parameter.
The memory region belonging to a parameter $\y{}\in{}\Y{}$ is
called \y{} in both \Ck{} and \Ak{}.

\item The memory region allocated by allocation site $\z\in\Z$ is called \z{} in \Ck{}.
In \Ak{}, our algorithm potentially annotates allocation instructions corresponding to
an allocation site \z{} in \Ck{}.

\item $\heap$ denotes the region belonging to the {\em program heap} (managed by the OS)
in both \Ck{} and \Ak{}.

\item Local variables and actual arguments may be allocated by
the {\em call chain} of a procedure (caller, caller's caller, and so on).
The accessible subset
 (accessible to procedure \Ck{}%
      \footnote{A local variable or actual argument $v$ of procedure \CkP{} in the call chain of procedure \Ck{} is accessible in procedure \Ck{} only if the address of $v$ is accessible in \Ck{}, i.e., $v$ is address-taken in \CkP{}.})
is coalesced into a single region denoted by region $cl$,
or {\em callers' locals}, in both \Ck{} and \Ak{}.

\item In procedure \Ak{}, stack memory can be allocated and deallocated through stackpointer
decrement and increment. The addresses belonging to the
stack frame of \Ak{} (but not to a local variable $\z\in\Z$ or a parameter $\y{}\in{}\Y{}$)
belong to the \stk{} (stack) region in \Ak{}. The \stk{} region is absent in \Ck{}.

\item Similarly, in \Ak{}, we use $cs$ ({\em callers' stack})
to identify the
region that belongs to the stack space (but not to $cl$)
of the call chain
of procedure \Ak{}. $cs$ is absent in \Ck{}.

\item Program \progA{} may use more global memory than \progC{}, e.g., to
store pre-computed constants to implement vectorizing transformations.
Let \F{} be the
set of names of all {\em assembly-only global variables} in \progA{}.
For each $\f\in{}\F$, its
memory region in \Ak{} is identified by \f{}.

\item The region {\tt free} denotes the free space, that
does not belong to any of the aforementioned regions, in both \Ck{} and \Ak{},
 
\item The region $cv$%
\footnote{$cv$ stands for \emph{callers' virtual}.  The reason for tracking this region will become apparent when we discuss virtual allocation in \cref{sec:refnDefnVirtual}.}
denotes the inaccessible subset of local variables and actual arguments in the call chain
of \Ck{}.
$cv$ is present in both \Ck{} and \Ak{}. 

\end{enumerate}

Let $\Rall{}=G\cup\Y\cup\Z\cup\F\cup\{\heap,cl,cv,\stk,cs,\free\}$
represent all {\em region identifiers};
$\STACK{}=\{\stk,cs\}$ denote the stack regions in \Ak{}
and
$\NS{}=G\cup{}\Y\cup\Z\cup\{\heap,cl\}$
($= \Rall{} \setminus (\F \cup \STACK \cup \{ \free, cv\})$)
denote
the accessible regions in \emph{b}oth \Ck{} and \Ak{}.

Let $\Gro{} \subseteq G$ be the set of read-only global variables in \progC{};
and, let $\Grw{} = G \setminus \Gro{}$ denote the set of read-write global variables.
We define $\Fro \subseteq \F$ and $\Frw = \F \setminus \Fro$ analogously.

For each non-free region $r\in{}(\Rall{}\setminus{}\{\mathtt{free}\})$,
the machine state of a procedure $P$ includes a unique variable
$\il{P}{r}$ that tracks
region $r$'s
address set as $P$ executes. If $\il{P}{r}$ is a contiguous non-empty
interval, we also refer to it as $\ii{P}{r}$.
For $r\in{}G\cup{}\Y{}\cup\F\cup\{\heap,cl,cv,cs\}$
($r \in \Rall \setminus (\Z \cup \{ \free, \stk \})$),
$\il{P}{r}$ remains constant throughout $P$'s execution.
For $\R{}\subseteq{}\Rall{}$, we define an expression
$\il{P}{\R{}}=\bigcup_{r\in{}\R{}}{\il{P}{r}}$.
Because \Ck{} does not have a stack or an
assembly-only global variable, $\il{\Ck}{\F\cup{}\STACK{}}=\emptyset{}$
holds throughout \Ck{}'s execution.
At any point in $P$'s execution, the free space
can be computed as
$\il{P}{\free{}}={\tt comp}(\il{P}{\NS{}\cup{}\F\cup{}\STACK{}\cup\{cv\}})$. 
Notice that we do not use an explicit variable to
track $\il{P}{\free{}}$.

\subsubsection{Ghost Variables}
Our validator introduces {\em ghost variables} in a
procedure's execution semantics, i.e., variables that were not
originally present in $P$.
We
use $\ghost{x}$ to indicate
that $x$ is a ghost variable.
For each region $r\in G\cup\Y\cup\Z$
(resp.\ $r \in \F$),
we introduce
$\ghost{\Empty{r}}$,
$\ghost{\LB{r}}$, and $\ghost{\UB{r}}$ in \Ck{} (resp. \Ak{})
to track the
{\em emptiness} (whether the
region is empty), {\em lower bound} (smallest
address), and {\em upper bound} (largest
address) of $\il{\Ck}{r}$ (resp. $\il{\Ak}{r}$) respectively;
for $r\in G\cup\Y$ (resp. $r \in \F$),
$\ghost{\Sz{r}}$ tracks the size of $\il{\Ck}{r}$ (resp. $\il{\Ak}{r}$),
and for $\z \in \Z$, $\ghost{\LSz{\z}}$ tracks
the {\em size of last allocation} due to allocation-site \z{}.
\ghost{\rdAcc{P}} and \ghost{\wrAcc{P}} track the set of
addresses read and written by
$P$ respectively.
Let $\ghost{+}$ be the set of all ghost variables.

\subsubsection{Error Codes}
Execution of \Ck{} or \Ak{}
may terminate successfully,
may never terminate,
or
may terminate with an error.
We support two error codes to distinguish between two
categories of errors: $\mathscr{U}$ and
$\mathscr{W}$.
In \Ck{}: $\mathscr{U}$ represents an occurrence of
UB, and $\mathscr{W}$ represents a violation of a WF constraint that needs
to be ensured either by the compiler or the OS (both external
to the program itself).
In \Ak{}: $\mathscr{U}$ represents UB or a translation error, and
$\mathscr{W}$ represents occurrence of a condition that can be assumed to never
occur, e.g., if the OS ensures that it never occurs.
In summary, for a procedure $P$, $\mathscr{W}$ represents
an error condition that $P$ can {\em assume} to be absent
(because the external environment ensures it), while $\mathscr{U}$ represents
an error that $P$ must {\em ensure} to be absent.

\subsubsection{Outside World and Observable Trace}
Let \World{P} be a
state of the outside world (OS/hardware) for $P$ that supplies external
inputs whenever $P$ reads from it, and consumes external outputs
generated by $P$. \World{P} is
assumed to mutate arbitrarily
but deterministically based on the values consumed or produced due to
the I/O operations performed by $P$ during execution.
Let $\Trace{P}$ be a potentially-infinite sequence of observable
trace events generated by an execution of $P$.

\subsubsection{Expressions}
Let variable $v$
and variables $\vv{v}$ or $\vv{x}$ be drawn from
$\mathrm{{\tt Vars}}=(\vv{t},\vv{regs},M_P,\il{P}{r},\allowbreak \ghost{+})$
(for all $P\in\{\Ck,\Ak\}$ and for
all $r\in{}(\Rall{}\setminus{}\{\mathtt{free}\})$);
$e(\vv{x})$ be an expression over $\vv{x}$,
and $E(\vv{x})$ be
a list of expressions over $\vv{x}$.
An expression $e(\vv{x})$ is a well-formed combination of
constants, variables $\vv{x}$, and arithmetic, logical, relational, memory access, and address set operators.
For memory reads and writes,
{\tt select} ({\tt sel} for short)
and
{\tt store} ({\tt st} for short)
operations are used to access
and modify $M_P$ at a given address $\alpha$.
Further, the {\tt sel} and {\tt st} operators are associated with
a \size{} parameter: {\tt \select$_\sizeMath{}${}(arr,$\alpha$)}
returns a little-endian concatenation
of \size{} bytes starting at $\alpha$ in the array {\tt arr}.
Similarly, {\tt \store$_\sizeMath{}${}(arr,$\alpha$,data)} returns a new
array that has contents identical to {\tt arr} except for the \size{} bytes starting
at $\alpha$ which have been replaced by {\tt data} in little-endian format.
To encode reads/writes to a region of memory, we define projection and updation
operations.
\begin{definition}
\prjM{\asetSym}{M_P} denotes the {\em projection} of $M_P$ on addresses
in \asetSym{}, i.e., if $M'_P=\prjM{\asetSym}{M_P}$,
then $\forall\alpha\in{}\asetSym:{\selectMath_1(M'_P,\alpha)=\selectMath_1(M_P,\alpha)}$
and $\forall\alpha\notin{}\asetSym:{\selectMath_1(M'_P,\alpha)=0}$.  The sentinel
value $0$ is used for the addresses outside \asetSym{}.
We use $\prjMEq{\asetSym}{M_{P_1}}{M_{P_2}}$ as shorthand for
$(\prjM{\asetSym}{M_{P_1}}=\prjM{\asetSym}{M_{P_2}})$.
\end{definition}

\begin{definition}
\label{def:updM}
\updM{\asetSym}{M_P}{M} denotes the {\em updation} of $M_P$ on addresses
in \asetSym{} using the values in $M$. If $M'_P=\updM{\asetSym}{M_P}{M}$,
then $\prjMEq{\asetSym}{M'_P}{M}$
and $\prjMEq{{\tt comp}(\asetSym)}{M'_P}{M_{P}}$ hold.
\end{definition}

\subsubsection{Instructions}
\label{sec:instructions}
Each edge
$e_{P}\in{}\mathcal{E}_{P}$ is labeled with one of the following {\em graph instructions}:
\begin{enumerate}
\item A {\em simultaneous assignment} of the form $\vv{v} \Assign E(\vv{x})$.  Because variables
$\vv{v}$ and $\vv{x}$ may include $M_P$,
an assignment suffices for encoding memory loads and stores.
Similarly, because the variables may be drawn
from $\il{P}{\z}$ (for an allocation site \z{}), an assignment
is also used to encode the allocation of an interval $\intervalSym_{\tt new}$
($\il{P}{\z} \Assign \il{P}{\z} \cup \intervalSym_{\tt new}$)
and the deallocation of all addresses allocated due to \z{}
($\il{P}{\z} \Assign \emptyset{}$). Stack allocation and
deallocation in \Ak{} can be
similarly represented as $\il{\Ak}{\stk} \Assign \il{\Ak}{\stk} \cup \intervalSym_{{\tt new}}$
and $\il{\Ak}{\stk} \Assign \il{\Ak}{\stk} \setminus \intervalSym_{{\tt new}}$
respectively.

\item A {\em guard} instruction, $e(\vv{x})?$, indicates that when execution
  reaches its head, the
  edge is taken iff its {\em edge condition} $e(\vv{x})$ evaluates to {\tt true}.
  For every other instruction, the edge is always taken upon reaching
  its head, i.e., its edge condition is {\tt true}.
For a non-terminating node $n_P\in{}\mathcal{N}_P$, the guards of all edges departing
from $n_P$ must be mutually exclusive, and their disjunction
must evaluate to {\tt true}.

\item A type-parametric
  {\em choose} instruction $\Ichoose{\vv{\tau}}$. Instruction $\vv{v} \Assign \Ichoose{\vv{\tau{}}}$ non-deterministically
chooses values of types $\vv{\tau}$ and assigns
them to variables $\vv{v}$, e.g.,
a memory with non-deterministic contents is obtained by using
$\Ichoose{\bv{32}\rightarrow{}\bv{8}}$.

\item A {\em read} ({\tt rd}) or {\em write} ({\tt wr}) I/O instruction.
A read instruction $\vv{v} \Assign \Iread{\vv{\tau}}$
reads values of types $\vv{\tau}$
from the outside world into variables $\vv{v}$,
e.g., an address set is read using $\asetSym{}\Assign{}\Iread{\asetType{}}$.

A write instruction \Iwrite{V(E(\vv{x}))} writes the value constructed
by value constructor $V$ using $E(\vv{x})$ to the outside world.
A value constructor is defined for each type of observable event.
For a procedure-call, $\Vcall{\rho, \Tuple{v}, \vv{r}, M}$ 
represents a value constructed for a procedure-call to callee
with name (or address) $\rho$, the actual arguments $\Tuple{v}$,
callee-observable regions $\vv{r}$, and memory $M$.
Similarly,
$\Vret{E(\vv{x})}$ is a
value constructed during
procedure return that captures observable
values computed through $E(\vv{x})$.
Local (de)allocation events have their own
value
constructors, $\VallocA{\z,w,a}$, $\VallocB{\z,i,M}$, and $\Vdealloc{\z}$,
which represent (de)allocation due to allocation site \z{} with the
associated
observables $w$ (size), $a$ (alignment), $i$ (interval), and $M$ (memory).

A read or write instruction mutates \World{P} arbitrarily based on
the read and written values.
Further, the data items read or written are appended to the observable trace $T_P$.
Let $\mathtt{read}_{\vv{\tau}}(\World{P})$ be an uninterpreted
function that reads values of types $\vv{\tau}$ from $\World{P}$;
and \Iio{E(\vv{x})}{rw} be an uninterpreted function
that returns an updated state of $\World{P}$ after an I/O operation of type $\mathtt{rw}\in{}\{\mathtt{r},\mathtt{w}\}$ (read
or write)
with values $E(\vv{x})$.
Thus, in its explicit syntax, $\vv{v} \Assign \Iread{\vv{\tau}}$ translates
to a sequence of instructions: $\vv{v} \Assign \mathtt{read}_{\vv{\tau}}(\World{P});$
$\World{P} \Assign \Iio{\vv{v}}{r};$
$T_P \Assign T_P \cdot \vv{v}$,
where $\cdot$ is the trace concatenation operator.
Similarly, \Iwrite{V(E(\vv{x}))} translates
to: $\World{P} \Assign \Iio{V(E(\vv{x}))}{w};$
$T_P \Assign T_P \cdot V(E(\vv{x}))$.
Henceforth, we only use the implicit syntax for brevity.

\item An error-free and error-indicating {\em halt} instruction that
terminates execution. \Ihalt{\emptyset} indicates termination
without error, and \Ihalt{\mathscr{r}} indicates
termination with error code $\mathscr{r}\in{}\{\mathscr{U},\mathscr{W}\}$.
Upon termination without error, a special {\tt exit} event
is appended to observable trace $T_P$. Upon termination
with error, the error code is appended to $T_P$. Thus,
the destination of an edge with a {\tt halt} instruction is a terminating node.
We create a unique terminating node for an error-free exit.
We also create a unique terminating node for each error code, also
called an {\em error node}; an edge terminating at an error node is
called an {\em error edge}. $\mathscr{U}_P$ and
$\mathscr{W}_P$ represent error nodes in $P$ for errors
$\mathscr{U}$ and $\mathscr{W}$ respectively. Execution transfers to
an error node upon encountering the corresponding error.
Let $\NNP{P} = \NP{P} \setminus \{ \EU_{P}, \EW_{P} \}$ be the set of non-error nodes in $P$.

\end{enumerate}
In addition to the observable trace events generated by {\tt rd}, {\tt wr}, and {\tt halt}
instructions, the execution of every instruction in $P$ also appends an
observable {\em silent trace event}, denoted $\bot$, to $T_P$. Silent trace events
count the number of
executed instructions as a proxy for observing the passage of time.

\subsection{Translations of \Ck{} and \Ak{} to Their Graph Representations}
\label{sec:trans}

\begin{table}
  \caption{\label{tab:preds} Definitions of operators and predicates used in translations in \cref{fig:xlateRuleIR,fig:xlateRuleAsm,fig:xlateRuleAsmStackLocals,fig:xlateRuleAsmAllLocals}}
\begin{scriptsize}
  \begin{tabularx}{\textwidth}{@{}p{1.4cm}X@{}}
    \hline
    Operator & Definition \\
    \hline \hline

    $\mathtt{sz}(\tau)$ &
    Returns the size (in bytes) of type $\tau$. For example, $\mathtt{sz}(\mathtt{\bv{32}}) = 4$, $\mathtt{sz}(\mathtt{\bv{8}*}) = 4$, and $\mathtt{sz}(\mathtt{[80\ x\ \bv{8}]}) = 80$.
    \\ \hline

    $\mathtt{T}(a)$ &
    Returns the type $\tau$ of $a$ where $a$ can be a global variable, a parameter, or a register.  For example, $\mathtt{T}(\mathtt{eax}) = \bv{32}$.
    \\ \hline

    $\VretVal{\tau}$ &
    A macro operator which derives the return value of an assembly procedure with return type $\tau$ from input registers \texttt{eax} and \texttt{edx} using the calling conventions, e.g.,
    $\VretVal{\mathtt{\bv{8}}} = \mathtt{extract}_{7,0}(\mathtt{eax})$,
    $\VretVal{\mathtt{\bv{32}}} = \mathtt{eax}$,
    $\VretVal{\mathtt{\bv{64}}} = \mathtt{concat}(\mathtt{edx}, \mathtt{eax})$,
  where $\mathtt{extract}_{h,l}(a)$ extracts bits $h$ down to $l$ from $a$ and $\mathtt{concat}(a,b)$ returns the bitvector concatenation of $a$ and $b$ where $b$ takes the less significant position.
    \\ \hline

    $\VgetVal{\tau}{v}$ &
    Inverse of $\VretVal{\tau}$.  Distributes the packed bitvector $v$ of type $\tau$ into two bitvectors of 32 bit-width each, setting the bits not covered by $v$ to some non-deterministic value.
    \\ \hline

    $\roMem{P}{r}{\intervalSym}$  &
    Returns a memory array containing the contents of read-only global variable named $r$ in $P$.  The contents are mapped at the addresses in the provided interval \intervalSym{}.
    \\ \hline

    $\initAddrSets{\F}$  &
    Returns the address sets of the assembly-only global variables \F{} using the symbol table in \progA{}.
    \\ \hline

  \end{tabularx}
  \begin{tabularx}{\textwidth}{@{}p{2.7cm}X@{}}
    \hline
    Predicate & Definition \\
    \hline \hline

    $\mathtt{aligned}_{n}(a)$ &
    Bitvector $a$ is at least $n$ bytes aligned.  Equivalent to: $a \% n = 0$, where $\%$ is remainder operator.
    \\ \hline


    $\mathtt{isAlignedIntrvl}_{a}(p, w)$  &
    A $w$-sized sequence of addresses starting at $p$ is aligned by $a$ and does not wraparound.
    Equivalent to:
    $\mathtt{aligned}_{a}(p) \land (p \leq_u p+w-1_{\bv{32}})$.
    \\ \hline


    $\mathtt{accessIsSafeC}_{\tau,a}(p, \asetSym)$ &
    Equivalent to:
    $\mathtt{isAlignedIntrvl}_{\mathtt{align}(a)}(p, \mathtt{sz}(\tau))
    \land
    ([p]_{\mathtt{sz}(\tau)} \subseteq \asetSym)$.
    \\ \hline

    $\mathtt{addrSetsAreWF}(\il{P}{\heap}, \il{P}{cl}, \allowbreak \il{P}{cv}, \ldots, \ii{P}{\g}, \ldots, \il{P}{\f}, \ldots,\ii{P}{\y}, \allowbreak \ldots, \il{P}{\yv})$ &
    The address sets passed as parameter are well-formed with respect to C semantics.
    Equivalent to:
    $(0_{\bv{32}} \notin \il{P}{G \cup \F \cup \Y \cup \{\heap,cl,cv\}})
    \land \neg \Overlap(\il{P}{\heap},\il{P}{cl},\ldots,\ii{P}{\g},\ldots,\il{P}{\f},\ldots,\ii{P}{\y},\ldots,\il{P}{\yv})
    \land \neg \Overlap(\il{P}{G \cup \Y \cup \{ \heap, cl \}},\il{P}{cv}) 
    \land (\il{P}{\yv} \ne \emptyset \Rightarrow \mathtt{isInterval}(\il{P}{\yv}))
    \land \forall_{r \in G \cup (\Y \setminus \{ \yv \}) \cup \F} (|{\ii{P}{r}}| = \mathtt{sz}(\mathtt{T}(r))
    \land \mathtt{aligned}_{\mathtt{algnmnt}(r)}(\LBr{\ii{P}{r}}))$,
    where $\mathtt{isInterval}(\il{P}{\yv})$ holds iff the address set \il{P}{\yv} is an interval,
     $\mathtt{algnmnt}(r)$ returns the alignment of variable $r$.
  \\ \hline

    $\intervalContainedInAddrSet{\alpha_b}{\alpha_e}{\asetSym}$ &
    The pair $(\alpha_{b}, \alpha_{e})$ forms a valid interval inside the address set $\asetSym$.
    Equivalent to:
    $(\alpha_{b} \neq 0_{\bv{32}})
      \land (\alpha_{b} \leq_u \alpha_{e})
      \land ([\alpha_{b},\alpha_{e}] \subseteq \asetSym)$
    \\ \hline

    $\intervalContainedInAddrSetAndAligned{\alpha_b}{\alpha_e}{\asetSym}{a}$ &
    Equivalent to:
    $\mathtt{aligned}_{a}(\alpha_{b}) \land{} \intervalContainedInAddrSet{\alpha_b}{\alpha_e}{\asetSym}$
    \\ \hline
    $\mathtt{obeyCC}(e_{\mathtt{esp}}, \Tuple{\tau}, \Tuple{x})$ &
    Pointers \Tuple{x} match the expected addresses of arguments for a procedure-call in assembly.  Based on the calling conventions, $\mathtt{obeyCC}$ uses the value of the current stackpointer ($e_{\mathtt{esp}}$) and parameter types (\Tuple{\tau}) to obtain the expected addresses of the arguments.
    For example, $\mathtt{obeyCC}(\mathtt{esp},\allowbreak(\mathtt{\bv{8}},\mathtt{\bv{32}}),\allowbreak(\mathtt{esp}, \mathtt{esp}+4_{\bv{32}}))$ holds.
    \\ \hline

    $\mathtt{overflow}_{mul}(a, b)$ &
    Signed multiplication of bitvectors $a$, $b$ overflows.  E.g., $\mathtt{overflow}_{mul}(2147483647_{\bv{32}}, 2_{\bv{32}})$ holds.
    \\ \hline

    $\mathtt{stkIsWF}(\mathtt{esp}, \ghost{\spE}, \ghost{\stkE}, \Tuple{\tau},\allowbreak \il{\Ak}{\heap}, \il{\Ak}{cl}, \il{\Ak}{G\cup\F}, \ldots,\ii{\Ak}{\y},\ldots,\allowbreak \il{\Ak}{\yv})$ &
    The pairs $(\mathtt{esp}, \ghost{\spE})$, $(\ghost{\spE}, \ghost{\stkE})$ represent well-formed intervals for initial \stk{} and $cs$ regions with respect to parameter types \Tuple{\tau} and other (input) address sets in \Ak{}.
    Equivalent to:
    $\mathtt{aligned}_{16}(\mathtt{esp}+4_{\bv{32}}) \allowbreak
    \land (\mathtt{esp} \leq_u \mathtt{esp}+4_{\bv{32}})
    \land \neg \Overlap([\mathtt{esp}]_{4_{\bv{32}}}, \il{\Ak}{G \cup \F \cup \Y \cup \{ \heap, cl \}}) 
    \land \mathtt{obeyCC}(\mathtt{esp}+4_{\bv{32}}, \Tuple{\tau}, \ldots,\LBr{\ii{\Ak}{\y}},\ldots)
    \land (\ghost{\spE} <_u \ghost{\stkE})
    \land \neg \Overlap([\ghost{\spE}+1_{\bv{32}},\ghost{\stkE}], \il{\Ak}{G \cup \F \cup \{\heap\}})
    \land \il{\Ak}{cl} \subseteq [\ghost{\spE}+1_{\bv{32}}, \ghost{\stkE}]$
    \\ \hline

    $\mathtt{UB}_{P}(\mathtt{op}, \vv{x})$ &
    Application of operation {\tt op} of procedure $P$ on arguments $\vv{x}$ triggers UB.  E.g., $\mathtt{UB}_{\Ck}(\mathtt{udiv}, (1_{\bv{32}}, 0_{\bv{32}}))$ holds.
    \\ \hline

  \end{tabularx}
\end{scriptsize}
\end{table}

\Cref{fig:xlateRuleIR,fig:xlateRuleAsm} (and \cref{fig:xlateRuleAsmStackLocals,fig:xlateRuleAsmAllLocals} later) present the key translation rules
from \ourIR{} and (abstracted) assembly instructions to graph instructions.
Each rule is composed of three parts separated by a horizontal line segment:
on the left is the name of the rule, above the line segment is the \ourIR{}/assembly
instruction, and below the line segment is the graph instructions listing.
We describe the operators and predicates used in the rules in \cref{tab:preds}.
As an example,
the top right corner of \cref{fig:xlateRuleIR} shows the parametric \TRule{Op} rule
which gives the translation of an operation using arithmetic/logical/relational operator
{\tt op} in \ourIR{} to corresponding graph instructions.
We use C-like constructs in graph instructions as syntactic sugar for brevity, e.g.
`;' is used for sequencing, `?:' is used for conditional assignment,
and \underline{\tt if}, \underline{\tt else}, and \underline{\tt for}
are used for control flow transfer.
We highlight the
read and write
I/O instructions with a
\Shaded{} background, and use
\textbf{\textcolor{somered}{bold}}, \textbf{\textcolor{someviolet}{colored}}
fonts for {\tt halt} instructions.
We use macros \texttt{IF} and \texttt{ELSE} to choose translations based on a
boolean condition on the input syntax.

\subsubsection{Translation of \Ck{}}
\label{sec:ctrans}
\begin{figure}
  \begin{scriptsize}
  \begin{minipage}[t]{.6\textwidth}
  \[
  \mprset{flushleft}
  \inferrule*[left=(\Entry{\Ck}),vcenter]{\PC{\Ck}{j}: \mathtt{def}\ \Ck(\Tuple{\tau})}
  {
    \HLBox{\il{\Ck}{\heap}, \il{\Ck}{cl}, \il{\Ck}{cv}, \ldots, \ii{\Ck}{\g}, \ldots,\ii{\Ck}{\y},\ldots, \il{\Ck}{\yv} \Assign \IreadB{\asetType{},\asetType{},\ldots,\asetType{}}};
    \\\\
    \il{\Ck}{\stk}, \il{\Ck}{cs}, \ldots, \il{\Ck}{\f}, \ldots, \il{\Ck}{\z}, \ldots, \ghost{\rdAcc{\Ck}}, \ghost{\wrAcc{\Ck}} \Assign \emptyset, \emptyset, \ldots, \emptyset;
    \\\\
    \underline{\mathtt{if}}\ (\neg \mathtt{addrSetsAreWF}(\il{\Ck}{\heap}, \il{\Ck}{cl}, \il{\Ck}{cv}, \ldots, \ii{\Ck}{\g}, \ldots, \il{\Ck}{\f}, \ldots, \ii{\Ck}{\y},\ldots, \il{\Ck}{\yv}))
    \\\\ \quad \IhaltW{};
    \\\\
    M_{\Ck} \Assign \IchooseM{};
    \quad
    \HLBox{M_{\Ck} \Assign {\tt upd}_{\il{\Ck}{\NS{}}}(M_{\Ck}, \IreadB{\memType})}; 
    \\\\
    \underline{\mathtt{for}}\ \g\ \mathtt{in\ }\Gro{}\ \{\ M_{\Ck} \Assign {\tt upd}_{\ii{\Ck}{\g}}(M_{\Ck}, \roMem{\Ck}{\g}{\ii{\Ck}{\g}});\ \}
    \\\\
    \underline{\mathtt{for}}\ \z\ \mathtt{in\ }\Z\ \{\ \ghost{\Empty{\z}} \Assign \mathtt{true}; \quad \BasedOnM{\z} \Assign \emptyset; \}
    \\\\
    \underline{\mathtt{for}}\ r\mathtt{\ in\ }G\cup \Y\cup\{\heap,cl\}\ \{\ \BasedOnM{r} \Assign G\cup\{\heap,cl\};\ \} 
    \\\\
    \underline{\mathtt{for}}\ r\mathtt{\ in\ }G\cup{}\Y\ \{\ 
      \ghost{\Sz{r}},\ghost{\Empty{r}} \Assign |\il{\Ck}{r}|, (|\il{\Ck}{r}| = 0_{\bv{32}});
      \\\\ \quad
      \underline{\mathtt{if}}(\neg \ghost{\Empty{r}})\ \{\ 
      \ghost{\LB{r}}, \ghost{\UB{r}} \Assign \LBr{\il{\Ck}{r}}, \UBr{\il{\Ck}{r}};\ \}
      \quad
      \BasedOn{\ghost{\LB{r}}} \Assign \{ r \};
      \\\\
    \}
  }
  \]
  \[
  \mprset{flushleft}
  \inferrule*[left=(Alloc),vcenter]{\z: v \Assign\ \mathtt{alloc}\ n, \ \tau, \ a}
  {
    \mathtt{IF}\{\z \in \Zl\}\{\ \underline{\mathtt{if}}\ (n \le_s 0_{\bv{32}} \lor \mathtt{overflow}_{mul}(n, \mathtt{sz}(\tau)))\ \IhaltU{}; \}
    \\\\
    \IwriteB{\VallocA{\z, n{*}\mathtt{sz}(\tau), a}};
    \\\\
    \alpha_{b} \Assign \Ichoose{\bv{32}};
    \quad
    \alpha_{e} \Assign \alpha_{b} + n{*}\mathtt{sz}(\tau)-1_{\bv{32}}; 
    \\\\
    \underline{\mathtt{if}}\ (\neg \intervalContainedInAddrSetAndAligned{\alpha_{b}}{\alpha_{e}}{\il{\Ck}{\mathtt{free}}}{a})\ \IhaltW{}; 
    \\\\
    \il{\Ck}{\z} \Assign \il{\Ck}{\z} \cup [\alpha_{b},\alpha_{e}];
    \quad
    M_{\Ck} \Assign \updM{[\alpha_{b},\alpha_{e}]}{M_{\Ck}}{\IchooseM};
    \\\\
    \ghost{\LB{\z}} \Assign \ghost{\Empty{\z}} ?\ \alpha_b : \mathtt{min}(\ghost{\LB{\z}}, \alpha_b);
    \quad
    \ghost{\LSz{\z}} \Assign n{*}\mathtt{sz}(\tau);
    \\\\
    \ghost{\UB{\z}} \Assign \ghost{\Empty{\z}} ?\ \alpha_e : \mathtt{max}(\ghost{\UB{\z}}, \alpha_e);
    \quad
    \ghost{\Empty{\z}} \Assign \mathtt{false};
    \\\\
    v \Assign \alpha_{b};
    \quad
    \BasedOn{v} \Assign \{ \z \};
    \\\\
    \IwriteB{\VallocB{\z, [\alpha_{b},\alpha_{e}], \prjM{[\alpha_{b},\alpha_{e}]}{M_{\Ck}}}};
  }
  \]
  \end{minipage}
  \begin{minipage}[t]{.4\textwidth}
  \[
  \mprset{flushleft}
  \inferrule*[left=(Op),vcenter]{\PC{\Ck}{j}: v \Assign \mathtt{op}(\vv{x})}
  {
    \underline{\mathtt{if}}\ (\mathtt{UB}_{\Ck}(\mathtt{op}, \vv{x}))\ \IhaltU{};
    \\\\
    v \Assign \mathtt{op}(\vv{x});
    \\\\
    \ldots,x,\ldots \Assign \vv{x};
    \quad
    \BasedOn{v} \Assign \BasedOnOp(\ldots,\BasedOn{x},\ldots);
  }
  \]
  \[
  \mprset{flushleft}
  \inferrule*[left=(RetV),vcenter]{\PC{\Ck}{j}: \mathtt{ret}\ \mathtt{void}}
  {
    \IwriteB{\Vret{\prjM{\il{\Ck}{\NS{}}}{M_{\Ck}}}};
    \\\\
    \IhaltN{};
  }
  \]
  \[
  \mprset{flushleft}
  \inferrule*[left=(\Ret{\Ck}),vcenter]{\PC{\Ck}{j}: \mathtt{ret}\ v}
  {
    \IwriteB{\Vret{v, \prjM{\il{\Ck}{\NS{}}}{M_{\Ck}}}};
    \\\\
    \IhaltN{};
  }
  \]
  \[
  \mprset{flushleft}
  \inferrule*[left=(AssignConst),vcenter]{\PC{\Ck}{j}: v \Assign c}
  {
    v \Assign c;
    \quad
    \BasedOn{v} \Assign \emptyset;
  }
  \]
  \[
  \mprset{flushleft}
  \inferrule*[left=(Dealloc),vcenter]{\PC{\Ck}{j}: \mathtt{dealloc}\ \z}
  {
    \il{\Ck}{\z} \Assign \emptyset;
    \quad
    \ghost{\Empty{\z}} \Assign \mathtt{true};
    \\\\
    \IwriteB{\Vdealloc{\z}};
  }
  \]
  \[
  \mprset{flushleft}
  \inferrule*[left=(VaStartPtr),vcenter]{\PC{\Ck}{j}: p \Assign \mathtt{va\_start\_ptr}}
  {
    \underline{\mathtt{if}}\ (\il{\Ck}{\yv} = \emptyset)\ \{
      \\\\ \quad
      p \Assign 0_{\bv{32}};
      \quad
      \BasedOn{p} \Assign \emptyset;
      \\\\
    \}\ \underline{\mathtt{else}}\ \{
      \\\\ \quad
      p \Assign \ghost{\LB{\yv}};
      \ \ 
      \BasedOn{p} \Assign \{ \yv \};
      \\\\
    \}
  }
  \]
  \end{minipage}%
  \[
  \mprset{flushleft}
  \inferrule*[left=(\Load{\Ck}),vcenter]{\PC{\Ck}{j}: v \Assign \mathtt{load}\ \tau, \ a, \ p}
  {
    \underline{\mathtt{if}}\ (\neg \mathtt{accessIsSafeC}_{\tau,a}(p, \il{\Ck}{\BasedOn{p}}))\ \IhaltU{};
    \\\\
    v \Assign \select_{\mathtt{sz}(\tau)}(M_{\Ck}, p);
    \\\\
    \BasedOn{v} \Assign \BasedOnM{\BasedOn{p}};
    \quad
    \ghost{\rdAcc{\Ck}} \Assign \ghost{\rdAcc{\Ck}} \cup [p]_{\mathtt{sz}(\tau)}; 
  }
  \inferrule*[left=(\Store{\Ck}),vcenter]{\PC{\Ck}{j}: \mathtt{store}\ \tau, \  a,\ v,\,p} 
  {
    \underline{\mathtt{if}}\ (\neg \mathtt{accessIsSafeC}_{\tau,a}(p, \il{\Ck}{\BasedOn{p} \setminus \Gro}))\ \IhaltU{};
    \\\\
    M_{\Ck} \Assign \store_{\mathtt{sz}(\tau)}(M_{\Ck}, p, v);
    \\\\
    \BasedOnM{\BasedOn{p}} \Assign \BasedOnM{\BasedOn{p}} \cup \BasedOn{v};
    \quad
    \ghost{\wrAcc{\Ck}} \Assign \ghost{\wrAcc{\Ck}} \cup [p]_{\mathtt{sz}(\tau)}; 
  }
  \]
  \[
  \mprset{flushleft}
  \inferrule*[left=(CallV),vcenter]{\PC{\Ck}{j}: \mathtt{call}\ \mathtt{void}\ \rho(\Tuple{\tau}\ \Tuple{x})}
  {
    \CalleeRegions \Assign \BasedOnMS{\bigcup_{x \in \Tuple{x}}\BasedOn{x} \cup G \cup \{\heap\}};
    \\\\
    \IwriteB{\Vcall{\rho,\Tuple{x}, \CalleeRegions, \prjM{\il{\Ck}{\CalleeRegions}}{M_{\Ck}}}};
    \\\\
    \HLBox{M_{\Ck} \Assign \updM{\il{\Ck}{\CalleeRegions \setminus \Gro}}{M_{\Ck}}{\IreadB{\memType}}};
    \\\\
    \BasedOnM{\CalleeRegions \setminus \Gro} \Assign \CalleeRegions;
  }
  \qquad
  \inferrule*[left=(\Call{\Ck}),vcenter]{\PC{\Ck}{j}: v \Assign \mathtt{call}\,\gamma\ \rho(\Tuple{\tau}\ \Tuple{x}) \quad \gamma \ne \mathtt{void}}
  {
    \CalleeRegions \Assign \BasedOnMS{\bigcup_{x \in \Tuple{x}}\BasedOn{x} \cup G \cup \{\heap\}};
    \\\\
    \IwriteB{\Vcall{\rho, \Tuple{x}, \CalleeRegions, \prjM{\il{\Ck}{\CalleeRegions}}{M_{\Ck}}}};
    \\\\
    \HLBox{M_{\Ck} \Assign \updM{\il{\Ck}{\CalleeRegions \setminus \Gro}}{M_{\Ck}}{\IreadB{\memType}}};
    \\\\
    \HLBox{v \Assign \IreadB{\gamma}};
    \ 
    \BasedOn{v}, \BasedOnM{\CalleeRegions \setminus \Gro} \Assign \CalleeRegions, \CalleeRegions;
  }
  \]
  \end{scriptsize}
  \caption{\label{fig:xlateRuleIR} Translation rules for converting \ourIR{} instructions to graph instructions.
}
\end{figure}

\Cref{fig:xlateRuleIR} shows the translation rules for converting \ourIR{} instructions to graph instructions.
The \TRule{\Entry{\Ck}} rule presents the initialization performed
at the entry of a procedure \Ck{}.
The address sets and memory state of \Ck{} are initialized
using reads from the outside world $\World{\Ck}$.
The address sets that are read are checked for well-formedness
with respect to C semantics, or else error $\mathscr{W}$ is triggered.
The ghost variables are also appropriately initialized.

The \TRule{Alloc} and \TRule{Dealloc} rules provide semantics for the allocation
and deallocation of local memory at allocation site \z{}
--- if $\z \in \Zl$, $n$ (the number of elements allocated) has additional
constraints for a UB-free execution.
A (de)allocation instruction generates observable traces
using the \texttt{wr} instruction at the beginning and end of each execution
of that instruction. We will later use these traces to identify
a lockstep correlation of (de)allocation events between \Ck{}
and \Ak{}, towards validating a
translation.

In \TRule{Op},
an application of \texttt{op} may trigger UB for certain
inputs, as abstracted
through the $\mathtt{UB}_{\Ck}(\mathtt{op},\vv{x})$ operation.
While there are many UBs in the C standard, we model only
the ones that we have seen getting exploited by the compiler for optimization.
These include the UB associated with a
logical or arithmetic shift operation (second operand should be bounded by a limit which is determined by the size of the first operand), address computation (no over- and under-flow), and division operation (second operand should be non-zero).
In
\TRule{\Load{\Ck}} and \TRule{\Store{\Ck}}, a UB-free execution requires
the dereferenced pointer $p$ to be non-{\tt NULL} ($\ne 0_{\bv{32}}$ in our modeling),
aligned by $a$, and have its access interval belong to the
regions
which $p$ may {\em point to}, or $p$ may be {\em based on} ($\mathsection6.5.6p8$ of the C17 standard).

To identify the regions a pointer $p$ may point to,
we define two
maps:
(1) $\BasedOnSym{}: \mathrm{{\tt Vars}} \to 2^{\Rall{}}$, so that
for a (pointer) variable $x\in{}\mathtt{Vars}$, \BasedOn{x} returns the
set of regions $x$ may point to;
and
(2) $\BasedOnSym_M{}: \Rall{} \to 2^{\Rall{}}$, so that
for a region $r\in{}\Rall{}$,
\BasedOnM{r} returns the set of regions
that some (pointer) value stored in $\prjM{\il{\Ck{}}{r}}{M_{\Ck{}}}$
may point to.
$\BasedOn{\vv{x}}$ is equivalent to
$\bigcup_{x \in \vv{x}} \BasedOn{x}$,
and $\BasedOnM{\R{}}$ is equivalent to
$\bigcup_{r \in \R{}} \BasedOnM{r}$.
Similarly, $\BasedOnM{\RN{1}} \Assign \RN{2}$ is equivalent to
`$\mathtt{for}\ r_1\ \mathtt{in}\ \RN{1}\ \{\ \BasedOnM{r_1} \Assign \RN{2};\ \}$'.
The initialization and updation of $\BasedOnSym$ and $\BasedOnSym_M$ due to
each \ourIR{} instruction can be seen in \cref{fig:xlateRuleIR}.
For an operation $\mathtt{op}$,
$\BasedOnOp{}:(2^{\Rall} \times 2^{\Rall} \ldots \times 2^{\Rall})\to{}2^{\Rall{}}$
represents the over-approximate abstract transfer
function for $v \Assign{} \mathtt{op}(\vv{x})$,
that takes as input
$(\BasedOn{x_1},\BasedOn{x_2},\ldots,\BasedOn{x_m})$
for $\vv{x} = (x_1,x_2,\ldots,x_m)$
and returns $\BasedOn{v}$.
We use $\BasedOnOp{}(\R{})=\R{}$ if {\tt op}
is bitwise complement and unary negation.
We use $\BasedOnOp{(\RN{1},\ldots,\RN{m})}=\bigcup{}_{1\leq{}j\leq{}m}{\RN{j}}$
if {\tt op}
is bitvector addition, subtraction, shift, bitwise-\{and,or\}, extraction, or concatenation.
We use $\BasedOnOp{}(\RN{1},\ldots,\RN{m})=\emptyset{}$
if {\tt op} is bitvector multiplication, division, logical, relational or any other remaining operator.

The translation of an \ourIR{} procedure-call is given by the rules
\TRule{CallV} and \TRule{\Call{\Ck}}
and involves producing non-silent observable trace events using the
\texttt{wr} instruction for the callee name/address,
arguments, and callee-accessible regions and memory state.
To model return values
and side-effects to the memory state due to a callee,
\texttt{rd} instructions are used.
A callee may access a memory region iff it is {\em transitively
reachable} from a global variable $g\in{}G$, the heap \heap{}, or
one of the arguments $x\in{}\Tuple{x}$.
The transitively reachable memory regions are over-approximately
computed through a reflexive-transitive closure
of $\BasedOnSym_M$, denoted $\BasedOnSym_M^*$.

A {\tt rd} instruction clobbers
the callee-observable state elements arbitrarily.
Thus, if a callee procedure terminates normally (i.e., without error),
{\tt wr} and {\tt rd}
instructions over-approximately model the execution of a procedure-call.
Later, our definition of refinement (\cref{sec:refnDefn})
caters to the case when a callee procedure may not terminate
or terminates with error
(i.e., a termination with error is modeled identically to non-termination).

\subsubsection{Translation of \Ak{}}
\label{sec:translationA}

\begin{figure}
  \begin{scriptsize}
  \begin{minipage}[t]{.5\textwidth}
  \[
  \mprset{flushleft}
  \inferrule*[left=(Op-esp),vcenter]{\PC{\Ak}{j}: \mathtt{esp} \Assign \mathtt{op}(\vv{x})}
  {
    \underline{\mathtt{if}}\ (\mathtt{UB}_{\Ak}(\mathtt{op}, \vv{x}))\ \IhaltU{};
    \\\\
    t \Assign \mathtt{op}(\vv{x});
    \\\\
    \underline{\mathtt{if}}\ (\mathtt{isPush}(\PC{\Ak}{j}, \mathtt{esp}, t))\ \{
      \\\\ \quad
      \underline{\mathtt{if}}\ (\neg \intervalContainedInAddrSet{t}{\mathtt{esp}-1_{\bv{32}}}{\il{\Ak}{\free} \cup (\il{\Ak}{cv} \setminus \il{\Ak}{\F})})\ \IhaltW{};
      \\\\ \quad
      \il{\Ak}{\stk} \Assign \il{\Ak}{\stk} \cup [t, \mathtt{esp}-1_{\bv{32}}];
      \\\\ \quad
      M_{\Ak} \Assign \updM{[t,\mathtt{esp}-1_{\bv{32}}]}{M_{\Ak}}{\IchooseM}; 
      \\\\
    \}\ \underline{\mathtt{else\ if}}\ (t \ne \mathtt{esp})\{ 
      \\\\ \quad
      \underline{\mathtt{if}}\ (\neg \intervalContainedInAddrSet{\mathtt{esp}}{t-1_{\bv{32}}}{\il{\Ak}{\stk}})\ \IhaltU{};
      \\\\ \quad
      \il{\Ak}{\stk} \Assign \il{\Ak}{\stk} \setminus [\mathtt{esp}, t-1_{\bv{32}}];
      \\\\
    \}
    \\\\
    \mathtt{esp} \Assign t;
    \quad
    \ghost{\spV{\PC{\Ak}{j}}} \Assign t;
  }
  \]
  \end{minipage}
  \begin{minipage}[t]{.58\textwidth}
  \[
  \mprset{flushleft}
  \inferrule*[left=(\Load{\Ak}),vcenter]{\PC{\Ak}{j}: v \Assign \mathtt{load}\ w,\, a,\, p}
  {
    \underline{\mathtt{if}}\ (\phantom{{}\lor{}}\neg \mathtt{isAlignedIntrvl}_{a}(p, w)
    \\\\
    \phantom{\underline{\mathtt{if}}\ (}{} \lor \Overlap([p]_{w}, \il{\Ak}{\free} \cup (\il{\Ak}{cv} \setminus \il{\Ak}{\STACK \cup \F})))\ \IhaltU{};
    \\\\
    v \Assign \select_{w}(M_{\Ak}, p);
    \\\\
    \ghost{\rdAcc{\Ak}} \Assign \ghost{\rdAcc{\Ak}} \cup [p]_{w};
  }
  \]
  \[
  \mprset{flushleft}
  \inferrule*[left=(\Store{\Ak}),vcenter]{\PC{\Ak}{j}: \mathtt{store}\ w,\, a,\, p,\, v}
  {
    \underline{\mathtt{if}}\ (\phantom{{}\lor{}}\neg \mathtt{isAlignedIntrvl}_{a}(p, w)
    \\\\
    \phantom{\underline{\mathtt{if}}\ (}{} \lor \Overlap([p]_{w}, \il{\Ak}{\{\free\}\cup\Gro\cup\Fro} \cup (\il{\Ak}{cv} \setminus \il{\Ak}{\STACK \cup \Frw})))
    \\\\ \quad
    \IhaltU{};
    \\\\
    M_{\Ak} \Assign \store_{w}(M_{\Ak}, p, v);
    \\\\
    \ghost{\wrAcc{\Ak}} \Assign \ghost{\wrAcc{\Ak}} \cup [p]_{w};
  }
  \]
  \end{minipage}

  \hspace*{-.4cm}
  \begin{minipage}[t]{.7\textwidth}
  \[
  \mprset{flushleft}
  \inferrule*[left=(\Entry{\Ak}),vcenter]{\PC{\Ak}{j}: \mathtt{def}\ \Ak(\Tuple{\tau})}
  {
    \HLBox{\il{\Ak}{\heap}, \il{\Ak}{cl}, \il{\Ak}{cv}, \ldots, \ii{\Ak}{\g}, \ldots, \ii{\Ak}{\y},\ldots, \il{\Ak}{\yv} \Assign \IreadB{\asetType{},\asetType{},\ldots,\asetType{}}};
    \\\\
    \ldots, \il{\Ak}{\f}, \ldots \Assign \initAddrSets{\F};
    \qquad
    \ldots, \il{\Ak}{\z}, \ldots \Assign \ldots, \emptyset, \ldots;
    \\\\
    \underline{\mathtt{if}}\ (\neg \mathtt{addrSetsAreWF}(\il{\Ak}{\heap}, \il{\Ak}{cl}, \il{\Ak}{cv}, \ldots, \ii{\Ak}{\g}, \ldots, \il{\Ak}{\f}, \ldots,\ii{\Ak}{\y},\ldots, \il{\Ak}{\yv}))
    \\\\ \quad \IhaltW{};
    \\\\
    M_{\Ak} \Assign \IchooseM{};
    \quad
    \HLBox{M_{\Ak} \Assign {\tt upd}_{\il{\Ak}{\NS{}}}(M_{\Ak}, \IreadB{\memType})};
    \\\\
    \underline{\mathtt{for}}\ r\ \mathtt{in\ }\Gro \cup \Fro\ \{\ M_{\Ak} \Assign {\tt upd}_{\ii{\Ak}{r}}(M_{\Ak}, \roMem{\Ak}{r}{\ii{\Ak}{r}});\ \}
    \\\\
    \underline{\mathtt{for}}\ x\mathtt{\ in\ }\vv{regs}\ \{\ x \Assign \Ichoose{\mathtt{T}(x)};\ \}
    \\\\
    \ghost{\spE} \Assign \il{\Ak}{\Y} \ne \emptyset\ ?\ \UBr{\il{\Ak}{\Y}} : \mathtt{esp}+3_{\bv{32}};
    \qquad
    \ghost{\stkE} \Assign \Ichoose{\bv{32}};
    \\\\
    \underline{\mathtt{if}}\ (\neg \mathtt{stkIsWF}(\mathtt{esp}, \ghost{\spE}, \ghost{\stkE}, \Tuple{\tau}, \il{\Ak}{\heap}, \il{\Ak}{cl}, \il{\Ak}{G \cup \F}, \ldots,\ii{\Ak}{\y},\ldots, \il{\Ak}{\yv}))
    \\\\ \quad \IhaltW{};
    \\\\
    \il{\Ak}{\stk} \Assign [\mathtt{esp}, \ghost{\spE}] \setminus \il{\Ak}{\Y};
    \quad
    \il{\Ak}{cs} \Assign [\ghost{\spE}+1_{\bv{32}}, \ghost{\stkE}] \setminus \il{\Ak}{cl};
    \\\\
    \ghost{\spV{entry}} \Assign \mathtt{esp};
    \quad
    \ghost{M^{cs}} \Assign \prjM{\il{\Ak}{cs}}{M_{\Ak}};
    \quad
    \ghost{\rdAcc{\Ak}}, \ghost{\wrAcc{\Ak}} \Assign \emptyset, \emptyset;
    \\\\
    \ghost{ebp}, \ghost{esi}, \ghost{edi}, \ghost{ebx}, \ghost{eip} \Assign \mathtt{ebp}, \mathtt{esi}, \mathtt{edi}, \mathtt{ebx}, \select_4(M_{\Ak}, \mathtt{esp});
    \\\\
    \underline{\mathtt{for}}\ \f\mathtt{\ in\ }\F\ \{ 
      \ghost{\Sz{\f}},\ghost{\Empty{\f}}, \ghost{\LB{\f}}, \ghost{\UB{\f}} \Assign |\il{\Ak}{\f}|, |\il{\Ak}{\f}| = 0_{\bv{32}}, \LBr{\il{\Ak}{\f}}, \UBr{\il{\Ak}{\f}};\ 
    \}
  }
\]
  \end{minipage}
  \begin{minipage}[t]{.35\textwidth}
  \[
  \mprset{flushleft}
  \inferrule*[left=(Op-Nesp),vcenter]{\PC{\Ak}{j}: r \Assign \mathtt{op}(\vv{x}) \\ r \ne \mathtt{esp}}
  {
    \underline{\mathtt{if}}\ (\mathtt{UB}_{\Ak}(\mathtt{op}, \vv{x}))\ \IhaltU{};
    \\\\
    r \Assign \mathtt{op}(\vv{x});
  }
  \]
  \\
    \[
  \mprset{flushleft}
  \inferrule*[left=(\Ret{\Ak}),vcenter]{\PC{\Ak}{j}: \mathtt{ret}\ \tau}
  {
    \underline{\mathtt{if}}\ (\phantom{{}\lor{}} \ghost{\spV{entry}} \ne \mathtt{esp}
    \\\\
    \phantom{\underline{\mathtt{if}}\ (}\lor{} \ghost{ebp} \ne \mathtt{ebp} \lor \ghost{esi} \ne \mathtt{esi}
    \\\\
    \phantom{\underline{\mathtt{if}}\ (}\lor{} \ghost{edi} \ne \mathtt{edi} \lor \ghost{ebx} \ne \mathtt{ebx}
    \\\\
    \phantom{\underline{\mathtt{if}}\ (}\lor{} \ghost{eip} \ne \select_4(M_{\Ak}, \mathtt{esp})
    \\\\
  \phantom{\underline{\mathtt{if}}\ (}\lor{} \neg (\prjMEq{\il{\Ak}{cs}}{\ghost{M^{cs}}}{M_{\Ak}}))\ 
    \IhaltU{};
    \\\\
    \mathtt{IF}\{\tau = \mathtt{void}\}\{\ \IwriteB{\prjM{\il{\Ak}{\NS{}}}{M_{\Ak}}};\ \}
    \\\\
    \mathtt{ELSE}\{\
    \\\\ \ 
    \IwriteB{\Vret{\VretVal{\tau}, \prjM{\il{\Ak}{\NS{}}}{M_{\Ak}}}};
    \\\\
    \}
    \\\\
    \IhaltN{};
  }
  \]
  \end{minipage}%
  \end{scriptsize}
  \caption{\label{fig:xlateRuleAsm} Translation rules for converting pseudo-assembly instructions to graph instructions.
}
\end{figure}

The translation rules for converting assembly instructions to graph instructions are shown in \cref{fig:xlateRuleAsm}.
The assembly opcodes are abstracted to an IR-like syntax for ease of exposition. 
For example, in \TRule{\Load{\Ak}}, a memory read operation is represented by
a \texttt{load} instruction which is annotated with address $p$, access size $w$ (in bytes), and required alignment $a$ (in bytes).
Similarly, in \TRule{\Store{\Ak}}, a memory write operation is represented by
a \texttt{store} instruction with similar operands.
Both \TRule{\Load{\Ak}} and \TRule{\Store{\Ak}} translations update the ghost
address sets \ghost{\rdAcc{\Ak}} and \ghost{\wrAcc{\Ak}}, in the same manner as done in \Ck{}.
Exceptions like division-by-zero 
are modeled as UB in \progA{} through $\mathtt{UB}_{\Ak}$
(rules \TRule{Op-esp} and \TRule{Op-Nesp})

\TRule{Op-esp} shows the translation of an instruction
that updates the stackpointer.
Assignment to the stackpointer register \texttt{esp} may indicate allocation (push)
or deallocation (pop) of stack space.
A stackpointer assignment which corresponds to a stackpointer decrement (push) is
identified through predicate $\mathtt{isPush}(\PC{\Ak}{j}, \iota{}_{b}, \iota_{a})$
where $\iota{}_{b}$ and $\iota{}_{a}$
are the values of \texttt{esp} {\em b}efore and {\em a}fter the execution of the instruction.
We use
$\mathrm{\tt isPush}(\PC{\Ak}{j}, \iota{}_{b}, \iota_{a})\Leftrightarrow (\iota{}_{b} >_u \iota_{a})$.
While this choice of {\tt isPush} suffices for most TV settings, we show in \cref{app:isPush}
that if the translation
is performed by an adversarial compiler,
discriminating a stack push from a pop is trickier and may require external trusted guidance
from the user.
For a stackpointer decrement, a failure to allocate stack space,
either due to wraparound or overlap with other allocated space,
triggers $\mathscr{W}$, i.e., we expect the environment (e.g., OS)
to ensure that the required stack space is available to \Ak{}.
For a stackpointer increment, it
is a translation error if the stackpointer moves out of stack frame bounds (captured
by error code \EU{}).
The stackpointer value at the end of an assignment instruction at PC $\PC{\Ak}{j}$ is saved
in a ghost variable named \ghost{\spV{\PC{\Ak}{j}}}.
These ghost variables help with inference of invariants that relate
a local variable's address with stack addresses
(discussed in \cref{sec:invInferFormal}).
During push,
the initial contents of the newly allocated stack region are chosen non-deterministically using $\theta$
--- this admits the possibility of arbitrary clobbering of
the unallocated stack region below the stackpointer
due to asynchronous external interrupts, before it is allocated again.

\TRule{\Entry{\Ak}} shows the initialization of state elements of procedure \Ak{}.
For region $r\in{}\NS{}$,
the initialization of $\il{\Ak}{r}$ and
\prjM{\il{\Ak}{r}}{M_{\ddAk}}
is similar to \TRule{\Entry{\Ck{}}}.
The address sets of all assembly-only regions $f\in{}\F{}$
are initialized using \progA{}'s symbol table (\initAddrSets{\F}).
The memory contents of a read-only global variable $r\in{}\Gro{}\cup{}\Fro{}$ are initialized
using \roMem{\Ak}{r}{\ii{\Ak}{r}} (defined in \cref{tab:preds}). 
The machine registers
are initialized with arbitrary contents ($\theta$) --- the
constraints on the {\tt esp} register are
checked later, and \EW{} is generated if a constraint
is violated.
The x86 stack of an assembly procedure
includes the stack frame \il{\Ak}{\stk}
of the currently executing procedure \Ak{},
the parameters \il{\Ak}{\Y} of \Ak{}, and
the remaining space which includes caller-stack \il{\Ak}{cs}
and, possibly, the locals $\il{\Ak}{cl}$ defined in the call chain of \Ak{}.
Ghost variable $\ghost{\spV{entry}}$ holds the {\tt esp} value at entry of \Ak{}.
$\ghost{\spE}$ represents the largest address in \il{\Ak}{\Y\cup{}\{\stk{}\}}
so that at entry, $\il{\Ak}{\stk}=[\ghost{\spV{entry}},\ghost{\spE}]\setminus{}\il{\Ak}{\Y}$.
If there are no parameters, $\ghost{\spE}=\mathtt{esp}+3_{\bv{32}}$
represents the end of the region that holds the return address.
Ghost variable $\ghost{\stkE}$ holds the largest address
in $\il{\Ak}{\{\stk,cs,cl\}\cup{}\Y{}}$.
At entry, due to the calling conventions, we assume (through \texttt{stkIsWF()}) that:
(1) the parameters are laid out at addresses above the stackpointer as per
calling conventions ({\tt obeyCC});
(2) the value $\mathtt{esp}+4_{\bv{32}}$ is 16-byte aligned;
and (3) the caller stack is above \Ak{}'s stack frame \il{\Ak}{\stk}.
A violation of these conditions trigger $\mathscr{W}$.
Notice that unlike region $r \in \NS$, region $cv$ may potentially overlap with
assembly-only regions $\F \cup \STACK$.
Thus, while an address $\alpha \in \il{\Ck}{cv}$ is inaccessible in \Ck{},
it is potentially accessible in \Ak{} if $\alpha \in \F \cup \STACK$.

Upon return (rule \TRule{\Ret{\Ak}}), we require that the callee-save registers,
caller stack, and the return address remain preserved ---
a violation of these conditions trigger \EU{}.
For simplicity, we only tackle
scalar return values, and ignore
aggregate return values that need to be passed in memory.

\subsection{Observable Traces and Refinement Definition}
\label{sec:refnDefn}

Recall that a procedure
execution yields an observable
trace containing silent and non-silent events.
The
error code of a trace $T$, written $\eT{T}$, is either $\emptyset{}$ (indicating
either non-termination or error-free termination), or
one of
$\mathscr{r}\in{}\{\mathscr{U}, \mathscr{W}\}$ (indicating
termination with error code $\mathscr{r}$).
The
non-error part of a trace $T$, written $\neT{T}$, is $T$
when $\eT{T} = \emptyset$, and
$T'$ such that $T = T' \cdot \eT{T}$ otherwise.

\begin{definition}
$P \downarrow_{\World{}} T$ denotes the condition that for
an initial outside world $\World{}$, the
execution of a procedure $P$ may produce an observable trace
$T$ (for some sequence of non-deterministic choices).
\end{definition}

\begin{definition}
Traces $T$ and $T'$ are {\em stuttering equivalent}, written $\steq{T}{T'}$,
iff they differ only by finite sequences of silent events $\bot$.
A trace $T$ is a {\em stuttering prefix} of trace $T'$, written $\stprefix{T}{T'}$, iff
$(\steq{T'}{T}) \vee (\exists{T^{\tt r}}:{\steq{T'}{(T\cdot{}T^{\tt r})}})$.
\end{definition}

\begin{definition}
$U^{\World{},\Trace{\Ak}}_{\tt pre}(\Ck)$ denotes the condition: $\exists{\Trace{\Ck}}:{(\Ck\downarrow_{\World{}} \Trace{\Ck} \cdot \mathscr{U}) \land (\stprefix{\Trace{\Ck}}{\Trace{\Ak}})}$.
\end{definition}

\begin{definition}
$W^{\World{},\Trace{\Ak}}_{\tt pre}(\Ck)$ denotes the condition:
$(\eT{\Trace{\Ak}} = \mathscr{W}) \land (\exists{\Trace{\Ck}}: {(\Ck\downarrow_{\World{}} \Trace{\Ck}) \land (\stprefix{\neT{\Trace{\Ak}}}{\Trace{\Ck})})}$
\end{definition}

\begin{definition}
\label{defn:ref1}
$\Ck \refines{} \Ak$, read
\Ak{} refines \Ck{} (or \Ck{} is refined by \Ak{}), iff:
$$
\forall{\World{}}:
{(\execT{\Ak}{\World{}}{\Trace{\Ak}}) \Rightarrow (W^{\World{},\Trace{\Ak}}_{\tt pre}(\Ck)  \vee{} U^{\World{},\Trace{\Ak}}_{\tt pre}(\Ck) \vee{} (\exists{\Trace{\Ck}}:(\execT{\Ck}{\World{}}{\Trace{\Ck}}) \land (\steq{\Trace{\Ak}}{\Trace{\Ck}})))}
$$
\end{definition}
The $W^{\World{},\Trace{\Ak}}_{\tt pre}(\Ck)$ and $U^{\World{},\Trace{\Ak}}_{\tt pre}(\Ck)$ conditions
cater to the cases where \Ak{} triggers $\mathscr{W}$ and \Ck{} triggers $\mathscr{U}$ respectively;
the constituent $\stprefixSym$ conditions ensure that a procedure
call in \Ak{} has identical termination behaviour to a procedure-call in \Ck{} before
an error is triggered.
If neither \Ak{} triggers \EW{} nor \Ck{} triggers \EU{},
$\steq{T_{\Ak{}}}{T_{\Ck{}}}$ ensures that \Ak{} and \Ck{} produce identical
non-silent events at similar speeds.
In the absence of local variables and procedure-calls in \Ck{},
$\Ck \refines{} \Ak$ implies a correct translation from \Ck{} to \Ak{}.

\subsubsection{Refinement Definition in the Presence of Local Variables and Procedure-Calls When All Local Variables Are Allocated on the Stack in \Ak{}}

For each local variable (de)allocation and for each procedure-call,
our execution
semantics generate a {\tt wr} trace event in \Ck{} (\cref{fig:xlateRuleIR}).
Thus, to reason about refinement,
we require correlated and equivalent trace events to be generated in \Ak{}.
For this, we annotate \Ak{} with two types of annotations to obtain \dAk{}:
\begin{enumerate}
\item $\mathtt{alloc}_s$ and $\mathtt{dealloc}_s$ instructions are added to explicitly indicate the (de)allocation of a local variable $\z \in \Z$, e.g., a stack region may be marked as belonging to \z{} through these annotations.
\item A procedure-call, direct or indirect, is annotated with the types and addresses of the arguments and the set of memory
regions observable by the callee.
\end{enumerate} 
These annotations are intended to encode the correlations
with the corresponding allocation, deallocation, and procedure-call events in the
source procedure \Ck{}. For now, we assume that the locations and values of these
annotations in \dAk{} are coming from an oracle --- later in \cref{sec:algo},
we present an
algorithm to identify these annotations automatically in a best-effort
manner.

\begin{figure}
  \begin{scriptsize}
  \begin{minipage}[t]{.5\textwidth}
  \[
  \mprset{flushleft}
  \inferrule*[left=(AllocS),vcenter]{\PC{\dAk}{j}: \mathtt{alloc}_s\ e_v,\, e_w,\, a,\, \z} 
  {
    \IwriteB{\VallocA{\z, e_w, a}};
    \\\\
    v, w \Assign e_v, e_w;
    \\\\
    \underline{\mathtt{if}}\ (\neg \intervalContainedInAddrSetAndAligned{v}{v+w-1_{\bv{32}}}{\il{\dAk}{\stk}}{a})\ \IhaltU{};
    \\\\
    \underline{\mathtt{if}}\ (\Overlap([v]_{w}, \il{\dAk}{cv}))\ \IhaltW{};
    \\\\
    \il{\dAk}{\stk}, \il{\dAk}{\z} \Assign \il{\dAk}{\stk} \setminus [v]_{w}, \il{\dAk}{\z} \cup [v]_{w};
    \\\\
    \IwriteB{\VallocB{\z, [v]_{w}, \prjM{[v]_{w}}{M_{\dAk}}}};
  }
  \]
  \[
  \mprset{flushleft}
  \inferrule*[left=(DeallocS),vcenter]{\PC{\dAk}{j}: \mathtt{dealloc}_s\ \z}
  {
    \il{\dAk}{\z}, \il{\dAk}{\stk} \Assign \emptyset, \il{\dAk}{\stk} \cup \il{\dAk}{\z};
    \\\\
    \IwriteB{\Vdealloc{\z}};
  }
  \]
  \end{minipage}
  \begin{minipage}[t]{.5\textwidth}
  \[
  \mprset{flushleft}
  \inferrule*[left=(\Call{\dAk}),vcenter]{\PC{\dAk}{j}: \mathtt{call}\ \gamma\ \rho(\Tuple{\tau} \, \Tuple{x})\ \CalleeRegions}
  {
    \underline{\mathtt{if}}\ (\neg\mathtt{aligned}_{16}(\mathtt{esp}) \lor \neg\mathtt{obeyCC}(\mathtt{esp}, \Tuple{\tau}, \Tuple{x}))
    \\\\
    \quad \IhaltU{};
    \\\\
    \IwriteB{\Vcall{\rho, \Tuple{x}, \CalleeRegions, \prjM{\il{\dAk}{\CalleeRegions}}{M_{\dAk}}}};
    \\\\
    \HLBox{M_{\dAk} \Assign \updM{\il{\dAk}{\CalleeRegions \setminus \Gro}}{M_{\dAk}}{\IreadB{\memType}}};
    \\\\
    \mathtt{ecx} \Assign \Ichoose{\bv{32}};
    \\\\
    \mathtt{IF}\{\gamma = \mathtt{void}\}\{\ \mathtt{eax}, \mathtt{edx} \Assign \Ichoose{\bv{32},\bv{32}};\ \}
    \\\\
    \mathtt{ELSE}\{\ \HLBox{\mathtt{eax}, \mathtt{edx} \Assign \VgetVal{\gamma}{\IreadB{\gamma}}};\ \}
    \\\\
  }
  \]
  \end{minipage}%
  \end{scriptsize}
  \caption{\label{fig:xlateRuleAsmStackLocals}
    Additional translation rules for converting pseudo-assembly instructions to graph instructions for procedures with only stack-allocated locals.
}
\end{figure}

\Cref{fig:xlateRuleAsmStackLocals} presents three new instructions in \dAk{} ---
$\mathtt{alloc}_s$, $\mathtt{dealloc}_s$, and \texttt{call} ---
and their translations to graph instructions.

An instruction `$\PC{\dAk}{j}: \mathtt{alloc}_s\ e_v,\,e_w,\,a,\,\z$'
represents the stack allocation of a local variable identified by
allocation site \z{}.
$e_v$ is the expression for start address,
$e_w$ is the expression for allocation size, and
$a$ is the required alignment of the start address.
During stack allocation of
a local variable \TRule{AllocS}, the allocated address
must satisfy the required alignment and separation constraints, or
else $\mathscr{U}$ is triggered.
The allocated interval must be separate from region $cv$
\footnote{Recall that $cv$ may potentially overlap with \stk{}
unlike a region $r \in \NS{}$.},
otherwise \EW{} is triggered;
we explain the rationale for triggering \EW{}
in this case in next section when we discuss virtual allocation.
An allocation removes an address interval from
\il{\dAk}{\stk}
and adds it to
\il{\dAk}{\z}.

A `$\PC{\dAk}{j}: \mathtt{dealloc}_s\ \z{}$' instruction represents the deallocation of \z{}
and empties the address set $\il{\dAk}{\z{}}$, adding the removed
addresses to
$\il{\dAk}{\stk}$ \TRule{DeallocS}.

For procedure-calls \TRule{\Call{\dAk}},
we annotate the {\tt call} instruction in assembly to explicitly specify
the start addresses of the address regions belonging to the
arguments (shown as \Tuple{x} in \cref{fig:xlateRuleAsmStackLocals}).
The address region of an argument should have previously been demarcated using
an $\mathtt{alloc}_s$ instruction. Additionally, these address regions
should satisfy the constraints imposed by the calling conventions ({\tt obeyCC}).
The calling
conventions also require the {\tt esp} value to be 16-byte aligned.
A procedure-call is recorded as an observable event, along with the
observation of the
callee name (or address), the addresses of the arguments, callee-observable regions
and their memory contents.
The returned values, modeled through \Iread{\memType} and
\Iread{\gamma},
include the contents of the callee-observable memory regions and the scalar
values returned by the callee (in registers {\tt eax}, {\tt edx}).
The callee additionally clobbers the caller-save registers using $\theta$.

\begin{definition}[Refinement in the presence of only stack-allocated locals]
\label{defn:Ref2}
$\Ck \gtrdot{} \Ak$ iff: $\exists{\dAk}:\Ck \refines{} \dAk$
\end{definition}
$\Ck \gtrdot{} \Ak$ encodes the property that it is possible to annotate \Ak{}
to obtain \dAk{}
so that the local variable (de)allocation and
procedure-call events of \Ck{} and
the annotated
\dAk{} can be
correlated in lockstep.
In the presence of stack-allocated local variables and procedure-calls,
$\Ck \gtrdot{} \Ak$ implies a correct translation from \Ck{} to \Ak{}.
In the absence of local variables and procedure
calls, $\Ck \gtrdot{} \Ak$ reduces to $\Ck \refines{} \Ak$ with $\dAk=\Ak$.

\subsubsection{Capabilities and Limitations of $\Ck \gtrdot{} \Ak$}
$\Ck \gtrdot{} \Ak$ requires that for allocations and procedure calls
that reuse the same stack space, their relative order
remains preserved.
This requirement is sound but may be too strict for
certain (arguably rare) compiler transformations that may reorder the (de)allocation
instructions that reuse the same stack space.
Our refinement definition admits intermittent register-allocation of
(parts of) a local variable.

$\Ck \gtrdot{} \Ak$
supports {\em merging} of multiple allocations
into a single stackpointer
decrement instruction. Let \PC{\Ak}{s} be the PC of a single stackpointer
decrement instruction that implements multiple allocations.
Merging can be encoded by
adding multiple {\tt alloc}$_s$
instructions to \Ak{}, in the same order as
they appear in \Ck{}, to obtain \dAk{}, so that
these {\tt alloc}$_s$ instructions execute
only after \PC{\Ak}{s} executes; similarly,
the corresponding {\tt dealloc}$_s$
instructions must execute before a
stackpointer increment instruction deallocates
this stack space.

CompCert's preallocation
is a special case of merging where stack space for
all local variables is allocated in the assembly procedure's prologue and
deallocated in the epilogue (with no reuse
of stack space).
In this case, our approach
annotates \Ak{} with {\tt (de)alloc}$_s$
instructions, potentially in the middle of the
procedure body, such that they execute in lockstep with
the (de)allocations in \Ck{}.

A compiler may {\em reallocate} stack space by reusing the same
space for two or more local variables with non-overlapping
lifetimes (potentially without an intervening stackpointer increment
instruction). If the relative order
of (de)allocations is preserved,
reallocation can be encoded by annotating \dAk{}
with a {\tt dealloc}$_s$
instruction (for deallocating the first variable) immediately followed by an
{\tt alloc}$_s$ instruction, such that the allocated region potentially
overlaps with the previously deallocated region.
Our refinement
definition may not be able to cater to a translation that changes the relative
order of (de)allocation instructions during reallocation.

Because our execution model observes each (de)allocation event (due
to the {\tt wr} instruction), a successful
refinement check ensures that the allocation states of \dAk{}
and \Ck{} are identical at every correlated callsite. An inductive
argument over \progC{} and \progA{} is thus
used to show that
the address set for region identifier
$cl$ is identical
at the beginning of each correlated pair of procedures \Ck{} and \Ak{} (as modeled
through identical reads from the outside world
in \TRule{\Entry{P}} ($P \in \{ \Ck, \Ak \}$) of \cref{fig:xlateRuleIR,fig:xlateRuleAsm}).

\subsubsection{Refinement Definition in the Presence of Potentially Register-Allocated or Eliminated Local Variables in \Ak{}}
\label{sec:refnDefnVirtual}

\begin{figure}
  \begin{scriptsize}
  \begin{minipage}[t]{.50\textwidth}
  \[
  \mprset{flushleft}
  \inferrule*[left=(AllocV),vcenter]{\PC{\ddAk}{j}: v \Assign \mathtt{alloc}_v\ e_w,\, a,\,\zl}
  {
    \IwriteB{\VallocA{\zl, e_w, a}};
    \\\\
    v, w \Assign \Ichoose{\bv{32}}, e_w;
    \\\\
    \underline{\mathtt{if}}\ (\neg \intervalContainedInAddrSetAndAligned{v}{v+w-1_{\bv{32}}}{\mathtt{comp}(\il{\ddAk}{\NS \cup \{ cv \}})}{a}) 
    \\\\ \quad
    \IhaltW{};
    \\\\
    \ilzv{\ddAk}{\zl} \Assign \ilzv{\ddAk}{\zl} \cup [v]_{w};
    \\\\
    \IwriteB{\VallocB{\zl, [v]_{w}, \prjM{[v]_{w}}{M_{\ddAk}}}};
  }
  \]
  \[
  \mprset{flushleft}
  \inferrule*[left=(AllocS'),vcenter]{\PC{\ddAk}{j}: \mathtt{alloc}_s\ e_v,\,e_w,\,a,\,\z}
  {
    \ldots \\\\
    \underline{\mathtt{if}}\ (\Overlap([v]_{w}, \il{\ddAk}{cv} \HighlightMath{\cup \ilZv{\ddAk}}))\ \IhaltW{};
    \\\\
    \hcancel{\il{\dAk}{\stk}, \il{\dAk}{\z{}} \Assign \il{\dAk}{\stk} \setminus [v]_{w}, \il{\dAk}{\z{}} \cup [v]_{w};}
    \\\\
    \HighlightMath{\mathtt{IF}\{\z \in \Zl\}\ \{\ \il{\ddAk}{\stk}, \ilzs{\ddAk}{\z} \Assign \il{\ddAk}{\stk} \setminus [v]_{w}, \ilzs{\ddAk}{\z} \cup [v]_{w}; \}}
    \\\\
    \HighlightMath{\mathtt{ELSE}\ \{ \ \il{\ddAk}{\stk}, \il{\ddAk}{\z} \Assign \il{\ddAk}{\stk} \setminus [v]_{w}, \il{\ddAk}{\z} \cup [v]_{w}; \}}
    \\\\
    \ldots
  }
  \]
  \[
  \mprset{flushleft}
  \inferrule*[left=(DeallocS'),vcenter]{\PC{\ddAk}{j}: \mathtt{dealloc}_s\ \z{}}
  {
    \hcancel{\il{\dAk}{\z}, \il{\dAk}{\stk} \Assign \emptyset, \il{\dAk}{\stk} \cup \il{\dAk}{\z};}
    \\\\
    \HighlightMath{\mathtt{IF}\{\z \in \Zl\}\ \{}
    \\\\ \quad
    \HighlightMath{\underline{\mathtt{if}}\ (\ilzv{\ddAk}{\z} \ne \emptyset) \ \IhaltU{};}
    \\\\ \quad
  \HighlightMath{\ilzs{\ddAk}{\z},\ \il{\ddAk}{\stk} \Assign \emptyset,\ \il{\ddAk}{\stk} \cup \ilzs{\ddAk}{\z};\ \}}
    \\\\
    \HighlightMath{\mathtt{ELSE}\ \{\ \il{\ddAk}{\z},\ \il{\ddAk}{\stk} \Assign \emptyset,\ \il{\ddAk}{\stk} \cup \il{\ddAk}{\z};\ \}}
    \\\\
    \IwriteB{\Vdealloc{\z{}}};
  }
  \]
  \end{minipage}
  \begin{minipage}[t]{.45\textwidth}%
  \[
  \mprset{flushleft}
  \inferrule*[left=(DeallocV),vcenter]{\PC{\ddAk}{j}: \mathtt{dealloc}_v\ \zl}
  {
    \underline{\mathtt{if}}\ (\ilzs{\ddAk}{\zl} \ne \emptyset) \ \IhaltU{};
    \\\\
    \ilzv{\ddAk}{\zl} \Assign \emptyset;
    \\\\
    \IwriteB{\Vdealloc{\zl}};
  }
  \]
  \[
  \mprset{flushleft}
  \inferrule*[left=(Op-esp'),vcenter]{\PC{\ddAk}{j}: \mathtt{esp} \Assign \mathtt{op}(\vv{x})}
  {
    \ldots \\\\
    \intervalContainedInAddrSet{t}{\mathtt{esp}-1_{\bv{32}}}{\il{\ddAk}{\free} \cup ((\il{\ddAk}{cv}\HighlightMath{\cup \ilZv{\ddAk}}) \setminus \il{\ddAk}{\F})}
    \\\\
    \ldots
  }
  \]
  \[
  \mprset{flushleft}
  \inferrule*[left=(\Entry{\ddAk}),vcenter]{\PC{\ddAk}{j}: \mathtt{def}\ \ddAk(\Tuple{\tau})}
  {
    \ldots
    \\\\
    (\text{same as \cref{fig:xlateRuleAsm}})
    \\\\
    \ldots
    \\\\
    \hcancel{\ldots, \il{\ddAk}{\z}, \ldots \Assign \ldots, \emptyset, \ldots;}
    \\\\
    \HighlightMath{\ldots, \il{\ddAk}{\za}, \ldots \Assign \ldots, \emptyset, \ldots;}
    \\\\
    \HighlightMath{\underline{\mathtt{for}}\ \z\mathtt{\ in\ }\Zl\ \{\
        \ilzs{\ddAk}{\z}, \ilzv{\ddAk}{\z} \Assign \emptyset, \emptyset;
    \}}
  }
  \]
  \[
  \mprset{flushleft}
  \inferrule*[left=(\Load{\ddAk}),vcenter]{\PC{\ddAk}{j}: v \Assign \mathtt{load}\ w\ a\ p}
  {
    \ldots \\\\
    \Overlap([p]_{w}, \il{\ddAk}{\free} \cup ((\il{\ddAk}{cv} \HighlightMath{\cup (\ilZv{\ddAk})}) \setminus \il{\ddAk}{\F \cup \STACK{}}))
    \\\\
    \ldots
  }
  \]
  \[
  \mprset{flushleft}
  \inferrule*[left=(\Store{\ddAk}),vcenter]{\PC{\ddAk}{j}: \mathtt{store}\ w\ a\ p\ v}
  {
    \ldots \\\\
    \Overlap([p]_{w}, \il{\ddAk}{\{\free\}\cup\Gro\cup\Fro} \cup ((\il{\ddAk}{cv}\HighlightMath{\cup (\ilZv{\ddAk})}) \setminus \il{\ddAk}{\Frw \cup \STACK}))
    \\\\
    \ldots
  }
  \]
  \end{minipage}
  \end{scriptsize}
  \caption{\label{fig:xlateRuleAsmAllLocals}
    Additional and revised translation rules for converting pseudo-assembly instructions to graph instructions for procedures with both stack and register allocated (or eliminated) locals.
}
\end{figure}

If a local variable $\zl{}\in{}\Zl{}$ is either register-allocated or eliminated in \Ak{}, there exists
no stack region in \Ak{} that can be associated with \zl. However, recall that our execution
model observes each allocation event in \Ck{} through the {\tt wr} instruction.
Thus, for a successful refinement
check, a correlated allocation event still needs to be annotated in \Ak{}.
We pretend that a correlated allocation occurs in \Ak{} by
introducing
the notion of a {\em virtual allocation} instruction, called {\tt alloc}$_v$, in \Ak{}.
\Cref{fig:xlateRuleAsmAllLocals} shows the virtual (de)allocation
instructions,
{\tt alloc}$_v$ and
{\tt dealloc}$_v$, and the revised translations
of
procedure-entry and
{\tt alloc}$_s$, {\tt dealloc}$_s$, {\tt load}, {\tt store}, and {\tt esp}-modifying instructions.
Instead of reproducing the full translations, we only show the changes with appropriate
context.
The additions have a $\HighlightMath{\text{highlighted}}$ background and
deletions are \hcancel{striked out}.
We update and annotate \Ak{} with the translations and instructions in
\cref{fig:xlateRuleAsmStackLocals,fig:xlateRuleAsmAllLocals}
to obtain \ddAk{}.

A `$\PC{\ddAk}{j}: {\tt v} \Assign{} \mathtt{alloc}_v\ e_w,\,a,\,\z$' instruction
non-deterministically chooses the start address (using $\theta(\bv{32})$)
of a local variable \z{}
of size $e_w$ and alignment $a$, performs a virtual allocation,
and returns the start address in {\tt v}
(\TRule{AllocV} in \cref{fig:xlateRuleAsmAllLocals}
shows the graph translation).
The chosen start address is {\em assumed} to satisfy
the desired WF constraints, such as separation (non-overlap) and
alignment; error $\mathscr{W}$ is triggered otherwise. Notice that this
is in contrast to $\mathtt{alloc}_s$ where error $\mathscr{U}$ is
triggered on WF violation to indicate that it is the compiler's responsibility
to ensure the satisfaction of WF constraints.
Unlike a stack allocation where the
compiler chooses the allocated region (and the validator identifies it through
an {\tt alloc$_s$} annotation), a virtual allocation is only a
validation construct (the compiler is not involved) that is used only to enforce a
lockstep correlation of allocation events. By triggering $\mathscr{W}$ on a failure
during a virtual allocation, we effectively assume that allocation through {\tt alloc$_v$}
satisfies
the required WF conditions.

For simplicity, we support virtual allocations only for a variable declaration
$\zl{}\in{}\Zl{}$.
Thus, we expect a call to {\tt alloca()} at $\za{}\in{}\Za{}$
to always be stack-allocated in \ddAk{}.
In \ddAk{}, we replace the single variable \il{\ddAk}{\zl} with
two variables $\ilzs{\ddAk}{\zl}$ and $\ilzv{\ddAk}{\zl}$ that represent the
address sets corresponding to the stack
and
virtual-allocations due to allocation-site \zl{} respectively.
We compute $\il{\ddAk}{\zl}=\ilzs{\ddAk}{\zl} \cup \ilzv{\ddAk}{\zl}$ (but
we do not maintain a separate variable \il{\ddAk}{\zl}).
We also assume that a single variable declaration \zl{} in \Ck{} may
either correlate with only stack-allocations (through {\tt alloc}$_s$)
or only
virtual-allocations (through {\tt alloc}$_v$) in \ddAk{}\footnote{For simplicity,
  we do not tackle path-specializing transformations that may require, for
  a single variable declaration \zl, a stack-allocation on one
  assembly path and a virtual-allocation
  on another. Such transformations are arguably rare.},
i.e.,
$\ilzs{\ddAk}{\zl} \cap \ilzv{\ddAk}{\zl} = \emptyset$ holds at all times.
For convenience, we define $\ilZv{\ddAk}=\bigcup_{\zl \in \Zl}(\ilzv{\ddAk}{\zl})$.

Importantly, a virtual allocation must be
separate from other \Ck{} allocated regions ($\NS \cup \{ cv \}$) but
may overlap with assembly-only regions $(\F\cup{}\STACK{})$.
Thus,
in the revised semantics of \TRule{Op-esp'},
a stack push is allowed to overstep a virtually-allocated region.

An instruction `$\PC{\ddAk}{j}:\ \mathtt{dealloc}_v\ \zl$' in \ddAk{}
empties the address set \ilzv{\ddAk}{\zl} and
produces an observable event through \texttt{wr} instruction
(\TRule{DeallocV} in \cref{fig:xlateRuleAsmAllLocals}
shows the graph translation).
An execution of $\mathtt{dealloc}_v$ where
$\ilzs{\ddAk}{\zl}$ is non-empty
triggers error \EU{}, i.e.,
we require an error-free
execution of $\mathtt{dealloc}_v$ to ``empty''
the address set \il{\ddAk}{\zl} (defined as
$\il{\ddAk}{\zl} = \ilzs{\ddAk}{\zl} \cup \ilzv{\ddAk}{\zl}$).
Thus, we ensure the emptiness of \il{\ddAk}{\zl} before
producing the observable trace for deallocation of \zl{}
(similar to \hyperref[fig:xlateRuleIR]{\texttt{dealloc}} in \Ck{}).

The revised semantics of the {\tt alloc$_s$} instruction
\TRule{AllocS'} assume that stack-allocated local memory is separate from
virtually-allocated regions
(\ilzv{\ddAk}{\Zl}).
The revised semantics of memory access instructions
(\TRule{\Load{\ddAk}}
and \TRule{\Store{\ddAk}})
enforce that a virtually-allocated region must never be accessed in \ddAk{},
unless it also
happens to belong to the assembly-only regions $(\F \cup \STACK{})$.

Similarly to $\mathtt{dealloc}_v$,
in the revised semantics \TRule{DeallocS'},
$\mathtt{dealloc}_s$ triggers
\EU{} if \ilzv{\ddAk}{\zl} ($\zl \in \Zl$) is non-empty,
ensuring the execution of $\mathtt{dealloc}_s$ empties \il{\ddAk}{\zl}
($= \ilzs{\ddAk}{\zl} \cup \ilzv{\ddAk}{\zl}$).
Effectively, a
lockstep
correlation of virtual allocations in \ddAk{}
with allocations
in \Ck{} ensures
that the allocation states of both procedures
always agree for regions $r\in{}\NS{} \cup \{ cv \}$.

The purpose of the $cv$ or callers' virtual region should be clear now:
$cv$ or callers's virtual region of an assembly procedure \ddAk{}
is the set of virtually-allocated addresses in \ddAk{}'s call chain.
At a procedure-call, the address set \il{\ddAk}{cv} for a
callee is computed as $\il{\ddAk}{cv} \cup \ilzv{\ddAk}{\Zl}$.
The lockstep correlation of allocation states
(due to observation of (de)allocation) enables us to
define \il{\Ck}{cv} for a callee in \Ck{} using \il{\ddAk}{cv}.
As a virtual allocation is supposed to correspond to a
register-allocated or an eliminated local, region
$cv$ is assumed to be inaccessible in the callee%
\footnote{For a caller local to be accessible in a callee, it should have
its address taken.
An address-taken local cannot be register-allocated or eliminated.}.
This is sound because the set of observable regions for a callee constitute
an observable in the caller
and the equality of observables is required for establishing refinement.

\begin{definition}[Refinement with stack and virtually-allocated locals]
\label{defn:Ref3}
$\Ck \Supset{} \Ak$ iff: $\exists{\ddAk}:\Ck \refines{} \ddAk$
\end{definition}

Recall that
$\Ck \refines{} \ddAk$ requires that {\em for all} non-deterministic choices of
a virtually allocated local variable address in \ddAk{} ($v$ in \TRule{AllocV}),
there {\em exists} a non-deterministic choice
for the correlated local variable address in \Ck{} ($v$ in \TRule{Alloc} in \cref{fig:xlateRuleIR})
such that: if \ddAk{}'s execution is well-formed (does not trigger $\mathscr{W}$), and \Ck{}'s
execution is UB-free (does not trigger $\mathscr{U}$), then
the two allocated intervals are identical (the observable values created through \VallocASym{}
and \VallocBSym{} must be equal).

In the presence of potentially register-allocated and eliminated
local variables,
$\Ck \Supset{} \Ak$ implies a correct translation from \Ck{} to \Ak{}.
If all local variables are allocated in stack, $\Ck \Supset{} \Ak$ reduces to $\Ck \gtrdot{} \Ak$ with $\ddAk=\dAk$.
\Cref{fig:example1a} is an example of an annotated \ddAk{}.

\section{Witnessing Refinement through a Determinized Cross-Product $\ddAk\boxtimes{}\Ck$}
\label{sec:product}

We first introduce program paths and their properties.
Let $P\in\{\Ck,\ddAk\}$.
Let $e_{P}=(n_{P}\rightarrow{}n_{P}^t)\in{}\mathcal{E}_{P}$ represent an edge from node
$n_{P}$ to node $n_{P}^t$, both drawn from $\mathcal{N}_{P}$.
A {\em path} $\xi_{P}$ from $n_P$ to $n^t_P$,
written $\xi_{P}=n_P\twoheadrightarrow{}n^t_P$,
is a sequence
of $m\geq{}0$ edges $(e_P^1, e_P^2, \ldots, e_P^m)$
with $\forall_{1\leq{}j\leq{m}}:{e_P^j=(n_P^{f,j}\rightarrow{}n_P^{t,j})\in{}\mathcal{E}_P}$,
such that
$n_P^{f,1}=n_P$,
$n_P^{t,m}=n_P^t$,
and
$\bigwedge\limits_{j=1}^{m-1}(n_P^{t,j}=n_P^{f,j+1})$.
Nodes $n_P$ and $n_P^t$ are called
the {\em source} and {\em sink} nodes of
$\xi_{P}$ respectively. Edge $e_P^j$ (for some $1\leq{}j\leq{}m$) is said to
be present in $\xi_{P}$, written $e_P^j\in{}\xi_{P}$.
An
empty sequence, written $\epsilon$, represents the {\em empty path}.
The {\em path condition}
of a path $\xi_{P}=\PathT{n_P}{n^t_P}$,
written $pathcond(\xi_{P})$,
is a conjunction of the edge conditions of the
constituent edges. Starting at $n_P$,
$pathcond(\xi_{P})$ represents the condition that
$\xi_{P}$
executes to completion.

A sequence of edges corresponding to a \Shaded{} statement in the translations
(\cref{fig:xlateRuleIR,fig:xlateRuleAsm,fig:xlateRuleAsmStackLocals,fig:xlateRuleAsmAllLocals})
is distinguished and identified as an {\em I/O path}.
An I/O path must contain either a single {\tt rd} or a single {\tt wr} instruction.
For example, the sequence of edges corresponding to
``$\Iwrite{\Vcall{\rho,\Tuple{x}, \CalleeRegions, \prjM{(\il{\Ck}{\CalleeRegions}}{M_{\Ck}}}}$''
and
``$M_{\Ck} \Assign \updM{\il{\Ck}{\CalleeRegions \setminus \Gro}}{M_{\Ck}}{\Iread{\memType}}$''
in \TRule{\Call{\Ck}} (\cref{fig:xlateRuleIR})
refer to two separate I/O paths. A path without any {\tt rd} or {\tt wr} instructions
is called an {\em I/O-free path}.

\subsection{Determinized Product Graph as a Transition Graph}
\label{sec:detX}
A product program, represented as a {\em determinized product graph}, also called a comparison
graph or a cross-product, $\Xk=\ddAk\boxtimes{}\Ck=(\mathcal{N}_{\Xk},\mathcal{E}_{\Xk},\DXk{})$, is a directed multigraph with
finite sets of nodes $\mathcal{N}_{\Xk}$ and edges $\mathcal{E}_{\Xk}$,
and a {\em deterministic choice map} \DXk{}.
\Xk{} is used to encode a lockstep execution
of \ddAk{} and \Ck{}, such that $\mathcal{N}_{\Xk{}}\subseteq{}\mathcal{N}_{\ddAk}\times{}\mathcal{N}_{\Ck}$.
The start node of \Xk{} is $n^s_{\Xk{}}=(n^s_{\ddAk},n^s_{\Ck})$ and all nodes
in $\mathcal{N}_{\Xk{}}$ must be reachable from $n^s_{\Xk{}}$.
A node $n_{\Xk{}}=(n_{\ddAk},n_{\Ck})$
is an error node iff either $n_{\ddAk}$ or $n_{\Ck}$ is an error node.
$\NNP{\Xk{}}$ denotes the set of non-error nodes in \Xk{}, such that
$n_{\Xk{}}=(n_{\ddAk},n_{\Ck})\in{}\NNP{\Xk{}}\Leftrightarrow{}(n_{\ddAk}\in{}\NNP{\ddAk}\land{}n_{\Ck}\in{}\NNP{\Ck})$.

Let $n_{\Xk{}}=(n_{\ddAk},n_{\Ck})$ and $n_{\Xk{}}^t=(n^t_{\ddAk},n^t_{\Ck})$ be
nodes in $\mathcal{N}_{\Xk{}}$; let
$\xi_{\ddAk}=n_{\ddAk}\twoheadrightarrow{}n^t_{\ddAk}$
be a finite path in \ddAk{};
and let $\xi_{\Ck}=n_{\Ck}\twoheadrightarrow{}n^t_{\Ck}$
be a
finite path in \Ck{}.
Each edge,
$e_{\Xk{}}=(\XEdgeT{n_{\Xk{}}}{\xi_{\ddAk};\xi_{\Ck}}{n^t_{\Xk{}}})\in{}\mathcal{E}_{\Xk{}}$,
is defined as a sequential execution
of $\xi_{\ddAk}$
followed by $\xi_{\Ck}$.
The execution of $e_{\Xk{}}$ thus transfers
control of \Xk{}
from $n_{\Xk{}}$ to $n_{\Xk{}}^t$.
The machine state of \Xk{} is the concatenation of the machine states
of \ddAk{} and \Ck{}.
The outside world of \Xk{}, written $\World{\Xk{}}$, is
a pair of the outside worlds of \ddAk{} and \Ck{}, i.e.,
$\World{\Xk{}}=(\World{\ddAk},\World{\Ck})$.
Similarly, the trace generated by \Xk{}, written $\Trace{\Xk{}}$,
is a pair of the traces generated by \ddAk{} and \Ck{},
i.e., $\Trace{\Xk{}}=(\Trace{\ddAk},\Trace{\Ck})$.

During an execution
of $e_{\Xk{}}=(\XEdgeT{n_{\Xk{}}}{\xi_{\ddAk};\xi_{\Ck}}{n^t_{\Xk{}}})\in{}\mathcal{E}_{\Xk{}}$,
let $\vv{x}_{\ddAk}$ be variables in \ddAk{} just
at the end of the execution of path $\xi_{\ddAk}$ (at $n_{\ddAk}^t$)
but before the execution
of path $\xi_{\Ck}$ (recall, $\xi_{\ddAk}$ executes before $\xi_{\Ck}$).
$\DXk{}:((\EXk{}\times\EP{\Ck}\times{}\mathbb{N})\to\mathtt{ExprList})$,
called a {\em deterministic choice map}, is a partial function
that maps edge $e_{\Xk}\in{\mathcal{E}_{\Xk}}$,
and 
the $n^{th}$ (for $n\in{}\mathbb{N}$)
occurrence of
an edge `$e^{\theta}_{\Ck}\in{}\xi_{\Ck}$'
labeled with instruction $\vv{v} \Assign \Ichoose{\vv{\tau}}$
to a list of expressions $E(\vv{x}_{\ddAk})$.
The semantics of \DXk{} are such that,
if $\DXk{}(e_{\Xk}, e^{\theta}_{\Ck}, n)$ is defined,
then during an execution of $e_{\Xk}$,
an execution of the $n^{th}$ occurrence
of edge $e^{\theta}_{\Ck}\in{}\xi_{\Ck}$ labeled with
$\vv{v} \Assign \Ichoose{\vv{\tau}}$
is semantically equivalent to an
execution of $\vv{v} \Assign \DXk{}(e_{\Xk}, e^{\theta}_{\Ck}, n)$; otherwise,
the original non-deterministic semantics of $\theta$ are used.

\DXk{} determinizes (or refines) the non-deterministic choices in \Ck{}.
For example, in a product
graph \Xk{} that correlates the programs in \cref{fig:example1i}
and \cref{fig:example1a},
let $e^2_{\Xk}\in{}\EXk{}$ correlate single instructions {\tt I2} and {\tt A4.2}.
Let $e^{\mathtt{I2},\theta_a}_{\Ck}$ represent the edge labeled
with $\alpha_{b} \Assign \Ichoose{\bv{32}}$ as a part 
of the translation of the {\tt alloc} instruction at {\tt I2}, as seen in \TRule{Alloc}.
Then, 
$\DXk(e^2_{\Xk}, e^{\mathtt{I2},\theta_a}_{\Ck}, 1)={\tt esp}$ is identified by
the first operand of the annotated {\tt alloc$_s$} instruction at {\tt A4.2}.
Similarly, if another edge $e^{\mathtt{I2},\theta_m}_{\Ck}$ (in the
translation of {\tt alloc} at {\tt I2}) is labeled
with $\IchooseM$
(due to $M_{\Ck} \Assign \updM{[\alpha_{b},\alpha_{e}]}{M_{\Ck}}{\IchooseM}$),
then
$\DXk(e^2_{\Xk}, e^{\mathtt{I2},\theta_m}_{\Ck}, 1)=M_{\ddAk}$, i.e., the
initial contents of the newly-allocated region in \Ck{} are based on the
contents of the correlated uninitialized stack region in \ddAk{}.
Similarly,
let $e^1_{\Xk}\in{}\EXk{}$ correlate single instructions {\tt I1} and {\tt A4.1}
so that
$\DXk(e^1_{\Xk}, e^{\mathtt{I1},\theta_a}_{\Ck}, 1)=\mathtt{v}_{\tt I1}$ and
$\DXk(e^1_{\Xk}, e^{\mathtt{I1},\theta_m}_{\Ck}, 1)=M_{\ddAk}$.

For a path \CPath{} in \Ck{}, \CPathD{}
denotes a {\em determinized path} that is identical to \CPath{}
except that:
if $\DXk{}(e_{\Xk}, e^{\theta}_{\Ck}, n)$ is defined, then
the $n^{th}$ occurrence of edge $e^{\theta}_{\Ck}\in{}\CPath{}$,
labeled with $\vv{v} \Assign \Ichoose{\vv{\tau}}$,
is replaced with a new edge $e^{\theta_n^{\prime}}_{\Ck}$
labeled with $\vv{v} \Assign \DXk{}(e_{\Xk}, e^{\theta}_{\Ck}, n)$.

Execution of a product
graph \Xk{} must begin at node $n^s_{\Xk{}}$ in an initial machine state where
$\World{\ddAk}=\World{\Ck}$
and $\steq{T_{\ddAk}}{T_{\Ck}}$ hold.
Thus, \Xk{} is a transition
graph with its execution semantics derived from
the semantics of \ddAk{} and \Ck{}, and the map \DXk{}.

\subsection{Analysis of the Determinized Product Graph}
Let $\Xk{}=\ddAk\boxtimes{}\Ck=(\mathcal{N}_{\Xk{}},\mathcal{E}_{\Xk{}},\DXk{})$ be
a determinized product graph.
At each
non-error
node $n_{\Xk}\in \NNP{\Xk}$,
we
infer a node 
invariant, $\phi_{n_{\Xk}}$, which is
a first-order logic predicate
over state elements of \Xk{} at
node $n_{\Xk}$ that holds
for all possible executions of \Xk{}.
A node invariant $\phi_{n_{\Xk}}$ relates the values of state elements
of \Ck{} and \ddAk{} that can be observed at $n_{\Xk}$.

\begin{definition}[Hoare Triple]
\label{defn:hoareTriple}
Let $n_{\Xk}=(n_{\ddAk},n_{\Ck})\in{}\NNP{\Xk{}}$.
Let
$\APath{}=\PathT{n_{\ddAk}}{n^t_{\ddAk}}$ and
$\CPath{}=\PathT{n_{\Ck}}{n^t_{\Ck}}$ be
paths in \ddAk{} and \Ck{}.
A {\em Hoare triple},
written
$\hoareTriple{pre}{(\APath{};\CPath{})}{post}$,
denotes
the statement: if execution starts at
node $n_\Xk{}$ in
state $\sigma$ such
that
predicate
$pre(\sigma)$ holds,
and if paths $\APath{};\CPath{}$ are executed
in sequence to completion
finishing in state $\sigma'$,
then
predicate
$post(\sigma')$ holds.
\end{definition}

\begin{definition}[Path cover]
\label{defn:pathCover}
At a node $n_{\Xk}=(n_{\ddAk},n_{\Ck})\in{}\NXk{}$, for a path $\APath{}=\PathT{n_{\ddAk}}{n^t_{\ddAk}}$,
let $\forall_{1\leq{}j\leq{}m}:e^{j}_{\Xk}=\XEdgeT{n_{\Xk}}{\xi_{\ddAk};\xi^j_{\Ck}}{n_{\Xk}^{t_j}}$
be all edges in $\mathcal{E}_{\Xk}$,
such that $n_{\Xk}^{t_j}=(n_{\ddAk}^{t},n_{\Ck}^{t_j})$.
The set of edges $\{e^1_{\Xk},e^2_{\Xk},\ldots,e^m_{\Xk}\}$ {\em covers path $\xi_{\ddAk}$},
written
\pthcover{\XEdgeN}{n_\Xk}{\DXk}{\xi_{\ddAk}},
iff
$\hoareTriple{\phi_{n_{\Xk}}}{(\APath{}; \epsilon)}{\bigvee\limits_{j=1}^{m}{\pathcond{\CPathDN{j}}}}$ holds.
\end{definition}

\begin{definition}[Path infeasibility]
At a node $n_{\Xk}=(n_{\ddAk},n_{\Ck})\in{}\NXk{}$, a path $\APath = \PathT{n_{\ddAk}}{n^t_{\ddAk}}$ is {\em infeasible} at $n_{\Xk}$ iff
$\hoareTriple{\phi_{n_{\Xk}}}{(\APath{}; \epsilon)}{\mathtt{false}}$ holds.
\end{definition}

\begin{definition}[Mutually exclusive paths]
Two paths,
$\PPath{P}^{1} = n_P\twoheadrightarrow{}n^{t_1}_P$
and
$\PPath{P}^{2} = n_P\twoheadrightarrow{}n^{t_2}_P$,
both originating at
node $n_P$
are {\em mutually-exclusive},
written
$\PPath{P}^{1} \Bumpeq{} \PPath{P}^{2}$,
iff neither is a prefix of the other.
\end{definition}

\begin{definition}
A {\em pathset} $\Pathset{P}$ is a
set of pairwise
mutually-exclusive paths $\Pathset{P}=\{\PPath{P}^1,\PPath{P}^2,\ldots,\PPath{P}^m\}$
originating at the same node $n_P$, i.e.,
$\forall_{1\leq{}j\leq{}m}:{\PPath{P}^j=\PathT{n_P}{n^j_P}}$ and
$\forall_{1\leq{}j_1{}<j_2{}\leq{}m}:({\PPath{P}^{j_1}\Bumpeq{}\PPath{P}^{j_2}})$.
\end{definition}

\subsubsection{\Xk{} Requirements}
\label{sec:reqX}
\noindent
The following
requirements on \Xk{} help
witness $\Ck{}\refines{}\ddAk{}$:
\begin{enumerate}[label={\arabic*.}]
\item  (Mutex\ddAk): For each node $n_{\Xk}$
	with {\em all} outgoing edges $\{e^1_{\Xk},e^2_{\Xk},\ldots,e^m_{\Xk}\}$
	such that $\XEdge^j=(\XEdgeT{n_{\Xk}}{\xi^j_{\ddAk};\xi^j_{\Ck}}{n^j_{\Xk}})$ (for $1\leq{}j\leq{}m$),
	the following holds:
	$\forall_{1\leq{}j_1,j_2\leq{}m}:((\APathN{j_1}=\APathN{j_2})\vee{}(\APathN{j_1}\Bumpeq{}\APathN{j_2}))$.

\item  (Mutex\Ck):
 At each node $n_{\Xk}$, for a path \APath{},
	let $\{e^1_{\Xk},e^2_{\Xk},\ldots,e^m_{\Xk}\}$
	be a set of {\em all} outgoing edges such that
  $\XEdge^j=\XEdgeT{n_{\Xk}}{\APath; \CPathN{j}}{n^t_{\Xk}}$ (for $1\leq j \leq m$).
	Then,
  the set $\{\CPathN{1},\CPathN{2},\ldots,\CPathN{m}\}$ must be a pathset.

\item (Termination)
For each non-error node $n_{\Xk} = (n_{\ddAk},n_{\Ck}) \in \NNP{\Xk}$,
$n_{\ddAk}$ is a terminating node iff $n_{\Ck}$ is a terminating node.

\item  (SingleIO):
For each edge $e_{\Xk}=(\XEdgeT{n_{\Xk}}{\xi_{\ddAk};\xi_{\Ck}}{n_{\Xk}^{t}})\in{}\mathcal{E}_{\Xk{}}$,
either both \APath{} and $\CPath{}$ are I/O paths, or
both \APath{} and $\CPath{}$ are I/O-free.

\item  (Well-formedness): If a node of the form $n_{\Xk}=(\_, \mathscr{W}_{\Ck})$ exists in $\mathcal{N}_{\Xk}$,
then $n_{\Xk}$ must be $(\mathscr{W}_{\ddAk}, \mathscr{W}_{\Ck})$.

\item  (Safety): If a node of the form $n_{\Xk}=(\mathscr{U}_{\ddAk}, \_)$ exists in $\mathcal{N}_{\Xk}$,
then $n_{\Xk}$ must be $(\mathscr{U}_{\ddAk}, \mathscr{U}_{\Ck})$.

\item  (Similar-speed):
Let $(e^1_\Xk,e^2_\Xk,\ldots,e^m_\Xk)$ be a cyclic path, so that
$\forall_{1 \leq j \leq m}:{\XEdge^j=(\XEdgeT{n_{\Xk}^{f,j}}{\xi^j_{\ddAk};\xi^j_{\Ck}}{n_{\Xk}^{t,j}})\in{}\EXk{}}$;
$n_{\Xk}^{f,1}=n_{\Xk}^{t,m}$; and
$\bigwedge\limits_{j=1}^{m-1}{(n_{\Xk}^{t,j} = n_{\Xk}^{f,j+1})}$.
For each cyclic path,
$(\neg{}\bigwedge\limits_{j=1}^m(\xi^j_{\ddAk}=\epsilon))\land{}(\neg{}\bigwedge\limits_{j=1}^m(\xi^j_{\Ck}=\epsilon))$
holds.

\item  (Coverage\ddAk{}):
  For each non-error node $n_{\Xk} = (n_{\ddAk},n_{\Ck}) \in \NNP{\Xk}$ and
  for each possible outgoing path $\APathO{} = \PathT{n_{\ddAk}}{n^o_{\ddAk}}$,
  either
  $\APathO$ is infeasible at $n_{\Xk}$, or 
  there exists $\XEdge = (\XEdgeT{n_{\Xk}}{\APath;\CPath}{n^t_{\Xk}})\in\EXk$ such that either
      \APath{} is a prefix of $\APathO$ or
      $\APathO$ is a prefix of \APath{}.

\item  (Coverage\Ck{}):
 At node $n_{\Xk}$, for some \APath{},
	let $\{e^1_{\Xk},e^2_{\Xk},\ldots,e^m_{\Xk}\}$
	be the set of {\em all} outgoing edges such that
  $\XEdge^j=\XEdgeT{n_{\Xk}}{\APath; \CPathN{j}}{(n^t_{\ddAk}, n_{\Ck}^{t_j})}$ (for $1\leq{}j\leq{}m$).
	Then,
\pthcover{\XEdgeN}{n_\Xk}{\DXk}{\xi_{\ddAk}}
holds.

\item  (Inductive):
	For each non-error edge $e_{\Xk}=(\XEdgeT{n_{\Xk}}{\APath{};\CPath{}}{n_{\Xk}^{t}})\in\mathcal{E}_{\Xk}$,
$\hoareTriple{\phi_{n_{\Xk}}}{(\APath{};\CPathD{})}{\phi_{n_{\Xk}^t}}$ holds.

\item  (Equivalence): For each non-error node $n_{\Xk} = (n_{\ddAk},n_{\Ck}) \in \NNP{\Xk}$,
$\Omega_{\ddAk}=\Omega_{\Ck}$ must belong to $\phi_{n_{\Xk}}$.

\item  (Memory Access Correspondence) or (MAC):
For each edge $e_{\Xk}=(\XEdgeT{n_{\Xk}}{\APath{};\CPath{}}{n_{\Xk}^{t}})\in\mathcal{E}_{\Xk}$,
such that $n_{\Xk}^{t}\neq{}(\_, \mathscr{U}_{\Ck})$,
$\hoareTriple{\phi_{n_{\Xk}}\land{}(\ghost{\rdAcc{\ddAk}}=\ghost{\rdAcc{\Ck}}=\emptyset{})}{(\APath{};\CPathD{})}{(\ghost{\rdAcc{\ddAk}}\setminus{}\ghost{\rdAcc{\Ck}})\subseteq{}\il{\ddAk}{G \cup \F}\cup{}[\mathtt{esp},\ghost{\spE}]}$
and 
$\hoareTriple{\phi_{n_{\Xk}}\land{}(\ghost{\wrAcc{\ddAk}}=\ghost{\wrAcc{\Ck}}=\emptyset{})}{(\APath{};\CPathD{})}{(\ghost{\wrAcc{\ddAk}}\setminus{}\ghost{\wrAcc{\Ck}})\subseteq{}\il{\ddAk}{\Grw \cup \Frw}\cup{}[\mathtt{esp},\ghost{\spE}]}$ hold.

\item  (MemEq): For each non-error node
  $n_{\Xk} \in \NNP{\Xk}$,
  $\prjMEq{\il{\ddAk}{\NS{}} \setminus (\ilZv{\ddAk})}{M_{\ddAk}}{M_{\Ck}}$ must belong to $\phi_{n_{\Xk}}$.
\end{enumerate}

(MAC) effectively requires that for every access on path \APath{} to an address $\alpha$
belonging to region $r\in{}\{\heap{},cl\}$, there exists an access
to $\alpha$ of the same read/write type on path \CPathD{}. This requirement
allows us to soundly over-approximate
the set of addresses belonging to \heap{} and $cl$ for
a faster SMT encoding (\cref{theorem:fastEncoding,sec:intervalEncodingsHeapCl}).
For (MAC) to be meaningful,
\ghost{\rdAcc{{\ddAk,\Ck}}} and \ghost{\wrAcc{{\ddAk,\Ck}}}
must not be included in \Xk{}'s state elements over which a
node invariant $\phi_{n_{\Xk}}$ is inferred.

The first seven are {\em structural requirements} (constraints on the graph structure of \Xk)
and the remaining six are {\em semantic requirements} (require discharge of proof obligations).
The first eleven are {\em soundness requirements} (required
for \cref{theorem:witness}), the first twelve
are {\em fast-encoding requirements}, and
all thirteen are {\em search-algorithm requirements} (required
for search optimizations).
Excluding (Coverage\ddAk{}) and (Coverage\Ck{}),
the remaining eleven
are called {\em non-coverage requirements}.

\begin{theorem}
\label{theorem:witness}
If there exists $\Xk=\ddAk\boxtimes\Ck$ that satisfies the soundness requirements,
then
$\Ck \refines{} \ddAk$ holds.
\end{theorem}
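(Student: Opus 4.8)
The plan is to prove the statement as a forward stuttering simulation from $\ddAk$ to $\Ck$, using $\Xk$ and its invariants as the simulation witness, and then to read the three disjuncts of \cref{defn:ref1} off the structural requirements. First I would prove an \emph{invariant-soundness} lemma: in every execution of $\Xk$, the inferred invariant $\Inv{n_{\Xk}}$ holds at each non-error node $n_{\Xk}$ that the execution visits. This is an induction on the number of $\Xk$-edges taken, with the base case furnished by the product's initial condition ($\World{\ddAk}=\World{\Ck}$ and $\steq{\Trace{\ddAk}}{\Trace{\Ck}}$ at $n^s_{\Xk}$, from which $\Inv{n^s_{\Xk}}$ follows because both entry rules read the same address sets and memory from the shared world) and the inductive step furnished directly by the (Inductive) requirement, i.e.\ the triple $\hoareTriple{\Inv{n_{\Xk}}}{(\APath{};\CPathD{})}{\Inv{n_{\Xk}^t}}$ for each non-error edge.

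Second, I would fix an arbitrary $\World{}$ and an arbitrary run $\execT{\ddAk}{\World{}}{\Trace{\ddAk}}$ and \emph{lift} it to a synchronized run of $\Xk$ (hence of $\Ck$), built edge by edge. At a product node $n_{\Xk}=(n_{\ddAk},n_{\Ck})$ the assembly run proceeds along some outgoing path $\APathO$; by (Coverage\ddAk{}) this path is either infeasible under $\Inv{n_{\Xk}}$ --- impossible by the first lemma --- or there is a product edge whose assembly component is a prefix of, or prefixed by, $\APathO$, and by (Mutex\ddAk{}) that edge is unique, so the assembly side is matched deterministically. With the assembly path $\APath$ fixed, (Coverage\Ck{}) makes $\Inv{n_{\Xk}}$ imply the disjunction of the path conditions of the correlated determinized $\Ck$-paths, while (Mutex\Ck{}) makes those paths pairwise mutually exclusive; hence exactly one $\CPathD{}$ is enabled and the next $\Xk$-edge is fully determined. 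Because every value supplied by $\DXk$ is a type-correct resolution of a $\theta$-choice in $\Ck$, the synchronized $\Ck$-run is a genuine $\Ck$-execution under the \emph{same} $\World{}$; this realizes the ``$\exists\,\Ck$'' half of the $\forall\ddAk\,\exists\Ck$ quantifier structure of refinement.

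Third, I would show $\steq{\Trace{\ddAk}}{\Trace{\Ck}}$ along the lift. (SingleIO) makes an $\Xk$-edge emit an event on the $\ddAk$-side exactly when it emits one on the $\Ck$-side, so non-silent events occur in lockstep; their \emph{values} agree because (Equivalence) puts $\World{\ddAk}=\World{\Ck}$ into every invariant, and re-establishing this equality across an I/O edge (the world transition being an uninterpreted function of the emitted value) forces the correlated read and write payloads --- the arguments of $\Vcall{\cdot}$, $\Vret{\cdot}$, $\VallocA{\cdot}$, $\VallocB{\cdot}$, $\Vdealloc{\cdot}$ --- to coincide, which is exactly what the (Inductive) check on that edge discharges. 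To prevent one component from racing ahead through an unbounded run of silent steps, I would invoke (Similar-speed): every cycle of $\Xk$ advances both components, so no reachable infinite $\Xk$-run leaves either side idle forever; together with (Termination), which synchronizes terminating nodes, this upgrades finite-stuttering agreement to full stuttering equivalence, so a terminating (resp.\ infinite) $\ddAk$-trace is matched by a terminating (resp.\ infinite) stuttering-equivalent $\Ck$-trace.

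Finally I would discharge the error disjuncts. If the lift reaches a node $(\UP{\ddAk},\_)$ --- i.e.\ $\ddAk$ halts with \EU{} --- then (Safety) forces it to be $(\UP{\ddAk},\UP{\Ck})$, so $\Ck$ also halts with \EU{} on a stuttering-prefix trace, giving $U^{\World{},\Trace{\ddAk}}_{\tt pre}(\Ck)$; the same conclusion covers a lift that first reaches $(\_,\UP{\Ck})$. If $\ddAk$ halts with \EW{} then $\eT{\Trace{\ddAk}}=\EW{}$, and since (Well-formedness) forbids $\Ck$ from reaching $\WP{\Ck}$ unless $\ddAk$ is simultaneously at $\WP{\ddAk}$, the synchronized $\Ck$-run has not been driven into \EW{} and can be completed to a $\Ck$-trace with $\neT{\Trace{\ddAk}}$ as a stuttering prefix, giving $W^{\World{},\Trace{\ddAk}}_{\tt pre}(\Ck)$; otherwise the stuttering-equivalent $\Ck$-trace from the previous step furnishes the third disjunct. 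I expect the main obstacle to be the coinductive treatment of infinite executions: making the lift well-defined as a limit of its finite prefixes and proving that (Similar-speed) genuinely promotes edge-local, finite-stuttering matching to stuttering equivalence of the full infinite traces while guaranteeing the determinized $\Ck$-run never gets stuck --- that is, showing the structural requirements interlock globally rather than merely holding node-by-node.
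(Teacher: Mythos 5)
Your proposal is correct and follows essentially the same route as the paper's proof: a coinductive simulation over \Xk{} in which (Coverage\ddAk{})/(Mutex\ddAk{}) match the assembly run, (Coverage\Ck{})/(Mutex\Ck{}) together with \DXk{} select and realize a genuine determinized \Ck{}-execution under the same world, (Equivalence)/(SingleIO)/(Similar-speed) yield stuttering equivalence of the non-error traces, and (Safety)/(Well-formedness) discharge the $U^{\World{},\Trace{\ddAk}}_{\tt pre}(\Ck)$ and $W^{\World{},\Trace{\ddAk}}_{\tt pre}(\Ck)$ disjuncts. The only difference is organizational: the paper factors the argument into three coinductive lemmas (a trichotomy on \Xk{}'s own executions, coverage of every \ddAk{}-trace by some \Xk{}-execution, and projection of \Xk{}-runs onto genuine \Ck{}-executions) combined by a final case analysis, whereas you merge these into a single edge-by-edge lift preceded by an explicit invariant-soundness lemma.
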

\begin{proof}[Proof sketch]
  (Coverage\ddAk{}) and (Coverage\Ck{}) ensure the coverage of \ddAk{}'s and
  \Ck{}'s traces in \Xk{}.
  For an error-free execution of \Xk{}, (Equivalence) and (Similar-speed) ensure
  that the generated traces are stuttering equivalent;
  for executions terminating in an error, (SingleIO), (Well-formedness), and (Safety) ensure that
  $\Ck \refines \ddAk$ holds by definition.
  See \cref{app:soundnessProof} for the coinductive proof.
\end{proof}

\subsubsection{Callers' Virtual Smallest Semantics}
\label{sec:cvSmallest}

Construct \CkP{} and \AkP{} from \Ck{} and \Ak{}
by using new
\emph{callers' virtual smallest semantics}
such that
assignments to \il{\Ck}{cv} and \il{\Ak}{cv}
due to \TRule{\Entry{\Ck}} and \TRule{\Entry{\Ak}}
respectively
(\cref{fig:xlateRuleIR,fig:xlateRuleAsm})
are removed
and
uses of \il{\Ck}{cv} and \il{\Ak}{cv}
due to
\TRule{\Entry{\Ck}},
\TRule{\Entry{\Ak}},
\TRule{Op-esp'},
\TRule{\Load{\ddAk}},
\TRule{\Store{\ddAk}},
\TRule{AllocS'},
and
\TRule{AllocV}
are replaced with $\emptyset$:
\begin{enumerate}
  \item In \TRule{\Entry{\Ck}} and \TRule{\Entry{\Ak}},
    $\mathtt{addrSetsAreWF}(\il{P}{\heap}, \il{P}{cl}, \il{P}{cv}, \ldots, \ii{P}{\g}, \ldots, \il{P}{\f}, \ldots, \ii{P}{\y},\ldots, \il{P}{\yv})$
    is replaced with 
    $\mathtt{addrSetsAreWF}(\il{P}{\heap}, \il{P}{cl}, \allowbreak \ldots, \ii{P}{\g}, \ldots, \il{P}{\f}, \ldots, \ii{P}{\y},\ldots, \il{P}{\yv})$
    for $P \in \{ \Ck, \Ak\}$.

  \item In \TRule{Op-esp'},
    $\intervalContainedInAddrSet{t}{\mathtt{esp}-1_{\bv{32}}}{\il{\ddAk}{\{\free\}} \cup ((\il{\ddAk}{cv} \cup \ilZv{\ddAk}) \setminus \il{\ddAk}{\F})}$
    is replaced with 
    $\intervalContainedInAddrSet{t}{\mathtt{esp}-1_{\bv{32}}}{\il{\ddAk}{\free} \cup (\ilZv{\ddAk} \setminus \il{\ddAk}{\F})}$.

  \item In \TRule{\Load{\ddAk}},
    $\Overlap([p]_{w}, \il{\ddAk}{\free} \cup ((\il{\ddAk}{cv} \cup (\ilZv{\ddAk})) \setminus \il{\ddAk}{\F \cup \STACK{}}))$
    is replaced with
    $\Overlap([p]_{w}, \il{\ddAk}{\free} \cup ((\ilZv{\ddAk}) \setminus \il{\ddAk}{\F{} \cup \STACK}))$.

  \item In \TRule{\Store{\ddAk}},
    $\Overlap([p]_{w}, \il{\ddAk}{\{\free\}\cup\Gro\cup\Fro} \cup ((\il{\ddAk}{cv} \cup (\ilZv{\ddAk})) \setminus \il{\ddAk}{\Frw \cup \STACK}))$
    is replaced with
    $\Overlap([p]_{w}, \il{\ddAk}{\{\free\}\cup\Gro\cup\Fro} \cup ((\ilZv{\ddAk}) \setminus \il{\ddAk}{\Frw \cup \STACK}))$.

  \item In \TRule{AllocS'},
    $\Overlap([v]_{w}, \il{\ddAk}{cv} \cup \ilZv{\ddAk})$
    is replaced with
    $\Overlap([v]_{w}, \ilZv{\ddAk})$.

  \item In \TRule{AllocV},
    $\intervalContainedInAddrSetAndAligned{v}{v+w-1_{\bv{32}}}{\mathtt{comp}(\il{\ddAk}{\NS \cup \{ cv \}})}{a}$
    is replaced with
    $\intervalContainedInAddrSetAndAligned{v}{v+w-1_{\bv{32}}}{\mathtt{comp}(\il{\ddAk}{\NS})}{a}$.
\end{enumerate}

Essentially, with callers' virtual smallest semantics,
the $cv$ region is made empty ($\il{\Ck}{cv} = \il{\Ak}{cv} = \emptyset$).
With an empty $cv$, the address set of region \free{} is computed as
$\il{P}{\free} = {\tt comp}(\il{P}{\NS \cup \F \cup \STACK})$
for $P \in \{ \Ck, \Ak \}$.

Let \ddAkP{} be obtained
by annotating \AkP{} as described in \cref{sec:refnDefnVirtual}.
Let \ddAk{} be the annotated
version of \Ak{}, such that the annotations made in
\ddAk{} and \ddAkP{} are identical.

\begin{theorem}
\label{theorem:cvSmallest}
Given $\XkP=\ddAkP\boxtimes{}\CkP{}$ that
satisfies the fast-encoding requirements,
it is possible to construct $\Xk{}=\ddAk{}\boxtimes{}\Ck{}$
that also satisfies the fast-encoding requirements.
\end{theorem}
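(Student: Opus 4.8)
The plan is to build \Xk{} on exactly the graph skeleton of \XkP{} and port its annotations and invariants, then argue that reinstating a non-empty $cv$ perturbs the semantics only in requirement-preserving ways. First I would take $\mathcal{N}_{\Xk}$, $\mathcal{E}_{\Xk}$, and \DXk{} to coincide with those of \XkP{} (re-inserting into the entry rule the single correlated read of \il{P}{cv} that the smallest semantics had deleted). The seven structural requirements --- (Mutex\ddAk{}), (Mutex\Ck{}), (Termination), (SingleIO), (Well-formedness), (Safety), (Similar-speed) --- then hold for \Xk{} immediately, since they constrain only the shape of the graph, which is shared. The semantic core is a \emph{frame invariant}. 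Because $cv$ is assigned only at entry and never subsequently mutated, and because (Equivalence) forces $\World{\ddAk}=\World{\Ck}$ and hence identical entry reads, I would establish that along every execution \il{\ddAk}{cv}=\il{\Ck}{cv} is a fixed set, disjoint from \il{P}{\NS} and from \ilZv{\ddAk}; the first follows from \texttt{addrSetsAreWF} together with the \TRule{AllocS'}/\TRule{AllocV} guards, the second from the $\mathtt{alloc}_v$/$\mathtt{alloc}_s$ overlap checks that divert to \EW{} on any collision. Consequently $\il{P}{\free}=\il{P'}{\free}\setminus\il{P}{cv}$ for $P\in\{\Ck,\ddAk\}$, with the removed addresses disjoint from $\NS\cup\F\cup\STACK$. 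I would define the \Xk{} invariants as the conjunction of each $\phi^{\XkP}_{n}$ with these frame facts.

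Next I would compare, rule by rule, every guard mentioning $cv$. For \TRule{\Load{\ddAk}} and \TRule{\Store{\ddAk}} the subtracted term $\il{\ddAk}{cv}\setminus\il{\ddAk}{\F\cup\STACK}$ is exactly the part of $cv$ lying in free space, so the UB-triggering address set is \emph{literally unchanged}; these edges have identical semantics in the two products, and the (Inductive) and (MAC) obligations --- whose right-hand sides name only $\il{\ddAk}{G\cup\F}$ and $[\mathtt{esp},\ghost{\spE}]$, never $cv$ --- transfer verbatim, as do the read/write ghost updates. For the entry well-formedness check, for \TRule{AllocV}, and for the $cv$-overlap guard of \TRule{AllocS'}, the original semantics is strictly \emph{stronger}: it triggers \EW{} on a superset of states. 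Since \EW{} in \ddAk{} is an assumable sink and (Well-formedness) constrains only nodes of the form $(\_,\EW_{\Ck})$, these extra \EW{}-exits are harmless; moreover the universally-quantified virtual address of \TRule{AllocV} now ranges over a subset, so its Hoare triple only weakens. (Equivalence) is preserved because every emitted value (the call, return, and (de)allocation events) is a projection onto regions drawn from $\NS$, which are $cv$-independent.

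The single rule in which the original semantics is more \emph{permissive} is the stackpointer decrement \TRule{Op-esp'}: its admissible target set gains $\il{\ddAk}{cv}\cap\il{\ddAk}{\STACK}\setminus\il{\ddAk}{\F}$. This is the main obstacle, and it bites at (Coverage\ddAk{}): a push that fails and is covered by an edge to $\EW_{\ddAk}$ in \XkP{} may now \emph{succeed} in \Xk{}, producing a feasible \ddAk{}-path that \XkP{} never had to cover. I would resolve this in two moves. Using $\il{\ddAk}{\free}=\il{\ddAkP}{\free}\setminus\il{\ddAk}{cv}$ and $\il{\ddAk}{cv}\cap\il{\ddAk}{\NS}=\emptyset$, I would first prove the unconditional monotone direction: whenever the push succeeds under the smallest semantics it also succeeds under the original, so no previously covered success-path is lost and the original's fail-paths form a subset of the smallest's. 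Second, for the genuinely new success-paths (those whose pushed region meets $\il{\ddAk}{cv}\cap\il{\ddAk}{\STACK}$) I would augment \Xk{} with the corresponding product edges, each pairing the push with the empty path $\epsilon$ on the \Ck{} side. Such an edge is mutually exclusive with the retained fail-edge (they diverge at the push), so (Mutex\ddAk{}) is respected; its \Ck{}-pathset is the singleton $\{\epsilon\}$ with path condition \texttt{true}, so (Coverage\Ck{}) holds; and because the newly claimed addresses lie in $cv$ and are therefore disjoint from $\NS\cup\ilZv{\ddAk}$, they touch no memory tracked by the node invariant, reducing the (Inductive) triple for the added edge to mere preservation of the frame facts.

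Finally I would discharge the remaining obligations --- (Inductive), (Coverage\Ck{}), (MAC) --- on the ported-and-augmented graph by the same state correspondence, noting that on every non-error transition the state components read or written by both programs are $cv$-independent, so each obligation coincides with one already established for \XkP{}. I expect the difficulty to concentrate entirely in the third paragraph: showing that the augmentation forced by the more-permissive push is always constructible and never introduces an uncovered or non-inductive edge. The technical linchpin is precisely the \emph{inductiveness} of the disjointness of $cv$ from the accessible and virtually-allocated regions --- a property that holds only because the very allocation rules that could violate it instead divert to \EW{} --- and it is on this invariant that the whole smallest-to-original transfer rests.
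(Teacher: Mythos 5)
Your overall skeleton matches the paper's: take $\Xk=\XkP$, observe the structural requirements are free because the semantics change only edge labels, do a rule-by-rule comparison, and repair coverage by adding edges that pair newly feasible \ddAk{} paths with $\epsilon$ on the \Ck{} side. However, you deploy the repair at the wrong rule, and dismiss the rule where it is actually needed. Your ``main obstacle'' at \TRule{Op-esp'} is vacuous: the admissible set does gain $\il{\ddAk}{cv}\cap\il{\ddAk}{\STACK}\setminus\il{\ddAk}{\F}$ as raw set algebra, but under the ported invariants ($\il{\ddAk}{\stk}\cup\il{\ddAk}{\Y}\cup(\il{\ddAk}{\Z}\setminus\ilZv{\ddAk})=[\mathtt{esp},\ghost{\spE}]$ and $\il{\ddAk}{\{cs,cl\}}=[\ghost{\spE}+1,\ghost{\stkE}]$) a pushed interval $[t,\mathtt{esp}-1_{\bv{32}}]$ lies strictly below $\mathtt{esp}$ and is therefore disjoint from \STACK{}; and on addresses outside \STACK{} the two admissible sets coincide (using $\il{\ddAk}{cv}\cap\il{\ddAk}{\NS}=\emptyset$, both reduce to the complement of $\il{\ddAk}{\F}\cup(\il{\ddAk}{\NS}\setminus\ilZv{\ddAk})$). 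So the push check's outcome is literally unchanged --- which is exactly what the paper's proof asserts for \TRule{Op-esp'} --- and no new success-path edges are ever required.

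The genuine new feasibility, which your proof waves off as ``harmless,'' is at \TRule{AllocS'}: the allocated interval $[v]_w$ is required to lie in $\il{\ddAk}{\stk}$, and \stk{} \emph{can} overlap $cv$ (nothing at \TRule{\Entry{\Ak}} forbids the arbitrarily-read $cv$ from intersecting $[\mathtt{esp},\ghost{\stkE}]$). Hence the guard $\Overlap([v]_{w}, \il{\ddAk}{cv}\cup\ilZv{\ddAk})$ can fire in \Xk{} on states where the smallest-semantics guard $\Overlap([v]_{w}, \ilZv{\ddAk})$ provably cannot (e.g., when $\ilZv{\ddAk}=\emptyset$), so a path $\APath=\PathT{n_{\ddAk}}{\WAk}$ that was infeasible in \XkP{} --- and thus may have no covering edge there --- becomes feasible in \Xk{}. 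Your appeal to \EW{} being an ``assumable sink'' only addresses (Well-formedness)/(Safety); it does not discharge (Coverage\ddAk{}), which is one of the fast-encoding requirements and demands an edge whose \ddAk{}-path is prefix-related to $\APath$. The success-path edge does not qualify, since the success and \EW{} paths diverge at the guard. This is precisely where the paper inserts the new edges $\XEdgeT{(n_{\ddAk},n_{\Ck})}{\APath{};\epsilon}{(\WAk,n_{\Ck})}$ (noting they contain no memory access, so (MAC) is undisturbed, and (Coverage\Ck{}) is trivial for $\epsilon$). As written, your construction leaves (Coverage\ddAk{}) violated on exactly these paths, so the proof does not go through without redirecting your edge-addition step from \TRule{Op-esp'} to \TRule{AllocS'}.
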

\begin{proof}[Proof sketch]
Start by constructing $\Xk=\XkP{}$.
With a non-empty $cv$, \ddAk{} may include more executions of a path
of form $\APath{} = \PathT{n_{\ddAk}}{\WAk}$;
add new edges to \EXk{}, where each new edge
correlates \APath{} with an empty \Ck{} path ($\CPath{} = \epsilon$).
\Xk{} should still satisfy the fast-encoding requirements.
See \cref{app:cvSmallestProof} for the proof.
\end{proof}

\subsubsection{Safety-Relaxed Semantics}
\label{sec:safetyUnchecked}
Construct $\AkP$ from \Ak{} (with callers' virtual smallest semantics)
by using new {\em safety-relaxed semantics} for the assembly procedure such
that: (1) a
$\varphi{}_l=\Overlap([p]_{w}, \il{\ddAk}{\free} \cup ((\ilZv{\ddAk}) \setminus \il{\ddAk}{\F \cup \STACK{}}))$
check in \TRule{\Load{\ddAk}} in \Ak{}
is replaced with
$\varphi{}_l'=\Overlap([p]_{w}, (\ilZv{\ddAk}) \setminus (\il{\ddAk}{\F} \cup [\mathtt{esp}, \ghost{\stkE}]))$ in \AkP{};
(2) a
$\varphi{}_s=\Overlap([p]_{w}, \il{\ddAk}{\{\free\}\cup\Gro\cup\Fro} \cup ((\ilZv{\ddAk}) \setminus \il{\ddAk}{\Frw \cup \STACK{}}))$
check in \TRule{\Store{\ddAk}} in \Ak{}
is replaced with
$\varphi{}_s'=\Overlap([p]_{w}, (\ilZv{\ddAk}) \setminus (\il{\ddAk}{\Frw} \cup [\mathtt{esp}, \ghost{\stkE}]))$ in \AkP{};
and (3) a $\varphi{}_r=\neg(\prjMEq{\il{\Ak}{cs}}{\ghost{M^{cs}}}{M_{\Ak}})$ check
in \TRule{\Ret{\Ak}} in \Ak{}
is replaced with $\varphi{}_r'=\mathtt{false}$ in \AkP{}. 
Let $\ddAkP$ be obtained
by annotating $\AkP$ using instructions described in \cref{sec:refnDefnVirtual}.
Let \ddAk{} be the annotated
version of $\Ak{}$, such that the annotations made in
$\ddAk{}$ and $\ddAk'$ are identical.
Let \Ck{} be the corresponding unoptimized IR procedure with the
callers' virtual smallest semantics.

\begin{theorem}
\label{theorem:fastEncoding}
Given $\XkP=\ddAkP\boxtimes{}\Ck{}$ that
satisfies the fast-encoding requirements,
it is possible to construct $\Xk{}=\ddAk{}\boxtimes{}\Ck{}$
that also satisfies the fast-encoding requirements.
\end{theorem}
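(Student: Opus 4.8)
The plan is to mirror the structure of the proof of \cref{theorem:cvSmallest}: begin by setting $\Xk = \XkP$, keeping the node set, edge set, deterministic choice map $\DXk$, and all node invariants $\phi_{n_{\Xk}}$ unchanged, and then patch the graph only where the non-relaxed semantics of $\ddAk$ diverge from the safety-relaxed semantics of $\ddAkP$. The first step is to pin that divergence down precisely. The only edits in \cref{sec:safetyUnchecked} are the UB guards of \TRule{\Load{\ddAk}}, \TRule{\Store{\ddAk}}, and \TRule{\Ret{\Ak}}. Using that virtual allocations are separate from $\NS$ (and $cv$ is empty under the callers'-virtual-smallest semantics), one checks $\ilZv{\ddAk} \setminus \il{\ddAk}{\F \cup \STACK} \subseteq \il{\ddAk}{\free}$, so the non-relaxed load guard $\varphi_l$ is equivalent to $\Overlap([p]_{w}, \il{\ddAk}{\free})$; since $\ilZv{\ddAk} \setminus (\il{\ddAk}{\F} \cup [\mathtt{esp},\ghost{\stkE}]) \subseteq \il{\ddAk}{\free}$ as well (because $\STACK \subseteq [\mathtt{esp},\ghost{\stkE}]$), this yields $\varphi_l' \Rightarrow \varphi_l$, and symmetrically $\varphi_s' \Rightarrow \varphi_s$ and (trivially, as $\varphi_r' = \mathtt{false}$) $\varphi_r' \Rightarrow \varphi_r$. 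Hence $\ddAk$ triggers \UAk{} on a \emph{superset} of the states on which $\ddAkP$ triggers it, every non-error transition of $\ddAk$ is also a non-error transition of $\ddAkP$, and all edges of $\XkP$ avoiding these guards remain valid verbatim under the unchanged invariants.

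Next I would repair (Coverage\ddAk{}). For each product edge whose $\ddAk$-path traverses a relaxed load or store on its non-error branch, that branch is now guarded by the stronger $\neg\varphi_l$ (resp.\ $\neg\varphi_s$), leaving the ``gap'' region $\varphi_l \land \neg\varphi_l'$ (resp.\ $\varphi_s \land \neg\varphi_s'$) uncovered, along with the whole region $\varphi_r$ for returns. For each gap I add one new edge whose $\ddAk$-path takes the UB branch to \UAk{}, correlated with a \Ck{}-path that reaches \UCk{}; by construction the target product node is $(\UAk{},\UCk{})$, so (Safety) and (Well-formedness) are respected. These edges terminate in an error node, so (Inductive), (Equivalence), (MemEq), and (MAC) impose no new obligations on them, while the unchanged invariants continue to discharge the obligations on the surviving edges. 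The mutual-exclusivity and pathset requirements (Mutex\ddAk{}, Mutex\Ck{}) follow because the added UB branch diverges from the retained non-error branch under the complementary guards $\varphi$ and $\neg\varphi$, and (Termination) and (SingleIO) are immediate for error edges.

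The crux --- and the main obstacle --- is proving that the correlated \Ck{}-paths on the new edges really do reach \UCk{}, i.e.\ establishing the Hoare triples that justify the $(\UAk{},\UCk{})$ targets. For loads and stores the gap condition says $\ddAk$ accesses an address in $\il{\ddAk}{\free}$; I would use (MemEq) together with the value-relating part of $\phi_{n_{\Xk}}$ to transport this access to the correlated pointer in \Ck{}, showing it lands outside $\il{\Ck}{\BasedOn{p}}$, whence $\neg\mathtt{accessIsSafeC}$ fires and \Ck{} halts in \UCk{} via \TRule{\Load{\Ck}}/\TRule{\Store{\Ck}}. The return case is the most delicate: $\varphi_r$ asserts that the caller-stack snapshot $\ghost{M^{cs}}$ was clobbered, i.e.\ $\ddAk$ wrote to some address of $\il{\ddAk}{cs}$; since $cs$ lies above $\ghost{\spE}$ and outside $\il{\ddAk}{\Grw \cup \Frw}$, the (MAC) write obligation forces that write to be matched in \Ck{} on an edge whose \Ck{}-sink is \UCk{} (otherwise (MAC) is violated), so \Ck{} has already triggered UB \emph{before} the return. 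I would thread this through the path-cover/coinductive accounting so that the correlated \Ck{}-path indeed terminates in \UCk{}.

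Finally I would collect the verification: the patched graph satisfies all twelve fast-encoding requirements, since the preserved edges carry over from $\XkP$ and the added error edges satisfy the structural requirements by construction and the semantic requirements vacuously. As with \cref{theorem:cvSmallest}, the detailed bookkeeping of the gap guards and the (MAC)-based return argument would be deferred to an appendix. I expect the return/caller-stack obligation to need the most care, because it is the only case in which the $\ddAk$-side UB is witnessed by a \Ck{}-side UB occurring on an earlier edge rather than on the correlated edge itself.
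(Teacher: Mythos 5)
Your skeleton is the paper's own (\cref{app:fastEncodingProof}): set $\Xk=\XkP$, use $\varphi'_{l,s,r}\Rightarrow\varphi_{l,s,r}$ so that non-error path conditions only strengthen (hence (Inductive), (Equivalence), (MAC), (MemEq) transfer with $\IXk=\IXkP$), and restore coverage by adding edges that pair each newly feasible path $\PathT{n_{\ddAk}}{\UAk}$ with \Ck{}-paths ending at \UCk{}, targeting $(\UAk,\UCk)$. The genuine gap is in the step that carries the whole proof: for loads and stores you propose to use (MemEq) plus ``the value-relating part of $\phi_{n_{\Xk}}$'' to transport the faulting access to ``the correlated pointer in \Ck{}''. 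No requirement supplies such a correlated pointer: the inferred node invariants need not relate the address operands of any particular pair of accesses, and (MemEq) constrains the \emph{contents} of common regions, not which addresses get accessed; as written this step fails. The paper's \cref{lemma:CtermAtU} argues differently: run the same state on \XkP{}; either \ddAkP{} also faults, in which case (Coverage\Ck{}) and (Safety) of \XkP{} already yield a \Ck{}-path to \UCk{}, or \ddAkP{} takes a non-error edge, and then (MAC) on \emph{that} \XkP{}-edge is the tool --- the deviating access touches an address outside $\il{\ddAk}{G\cup\F}\cup[\mathtt{esp},\ghost{\spE}]$, so (MAC) forces \Ck{} to access the same address before reaching the assumed non-error sink, and since that address belongs to no common region, $\mathtt{accessIsSafeC}$ fails and \Ck{} transitions to \UCk{}, a contradiction. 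You plainly know this tool --- you invoke exactly this (MAC) reasoning for the return case --- but it is the load/store case where it must do the work.

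Two smaller points. First, your warm-up claim $\ilZv{\ddAk}\setminus\il{\ddAk}{\F\cup\STACK}\subseteq\il{\ddAk}{\free}$ (hence ``$\varphi_l$ is equivalent to $\Overlap([p]_{w},\il{\ddAk}{\free})$'') is false: virtually-allocated addresses lie in $\il{\ddAk}{\zl}$ for $\zl\in\Zl\subseteq\Z$, hence inside $\il{\ddAk}{\NS}$ and outside \free{} by definition; the union in $\varphi_l$ exists precisely for this reason. The implication $\varphi'_l\Rightarrow\varphi_l$ you actually need still holds, since $\il{\ddAk}{\F}\cup[\mathtt{esp},\ghost{\stkE}]\supseteq\il{\ddAk}{\F\cup\STACK}$. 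Second, on the return case your (MAC)-based idea is right but the packaging is off: a state whose $cs$-clobbering write was matched by a \Ck{} UB never reaches the return node as a non-error product state, so it cannot serve as a coverage witness \emph{at} that node. The paper resolves this by proving (\cref{lemma:csIsPreserved}) that $\prjMEq{\il{\ddAk}{cs}}{\ghost{M^{cs}}}{M_{\ddAk}}$ holds at every non-error node, so the $\varphi_r$-guarded path is infeasible and its coverage obligation is vacuous --- which is the clean form of what you call ``threading through the path-cover/coinductive accounting''.
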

\begin{proof}[Proof sketch]
Start by constructing $\Xk=\XkP{}$.
Because $\varphi{}_{l,s,r}'\Rightarrow{}\varphi{}_{l,s,r}$,
\ddAk{} may include more executions of a path
of form $\APath=\PathT{n_{\ddAk}}{\UAk}$.
Add new
edges to $\EXk{}$, where each new edge correlates \APath{} with some
$\CPath{}=\PathT{n_{\Ck}}{\UCk}$.
Because $\XkP$ satisfies (MAC),
the addition of such new edges will ensure that \Xk{} satisfies (Coverage\Ck{}).
See \cref{app:fastEncodingProof} for the proof.
\end{proof}

Using \cref{theorem:fastEncoding,theorem:cvSmallest}, hereafter,
we will use only the safety-relaxed and callers' virtual smallest semantics of the
unoptimized IR and assembly procedures.
We will continue to refer to the unoptimized IR
with the callers' virtual smallest semantics
and assembly procedure with the safety-relaxed
and callers' virtual smallest semantics
as \Ck{} and \Ak{} respectively.
The corresponding annotated procedure of \Ak{} will be referred as \ddAk{}.

\section{Automatic Construction of a Cross-Product}
\label{sec:algo}

We now describe \toolName{}, an algorithm that
takes as input, the
transition graphs corresponding to procedures \Ck{} and \Ak{}, and an {\em unroll factor} $\mu$,
and returns as output,
annotated \ddAk{} and
product
graph $\Xk=\ddAk\boxtimes{}\Ck=(\mathcal{N}_{\Xk{}},\mathcal{E}_{\Xk{}},\DXk{})$,
such that all thirteen search-algorithm requirements
are met.
It identifies an inductive invariant
network \IXk{} that maps each
non-error
node $n_{\Xk}\in{}\NNP{\Xk}$ to
its node invariant $\phi_{n_{\Xk}}$.
Given enough computational time,
\toolName{} is guaranteed to find the required $(\ddAk,\Xk)$
if: (a) \Ak{} is a translation of \Ck{} through
bisimilar transformations up to a maximum
unrolling of $\mu$;
(b) for two or more
allocations or procedure
calls that reuse stack space in \Ak{}, their
relative order in \Ck{} is preserved in \Ak{};
(c) the
desired annotation to $\ddAk$ is identifiable either
through search heuristics or through compiler hints; and (d) our
invariant inference procedure is able to identify
the required invariant network \IXk{} that captures the compiler
transformations across \Ck{} and \Ak{}.
\toolName{} constructs the solution
incrementally, by relying on the property that for a
non-coverage requirement to hold for
fully-annotated \ddAk{} and fully-constructed \Xk{},
it must also hold for partially-annotated \ddAk{} and
a partially-constructed subgraph of \Xk{}
rooted at its entry node
$n^s_{\Xk{}}$.

\toolName{} is presented in \cref{algo:constructX}.
The algorithm has two phases.
In the first phase,
identified by \texttt{CORRELATE\_AND\_ANNOTATE}
(\cref{lst:line:consXp1} in \cref{algo:constructX}),
\toolName{} attempts to correlate
the paths in \Ak{} with the paths in \Ck{}
while simultaneously identifying the required annotation for \Ak{}.
At the successful completion of the first phase,
all paths in the original, unannotated \Ak{}
are correlated.
However, recall that the annotation instructions,
$\mathtt{(de)alloc}_s$ and $\mathtt{(de)alloc}_v$,
have additional paths to error nodes \UAk{} and \WAk{}
(\cref{fig:xlateRuleAsmStackLocals,fig:xlateRuleAsmAllLocals}).
These paths to error nodes are not correlated
in the first phase.
The second phase of the algorithm,
identified by \texttt{CORRELATE\_NEW\_ERROR\_PATHS}
(\cref{lst:line:consXp2} in \cref{algo:constructX}),
correlates these additionally introduced (error) paths.

The sub-procedure \textit{constructX()},
used in both phases,
identifies the required correlations and annotation
and builds the product program \Xk{} incrementally.
It assumes the availability of an oracle
It assumes
the availability of a
{\bf chooseFrom} operator, such that
$\rho \mapsfrom{}\mathrm{\bf chooseFrom}\ \vv{\rho}$
chooses a quantity $\rho$ from
a finite set $\vv{\rho}$, such that
\toolName{}
is able to complete the refinement proof, if such a choice exists. If
the search space is limited, an exhaustive
search could be used to implement {\bf chooseFrom}.
Otherwise, a
counterexample-guided
best-first search procedure (described later) is employed
to approximate {\bf chooseFrom}.

\begin{algorithm}[!tb]
\caption{\label{algo:constructX}Automatic construction of \Xk{}}
\begin{footnotesize}
\SetKwProg{Fn}{Function}{}{end}
\Fn{\toolName{}$(\Ak{}, \Ck{}, \mu)$} {
  $\ddAk \mapsfrom \Ak$;
  \qquad
  $\NXk{} \mapsfrom \{(n^s_{\ddAk},n^s_{\Ck})\}$;
  \qquad
  $\EXk{} \mapsfrom \emptyset$;
  \qquad
  $\DXk{} \mapsfrom \emptyset$;
  \qquad
  $\IXk{} \mapsfrom \{ (n^s_{\ddAk},n^s_{\Ck}) \mapsto (\World{\ddAk}=\World{\Ck}) \}$;\\
  \uIf{$\neg \text{\textit{constructX}}(\ddAk{}, \Ck{}, \mu, \NXk{}, \EXk{}, \DXk{}, \IXk, \mathtt{CORRELATE\_AND\_ANNOTATE})$}{
  \nllabel{lst:line:consXp1}
    \Return{\textnormal{\texttt{Failure}}}
  }
  \uIf{$\neg \text{\textit{constructX}}(\ddAk{}, \Ck{}, \mu, \NXk{}, \EXk{}, \DXk{}, \IXk, \mathtt{CORRELATE\_NEW\_ERROR\_PATHS})$}{
  \nllabel{lst:line:consXp2}
    \Return{\textnormal{\texttt{Failure}}}
  }
  \uIf{$\neg \text{\textit{checkCoverageReqs}}(\NXk{}, \EXk{}, \DXk{}, $\IXk$, \ddAk{}, \Ck{})$}{
  	  \Return{\textnormal{\texttt{Failure}}}
  }
  \Return{\textnormal{\texttt{Success}$(\ddAk{}, (\NXk{}, \EXk{}, \DXk{}), \IXk)$}}
}
\Fn{constructX$(\ddAk{}, \Ck{}, \mu, \NXk{}, \EXk{}, \DXk{}, \IXk, phase)$} {
  $Q_{\ddAk}$ $\mapsfrom$ \textit{getCutPointsInRPO}(\ddAk{});\\
  \ForEach{$q_{\ddAk}$ {\tt in} $Q_{\ddAk}$}
  {
    \ForEach{$q^t_{\ddAk}$ {\tt in} cutPointSuccessors$(q_{\ddAk}, Q_{\ddAk}, \ddAk)$}
    {
      \ForEach{\APath{} {\tt in} getAllSimplePathsBetweenCutPoints$(q_{\ddAk}, q^t_{\ddAk}, \ddAk{})$}
      {
      \uIf{\textit{pathIsInfeasible}$(\APath, \NXk{}, \IXk)$}{
        \Continue
	    }
		  \uIf{\textit{pathExists}$(\APath{}, \EXk{})$}{
  			  \nllabel{lst:line:consXp3}
		    \Continue
	    }
      \ForEach{\CPath{} {\tt in} \textnormal{\textbf{chooseFrom}}\ \textit{correlatedPathsInCOptions}$(\APath, \mu, \NXk{}, \EXk{}, \ddAk{}, \Ck{})$}
      {
				  \If{$phase = \mathtt{CORRELATE\_AND\_ANNOTATE}$}{
	            $(\ddAk, \APath) \mapsfrom$ \textnormal{\textbf{chooseFrom}} \textit{asmAnnotOptions}$(\APath, \CPath, \ddAk, \Ck)$;\\
				  }
          
	        $\APathsP{},\CPathsP{} \mapsfrom \textit{breakIntoSingleIOPaths}(\APath), \textit{breakIntoSingleIOPaths}(\CPath{})$;\\
	        $\APathsN{*}, \CPathsN{*} \mapsfrom \textit{trimToMatchPathToErrorNode}(\APathsP, \CPathsP{})$;\\
	        \uIf{$\neg$\textit{haveSimilarStructure}$(\APathsN{*}, \CPathsN{*})$}{
	          \Return{\textnormal{\texttt{Failure}}}
	        }
      	  \ForEach{$\APathP{}=\PathT{n_{\ddAk}}{n^t_{\ddAk}},\CPathP{}=\PathT{n_{\Ck}}{n^t_{\Ck}}$ {\tt in} {\tt zip}$(\APathsN{*},\CPathsN{*})$}
      		{
          	$e_{\Xk} \mapsfrom (\APathP; \CPathP{})$;
          	\qquad
          	$n^t_{\Xk} \mapsfrom (n^t_{\ddAk},n^t_{\Ck})$;\\
          	\uIf{\textit{addingEdgeWillCreateEmptyCCycle}$(\NXk{}, \EXk{}, e_{\Xk})$}{\Return{\textnormal{\texttt{Failure}}}}
            $\EXk{} \mapsfrom \EXk{} \cup{}\{e_{\Xk}\}$;
            \qquad
          	$\NXk{} \mapsfrom \NXk{} \cup{}\{n^t_{\Xk}\}$;\\
          	$\DXk{} \mapsfrom$ \textit{addDetMappings}$(e_{\Xk}, \DXk{})$;\\
          	$\IXk \mapsfrom$ \textit{inferInvariantsAndCounterexamples}$(n^t_{\Xk},\NXk{}, \EXk{}, \DXk{}, \IXk, \ddAk{}, \Ck{})$;\\
          	\uIf{$\neg$\textit{checkSemanticReqsExceptCoverage}$(\NXk{}, \EXk{}, \DXk{}, \IXk, \ddAk{}, \Ck{})$}{\Return{\textnormal{\texttt{Failure}}}}
          }
        }
      }
    }
  }
  \Return{\textnormal{\texttt{Success}}}
}
\end{footnotesize}
\end{algorithm}

${\tt io}(n_P)$ evaluates to {\tt true} iff
$n_P$ is either a source or sink node of an I/O
path.
${\tt term}(n_P)$ evaluates to {\tt true} iff $n_P$ is a terminating node.
\toolName{} first identifies an ordered set of
nodes $Q_{P}\subseteq{}\mathcal{N}_{P}$,
called the {\em cut points}
in procedure $P$ ({\em getCutPointsInRPO}), such that
$Q_{P} \supseteq \{n_{P} : n_{P}\in{}\mathcal{N}_{P} \land{} (n_{P}=n^s_{P} \vee{} {\tt io}(n_{P}) \vee{} {\tt term}(n_{P}))\}$
and the maximum length of a path between
two nodes in $Q_{P}$ (not containing any other intermediate node
that belongs to $Q_P$) is finite.

The algorithm to identify $Q_{P}$ first initializes
$Q_{P} \Assign \{n_{P} : n_{P}\in{}\mathcal{N}_{P} \land{} (n_{P}=n^s_{P} \vee{} {\tt io}(n_{P}) \vee{} {\tt term}(n_{P}))\}$,
and then identifies all cycles in the transition
graph that do not already contain a cut point;
for each such cycle, the first node belonging to that
cycle in reverse postorder
is added to $Q_{P}$.
In \cref{fig:example1a}, $Q_{\ddAk}$ includes constituent nodes
of assembly instructions at
{\tt A1}, {\tt A9}, {\tt A14}, and {\tt exit}, where {\tt exit} is the destination node of the error-free {\tt halt} instruction
due to the procedure return at {\tt A17}.

A {\em simple path} $\PathT{q_P}{q^t_P}$ is a path connecting
two cut points $q_P,q^t_P\in{}Q_P$,
and not containing any other cut point as an intermediate node;
$q^t_P$ is called a {\em cut-point successor} of
$q_P$.
By definition, a simple path must be finite.
The {\em cutPointSuccessors()} function takes
a cut point $q_P$ and returns all its cut-point successors in reverse postorder.
In our example, the cut-point successors of a node at instruction {\tt A9}
are (constituent nodes of) {\tt A9}, {\tt A14}, $\mathscr{U}_{\ddAk}$,
and $\mathscr{W}_{\ddAk}$.
{\em getAllSimplePathsBetweenCutPoints($q_{P}$, $q^t_{P}$, $P$)} returns
{\em all} simple paths of the form $q_P\twoheadrightarrow{}q^t_P$, for
$q_{P},q^t_{P}\in{}Q_P$.
Given a simple path $\APath = \PathT{q_{\ddAk}}{q^t_{\ddAk}}$,
{\em pathIsInfeasible$(\APath, q_{\ddAk}, \NXk{}, \IXk{})$} returns {\tt true}
iff $\APath$ is infeasible at every
node $n_{\Xk}=(q_{\ddAk},\_)\in{}\NXk{}$;
our algorithm ensures there
can be at most one $n_{\Xk}=(q_{\ddAk},\_)\in{}\NXk{}$ for each
$q_{\ddAk}\in{}Q_{\ddAk}$.
Similarly, \textit{pathExists$(\APath, \EXk)$}
returns \texttt{true} iff \APath{} is already correlated with some
$\CPath{}=\PathT{q_{\Ck}}{q^t_{\Ck}}$ in \EXk{}
(i.e.,
$\exists \XEdge{}: \XEdge{} = \XEdgeT{(q_{\ddAk},q_{\Ck})}{\APath; \CPath}{(q^t_{\ddAk}, q^t_{\Ck})} \in \EXk{}$ holds).
Because the same \textit{constructX()} procedure is invoked in both phases,
the use of \textit{pathExists()} in \cref{lst:line:consXp3} of \cref{algo:constructX} is an optimization to avoid
correlating the same paths again in the second phase.
In the second phase, \textit{pathExists$(\APath,\EXk)$}
would return \texttt{false} only if \APath{}
corresponds to an error path due to an annotated $\mathtt{(de)alloc}_{s,v}$
instruction.

\subsubsection*{\bf correlatedPathsInCOptions()}
{\em correlatedPathsInCOptions($\xi_{\ddAk}$, \ldots{})} identifies options
for candidate pathsets
$[\Pathset{\Ck}]$, that can potentially be correlated with
$\xi_{\ddAk}=\PathT{q_{\ddAk}}{q^t_{\ddAk}}$, and the {\bf chooseFrom}
operator chooses a pathset $\Pathset{\Ck}$ from it.
A path $\PPath{\Ck}\in{}\Pathset{\Ck}$ need not be a simple path,
and can visit any node $n_{\Ck}\in{}\NCk{}$ up to $\mu$ times.
All paths in $\Pathset{\Ck}$ must
originate at a unique cut-point $q_{\Ck{}}$ such that
$(q_{\ddAk}, q_{\Ck}) \in \NXk$.
By construction, there will be exactly one such $(q_{\ddAk}, q_{\Ck})$ in \NXk.
Paths in $\Pathset{\Ck}$ may have different
end points however.
For example, $\Pathset{\Ck}=\{\epsilon{}\}$
and $\Pathset{\Ck}\!=\!\{\mathtt{I3}\!\!\rightarrow\!\!\mathtt{I4}\!\!\rightarrow\!\!\mathtt{I7}, \mathtt{I3}\!\!\rightarrow\!\!\UCk{},\mathtt{I3}\!\!\rightarrow\!\!\mathtt{I4}\!\!\rightarrow\!\!\UCk{}\}$
may be potential candidates for $\xi_{\ddAk}\!=\!\mathtt{A9}\!\!\rightarrow\!\!\mathtt{A10}\!\!\rightarrow\!\!\mathtt{A11}\!\!\rightarrow\!\!\mathtt{A9}$ in \cref{fig:example1}.

{\bf If $q^t_{\ddAk}\notin{}\{\UAk{},\WAk{}\}$},
{\em correlatedPathsInCOptions()}
returns candidates, where a candidate
pathset $\Pathset{\Ck}$ is
a maximal set
such that each path $\CPath\in\Pathset{\Ck}$
either
(a) ends at a unique non-error destination cut-point node, say $q^t_{\Ck}$ (i.e., all
paths $\CPath\in{}\Pathset{\Ck}$ ending
at a non-error node end at $q^t_{\Ck}$),
or (b) ends at
error node $\UCk$.
This path enumeration strategy is the same as the one
used in Counter \cite{oopsla20}; this strategy
supports path specializing compiler transformations like
loop peeling, unrolling, splitting, unswitching, etc., but does
not support a path de-specializing transformation like loop re-rolling.
{\bf If $q^t_{\ddAk}=\mathscr{U}_{\ddAk}$},
{\em correlatedPathsInCOptions()}
returns candidates, where a candidate pathset $\Pathset{\Ck}$ is a maximal
set such that
each path $\CPath \in\Pathset{\Ck}$ ends at
$\UCk$.
The algorithm identifies a correlation for a path $\APath=\PathT{q_{\ddAk}}{\WAk{}}$
only after correlations for all other
paths of the
form $\APathN{\cancel{\EW}}=\PathT{q_{\ddAk}}{q^{\cancel{\EW}}_{\ddAk{}}}$ (for $q^{\cancel{\EW}}_{\ddAk{}}\neq{}\WAk{}$)
have been identified:
a pathset candidate $\Pathset{\Ck}$
that has already been correlated with some
other path $\APathN{\cancel{\EW}}$
is then prioritized
for
correlation with $\APath{}$.

For example, in \cref{fig:example1a},
for a cyclic path $\APath$ from a node at {\tt A9} to itself,
one of the candidate pathsets, $\Pathset{\Ck}$, returned
by this procedure (at $\mu{}=1$) contains
eleven
paths originating at {\tt I4} in \cref{fig:example1i}:
one that cycles back to {\tt I4} and
ten
that terminate at \UCk{} (for each of the ten memory accesses in the path).
For example, to evaluate the expression
{\tt v[*i]}, two memory loads are required, one at address {\tt i} and
another at {\tt \&v[*i]}, and each such load may potentially
transition to \UCk{} due to the $\mathtt{accessIsSafeC}_{\tau,a}$
check evaluating to {\tt false} in \TRule{\Load{\Ck}}.
A path that terminates at \UCk{} represents
correlated transitions from node {\tt (A9,I4)} in \Xk{}
such that \ddAk{} remains error-free (to end at {\tt A9}) but
\Ck{} triggers \EU, e.g.,
if the memory access {\tt mem$_4$[esi+4*eax]} in \ddAk{} (corresponding
to {\tt v[*i]} in \Ck{})
overshoots the stack space corresponding
to variable {\tt v} but still lies within the stack region \stk{}.

\subsubsection*{\bf asmAnnotOptions()}
For each simple path $\APath$,
and each (potentially non-simple) path $\CPath$
in $\Pathset{\Ck}$
\footnote{The number of paths can be exponential
in procedure size, and so our implementation
represents a pathset using a series-parallel digraph \cite{oopsla20}
and annotates
a pathset in \ddAk{} in a single
step. Similarly, a pathset in \ddAk{} is
correlated with a pathset in \Ck{} in a single
step. For easier exposition, the presented algorithm
correlates each path individually.},
\textit{asmAnnotOptions()} enumerates the options for annotating
$\xi_{\ddAk}$ with {\tt alloc$_{s,v}$},
{\tt dealloc$_{s,v}$}
instructions
and operands for {\tt call} instructions,
and the {\bf chooseFrom} operator chooses one.

An annotation option
includes the positions and the operands of the (de)allocation instructions
(allocation site, alignment, address, and size).
For a procedure-call, an annotation option also includes the
arguments' types and values,
and the set of callee-observable regions.
The annotations for the callee name/address and the (de)allocations of procedure-call
arguments in $\xi_{\ddAk}$
are uniquely identified using the number and type of
arguments in the candidate correlated path $\xi_{\Ck}$ using the calling
conventions. Similarly, the annotation of callee-observable regions
follows from the regions observable by the correlated procedure call in \CPath{}.

These annotations thus update \Ak{}
to incrementally construct \ddAk{}.
If untrusted compiler hints are
available, they are used to precisely
identify these annotations.
In a blackbox setting, where no compiler
hints are available, we reduce the search space for
annotations (at the cost of reduced generality)
using the following three restrictions:
(1)  An {\tt alloc$_{s,v}$} ({\tt dealloc$_{s,v}$})
annotation is annotated in $\xi_{\ddAk}$ only
if an {\tt alloc} ({\tt dealloc}) instruction is
present
in $\xi_{\Ck}$;
(2) an {\tt alloc$_{s,v}$} ({\tt dealloc$_{s,v}$})
annotation is added only after (before) an instruction that updates
{\tt esp}; moreover, for {\tt alloc$_s$}, {\tt esp} is used as the
local variable's address expression;
(3) for a single allocation site
in \Ck{}, at most one {\tt alloc$_{s,v}$} instruction (but
potentially multiple {\tt dealloc$_{s,v}$} instructions) is
added to \ddAk{}.
Thus, in a blackbox setting, due
to the third restriction, a
refinement proof may fail if the compiler
specializes a path containing a local variable allocation.
Due to the second restriction, a refinement proof may
fail for certain (arguably rare)
types of order-preserving
stack reallocation and stack merging performed by the compiler.
Note that these limitations hold only for the blackbox setting.

\textit{asmAnnotOptions()} returns the (options for)
updated \ddAk{}
and \APath{} as output.
An annotation updates \APath{} by inserting edges
corresponding to the error-free execution of
a $\mathtt{(de)alloc}_{s,v}$ instruction
(recall the graph translations presented in
\cref{fig:xlateRuleAsmStackLocals,fig:xlateRuleAsmAllLocals}).
Thus, \APath{} only covers the error-free execution
of an annotated $\mathtt{(de)alloc}_{s,v}$ instruction and
the remaining error-going paths
of $\mathtt{(de)alloc}_{s,v}$
are not correlated
in this step of \textit{constructX()}.
These error-going paths are correlated
in the second call to \textit{constructX()}
when $phase = \mathtt{CORRELATE\_NEW\_ERROR\_PATHS}$;
because \ddAk{} is already annotated at this point,
\textit{asmAnnotOptions()} is not invoked in
this second call.

After annotations, \APath{} may become a non-simple path due
to the extra I/O instructions introduced by the annotations.
The (potentially non-simple) output path \APath{} is thus broken
into a sequence of constituent paths \APathsP{} 
through \textit{breakIntoSingleIOPaths()}
so that each I/O path appears by itself (and not as a sub-path
of a longer constituent path) in \APathsP{} --- this
caters to
the (SingleIO) requirement.
The same exercise is repeated for (also potentially non-simple) \CPath{} to obtain
\CPathsP{}.
(SingleIO) permits an I/O path to be correlated
with only an I/O path.
However, this may not be possible if one of the paths terminates early
due to error, e.g., if \CPathsP{} has fewer paths than \APathsP{}
because (the last path in) \CPathsP{} ends at \UCk{}
(similarly, if \APathsP{} ends at \WAk{} or \UAk{}).
Recall that our refinement definition does not impose
any requirement on \ddAk{} when \Ck{} terminates with error \EU{},
nor on \Ck{} when \ddAk{} terminates with \EW{}.
Therefore, \textit{trimToMatchPathToErrorNode$(\APathsP{}, \CPathsP{})$} trims the path sequences
\APathsP{} and \CPathsP{} to length of the shorter sequence
if \CPathsP{} ends at \EU{}
or \APathsP{} ends at \EW{}
(otherwise \APathsP{} and \CPathsP{} are returned unmodified).
A failure is returned if the potentially trimmed sequences \APathsN{*} and \CPathsN{*} do not
have similar structures (\textit{haveSimilarStructure()}).
Let
$\mathtt{pos(\xi,\vv{\xi})}$ represent the position
of path $\xi$ in a sequence of paths $\vv{\xi}$.
{\em haveSimilarStructure(\APathsN{*}, \CPathsN{*})} returns true
iff
\APathsN{*} and \CPathsN{*} are of the same size, and
for paths $\CPath' \in \CPathsN{*}$ and
$\APath' \in \APathsN{*}$, if
$\mathtt{pos}(\CPath',\CPathsN{*})=\mathtt{pos}(\APath',\APathsN{*})$, then either both $\CPath'$ and $\APath'$
are I/O paths of same structure
(i.e., they are either both reads or both writes for the same type of value)
or both are I/O free.

\subsubsection*{\bf Incremental Construction of $(\ddAk,\Xk)$}
For each simple path $\xi^{\prime}_{\ddAk}$ in $\vv{\xi}^{\prime}_{\ddAk}$
enumerated in execution order, \toolName{} correlates it with
$\xi^{\prime}_{\Ck}$, such that
$\mathtt{pos}(\xi^{\prime}_{\Ck},\vv{\xi^{\prime}_{\Ck}})=\mathtt{pos}(\xi^{\prime}_{\ddAk},\vv{\xi}^{\prime}_{\ddAk})$ (through {\tt zip} in \cref{algo:constructX}).
This candidate correlation $(\xi^{\prime}_{\ddAk};\xi^{\prime}_{\Ck})$
is checked against a violation of (Similar-speed)
(\textit{addingEdgeWillCreateEmptyCCycle()})
before getting added as an edge $e_{\Xk}$
to $\mathcal{E}_{\Xk}$, adding the destination node to
$\mathcal{N}_{\Xk}$ if not already present.

If $\xi^{\prime}_{\Ck}$ represents
a path between
\Iwrite{\VallocA{\ldots{}}}
and \Iwrite{\VallocB{\ldots{}}}
for an {\tt alloc} instruction in \Ck{},
and $\xi^{\prime}_{\ddAk}$ is a corresponding
path due to an
{\tt alloc$_{s,v}$} instruction,
and edges $e^{\theta_a}_{\Ck}$
and $e^{\theta_m}_{\Ck}$
in $\xi^{\prime}_{\Ck}$
are labeled with
instructions $\alpha_{b} \Assign \Ichoose{\bv{32}}$
and $\IchooseM$ respectively
due to \TRule{Alloc},
we add mappings $\DXk(e_{\Xk}, e^{\theta_a}, 1)=v$ and
$\DXk(e_{\Xk}, e^{\theta_m}, 1)=M_{\ddAk}$,
where
$v$ is the address defined in $\xi^{\prime}_{\ddAk}$ due to either
\TRule{AllocS} or \TRule{AllocV} ({\em addDetMappings}~($e_{\Xk}$)).
Notice that our algorithm
only populates $\DXk{}(e_{\Xk}, e^{\theta}_{\Ck}, n)$
for $n=1$, even though \cref{sec:detX} defines \DXk{} more generally.

If the destination node is not an error node, then
the {\em inferInvariantsAndCounterexamples()} procedure
updates the invariant network \IXk{} due to the addition
of this new edge.
The non-coverage requirements are checked
after invariant inference ({\em checkSemanticReqsExceptCoverage})
and a candidate is discarded if the check fails.

When all simple paths between
the cut points of \ddAk{} are exhausted,
the (Coverage\ddAk{}) requirement must
be satisfied by construction. {\em checkCoverageRequirements()} further
checks the satisfaction of (Coverage\Ck{}) before returning {\tt Success}.
\toolName{} is sound because it returns {\tt Success} only if all the
thirteen search-algorithm requirements are satisfied.

The {\bf chooseFrom}
operator must attempt to maximize the chances
of returning {\tt Success}, even if only a fraction of
the search space has been explored.
\toolName{} uses the
counterexamples generated when
a proof obligation is falsified (e.g., during invariant
inference)
to guide
the search towards the more promising options.
A counterexample is a proxy
for the machine states of \Ck{} and \ddAk{}
that may appear at a node $n_{\Xk}$ during the
lockstep execution encoded by \Xk{}.
Thus, if at any step during the construction
of \Xk{},
the execution of a counterexample for a
candidate partial
solution $(\ddAk,\Xk)$ results
in
the violation of a non-coverage requirement,
that candidate
is discarded. Further, counterexample execution
opportunistically weakens the node invariants in \Xk{}.
Like Counter,
we use
the number of live registers in \ddAk{} related
through the current invariants in \IXk{} to
rank the enumerated partial candidate solutions to implement a
best-first search.

\subsection{Invariant Inference}
\label{sec:invInferFormal}

We use
a counterexample-guided
inference algorithm
to identify node invariants \cite{oopsla20}.
Candidate invariants at a
node $n_{\Xk}$ of a partial product-graph
are formed by conjuncting
predicates drawn from the grammar shown in \cref{fig:invGrammar}.
Apart from affine (\pred{affine})
and inequality relations (\pred{ineq} and \pred{ineqC})
for relating values across \Ck{} and \ddAk{},
the guesses attempt
to equate the allocation and memory state
of common regions
across
the two procedures (\pred{AllocEq} and \pred{MemEq}).

\begin{figure}[t]
\begin{footnotesize}
  \setlength{\extrarowheight}{1.8pt}
  \begin{tabular}{@{}l@{}}
\pred{affine} $\sum_{i}{c_iv_i}=c$
\qquad \qquad
\pred{ineqC} $\pm v\le_s 2^c$ 
\qquad \qquad 
\pred{ineq} $v_1 \odot v_2$
\qquad \qquad
\pred{spOrd} $\ghost{\spV{\PC{\ddAk}{j_1}}} \le_u (\ghost{\spV{\PC{\ddAk}{j_2}}} - v)$
\\
\pred{AllocEq} $\forall_{r \in \NS{}} \il{\Ck}{r} = \il{\ddAk}{r}$
\qquad \qquad
\pred{MemEq} $\prjMEq{\il{\ddAk}{\NS{}} \setminus (\ilZv{\ddAk})}{M_{\Ck}}{M_{\ddAk}}$
\qquad \qquad
\pred{zEmpty} $\{\il{\Ck}{\z},\ilzs{\ddAk}{\z},\ilzv{\ddAk}{\z}\} \mathbin{\{ =, \ne \}} \emptyset$ 
\\
\pred{spzBd} $\ghost{\Empty{\z}} \lor (\ghost{\spV{\PC{\ddAk}{j}}} \odot \{ \ghost{\LB{\z}}, \ghost{\UB{\z}} \})$
\quad
\pred{spzBd'} $\ghost{\Empty{\z}} \lor (\ghost{\spV{\PC{\ddAk}{j}}} \le_u (\ghost{\LB{\z}} - \ghost{\LSz{\z}}))$
\\

\midrule

\apred{gfySz} $\forall_{r \in G \cup \F \cup \Y \setminus \{ \yv \}}(\ghost{\Sz{r}} = \mathtt{sz}(\mathtt{T}(r)))$
\ 
\apred{\yv{}Sz} $(\ghost{\Empty{\yv}} \Leftrightarrow \ghost{\Sz{\yv}} = 0)$
\ 
\apred{Empty} $\forall_{r \in G \cup \F \cup \Y \cup \Z} (\il{\ddAk}{r} = \emptyset \Leftrightarrow \ghost{\Empty{r}})$
\\
\apred{gfyIntvl} $\forall_{r \in G \cup \F \cup \Y}((\ghost{\Sz{r}} = 0) \lor ((\ghost{\LB{r}} \leq_u \ghost{\UB{r}}) \land (\ghost{\UB{r}} = \ghost{\LB{r}}+\ghost{\Sz{r}}-1_{\bv{32}}) \land ([\ghost{\LB{r}},\, \ghost{\UB{r}}] = \il{\ddAk}{r})))$
\\
\apred{zlIntvl} $\ghost{\Empty{\zl}} \lor ((\ghost{\LB{\zl}} \leq_u \ghost{\UB{\zl}}) \land (\ghost{\LB{\zl}}+\ghost{\LSz{\zl}} - 1_{\bv{32}} = \ghost{\UB{\zl}}) \land ([\ghost{\LB{\zl}}, \ghost{\LB{\zl}}] = \il{\Ck}{\zl})))$
\\
\apred{zaBd} $\ghost{\Empty{\za}} \lor ((\ghost{\LB{\za}} \leq_u \ghost{\UB{\za}}) \land (\ghost{\LB{\za}}+\ghost{\LSz{\za}} - 1_{\bv{32}} \le_u \ghost{\UB{\za}}) \land (\ghost{\LB{\za}} = \LBr{\il{\Ck}{\za}} \land \ghost{\UB{\za}} =\UBr{\il{\Ck}{\za}}))$
\\
\apred{StkBd} $\il{\ddAk}{\{\stk\}\cup{}\Y}\cup{}(\il{\ddAk}{\Z{}}\setminus{}(\ilZv{\ddAk})) = [{\tt esp}, \ghost{\spE}]$
\quad
\apred{$cs$Bd} $\il{\ddAk}{\{cs, cl\}} = [\ghost{\spE}+1,\ghost{\stkE}]$
\\
\apred{NoOverlap\Ck{}} $\neg \Overlap(\il{\ddAk}{\heap},\il{\ddAk}{cl},\il{\ddAk}{\yv},\ldots,\ii{\ddAk}{\g},\ldots,\ii{\ddAk}{\y},\ldots,\il{\ddAk}{\z})$
\quad
\apred{ROM\Ck} $\forall_{r \in \Gro}{\prjMEq{\ii{\Ck}{r}}{M_{\Ck}}{\roMem{\Ck}{r}{\ii{\Ck}{r}}}}$
\\
\apred{NoOverlap\Ak{}} $\neg \Overlap(\il{\ddAk}{\{\heap, cl\} \cup G \cup \Y},\ldots,\ilzs{\ddAk}{\z},\ldots,\il{\ddAk}{\f},\ldots,\il{\ddAk}{\stk},\il{\ddAk}{cs})$
\quad
\apred{ROM{\Ak}} $\forall_{r \in \Fro}{\prjMEq{\ii{\ddAk}{r}}{M_{\ddAk}}{\roMem{\ddAk}{r}{\ii{\ddAk}{r}}}}$
\\
\end{tabular}
\caption{\label{fig:invGrammar} Predicate grammar for constructing candidate invariants.  $v$ represents a bitvector variable (registers, stack slots, and ghost variables), $c$ represents a bitvector constant.  $\odot \in \{\leq_{s,u},<_{s,u},>_{s,u},\geq_{s,u}\}$.
}
\end{footnotesize}
\end{figure}

Recall that we save stackpointer value at the boundary of a stackpointer updating instruction
at PC $\PC{\ddAk}{j}$
in ghost variable \ghost{\spV{\PC{\ddAk}{j}}}
(\TRule{Op-esp} in \cref{fig:xlateRuleAsm}).
To
prove separation between different
local variables, we require invariants that
lower-bound the gap between two ghost variables,
say \ghost{\spV{\PC{\ddAk}{j_1}}} and \ghost{\spV{\PC{\ddAk}{j_2}}},
by
some value $v$ that depends on the allocation size operand of an
{\tt alloc}$_{s}$ instruction (\pred{spOrd}).
To capture the various relations between lower bounds, upper
bounds, region sizes, and \ghost{\spV{\PC{\ddAk}{j}}}, the guessing grammar
includes shapes \pred{spzBd} and \pred{spzBd'} that are of the form:
``either a local variable region is empty
or its bounds are related to \ghost{\spV{\PC{\ddAk}{j}}} in these possible ways''.
\pred{zEmpty} tracks the emptiness of the address-set of a local region.
Together, these predicate shapes
(along with \pred{affine} and \pred{ineq} relations between \ghost{\spV{\PC{\ddAk}{j}}})
enable disambiguation between stack writes involving spilled pseudo-registers
and stack-allocated locals.

The predicate shapes listed below the dividing line segment in \cref{fig:invGrammar}
encode the {\em global invariants} that hold by construction (due to our execution
semantics) at
every non-error product-graph node $n_{\Xk}$.
\apred{gfySz}, \apred{\yv{}Sz}, and \apred{gfyIntvl} together encode the fact
that the ghost variables associated with a region $r \in G \cup \F \cup \Y$
track its bounds, size, and that the address set of $r$ is an interval.
\apred{Empty} encodes that the ghost variable \ghost{\Empty{r}}
for $r \in G \cup \F \cup \Y \cup \Z$
tracks the emptiness of the region $r$.
\apred{zlIntvl} captures the property that a local variable region $\zl$,
if non-empty,
must be an interval of size $\ghost{\LSz{\zl}}$.
\apred{zaBd} captures a weaker property for a local region \za{} (allocated using {\tt alloca()}):
if non-empty, this region must be bounded by its ghost variables
and the region must be at least $\ghost{\LSz{\za}}$ bytes large.
\apred{StkBd} encodes the invariant that the interval
$[\mathtt{esp},\ghost{\spE}]$
represents the union of the address
sets of \stk{}, regions in \Y{}, and stack-allocated local regions ($\il{\ddAk}{\Z} \setminus (\ilZv{\ddAk})$);
\apred{$cs$Bd} is similarly shaped and encodes that the interval
$[\ghost{\spE}+1, \ghost{\stkE}]$ represents the union
of the address sets of regions $cs$ and $cl$.
\apred{NoOverlap\Ck{}} encodes the disjointedness of all regions $r \in \NS{}$.
\apred{NoOverlap\Ak{}} encodes the disjointedness of all regions
in \ddAk{} except
virtually-allocated regions.
Finally, \apred{ROM\Ck} and \apred{ROM\Ak} encode the preservation
of memory contents
of read-only regions in \Ck{} and \ddAk{}.

A dataflow analysis \cite{andersen94programanalysis} computes the possible states of \BasedOn{} and \BasedOnM{} maps
at each $n_{\Ck{}}\in{}\NCk{}$, and the over-approximate solution is added to $\phi_{n_{\Xk{}}}$ for each $n_{\Xk{}}=(\_,n_{\Ck{}})$.

\subsection{SMT Encoding}
\label{sec:smtEncoding}

At a non-error node $n_{\Xk}$,
a proof obligation
is represented as a
first-order logic predicate over the state elements
at $n_{\Xk}$ and
discharged
using an SMT solver.
The machine states of \Ck{} and \ddAk{} are represented
using bitvectors (for a register/variable), arrays (for memory),
and uninterpreted functions (for $\mathtt{read}_{\vv{\tau}}(\World{P})$ and
\Iio{\vv{v}}{rw}).
For address sets,
we encode the
set-membership predicate $\alpha \in \il{P}{r}$ for an arbitrary address $\alpha$,
region identifier $r$, and procedure $P\in{}\{\Ck,\ddAk{}\}$.
All other address set operations can be expressed in terms
of the set-membership
predicate (\cref{app:derived_encoding}). To simplify
the encodings, we rely on the correct-by-construction
invariants in \cref{fig:invGrammar} and assume that
$\phi_{n_{\Xk}}$ satisfies the (Equivalence), (MAC),
and (MemEq) requirements.
Notice that (Equivalence) implies \pred{AllocEq}.

Recall that for $\z \in \Zl$, at a node $n_{\Xk}\in{}\NXk{}$,
\ilzs{\ddAk}{\z} and \ilzv{\ddAk}{\z} represent the
address sets corresponding to the
stack and virtual allocations performed in \ddAk{}
for \z{}.
Let
$\Z{ls} = \{ \z\ |\ \z\in{}\Zl\land{}\ilzs{\ddAk}{\z} \ne \emptyset \}$
and
$\Z{lv} = \{ \z\ |\ \z\in{}\Zl\land{}\ilzv{\ddAk}{\z} \ne \emptyset \}$
represent the set of stack-allocated locals and virtually-allocated
at $n_{\Xk}$ respectively.
Recall that we restrict ourselves to only
those compiler
transformations that ensure the validity of
$\Z{ls}\cap{}\Z{lv}=\emptyset{}$ at each $n_{\Xk}$ (\cref{sec:refnDefnVirtual}).

\subsubsection{Representing Address-Sets Using Allocation State Array}

Let $\MemallocP{P}: \bv{32} \rightarrow{} \Rall{}$ be
an {\em allocation state array} that
maps an address to a region identifier in procedure $P$.
For $r\notin{}\Z{lv}$,
$\alpha \in \il{P}{r}$
is encoded as $\selectMath_1(\MemallocP{P}, \alpha) = r$.
Allocation of an address $\alpha$ to region $r$
($\il{P}{r} \Assign{} \il{P}{r} \cup \{ \alpha \}$)
is encoded as $\MemallocP{P} \Assign \store_1(\MemallocP{P}, \alpha, r)$.
Similarly, deallocation
($\il{P}{r} \Assign{} \il{P}{r} \setminus \{ \alpha \}$)
is encoded as $\MemallocP{P} \Assign \store_1(\MemallocP{P}, \alpha, \free{})$.

For $\z{lv}\in{}\Z{lv}$,
both $\alpha \in \il{\Ck}{\z{lv}}$ 
and
$\alpha \in \il{\ddAk}{\z{lv}}$
are encoded as $\selectMath_1(\MemallocP{\Ck}, \alpha) = \z{lv}$, i.e.,
the set-membership encodings for both procedures
use \MemallocP{\Ck} for virtually-allocated locals
(by relying on the \pred{AllocEq} invariant at $n_{\Xk}$).
In other words, \MemallocP{\ddAk} is not used to track the virtually-allocated
locals; instead, an address belonging to a virtually allocated-region maps
to one of $\{\mathtt{free}, \stk{}, cs\} \cup \F$ regions in \MemallocP{\ddAk}.
Consequently, the (de)allocation instructions
$\ilzv{\ddAk}{\z{lv}} \Assign \ilzv{\ddAk}{\z{lv}} \cup [v]_{w}$
and 
$\ilzv{\ddAk}{\z{lv}} \Assign \emptyset$
are vacuous in \ddAk{}, i.e., they do not change any state
element in \ddAk{} (\cref{fig:xlateRuleAsmAllLocals}).

This encoding, based on allocation state arrays $\MemallocP{\Ck{}}$
and $\MemallocP{\ddAk{}}$,
is called the {\em full-array encoding}.
The second and third columns of \cref{tab:intervalEncodings} describe
the full-array encoding for $P=\Ck$ and $P=\ddAk$.  In the table, we use \pred{AllocEq} to
replace $\selectMath_1(\MemallocP{\ddAk}, \alpha)$
with $\selectMath_1(\MemallocP{\Ck}, \alpha)$ for $r\in{}{\NS{}}$.
For example, in the full-array
encoding, the (MemEq) requirement $\prjMEq{\il{\ddAk}{\NS{}} \setminus (\ilZv{\ddAk})}{M_{\Ck}}{M_{\ddAk}}$
becomes
$\forall{\alpha{}}:{((\selectMath_1(\MemallocP{\Ck}, \alpha) \in{} G\cup{}\{\heap,cl\}\cup{}\Y}\cup{}\Z{ls}\cup\Za)\Rightarrow{}(\selectMath_1(M_{\Ck}, \alpha)=\selectMath_1(M_{\ddAk}, \alpha)))$.

\subsubsection{Interval Encodings for $r\in G\cup\F\cup\Y\cup\Zl\cup\{\stk\}$}

We use \apred{gfyIntvl}, \apred{zlIntvl}, and
\pred{AllocEq} invariants for
a more efficient {\em interval encoding}: for $r\in{}G \cup \F \cup \Y \cup \Zl$, we 
encode $\alpha \in \il{P}{r}$ as
$\neg \ghost{\Empty{r}} \land (\ghost{\LB{r}} \leq_u \alpha \leq_u \ghost{\UB{r}})$.
Moreover, if there are no local variables allocated due to the {\tt alloca()} operator (i.e.,
$\il{P}{\Za}=\emptyset$), then all local variables are contiguous, and
so, due to \apred{StkBd}, the \stk{} region can be identified
as $[\mathtt{esp},\ghost{\spE}]\setminus{}\il{\ddAk{}}{\Y\cup{}\Z{ls}}$ --- the
corresponding interval encoding is shown in the
right-most cell of $r=\stk{}$ row in \cref{tab:intervalEncodings}.

\subsubsection{Interval Encodings for $r\in{}\{\heap{},cl,cs\}$}
\label{sec:intervalEncodingsHeapCl}
Even though $\heap{},cl,cs$ can be discontiguous regions in general,
we over-approximate these regions to their contiguous
covers to be able to soundly encode them using intervals.
At a node $n_{\Xk}=(n_{\ddAk},n_{\Ck})$, \toolName{} may generate
a proof obligation $\proofObligation{}$ of the form
$\hoareTriple{pre}{(\APath{};\CPathD{})}{post}$
--- recall that path-cover and path-infeasibility
conditions are also represented as Hoare triples with $\CPath{}=\epsilon{}$.
If $\APath$ is an I/O path, its execution
interacts with the outside world, and so an over-approximation
of an externally-visible address set is unsound.
We thus restrict our attention to an I/O-free
$\APath$ for interval encoding.

Let $n^1_{\ddAk{}}, n^2_{\ddAk{}}, \ldots, n^m_{\ddAk{}}$
be the nodes on
path $\xi_{\ddAk{}}=(n_{\ddAk}\twoheadrightarrow{}n^t_{\ddAk})$,
such that $n^1_{\ddAk{}}=n_{\ddAk{}}$ and $n^m_{\ddAk{}}=n^t_{\ddAk{}}$.
Let $\espmin{}(\APath)$ represent the the
minimum value of $\mathtt{esp}$ observed at any
node $n^j_{\ddAk}$ ($1\leq{}j\leq{}m$) visited
during the execution of path $\xi_{\ddAk{}}$.
Similarly, let
$\zlunion(\APath)$ be the union of the values
of set $\il{\ddAk{}}{\Z{lv}}$ observed at any $n^j_{\ddAk}$ ($1\leq{}j\leq{}m$)
visited during $\xi_{\ddAk{}}$'s execution.

Let $\HP={\tt comp}(\il{\ddAk}{G\cup\F}\cup\zlunion(\APath)\cup[\espmin(\APath),\ghost{\stkE}])$,
$\CL=[\ghost{\spE{}}+1_{\bv{32}},\ghost{\stkE{}}]\setminus \zlunion(\APath)$,
and
$\CS=[\ghost{\spE{}}+1_{\bv{32}},\ghost{\stkE{}}]\cap \zlunion(\APath)$.

\begin{table}
  \caption{\label{tab:intervalEncodings} SMT encoding of $\alpha \in \protect\il{P}{r}$ for \toolName{}'s proof obligation $\proofObligation$ with outgoing assembly path $\APath$.}
  \begin{scriptsize}
  \setlength{\extrarowheight}{1pt}
    \begin{tabularx}{\textwidth}{|p{1.7cm}|p{.5cm}|p{.5cm}|p{3.8cm}|X|}
  \hline
    \multirow{2}{=}{$\alpha \in \il{P}{r}$}
  & \multicolumn{2}{c|}{Full-array encoding}
  & \multirow{2}{=}{Partial-interval encoding ($\il{P}{\Za}\ne{}\emptyset$)}
  & \multirow{2}{=}{\hfill Full-interval encoding ($\il{P}{\Za}=\emptyset$) \hfill}
  \\
  \cline{2-3}
  & \multicolumn{1}{c|}{$P = \Ck$} & \multicolumn{1}{c|}{$P = \Ak$} &  &
  \\

  \hline
  \hline

  $r = \heap$
      	    & \multicolumn{2}{c|}{}
            & \multicolumn{2}{c|}{
              $\alpha{}\notin{}(\il{\ddAk}{G\cup\F}\cup\zlunion(\APath)\cup{}[\espmin{}(\APath),\ghost{\stkE}])$
              }
            \\

  \cline{1-1}\cline{4-5}

  $r = cl$
      	    & \multicolumn{2}{c|}{$\selectMath_1(\MemallocP{\Ck}, \alpha) = r$}
            & \multicolumn{2}{c|}{
              $
              \alpha \in [\ghost{\spE}+1, \ghost{\stkE}]
              \land
              \alpha \notin \zlunion(\APath)
              $
              }
            \\

  \cline{1-1}\cline{4-5}

  $r \in G \cup \Z{lv}$
      	    & \multicolumn{2}{c|}{}
            & \multicolumn{2}{c|}{}
            \\
  \cline{1-1}\cline{4-4}

  $r \in \Y \cup \Za \cup \Z{ls}$
      	    & \multicolumn{2}{c}{}
      	    & 
            & $\neg \ghost{\Empty{r}} \land (\ghost{\LB{r}} \le_u \alpha \le_u \ghost{\UB{r}})$
            \\
  \cline{1-4}

  $r \in \F$
          &
          & 
          & \multicolumn{2}{c|}{}
          \\

  \cline{1-1}\cline{4-5}

  $r = cs$
      	  & \texttt{false}
      	  &
      	  & \multicolumn{2}{c|}{
      	    $
      	    \alpha \in [\ghost{\spE}+1, \ghost{\stkE}]
      	    \land
            \alpha \in \zlunion(\APath)
            $
      	  }
          \\

  \cline{1-1}\cline{4-5}

  $r = \stk$
      	    &
      	    & \multicolumn{2}{c|}{$\selectMath_1(\MemallocP{\ddAk}, \alpha) = r$}
            & $\alpha \in [\espmin{}(\APath), \ghost{\spE}] \land \bigwedge_{r \in \Y \cup \Z{ls}} (\alpha \notin \il{\ddAk}{r})$
            \\

  \hline
  \end{tabularx}
  \end{scriptsize}
\end{table}

\begin{theorem}
\label{theorem:rewriteHeapCl}
Let $\proofObligation{}=\hoareTriple{pre}{(\APath{};\CPathD{})}{post}$
be a proof obligation generated by \toolName{}.
Let $\proofObligation{}^{\prime}$ be obtained from $\proofObligation{}$ by
strengthening
precondition $pre$
to $pre^{\prime}=pre
\land
(\il{\ddAk}{\heap}=\HP)
\land
(\il{\ddAk}{cl}=\CL)
\land
(\il{\ddAk}{cs} = \CS)$.
If $\APath{}$ is I/O-free,
$\proofObligation{}\Leftrightarrow{}\proofObligation{}^{\prime}$ holds.
\end{theorem}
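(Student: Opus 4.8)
The plan is to prove the two implications of $\proofObligation{}\Leftrightarrow\proofObligation{}^{\prime}$ separately. The forward direction $\proofObligation{}\Rightarrow\proofObligation{}^{\prime}$ is immediate: $\proofObligation{}^{\prime}$ only strengthens the precondition of the Hoare triple, and a valid Hoare triple stays valid when its precondition is conjoined with extra constraints (whether or not $pre^{\prime}$ is satisfiable). All the work is in the backward direction $\proofObligation{}^{\prime}\Rightarrow\proofObligation{}$, which I would establish by contraposition: from a state $\sigma\models{}pre$ witnessing the failure of $\proofObligation{}$ (the paths $\APath{};\CPathD{}$ run to completion from $\sigma$ but $post$ fails), I would build a \emph{canonicalized} state $\tau\models{}pre^{\prime}$ that witnesses the failure of $\proofObligation{}^{\prime}$.

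Before constructing $\tau$, I would record the structural facts that make the over-approximation sound, all of which follow from the correct-by-construction invariants carried in $pre$ (\cref{fig:invGrammar}) together with the callers'-virtual-smallest and safety-relaxed semantics fixed in \cref{sec:cvSmallest,sec:safetyUnchecked}. Using \apred{StkBd}, \apred{$cs$Bd}, \apred{NoOverlap\Ak{}}, and the fact that \TRule{AllocV} places every virtually-allocated address outside $\il{\ddAk}{\NS}$, I would show $\il{\ddAk}{\heap}\subseteq\HP$, $\il{\ddAk}{cl}\subseteq\CL$, and $\CS\subseteq\il{\ddAk}{cs}$, with $\CL\cup\CS=\il{\ddAk}{cl}\cup\il{\ddAk}{cs}=[\ghost{\spE}+1_{\bv{32}},\ghost{\stkE}]$; thus $\heap$ and $cl$ are enlarged, $cs$ is shrunk, and their union is preserved. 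I would also use the key observation that $\heap$, $cl$, and $cs$ are \emph{constant} throughout any execution (they lie in $\Rall{}\setminus(\Z\cup\{\free,\stk\})$), so substituting them at the entry of the path fixes their values for the whole run, and any conjunct of $post$ that speaks only about these region sets (\apred{NoOverlap\Ak{}}, \apred{$cs$Bd}, and the $r\in\{\heap,cl\}$ instances of \pred{AllocEq}) already holds in the final state of $\sigma$ by inheritance from $pre$ and hence cannot be the conjunct that fails.

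The canonicalization $\tau$ would set $\il{\ddAk}{\heap},\il{\ddAk}{cl},\il{\ddAk}{cs}$ (and, via \pred{AllocEq}, their \Ck{} counterparts) to $\HP,\CL,\CS$, and additionally overwrite $M_{\ddAk}$ to agree with $M_{\Ck}$ on the \emph{padding} addresses $(\HP\setminus\il{\ddAk}{\heap})\cup(\CL\setminus\il{\ddAk}{cl})$. The crucial point, supplied by (MAC), is that every assembly access to an address outside $\il{\ddAk}{G\cup\F}\cup[\mathtt{esp},\ghost{\spE}]$ is matched by a \Ck{} access to the same address; since the padding lies outside this set and a \Ck{} access into free space would be UB (contradicting that $\CPathD{}$ completes without error), the padding is \emph{never read or written} along the path. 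Consequently: (i) overwriting $M_{\ddAk}$ on the padding is invisible to the execution, so $\APath{};\CPathD{}$ still runs to completion from $\tau$ with identical register and memory results on all accessed addresses; (ii) $\tau\models{}pre^{\prime}$, because the enlarged \pred{MemEq} now holds (it held on the actual regions in $\sigma$ and holds on the padding by construction), while \apred{NoOverlap\Ak{}}, \apred{StkBd}, \apred{$cs$Bd}, and \apred{ROM\Ak} are checked to survive the substitution using the disjointness of the padding from $G,\Y,\Z$; and (iii) $post$ still fails at $\tau$, since the failing conjunct is either region-set-independent (identical in $\sigma$ and $\tau$) or is \pred{MemEq}, whose failure on the actual region carries over because $\tau$ and $\sigma$ share memory there and the actual region is contained in the enlarged one. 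The I/O-freeness of \APath{} is used exactly here: it guarantees the path emits no $\Vcall{\cdots}$ or $\VallocB{\cdots}$ event whose value projects $M_{\ddAk}$ over a $\heap{}/cl$-dependent region set, which would otherwise make the padding externally visible and defeat the substitution.

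The main obstacle I anticipate is handling the obligations whose correlated \Ck{} path ends at $\UCk{}$ --- in particular the (Coverage\Ck{}) path-cover obligations, where $\CPath{}=\epsilon{}$ and $post$ is a disjunction of \Ck{}-path conditions. Enlarging $\il{\Ck}{\heap}$ and $\il{\Ck}{cl}$ relaxes the $\mathtt{accessIsSafeC}$ guards in \Ck{}, so an access that is UB under the actual heap becomes safe under $\HP{}$; this toggles which member of the pathset is enabled and, a priori, could turn a coverage \emph{failure} under the true regions into a spurious coverage \emph{success} under the over-approximation. I expect to discharge this by exploiting that \textit{correlatedPathsInCOptions} returns a \emph{maximal} pathset pairing each non-error \Ck{}-path with its companion $\UCk{}$-path sharing the same prefix: whenever the shared prefix guards hold, exactly one of the two access-safety predicates holds and the corresponding path is enabled, so the disjunction is insensitive to the heap boundary and a genuine coverage failure must stem from a heap-independent guard that the canonicalization preserves. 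Verifying this exhaustiveness argument across all obligation shapes emitted by \toolName{}, and confirming the padding-disjointness side conditions for \apred{ROM\Ck}, \apred{ROM\Ak}, and the assembly stack-push check of \TRule{Op-esp'}, is where the bulk of the careful case analysis will lie.
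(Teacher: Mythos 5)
Your overall skeleton is the same as the paper's: the forward implication is dismissed the same way, and the backward implication is established by building an expanded state (regions set to $\HP$, $\CL$, $\CS$; memory patched on the padding so that the enlarged \pred{MemEq} holds), arguing the two executions parallel each other, and transferring the postcondition using exactly the paper's maximality-of-pathsets argument for $\mathtt{accessIsSafeC}$ guard flips. Working by contraposition rather than directly is an immaterial difference.

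The genuine gap is the \emph{direction} of your memory patch. You overwrite $M_{\ddAk}$ on the padding $(\HP\cup\CL)\setminus\il{\ddAk}{\{\heap,cl\}}$ to agree with $M_{\Ck}$, and justify invisibility of this change via (MAC). But (MAC) facts are Hoare triples over edges $(\APath;\CPathDN{j})$: they say nothing unless the correlated \Ck{} path runs to completion, and they are not even required for edges ending at $(\_,\UCk)$. For the (Coverage\Ck{}) obligations --- which you yourself flag as the hard case, where $\CPath=\epsilon$ and $post=\bigvee_{j}\pathcond{\CPathDN{j}}$ --- a state $\sigma$ witnessing failure of $\proofObligation$ is precisely one where \emph{no} $\CPathDN{j}$ completes, so every relevant (MAC) instance is vacuous. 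Under the safety-relaxed semantics, \TRule{\Load{\ddAk}} does not fault on \free{} or $cs$ addresses, so \APath{} may genuinely load from a padding address and branch on the value; after your overwrite that value changes, \APath{} may no longer execute to completion from your canonicalized $\tau$, and then $\tau$ fails to witness failure of $\proofObligation'$ at all. Your maximality argument repairs the post-transfer (guard-flip) problem, but it cannot repair this execution-divergence problem on the assembly side. The paper avoids the issue by patching in the opposite direction: $M_{\Ck}$ is overwritten with $M_{\ddAk}$'s values on the padding and $M_{\ddAk}$ is left untouched, so the assembly execution from $\sigma'$ is literally identical to that from $\sigma$, and the only non-interference fact needed is on the \Ck{} side --- an error-free \Ck{} execution can never touch \free{} addresses by \Ck{}'s own UB semantics, with no appeal to (MAC) --- combined with the subset/maximality argument for \Ck{} paths ending at \UCk{}. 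Reversing your patch direction (and re-deriving your clauses (i)--(iii) accordingly) recovers the paper's proof; as written, step (i) is unsound for the coverage obligations.
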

\begin{proof}[Proof sketch]
$\proofObligation{}\Rightarrow{}\proofObligation{}^{\prime}$ is trivial.
The proof for
$\proofObligation{}^{\prime}\Rightarrow{}\proofObligation{}$,
available in \cref{app:intervalEncodingProof}, relies on the limited shapes of predicates that
may appear in $pre$, $post$ --- for I/O-free $\xi_{\ddAk}$, these
shapes are limited by our invariant grammar (\cref{fig:invGrammar}),
and the edge conditions
appearing in our execution semantics
(\cref{fig:xlateRuleIR,fig:xlateRuleAsm,fig:xlateRuleAsmStackLocals,fig:xlateRuleAsmAllLocals}).
The proof holds only if the
safety-relaxed semantics are used for \ddAk{}.
\end{proof}

Using \cref{theorem:rewriteHeapCl},
we rewrite $\alpha \in \il{P}{\heap}$
to $\alpha \in \HP$,
$\alpha \in \il{P}{cl}$
to $\alpha \in \CL$,
and
$\alpha \in \il{P}{cs}$
to $\alpha \in \CS$
in proof obligation $\proofObligation{}$.
As shown in \cref{tab:intervalEncodings},
if $\il{P}{\Za}=\emptyset$ holds at $n_{\Xk}$, we encode all non-free regions
using intervals (called {\em full-interval encoding}); else,
we encode regions in $Y\cup\Za\cup\Z{ls}\cup\{\stk{}\}$ using an allocation state array,
and $G\cup\F\cup\Z{lv}\cup\{\heap,cl,cs\}$ using intervals (called {\em partial-interval encoding}).

\section{Experiments}
\label{sec:experiments}
\toolName{}
uses four SMT solvers running in parallel for discharging
proof obligations: {\tt z3-4.8.7}, {\tt z3-4.8.14},
{\tt Yices2-45e38fc},
and {\tt cvc4-1.7}. Unless otherwise specified, we use $\mu=64$,
a timeout of ten minutes for an SMT query, and
a timeout of eight hours for a
refinement check.

Before checking refinement, if
the address of a local variable $l$
is never taken in \Ck{}, we transform
\Ck{} to register-allocate $l$ (LLVM's
{\tt mem2reg}). This
reduces the proof effort, at the cost of having
to trust the pseudo-register allocation logic.
{\tt mem2reg} does not register-allocate local arrays and structs in \ourIR{},
even though an optimizing compiler may register-allocate them
in assembly --- virtual allocations help validate such translations.

\begin{table}
\caption{\label{tab:benchmarks}Benchmarks and their programming patterns. $N$ in \texttt{vil}$N$ is substituted to obtain {\tt vil1}, {\tt vil2}, and {\tt vil3}.  Program listings available in \cref{app:benchmarks_source_code}.}
\begin{scriptsize}
\begin{tabular}{l|l}
\hline
  Name  & Programming pattern \\
\hline
  ats   & \inv{Address-taken local scalar} \fbox{\tt int ats() \{ int ret; foo(\&ret); return ret; \}} \\
  atc   & \inv{Address taken conditionally} \fbox{\tt int atc(int* p) \{ int x; if (!p) p = \&x; foo(p); return *p \}}\\
  ata   & \inv{Local array} \fbox{\tt int ata() \{ char ret[8]; foo(ret); return bar(ret, 0, 16); \}}\\
  vwl   & \inv{Variadic procedure} \fbox{{\tt int vwl(int n, ...) \{ va\_list a; va\_start(a, n);} {\tt for(...)\{/* read va\_arg(a,int) */\}...\}}}\\
  as    & \inv{GCC {\tt alloca()}} \fbox{\tt int as(int n)\{...int* p=alloca(n*sizeof(n)); for(...)\{/*write to p*/\}...\}}\\
  vsl   & \inv{VLA with loop} \fbox{\tt int vsl(int n)\{... int v[n]; for(...)\{/*write to v*/\}...\}} \\
  vcu   & \inv{VLA conditional use} \fbox{\tt int vcu(int n,int k)\{ int a[n]; if (...) \{ /*rd/wr to a*/\}...\}}\\
  min   & \inv{{\tt minprintf} procedure from K\&R \cite{knr}} \\
  ac    & \inv{{\tt alloca()} conditional use} \fbox{\tt int ac(char*a) \{..if (!a) a=alloca(n); for(...)/*r/w to a*/\}}\\
  all   & \inv{
            \begin{tabular}{@{}l@{}}
            {\tt alloca()} in a loop\\
            to form a linked list
            \end{tabular}}
          \fbox{
            \begin{tabular}{@{}l@{}}
            {\tt all()\{..hd=NULL; for(...)\{..n=alloca(..);..n->nxt=hd; hd=n;\}}\\
            {\tt \ \ \ \ \ \ \ \ while(...)\{/* traverse the list starting at hd */\}\}}\\
            \end{tabular}
          }
\\
  atail & \inv{Local array alloc. in loop} \fbox{\tt int atail(..)\{..for(..)\{ char a[4096]; f(a..); b(a..);...\}...\}}\\
  vil$N$   & \inv{$N$ VLA(s) in a loop} \fbox{\tt int vil$N$(..)\{..for(i=1;i<n;++i) \{ int v1[4*i], v2[4*i], ... v$N$[4*i]; foo$N$(...); ...\}}\\
  vilcc & \inv{VLA in loop with {\tt continue}} \fbox{\tt int vilcc(..)\{..while(i<n)\{ char v[i];...if(..) continue;..\}..\}}\\
  fib   & \inv{Program from \cref{fig:example1}} \\
  vilce & \inv{VLA in loop with {\tt break}} \fbox{\tt int vilce(..)\{..while(i<n)\{ char v[i];...if(..) break;..\}..\}}\\
  rod   & \inv{A local char array initialized using a string and a VLA and a for loop} {\tt Available in \cref{app:benchmarks_source_code}.}\\
\end{tabular}
\end{scriptsize}
\end{table}

We first evaluate the efficacy of our implementation
to handle the diverse
programming patterns seen with local allocations (\cref{tab:benchmarks}).
These include variadic procedures,
VLAs allocated in loops, {\tt alloca()} in loops, etc.
\Cref{fig:graph_lt} shows the results of our experiments for these
18
programming patterns from \cref{tab:benchmarks}
and three compilers, namely
Clang/LLVM v12.0.0, GCC v8.4.0, and ICC v2021.8.0,
to generate 32-bit x86
executables
at {\tt -O3} optimization with inter-procedural analyses disabled using the compilers' command-line flags.
The X-axis lists the benchmarks and
the Y-axis represents the total time taken in seconds (log scale) for
a refinement check ---
to study the performance
implications,
we run a check with all three encodings for these benchmarks.
The filled and empty bars represent the time taken with full-interval and partial-interval SMT encodings respectively. The figure does not show the
results for the full-array encoding.
A missing bar represents a failure to compute the proof.
Of 54 procedure pairs, our implementation is able to check refinement
for 45, 43, and 37
pairs while using
full-interval, partial-interval, and full-array encodings respectively.
For benchmarks where a refinement check succeeds
for all encodings,
the full-interval encoding performs 1.7-2.2x and 3.5-4.9x
faster on average (for each compiler)
than the partial-interval and full-array encodings respectively.
The reasons for nine failures are:
(a) limitation of the blackbox annotation algorithm for one procedure-pair;
(b) incompleteness of invariant inference for six procedure-pairs  (e.g., requirement of non-affine invariants, choice of program variables); and
(c) SMT solver timeouts for two procedure-pairs.
{\tt vilcc} and {\tt vilce}
require multiple {\tt dealloc}$_s$ instructions to be added to \Ak{} for a
single {\tt dealloc} in \Ck{}.
An {\tt alloc$_v$} annotation
is required for the `{\tt va\_list a}' variable in the GCC and ICC
compilations of {\tt vwl}
(see \cref{tab:benchmarks}) ---
while GCC and ICC register-allocate {\tt a}, it is allocated
in memory using {\tt alloc} in \ourIR{} (even after {\tt mem2reg}).
The average number of
best-first search backtrackings across all
benchmarks is only 2.8.
The time spent in constructing
the correct product graph forms
around 70-80\% of the total search time.

\begin{figure}
\begin{subfigure}[b]{.48\textwidth}
  \includegraphics[width=\textwidth]{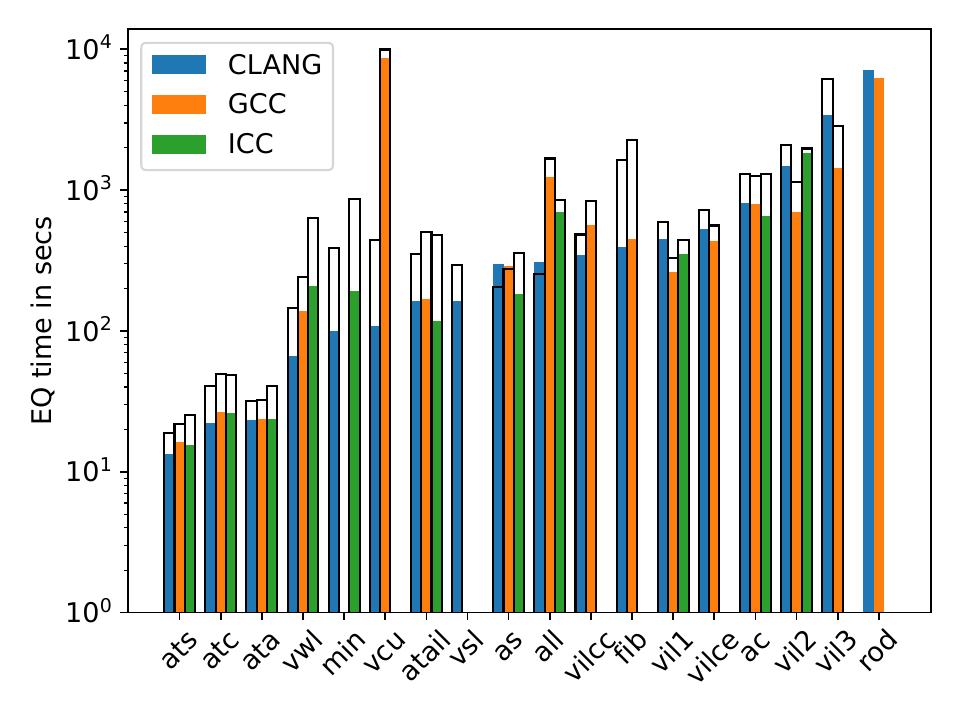}
\caption{\label{fig:graph_lt}Comparison of running times with full- (filled bars) and partial- (empty bars) interval encoding.}
\end{subfigure}
\begin{subfigure}[b]{.50\textwidth}
  \includegraphics[width=\textwidth]{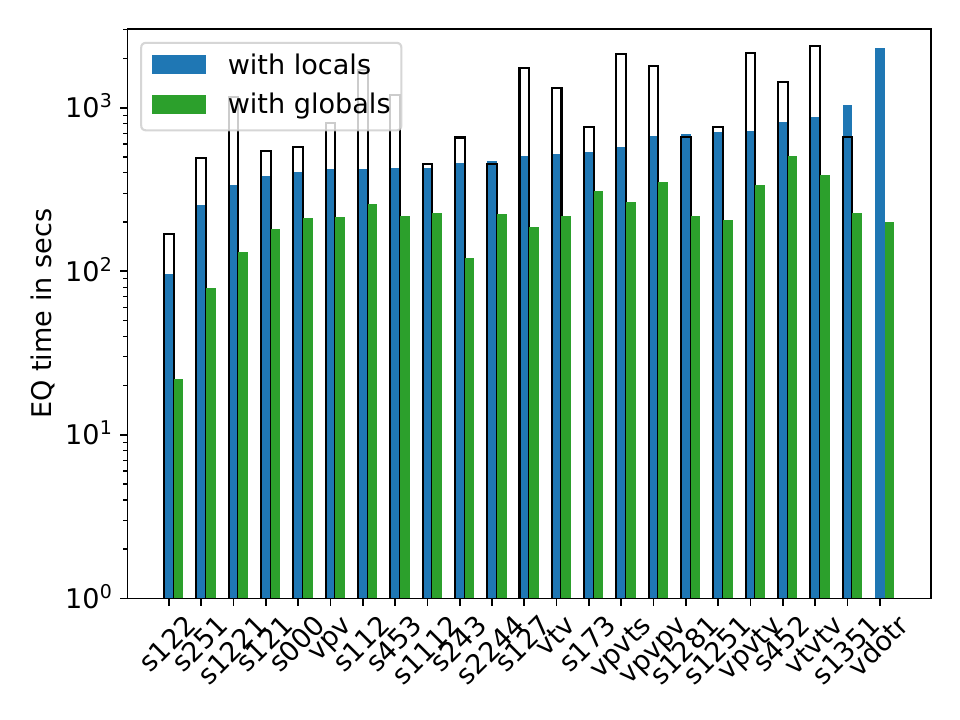}
\caption{\label{fig:graph_tsvc}Comparison of running times of benchmarks with exactly same code modulo allocation.}
\end{subfigure}
\caption{\label{fig:graphs} Experiments with procedures in \cref{tab:benchmarks} and TSVC.  Y-axis is logarithmically scaled.}
\end{figure}

We next evaluate
\toolName{} on the TSVC suite
of vectorization benchmarks with arrays
and loops \cite{tsvc}, also used in previous work
\cite{semalign,oopsla20}.
We use two versions of these
benchmarks: (1) `{\tt globals}' where global variables are
used for storing the output array values, and (2) `{\tt locals}' where local
array variables are used for storing the output
values and a procedure call is
added at the end of the procedure body to print the contents
of the local array variables. The compiler
performs the same vectorizing transformations on
both versions. Unlike {\tt globals},
{\tt locals} additionally
requires the automatic identification of required annotations.

\Cref{fig:graph_tsvc} shows the execution
times of \toolName{} for validating
the compilations produced by
Clang/LLVM v12.0.0 (at {\tt -O3})
for these two versions of the TSVC benchmarks. \toolName{}
can successfully validate these compilations.
Compared to {\tt globals},
refinement checks are 2.5x slower for {\tt locals} (on average) due to
the extra overhead of identifying the required annotations. 

\begin{table}[t]
\caption{\label{tab:bzip2}Statistics obtained by running \toolName{} on procedures in the {\tt bzip2} program.} 
\begin{scriptsize}
\centering
\begin{tabularx}{\textwidth}{@{}p{3.0cm}ccccccccccccccc@{}}
\toprule
Name                                       & {\tt SLOC} & {\tt ALOC} & \#$_{al}$ & \#$_{loop}$ & \#$_{fcall}$ & {\tt D} & {\tt eqT} & Nodes  & Edges  & {\tt EXP} & {\tt BT} & \#$_{q}$ & Avg. {\tt qT} \\
\midrule                                                                                                                   
{\tt generateMTFValues}                    & 76         & 144        & 1         & 6           &   1          & 2       & 4k        & 14     & 30     & 60        & 16       & 3860     & 0.56   \\
{\tt recvDecodingTables}                   & 70         & 199        & 2         & 14          &  10          & 3       & 3k        & 38     & 66     & 102       & 15       & 5611     & 0.21     \\
{\tt undoReversible\-Transformation\_fast} & 116        & 221        & 1         & 7           &   6          & 2       & 2k        & 21     & 34     & 43        & 6        & 2998     & 0.23    \\
\bottomrule
\end{tabularx}
\end{scriptsize}
\end{table}

Our third experiment is on
SPEC CPU2000's {\tt bzip2}\cite{spec:cint2000} program compiled
using Clang/LLVM v12.0.0 at three optimization levels: {\tt O1},
{\tt O2}, and {\tt O1-}.  {\tt O1-} is a custom optimization
level configured by us that enables all optimizations at {\tt O1} {\em except}
(a) merging of multiple procedure calls on different paths into a single call,
(b) early-CSE (common subexpression elimination), (c) loop-invariant code motion
at both LLVM IR and Machine IR, (d) dead-argument elimination, (e) inter-procedural
sparse conditional constant propagation, and (f) dead-code elimination of
procedure calls. {\tt bzip2} runs 2\% slower with {\tt O1-} than with {\tt O1};
this is still 5\% faster than the executable produced by CompCert, for example.
Of all 72 procedures in {\tt bzip2}, \toolName{} successfully
validates the translations for 64, 60, and 54 procedures at {\tt O1-}, {\tt O1},
and {\tt O2} respectively at $\mu=1$. At {\tt O1-}, \toolName{} takes around
six CPU hours to compute refinement proofs for the 64 procedures. \toolName{}
times out
for the remaining eight procedures, all of which are bigger than 190 ALOC.

Three of {\tt bzip2}'s procedures for which refinement proofs are successfully
computed at both {\tt O1-} and {\tt O1} contain at least one local array, and
\cref{tab:bzip2}
presents statistics for the {\tt O1-} validation experiments for these
procedures.
For each procedure, we show
the number of source lines of code in \Ck{} ({\tt SLOC}), the
number of assembly instructions in \Ak{} ({\tt ALOC}), the
number of local variables ({\tt \#$_{al}$}), the number
of loops ({\tt \#$_{loop}$}), the number of procedure calls
({\tt \#$_{fcall}$}),
and the maximum loop nest depth ({\tt D}).
The {\tt eqT} column shows the validation times (in seconds).
The Nodes and
Edges columns show
the number of nodes and edges in the final product graph,
and {\tt BT} and {\tt EXP} is
the number of backtrackings and the
number of (partial) candidate product graphs explored by \toolName{} respectively.
{\tt \#$_q$}
is the total number of SMT queries discharged,
and {\tt Avg. qT} is
the average time taken by an SMT query in seconds for the refinement check.

In a separate experiment, we split the large procedures in {\tt bzip2} into
smaller procedures, so that \toolName{} successfully validates
the {\tt O1-} compilation of the full modified {\tt bzip2} program: the splitting
disables some compiler transformations and also reduces the correlation search space.

Through our experiments, we uncovered and reported a bug in recent
versions of {\tt z3}, including {\tt z3-4.8.14} and {\tt z3-4.12.5},
where for an input satisfiability query $\Psi$,
the SMT solver returns an unsound model (counterexample) that evaluates
$\Psi$ to false \cite{z3bugreport_model2024}.
When a
modern SMT solver is used to validate compilations produced by
a mature compiler, a bug may be found on either side.

\section{Related Work and Conclusions}
\label{sec:relwork}
CoVaC \cite{covac} automatically
identifies a product program that demonstrates observable
equivalence for deterministic programs.
Counter \cite{oopsla20} extends
CoVaC to support path-specializing transformations, such as loop
unrolling, through counterexample-guided
search heuristics.
We extend these prior works to support refinement
between programs
performing dynamic allocations with non-deterministic
addresses for local variables and stack.

Recent work on bounded TV
\cite{alive2_llvm_mem_model} models allocations through
{\em separate blocks},
so a pointer is represented as a combination of a block-ID
and an offset into a block. While
this suffices for the bounded TV setting, our problem setting requires
a more general representation of a dynamically-allocated
variable (e.g., allocation-site)
and a
more general SMT encoding.

CompCert
provides axiomatic semantics for memory (de)allocation in the
source Clight program, and proves their
preservation along the compilation pipeline
\cite{compcertMemModel}.
They restrict their proof method to
CompCert's preallocation strategy for local variables,
possibly to avoid the manual effort required to write
mechanized proofs
for a more general allocation strategy.
Preallocation of local variables has also been used in
prior work on TV for a verified OS
kernel \cite{tv_oskernel}.
Preallocation can be
space inefficient and cannot support VLAs and {\tt alloca()}. Further,
TV for a third-party compiler
cannot assume a particular allocation strategy.

We provide a semantic model, refinement
definition, and
an algorithm
to determine the correctness of a third-party
translation from
an unoptimized high-level representation of a C program to an
optimized assembly program in the presence of dynamically-allocated
local memory. Our semantic model and
definition of refinement require that for allocations
and procedure calls that reuse stack space, their relative
order is preserved in both programs.
While our experiments show that this suffices
in practice, a more general definition of refinement, that admits
transformations that may reorder (de)allocations while
reusing stack space, is perhaps a good
candidate for future work.

\section*{Data-Availability Statement}

The \toolName{} tool that supports~\cref{sec:experiments}
is available on Zenodo
~\cite{oopsla24_artifact}
with instructions
for complete reproducibility of
the presented results.

\begin{acks}
  We thank Shubhani Gupta for contributing towards
  scalability improvements of the translation validation tool.
  We thank Abhishek Dang for carefully reading previous drafts of the paper,
  and pointing out several errors, and making important suggestions that improved
  the paper significantly.
\end{acks}

\bibliography{oopsla24}

\clearpage
\appendix

\section{Appendix}

\subsection{Conversion of C to \ourIR{} for Procedure Definitions and Calls}
\label{app:c2llvmFcalls}

For a C procedure definition,
parameters are passed through pointers of corresponding \ourIR{} types.
This includes both scalar and aggregate parameters.
For example,
a C procedure with parameters {\tt int}, {\tt struct bar}, and {\tt struct baz*}
(pointer to {\tt struct baz}) respectively
is translated to parameters of corresponding \ourIR{} types of
{\tt int*}, {\tt struct bar*}, and {\tt struct baz**}
respectively in \ourIR{}.

A procedure with aggregate ({\tt struct}) return value is translated
to have the return value passed through memory.
For a return value of {\tt struct} type, say `{\tt struct ret}', of
a C procedure {\tt foo()}, the \ourIR{} implementation assumes
that the caller has allocated a `{\tt struct ret}'-sized memory region and has passed its pointer
as the first argument.
The body of {\tt foo}$_{\ourIR{}}$ then populates the contents
of this memory region with
field values computed by it, before returning.

\begin{figure}
  \begin{scriptsize}
  \[
  \mprset{flushleft}
  \inferrule*[vcenter]{\rho(e_1, e_2, \ldots, e_m) \\ \gamma\ \rho(\tau_1, \ldots, \tau_n) \text{ is the type signature}}
  {
    argsP \Assign []; \qquad // \text{empty list}
    \\\\
    \mathtt{IF}\ \mathtt{is\_aggregate\_type}(\gamma)\ \{
      \\\\ \qquad
      \Emit{p_r\ \Assign\ \mathtt{alloc}\ 1,\,\Itempl{\gamma}, \ \Itempl{\mathtt{ALIGNOF}(\gamma)};}
      \\\\ \qquad
      \mathtt{APPEND}(argsP, p_r); \qquad // \text{pass pointer to allocated region as first argument}
      \\\\
    \}
    \\\\
    \mathtt{FOR}\ i\ \mathtt{in}\ 1 \ldots n\ \{
      \\\\ \qquad
      \Emit{p_i\ \Assign\ \mathtt{alloc}\ 1,\,\Itempl{\tau_i}, \ \Itempl{\mathtt{ALIGNOF}(\tau_i)};}
      \\\\ \qquad
      \Emit{\mathtt{store}\ \Itempl{\tau_i}, \ \Itempl{\mathtt{ALIGNOF}(\tau_i)}, \ \Itempl{\mathtt{GEN}(e_i)},\, p_i;}
      \\\\ \qquad
      \mathtt{APPEND}(argsP, p_i);
      \\\\
    \}
    \\\\
    \mathtt{IF}\ \mathtt{is\_variadic}(\rho)\ \{
      \\\\ \qquad
      \ldots, \kappa_i, \ldots \Assign \mathtt{promoted\_type}(e_{n+1}),\ldots,\mathtt{promoted\_type}(e_{m});
      \\\\ \qquad
      \eta \Assign \mathtt{mk\_struct\_x86\_cc}(\ldots, \kappa_i, \ldots); 
      \\\\ \qquad
      \Emit{pvar\ \Assign\ p_v \Assign \mathtt{alloc}\ 1,\,\Itempl{\eta}, \ \Itempl{\mathtt{ALIGNOF}(\eta)};}
      \\\\ \qquad
      \mathtt{APPEND}(argsP, pvar);
      \\\\ \qquad
      \mathtt{FOR}\ i\ \mathtt{in}\ (n+1) \ldots m\ \{
        \\\\ \qquad \qquad
        \Emit{\mathtt{store}\ \Itempl{\kappa_i}, \ \Itempl{\mathtt{ALIGNOF}(\kappa_i)}, \Itempl{\mathtt{GEN}(e_i)}, \  p_v;}
        \\\\ \qquad \qquad 
        \Emit{p_v \Assign p_v + \Itempl{\mathtt{OFFSETOF}(\eta, i)};}
      \\\\ \qquad
      \}
      \\\\
    \}
    \\\\
    \mathtt{IF}\ \gamma = \mathtt{void}\ \{
      \\\\ \qquad
      \Emit{\mathtt{call}\ \mathtt{void} \ \rho(\Itempl{argsP});}
      \\\\
    \} \ 
    \mathtt{ELSE\ IF}\ \mathtt{is\_aggregate\_type}(\gamma)\ \{
      \\\\ \qquad
      \Emit{\mathtt{call}\ \Itempl{\gamma} \ \rho(\Itempl{argsP});}
      \\\\ \qquad
      \Emit{\vv{result} \Assign \mathtt{AGG2REG}(p_r);} \qquad // \text{distribute the populated aggregate into scalar variables}
      \\\\
    \} \ 
    \mathtt{ELSE}\ \{
      \\\\ \qquad
      \Emit{result \Assign \mathtt{call}\ \Itempl{\gamma} \ \rho(\Itempl{argsP});}
      \\\\
    \}
    \\\\
    \mathtt{FOR}\ a\ \mathtt{in}\ \mathtt{reverse}(argsP)\ \{
      \\\\ \qquad
      \Emit{\mathtt{dealloc}\ \Itempl{a};}
      \\\\
    \}
  }
  \]
  \end{scriptsize}
  \caption{
    \label{fig:callToIRAsm} Pseudo-code for translation of a C procedure-call expression to \ourIR{} instructions.
  }
\end{figure}

The translation of a procedure-call from C to \ourIR{} is more complex, as we
generate explicit instructions to (de)allocate memory for the actual arguments,
including a variadic argument. \Cref{fig:callToIRAsm} shows the translation of a C procedure
call to \ourIR{} where $\gamma$ represents
the return value's type and $\tau_1,\ldots,\tau_n$ represents
the parameters' types.
The statements with a \Shaded{} background represent
the generated translation template with template slots
marked by \Itempl{}.
${\tt is\_aggregate\_type}(\gamma)$ returns {\tt true} iff $\gamma$ is an aggregate
({\tt struct} or {\tt union}) type.
For return value of aggregate type, the caller allocates space
for the return value
and passes the start address of allocated region as first argument to the callee
(ensuring the caller side contract of the scheme described in previous paragraph).
$\mathtt{ALIGNOF}(\tau)$ returns the alignment of C type $\tau$ and
$\mathtt{GEN}(e)$ returns the \ourIR{} variable holding value of expression $e$.
${\tt is\_variadic}(\rho)$ returns {\tt true} iff $\rho$ is variadic
and $\mathtt{promoted\_type}(e)$ returns the promoted type of 
C expression $e$ obtained after application of
{\em default argument promotion} rules (see C17 standard).
\texttt{mk\_struct\_x86\_cc($\ldots$)} returns a C `\texttt{struct}' type whose
member fields' alignment matches the calling conventions' requirements of 32-bit x86
and
$\mathtt{OFFSETOF}(\eta, i)$ returns the offset (in bytes) of $i^{th}$ member field in
\texttt{struct} type $\eta$.
$\mathtt{AGG2REG}(p)$ returns the scalar values in aggregate pointed to by $p$.

For example, a call to {\tt printf(fmt, (char)a, (int)b);} translates to:

\begin{tabular}{c}
\begin{myexamplesmallc}
p1 := alloc 1, (char const*), 4;
store (char const*), 4, fmt~$_{\ourIR{}}$~, p1;
p2 := alloc 1, struct{char; int;}, 4;
pv := p2;
store char, 1, a~$_{\ourIR{}}$~, pv;
pv := pv + OFFSETOF(struct{char; int;}, 1);
store int, 4, b~$_{\ourIR{}}$~, pv;
pv := pv + OFFSETOF(struct{char; int;}, 2);
result := call int printf(p1, p2);
dealloc p1;
dealloc p2;
\end{myexamplesmallc}
\end{tabular}
\newline

{\tt result} holds the returned value and $e_{\ourIR{}}$ represents the \ourIR{} variable corresponding to expression $e$ in C.

\subsection{Path enumeration algorithm}
\label{app:pathEnum}
While
enumerating paths terminating
at a non-error node $q^t_{\ddAk}$,
our path enumeration algorithm is similar to the one used in Counter \cite{oopsla20}.

Recall that a path in $\Pathset{\Ck}$ need not be a simple path,
and can visit any node $n_{\Ck}\in{}\NCk{}$ up to $\mu$ times.
All paths in $\Pathset{\Ck}$ must
originate at a unique cut-point $q_{\Ck{}}$ such that
$(q_{\ddAk}, q_{\Ck}) \in \NXk$.
{\em correlatedPathsInCOptions()}
returns candidates, where a candidate
pathset is
a maximal set $\Pathset{\Ck}$
such that each path $\CPath\in{}\Pathset{\Ck}$
either
(a) ends at a unique non-error destination cut-point node, say $q^t_{\Ck}$ (i.e., all
paths $\CPath\in{}\Pathset{\Ck}$ ending
at a non-error node end at $q^t_{\Ck}$),
or (b) ends at
error node \UCk{}.

For a path
$\CPath\in{}\Pathset{\Ck}$, let $\delta_{\CPath}$ be the
number of times the unique non-error destination
node $q^t_{\Ck}$ appears in $\CPath$. Then, due to
the maximality, mutual-exclusion,
and unique non-error destination properties, there must exist a unique value
$\delta_{\Pathset{\Ck}}\leq{}\mu$, such that:
\begin{itemize}
\item For a
path $\xi_{\Ck}\in{}\Pathset{\Ck}$, if $\xi_{\Ck}$ ends
in the unique non-error node node $q^t_{\Ck}$, then $\delta_{\xi_{\Ck}}=\delta_{\Pathset{\Ck}}$.
\item For a
path $\xi_{\Ck}\in{}\Pathset{\Ck}$, if $\xi_{\Ck}$ ends
in $\mathscr{U}_{\Ck}$, then $\delta_{\xi_{\Ck}}<\delta_{\Pathset{\Ck}}$.
\end{itemize}
This $\delta_{\Pathset{\Ck}}$ is the same as the $\delta$ described
in \cite{oopsla20}.

\subsection{Global invariants in \ddAk{} and \Ck{}}
\label{app:globalInv}

\begin{definition}[Non-entry Node]\label{def:nonEntryNode}
Let $P \in \{ \ddAk, \Ck \}$.
A node $n_P \in \NP{P}$ is called a
\textbf{non-entry node}
iff it does not correspond to a node due to
\TRule{\Entry{\Ck}} and \TRule{\Entry{\ddAk}}
(\cref{fig:xlateRuleIR,fig:xlateRuleAsmAllLocals})
in $P$.
A node
$n_{\Xk} = (n_{\ddAk}, n_{\Ck}) \in \NXk$
is called a non-entry node iff both
$n_{\ddAk}$ and $n_{\Ck}$
are non-entry nodes.
\end{definition}

Due to the execution semantics of \ddAk{} and \Ck{},
certain invariants hold by construction
in \ddAk{} and \Ck{}.
We call these invariants \emph{global invariants}
as they hold at each error-free, non-entry node.

\begin{theorem}[Global invariants in \ddAk{} and \Ck{}]
\label{theorem:globalInv}
The following
invariants hold
at each error-free, non-entry node $n_{\Ck}\in{}\NNP{\Ck}$: 
\begin{itemize}
\item (\ghost{\Empty{r}} tracks emptiness) $\il{P}{r} = \emptyset \Leftrightarrow \ghost{\Empty{r}}$, for $r \in G \cup \Y \cup \Z{}$.
\item (\ghost{\Sz{r}} tracks size) $\ghost{\Sz{r}} = |\il{P}{r}|$ for $r \in G \cup \Y$.  In particular, $\ghost{\Sz{r}} = \mathtt{sz}(\mathtt{T}(r))$, for $r \in G \cup (\Y \setminus \{\yv{}\})$.
\item (\ghost{\LSz{\zl}} tracks size) $\ghost{\LSz{\zl}} = |\il{P}{\zl}|$ for $\zl \in \Zl$.
\item ($\ghost{\LB{r}},\ghost{\UB{r}}$ track bounds) $\il{P}{r} = \emptyset \lor (\ghost{\LB{r}} = \LBi{\il{P}{r}} \land \ghost{\UB{r}} = \UBi{\il{P}{r}})$, for $r \in G \cup \Y \cup \Z{}$.
\item (Address sets of $G, \Y, \Zl$ are intervals) $(\ghost{\Empty{r}} \lor [\ghost{\LB{r}},\ghost{\UB{r}}] = \il{P}{r})$, for $r \in G \cup \Y \cup \Zl$.
  As a consequence, we have: $\ghost{\Empty{r}} \lor ((\ghost{\LB{r}} \le_u \ghost{\UB{r}}) \land (\ghost{\UB{r}} = \ghost{\LB{r}} + \ghost{\Sz{r}} - 1_{\bv{32}}))$, for $r \in G \cup \Y$.
  And, $\ghost{\Empty{\zl}} \lor ((\ghost{\LB{\zl}} \le_u \ghost{\UB{\zl}}) \land (\ghost{\UB{\zl}} = \ghost{\LB{\zl}} + \ghost{\LSz{\zl}} - 1_{\bv{32}}))$.
\item (Alignment of \g{} and \y{}) $\mathtt{aligned}_{\mathtt{algnmnt}(r)}(\ghost{\LB{r}})$, for $r \in G \cup (\Y \setminus \{ \yv \})$, where $\mathtt{algnmnt}(r)$ returns the alignment of variable $r$.
\item (Disjoint regions in \Ck{}) $\neg{}\Overlap(\il{\Ck}{\heap},\il{\Ck}{cl},\il{\Ck}{cv},\il{\Ck}{\yv},\ldots,\il{\Ck}{\g},\ldots,\il{\Ck}{\y},\ldots,\il{\Ck}{\z{}},\ldots)$.  
\item (Read-only memory in \Ck{}) ${\prjMEq{\ii{\Ck}{r}}{M_{\Ck}}{\roMem{\Ck}{r}{\ii{\Ck}{r}}}}$ for $r \in \Gro$.
\end{itemize}

The following
invariants hold
at each error-free, non-entry node $n_{\ddAk}\in{}\NNP{\ddAk}$: 

\begin{itemize}
\item (\ghost{\Empty{\f}} tracks emptiness) $\il{P}{\f} = \emptyset \Leftrightarrow \ghost{\Empty{\f}}$, for $\f \in \F$.
\item (\ghost{\Sz{\f}} tracks size) $\ghost{\Sz{\f}} = |\il{P}{\f}| = \mathtt{sz}(\mathtt{T}(\f))$ for $\f \in \F$.
\item (Address sets of $\F$ are intervals) $(\ghost{\Empty{\f}} \lor [\ghost{\LB{\f}},\ghost{\UB{\f}}] = \il{P}{\f})$, for $\f \in \F$.
\item (Alignment of \f) $\mathtt{aligned}_{\mathtt{algnmnt}(\f)}(\ghost{\LB{\f}})$, for $\f \in \F$, where $\mathtt{algnmnt}(f)$ returns the alignment of variable $f$.
\item (Stack bounds) $\il{\ddAk}{\{\stk\}\cup{}\Y}\cup{}(\il{\ddAk}{\Z{}}\setminus{}(\ilZv{\ddAk})) = [{\tt esp}, \ghost{\spE}]$.
\item ($cs$ and $cl$) $\il{\ddAk}{\{cs,cl\}} = [\ghost{\spE}+1, \ghost{\stkE}]$
\item (Heap subset) $\il{\ddAk}{\heap} \subseteq {\tt comp}(\il{\ddAk}{G \cup \F} \cup \ilZv{\ddAk} \cup [\mathtt{esp}, \ghost{\stkE}])$
\item (Disjoint regions in \ddAk{}) $\neg \Overlap(\il{\ddAk}{\heap},\il{\ddAk}{cl},\il{\ddAk}{cv},\il{\ddAk}{\yv},\ldots,\il{\ddAk}{\g},\ldots,\il{\ddAk}{\y},\ldots,\il{\ddAk}{\z{}},\ldots) \land \newline \neg \Overlap(\il{\ddAk}{\heap},\il{\ddAk}{cl},\il{\ddAk}{\yv},\ldots,\il{\ddAk}{\g},\ldots,\il{\ddAk}{\y},\ldots,\ilzs{\ddAk}{\z},\ldots,\il{\ddAk}{\f},\ldots,\il{\ddAk}{\stk},\il{\ddAk}{cs})$
\item (Read-only memory in \ddAk{}) ${\prjMEq{\ii{\ddAk}{r}}{M_{\ddAk}}{\roMem{\ddAk}{r}{\ii{\ddAk}{r}}}}$ for $r \in \Fro$.
\end{itemize}
\end{theorem}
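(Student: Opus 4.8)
The plan is to prove the theorem by exhibiting its listed conjuncts as an \emph{inductive invariant} and arguing by induction on the number of graph edges traversed from the entry node of $P \in \{\Ck, \ddAk\}$. Concretely, I would (i) establish every conjunct at the first error-free, non-entry node reached after the entry initialization, and (ii) show that traversing any single edge labelled with a graph instruction preserves every conjunct, under the hypothesis that both the source and the target node are error-free and non-entry (\cref{def:nonEntryNode}). Because each conjunct is a predicate purely over the current address sets $\il{P}{r}$, the ghost variables, the registers, and $M_P$, this suffices to conclude that all of them hold at every error-free, non-entry node.

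For the base case I would read the conjuncts off the entry rules \TRule{\Entry{\Ck}} and \TRule{\Entry{\ddAk}} (\cref{fig:xlateRuleIR,fig:xlateRuleAsm,fig:xlateRuleAsmAllLocals}). These rules explicitly assign \ghost{\Empty{r}}, \ghost{\Sz{r}}, \ghost{\LB{r}}, \ghost{\UB{r}} from $\il{P}{r}$ for $r \in G \cup \Y$ (resp.\ $\F$), and set each $\il{P}{\z} \Assign \emptyset$ with \ghost{\Empty{\z}} true, so emptiness-, size-, and bound-tracking hold immediately. The disjointness, alignment, interval, and $0_{\bv{32}}$-exclusion conjuncts for $G, \Y, \F$ follow because the initialization is guarded by $\neg\mathtt{addrSetsAreWF}(\ldots)$ (resp.\ $\neg\mathtt{stkIsWF}(\ldots)$) transferring to \WAk{}; since we reason at an error-free node these predicates held, and their definitions in \cref{tab:preds} directly entail the required non-overlap, sizes, and alignments. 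The stack-bounds, $cs/cl$, and heap-subset conjuncts of \ddAk{} fall out of the explicit assignments $\il{\ddAk}{\stk} \Assign [\mathtt{esp}, \ghost{\spE}] \setminus \il{\ddAk}{\Y}$ and $\il{\ddAk}{cs} \Assign [\ghost{\spE}+1_{\bv{32}}, \ghost{\stkE}] \setminus \il{\ddAk}{cl}$ together with the $\mathtt{stkIsWF}$ guarantees, and the read-only-memory conjuncts hold because the entry rule initializes $\Gro \cup \Fro$ with $\roMem{P}{r}{\cdot}$.

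For the inductive step I would case-split on the instruction labelling the traversed edge. The large simplification is that for $r \in G \cup \Y \cup \F$ the set $\il{P}{r}$ is constant after entry, so all conjuncts concerning those regions are preserved trivially; likewise instructions that touch neither address sets nor the relevant ghost variables (\TRule{Op}, \TRule{AssignConst}, \TRule{VaStartPtr}, non-\texttt{esp} register updates, pure loads) preserve the allocation conjuncts with nothing to check. The only genuinely active conjuncts are those about $\Z$ and the coupled stack regions, and the read-only-memory conjuncts. For \TRule{\Store{\Ck}}, \TRule{\Store{\ddAk}}, and the procedure-call rules, the read-only conjuncts survive because the store/clobber guard excludes $\Gro$ (resp.\ $\Gro \cup \Fro$) from the writable set, leaving $M_P$ unchanged on $\ii{P}{r}$. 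For \TRule{Alloc}/\TRule{Dealloc} (and \TRule{AllocS'}, \TRule{DeallocS'}, \TRule{AllocV}, \TRule{DeallocV}) the rules reassign \ghost{\Empty{\z}}, \ghost{\LB{\z}}, \ghost{\UB{\z}}, \ghost{\LSz{\z}} exactly as needed, and disjointness is maintained because the new interval is drawn from $\il{\Ck}{\free}$ (via $\intervalContainedInAddrSetAndAligned{\cdot}{\cdot}{\il{\Ck}{\free}}{a}$) or, in assembly, from $\il{\ddAk}{\stk}$, which is disjoint from everything by the inductive hypothesis. The coupled \apred{StkBd}, $cs/cl$, and heap-subset conjuncts are preserved by observing that \TRule{Op-esp} adds the same interval $[t, \mathtt{esp}-1_{\bv{32}}]$ to both $[\mathtt{esp}, \ghost{\spE}]$ and $\il{\ddAk}{\stk}$ (using the push guard to keep it clear of the heap), while \TRule{AllocS'}/\TRule{DeallocS'} merely shuttle an interval between $\il{\ddAk}{\stk}$ and $\il{\ddAk}{\z}$ without changing their union or \texttt{esp}.

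The main obstacle I anticipate is the interval conjunct for $\Zl$ (``address sets of $G, \Y, \Zl$ are intervals''), which \TRule{Alloc} does \emph{not} preserve in general: the rule computes $\il{\Ck}{\z} \Assign \il{\Ck}{\z} \cup [\alpha_b, \alpha_e]$ and uses $\mathtt{min}$/$\mathtt{max}$ precisely to accommodate several simultaneously-live allocations at an $\za \in \Za$ site, for which the region need not be a single interval --- consistent with the theorem asserting only the weaker \apred{zaBd} for $\Za$. To recover the interval property for a declaration-based $\zl \in \Zl$, I would carry an auxiliary invariant stating that $\il{\Ck}{\zl} = \emptyset$ (equivalently \ghost{\Empty{\zl}}) whenever control reaches an \texttt{alloc} at $\zl$, so that every such allocation starts from an empty region and yields a fresh single interval. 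Justifying this auxiliary invariant rests on a structural well-formedness property of the input \ourIR{} program --- the one-declaration, one-allocation, one-matching-deallocation correspondence of the C-to-\ourIR{} translation --- which must be assumed of the input and threaded through the induction; once it is in place, the remaining conjuncts reduce to routine per-instruction bookkeeping.
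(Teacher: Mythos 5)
Your proposal matches the paper's own proof, which is exactly this argument: induction on the number of transitions executed in \Ck{} (resp.\ \ddAk{}), with the base case given by the first outgoing edge after the entry rules \TRule{\Entry{\Ck}} and \TRule{\Entry{\ddAk}}. The paper states this in a single line, so the per-instruction case analysis and the auxiliary fact you make explicit for $\zl\in\Zl$ (that the matching scope-end \texttt{dealloc} inserted by the C-to-\ourIR{} translation guarantees $\il{\Ck}{\zl}=\emptyset$ at every execution of its \texttt{alloc}, which is what rescues the interval and \ghost{\LSz{\zl}} conjuncts) are precisely the bookkeeping the paper leaves implicit; your handling of that subtlety is sound.
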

\begin{proof}
By induction on the number of transitions
executed in \Ck{} (\ddAk{}), with the base case defined
by the first outgoing edge from the last instruction due to
\TRule{\Entry{\Ck}} in \cref{fig:xlateRuleIR}
(\TRule{\Entry{\ddAk}} in \cref{fig:xlateRuleAsmAllLocals}).
\end{proof}


\begin{theorem}[Global Invariants in \Xk]\label{theorem:globalInvX}
The following invariants hold at each error-free, non-entry node $n_{\Xk} = (n_{\ddAk},n_{\Ck}) \in \NNP{\Xk}$ of \Xk{}.
\begin{enumerate}
  \item \label{itm:globInvCA} The invariants stated in \cref{theorem:globalInv}.
  \item \label{itm:stkX} (Stack subset) $\il{\ddAk}{\stk} \subseteq \il{\Ck}{\{cv,\free{}\}} \cup \ilzv{\ddAk}{\Zl}$
\end{enumerate}
\end{theorem}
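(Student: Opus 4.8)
The plan is to dispatch the two items separately, since item~\ref{itm:globInvCA} is inherited almost verbatim from \cref{theorem:globalInv} while item~\ref{itm:stkX} carries the genuinely new content. For item~\ref{itm:globInvCA} I would first observe that if $n_{\Xk}=(n_{\ddAk},n_{\Ck})\in\NNP{\Xk}$ is a non-entry node, then by the definition of error nodes in \Xk{} and by \cref{def:nonEntryNode} both $n_{\ddAk}\in\NNP{\ddAk}$ and $n_{\Ck}\in\NNP{\Ck}$ are themselves error-free non-entry nodes. Because the machine state of \Xk{} is the concatenation of the states of \ddAk{} and \Ck{}, and because \DXk{} only \emph{determinizes} (restricts) the non-deterministic choices of \Ck{} without introducing any new behaviour, every execution of \Xk{} reaching $n_{\Xk}$ projects onto an execution of \ddAk{} reaching $n_{\ddAk}$ and an execution of \Ck{} reaching $n_{\Ck}$. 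Since \cref{theorem:globalInv} holds for \emph{all} executions of \ddAk{} and \Ck{} at error-free non-entry nodes, it holds along these projected executions in particular, which yields item~\ref{itm:globInvCA} directly.

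For item~\ref{itm:stkX} I would give a pointwise set-membership argument rather than a fresh induction. Fix an error-free non-entry $n_{\Xk}=(n_{\ddAk},n_{\Ck})$ and an arbitrary address $\alpha\in\il{\ddAk}{\stk}$; the goal is $\alpha\in\il{\Ck}{\{cv,\free\}}\cup\ilZv{\ddAk}$. The three ingredients are: (a) the allocation-state agreement $\il{\Ck}{r}=\il{\ddAk}{r}$ for every $r\in\NS{}$, which the lockstep correlation of (de)allocation events guarantees (as argued in \cref{sec:refnDefnVirtual}); (b) the disjointness global invariant of \ddAk{} from item~\ref{itm:globInvCA}, which in particular gives $\alpha\notin\il{\ddAk}{r}$ for $r\in\{\heap,cl\}\cup G\cup\Y\cup\F\cup\{cs\}$ and $\alpha\notin\ilzs{\ddAk}{\z}$ for each local (where for $\za\in\Za$ the stack part is all of $\il{\ddAk}{\za}$, there being no virtual part); and (c) the fact that in \Ck{} the regions $G\cup\Y\cup\Z\cup\{\heap,cl,cv\}$ together with \free{} partition the entire address space. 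Combining (a) and (b) rules $\alpha$ out of $\il{\Ck}{\{\heap,cl\}\cup G\cup\Y}$. The only delicate case is a local region $\z\in\Z$: by (a), $\il{\Ck}{\z}=\il{\ddAk}{\z}=\ilzs{\ddAk}{\z}\cup\ilzv{\ddAk}{\z}$, so because $\alpha\notin\ilzs{\ddAk}{\z}$, membership $\alpha\in\il{\Ck}{\z}$ forces $\alpha\in\ilzv{\ddAk}{\z}\subseteq\ilZv{\ddAk}$ when $\z\in\Zl$, while for $\z\in\Za$ it is excluded outright by (b). By the partition (c), any $\alpha$ not accounted for this way must lie in $\il{\Ck}{cv}$ or $\il{\Ck}{\free}$, closing the case analysis.

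The main obstacle, and the step deserving the most care, is justifying ingredient~(a) as a property that genuinely holds at \emph{every} error-free non-entry node of \Xk{}, rather than merely as an inferred invariant. I would establish it by induction over the correlated edges of \Xk{}, showing that each matched pair of (de)allocation edges preserves $\il{\Ck}{r}=\il{\ddAk}{r}$ for $r\in\NS{}$. The regions in $\{\heap,cl\}\cup G\cup\Y$ are read once at entry from a common outside world and stay constant, so agreement there is immediate. For $\z\in\Z$, an \TRule{Alloc}/\TRule{AllocS'} pair forces the two allocated intervals to coincide (via \DXk{}, the determinized $\alpha_b$ in \Ck{} equals the assembly address $v$, and both land in the respective \z{}), whereas an \TRule{Alloc}/\TRule{AllocV} pair records the \Ck{} interval in $\ilzv{\ddAk}{\z}$, keeping the \emph{union} $\ilzs{\ddAk}{\z}\cup\ilzv{\ddAk}{\z}$ equal to $\il{\Ck}{\z}$; the correlated deallocations empty both sides in step. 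The subtle bookkeeping is exactly this $\ilzs/\ilzv$ split, together with the maintained disjointness $\ilzs{\ddAk}{\zl}\cap\ilzv{\ddAk}{\zl}=\emptyset$, and checking that under the safety-relaxed and callers'-virtual-smallest semantics (where $\il{\Ck}{cv}=\emptyset$) these identities are not disturbed. Once (a) is in hand, both items follow as above.
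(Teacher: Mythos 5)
Your handling of item~\ref{itm:globInvCA} and your pointwise set-membership argument for item~\ref{itm:stkX} are both correct and are essentially the paper's own (much terser) proof: the paper derives item~\ref{itm:globInvCA} from \cref{theorem:globalInv} exactly as you do, and derives item~\ref{itm:stkX} from (Disjoint regions in \ddAk{}) together with the agreement $\il{\Ck}{r}=\il{\ddAk}{r}$ for $r\in\NS$, which is precisely your combination of ingredients (b), (a), and (c). Your case split on $\z\in\Zl$ versus $\z\in\Za$ (no virtual part for {\tt alloca()} sites) is also the right bookkeeping.

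The gap is in how you justify ingredient (a). The paper does \emph{not} prove allocation-state agreement by a fresh induction; it invokes the (Equivalence) requirement of \cref{sec:reqX}: $\World{\ddAk}=\World{\Ck}$ must hold at every non-error node, and since every correlated (de)allocation emits its observables --- size, alignment, allocated interval, and memory contents --- into the world through the {\tt wr} events \VallocA{\z,w,a} and \VallocB{\z,i,M}, world equality (which must hold for all interpretations of the uninterpreted $\mathtt{io}$ function) forces those observables, and hence the address sets of all $r\in\NS$, to coincide; this is the ``(Equivalence) implies \pred{AllocEq}'' remark in \cref{sec:smtEncoding}. Your structural induction cannot close this on its own, for two reasons. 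First, \DXk{} determinizes only the $\theta$-labelled edges of \Ck{} (\cref{sec:detX}), i.e.\ the choices of the start address $\alpha_b$ and of the initial memory contents; nothing in \DXk{} relates the assembly annotation's size operand $e_w$ in \TRule{AllocS'} or \TRule{AllocV} to the quantity $n{*}\mathtt{sz}(\tau)$ in \TRule{Alloc}, so your claim that ``the two allocated intervals coincide'' establishes only that their start addresses coincide --- the equality of their extents is exactly what (Equivalence) supplies and what your argument is missing. Second, your induction tacitly assumes \DXk{} is populated the way \toolName{}'s \textit{addDetMappings} populates it, whereas the theorem is stated for a product graph \Xk{} satisfying the stated requirements, not only for algorithm-constructed ones. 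Both defects disappear if you replace the induction by an appeal to (Equivalence); with that substitution the rest of your proof is sound.
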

\begin{proof}
  \Cref{itm:globInvCA} follows from \cref{theorem:globalInv}
  as $n_{\Xk}$ is a non-error iff both $n_{\ddAk}$ and $n_{\Ck}$ are non-error nodes.

  \Cref{itm:stkX} follows from (Disjoint regions in \ddAk{}) of \cref{itm:globInvCA} and (Equivalence).
\end{proof}

\subsection{Soundness of $\Xk{}$ requirements}
\label{app:soundnessProof}

Let $\Xk = \ddAk \times \Ck$ be a product-graph
that satisfies the soundness requirements in
\cref{sec:reqX}.

\begin{lemma}[\Xk's execution]
\label{lemma:XkExec}
The following holds for an execution of \Xk{}:
\begin{equation*}
\begin{split}
\forall{\World{},\TraceP{\ddAk},\TraceP{\Ck}}:
\execT{\Xk}{\World{}}{(\TraceP{\ddAk},\TraceP{\Ck})}
\Rightarrow & \phantom{{} \lor {} }\steq{\TraceP{\ddAk}}{\TraceP{\Ck}} \\
            & \lor (\eT{\TraceP{\ddAk}} = \EW
                    \land \stprefix{\neT{\TraceP{\ddAk}}}{\TraceP{\Ck}}) \\
            & \lor (\eT{\TraceP{\Ck}} = \EU
                    \land \stprefix{\neT{\TraceP{\Ck}}}{\TraceP{\ddAk}})
\end{split}
\end{equation*}
\end{lemma}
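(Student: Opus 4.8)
The plan is to prove the statement by induction on the number of product-graph edges traversed for a finite execution, extended coinductively to the infinite case, carrying a joint invariant along every node reached. Concretely, I would show that whenever an execution of $\Xk$ reaches a non-error node $n_{\Xk}=(n_{\ddAk},n_{\Ck})$, the pair of partial traces $(T_{\ddAk},T_{\Ck})$ emitted so far satisfies $\steq{T_{\ddAk}}{T_{\Ck}}$ and the current state satisfies the node invariant $\phi_{n_{\Xk}}$. The base case is immediate from the initial condition at $n^s_{\Xk}$, where $\World{\ddAk}=\World{\Ck}$ and $\steq{T_{\ddAk}}{T_{\Ck}}$ hold and $\phi_{n^s_{\Xk}}$ is exactly $\World{\ddAk}=\World{\Ck}$. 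Note that (Coverage$\ddAk$) and (Coverage$\Ck$) are not needed here, since the lemma concerns executions of $\Xk$ itself; they are used elsewhere to relate such executions back to standalone $\ddAk$ and $\Ck$ runs.

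For the inductive step across a non-error edge $\XEdge=(\XEdgeT{n_{\Xk}}{\xi_{\ddAk};\xi_{\Ck}}{n^t_{\Xk}})$, I would invoke (SingleIO) to split into two cases. If both $\xi_{\ddAk}$ and $\xi_{\Ck}$ are I/O-free, only silent events $\bot$ are appended on either side, so stuttering equivalence is trivially preserved. If both are I/O paths, exactly one $\mathtt{rd}$ or $\mathtt{wr}$ event is appended on each side, and I would argue that these two events are \emph{equal}. By (Equivalence), $\phi_{n_{\Xk}}$ contains $\World{\ddAk}=\World{\Ck}$, and by (Inductive) this equality propagates to $n^t_{\Xk}$. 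Since $\World{P}$ evolves through the uninterpreted update function $\mathtt{io}(\World{P},\mathtt{w},V)$ on a write (and a read is the deterministic $\mathtt{read}_{\vv{\tau}}$ applied to equal worlds), preserving $\World{\ddAk}=\World{\Ck}$ is derivable only when the emitted value constructors coincide; hence the appended non-silent events are identical, $\steq{T_{\ddAk}}{T_{\Ck}}$ is maintained, and (Inductive) re-establishes $\phi_{n^t_{\Xk}}$. Normal termination is closed by (Termination): at a non-error node $n_{\ddAk}$ terminates iff $n_{\Ck}$ does, so both emit their $\mathtt{exit}$ events together, landing in the first disjunct.

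Next I would carry out the error-node case analysis, where (Well-formedness) and (Safety) exclude the dangerous configurations. Recall that an edge runs $\xi_{\ddAk}$ to completion before $\xi_{\Ck}$. If $\Ck$ triggers $\EU$ we reach $(\_,\UCk)$: the $\ddAk$ trace is then unconstrained and $\neT{\TraceP{\Ck}}$ matches, modulo stuttering, the prefix of $\TraceP{\ddAk}$ accumulated under the invariant, giving the third disjunct $\eT{\TraceP{\Ck}}=\EU \land \stprefix{\neT{\TraceP{\Ck}}}{\TraceP{\ddAk}}$. If $\ddAk$ triggers $\EW$ we reach $(\WAk,\_)$, yielding the second disjunct symmetrically. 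The two remaining possibilities are closed by the requirements: a $\Ck$-side $\EW$ is, by (Well-formedness), reachable only as $(\WAk,\WCk)$, so both sides halt with $\EW$ in matching position and we fall into the first disjunct; dually an $\ddAk$-side $\EU$ is, by (Safety), reachable only as $(\UAk,\UCk)$, which reduces to the third disjunct. I would read off the stuttering-prefix relations from the inductive hypothesis at the last non-error node together with the structural matching of the erroring edge guaranteed by (SingleIO) and the \emph{haveSimilarStructure}/\emph{trimToMatchPathToErrorNode} discipline, which forces the I/O events emitted before the error to agree on both sides.

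I expect the main obstacle to be two intertwined points. First, the non-terminating case requires turning the finite inductive invariant into a statement about the limit traces; here I would use a coinductive argument and lean on (Similar-speed), which forbids a cyclic product path in which every $\xi_{\ddAk}$ (or every $\xi_{\Ck}$) is empty, so that neither component races infinitely far ahead of the other through a silent cycle and the limits stay stuttering equivalent. Second, the bookkeeping of the prefix relations in the mid-edge error cases is delicate precisely because $\xi_{\ddAk}$ runs to completion before $\xi_{\Ck}$: one must verify that the non-error portion of the shorter trace is genuinely a stuttering \emph{prefix} of the longer one, and that no non-silent event is emitted on the erroring side after its last matched event. Establishing trace equality (rather than mere world equality) rigorously from the uninterpreted semantics of $\mathtt{io}$ is the crux on which the entire stuttering-equivalence argument rests.
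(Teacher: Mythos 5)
Your overall architecture matches the paper's proof almost exactly: coinduction on the number of executed edges, a case split via (SingleIO) into I/O and I/O-free edges, (Equivalence) to force identical non-silent events on I/O edges, (Similar-speed) to bound silent stuttering over cycles, and (Well-formedness)/(Safety) to collapse the error-node configurations into the three disjuncts. However, there is one concrete gap: your parenthetical claim that (Coverage\ddAk{}) and (Coverage\Ck{}) are ``not needed here'' is wrong under the paper's execution semantics, and the paper's proof uses exactly these requirements (together with (Mutex\ddAk{}) and (Mutex\Ck{})) at the point you skip over. An edge of \Xk{} executes $\xi_{\ddAk}$ \emph{sequentially before} $\xi_{\Ck}$, and guards are evaluated along the way. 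Your inductive step presupposes that every step of an execution of \Xk{} crosses a \emph{complete} edge, but nothing in your argument rules out the execution running $\xi_{\ddAk}$ to completion and then finding that no correlated determinized \Ck{} path condition holds. In that stalled configuration, if $\xi_{\ddAk}$ ended at a non-error terminating node it has already appended the non-silent {\tt exit} event (I/O-free paths can emit {\tt exit} and error codes, not just $\bot$), so the accumulated trace pair satisfies none of the three disjuncts: stuttering equivalence fails, $\eT{\TraceP{\ddAk}} \neq \EW$, and $\eT{\TraceP{\Ck}} \neq \EU$. It is precisely (Coverage\Ck{}) --- the path-cover Hoare triple $\hoareTriple{\phi_{n_{\Xk}}}{(\APath{};\epsilon)}{\bigvee_j \pathcond{\CPathDN{j}}}$ --- that guarantees some determinized \Ck{} path executes to completion whenever $\xi_{\ddAk}$ does, and (Mutex\Ck{})/(Mutex\ddAk{})/(Coverage\ddAk{}) that make the executed edge well-defined; this is how the paper licenses the step ``the execution of \Xk{} reaches $n^j_{\Xk}$'' on which everything else rests.

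A second, smaller defect: in the error cases you appeal to the \emph{haveSimilarStructure()} and \emph{trimToMatchPathToErrorNode()} discipline. Those are constructs of the search algorithm in \cref{sec:algo}, not consequences of the thirteen requirements; the lemma quantifies over \emph{any} \Xk{} satisfying the soundness requirements, regardless of how it was built, so the proof must derive the prefix relations from (SingleIO) and the error-node requirements alone (as the paper does: an I/O-free path can only append the error code, so the pre-error portions of the traces remain stuttering equivalent by the coinductive hypothesis). Aside from these two points, your reasoning for the I/O case via the uninterpreted $\mathtt{io}$/$\mathtt{read}$ functions is a legitimate elaboration of the paper's terser ``(Equivalence) implies identical non-silent events'' step.
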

\begin{proof}[Proof of \cref{lemma:XkExec}]
The proof proceeds through a coinduction on the number of edges
executed by \Xk{}.
We prove that the execution of a single edge
$\XEdge = \XEdgeT{n_{\Xk}}{\APath; \CPath}{n^t_{\Xk}} \in \EXk$,
starting at a non-error node $n_{\Xk} \in \NNP{\Xk}$ in a state that satisfies
$\steq{\TraceP{\ddAk}}{\TraceP{\Ck}}$,
either reaches a terminating node $n^t_{\Xk}$,
such that final state satisfies the RHS of the $\Rightarrow$ in the statement,
or reaches a non-terminating node $n^t_{\Xk}$, such that
$\steq{\TraceP{\ddAk}}{\TraceP{\Ck}}$ holds at the end of execution of $e_{\Xk}$.

Let edges $\{e^1_{\Xk},e^2_{\Xk},\ldots,e^m_{\Xk}\}$,
such that $\forall_{1 \leq j \leq m}:{\XEdge^j=(\XEdgeT{n_{\Xk}}{\APathN{j};\CPathN{j}}{n^j_{\Xk}})\in{}\EXk}$,
be the outgoing edges of a non-error node $n_{\Xk} \in \NNP{\Xk}$.
There can be two cases:

\begin{enumerate}

  \item $\APathN{j}$ and $\CPathN{j}$ are I/O paths.
Because I/O paths are straight-line
sequences of instructions (with no branching), due to
(SingleIO), it must be true that $j=m=1$. Further, an
I/O path can only end at a non-error node $n^j_{\Xk}$.
Because (Equivalence)
requires $\Omega_{\ddAk}=\Omega_{\Ck}$,
implying production of identical non-silent trace events,
the claim holds.

  \item $\APathN{j}$ and $\CPathN{j}$ are I/O free.
Due to (Mutex\ddAk{}) and (Coverage\ddAk{}), it must be possible to execute
a path $\APathN{j}$ to completion.
Due to (Coverage\Ck{}), there exists some outgoing edge
$\XEdge^j=(\XEdgeT{n_{\Xk}}{\APathN{j};\CPathN{j}}{n^j_{\Xk}}) \in \mathcal{E}_{\Xk}$
that is executed to completion.
Further, due to (Mutex\Ck{}), such an edge $e^j_{\Xk}$ must be unique.
The execution of \APathN{j} followed by execution of \CPathN{j} effectively
causes \Xk{} to execute $e^j_{\Xk}$ and reach node $n^j_{\Xk} = (n^j_{\ddAk},n^j_{\Ck})$.

The execution of $\APathN{j}$ may
end at either:
(1) the error node \WAk{},
(2) the error node \UAk{},
(3) a non-error node $n^j_{\ddAk}$.

\begin{itemize}
  \item In case (1), the execution ends at an error node \WAk{}.
Because the traces were stuttering equivalent before the execution of $\XEdge^j$
and the execution of \APathN{j} must only produce the \EW trace event
(due to \APathN{j} being I/O free and (SingleIO) requirement),
$(\eT{\TraceP{\ddAk}} = \EW \land \stprefix{\neT{\Trace{\ddAk}}}{\TraceP{\Ck}})$
will hold in case (1).

  \item In case (2), due to the (Safety) requirement, execution of $\XEdge^j$ must reach node
$n^{t_j}_{\Xk} = (\UAk, \UCk)$.
Moreover, the execution \APathN{j} and $\CPathN{j}$ 
must only generate the error code \EU{} as a trace event
(recall that both \APathN{j} and \CPathN{j} are I/O free
and (SingleIO) forbids {\tt rd}, {\tt wr} instructions in I/O free paths).
Because
the traces were stuttering equivalent before the execution of $\XEdge^j$,
$(\eT{\TraceP{\Ck}} = \EU \land \stprefix{\neT{\TraceP{\Ck}}}{\TraceP{\ddAk}})$
will hold in case (2).

\item In case (3), we analyze each possibility of
$n^{j}_{\Xk}$ separately which must be
one of the following forms:
(a) $(n^j_{\ddAk},\WCk)$,
(b) $(n^j_{\ddAk},\UCk)$, or
(c) a non-error node $(n^j_{\ddAk},n^{j}_{\Ck})$,
where $n^{j}_{\Ck}$ is a non-error node
(recall that $n^j_{\ddAk}$ is a non-error node in this case).
Case (a) cannot occur due to the (Well-formedness) requirement.
In case (b),
$(\eT{\TraceP{\Ck}} = \EU \land \stprefix{\neT{\TraceP{\Ck}}}{\TraceP{\ddAk}})$
holds due to
(SingleIO) and inductive assumption
(similar reasoning as case (2) above).
In case (c), due to the (Equivalence) requirement,
the sequence of non-silent trace events produced in both
executions must be identical.
Further, (Similar-speed) ensures that the silent events
in both traces differ only by a finite amount.
Thus, $\steq{\TraceP{\ddAk}}{\TraceP{\Ck}}$ must hold
at $n^j_{\Xk}$.
\end{itemize}
\end{enumerate}
\end{proof}

\begin{lemma}[\ddAk's traces are in \Xk{}]
\label{lemma:AkTracesInXk}
The following holds for an execution of \ddAk{}:
\begin{equation}
\begin{split}
\forall{\World{},\Trace{\ddAk}}:
\execT{\ddAk}{\World{}}{\Trace{\ddAk}}
\Rightarrow
\exists{\TraceP{\ddAk},\TraceP{\Ck}}: & \phantom{{} \land {}} \execT{\Xk}{\World{}}{(\TraceP{\ddAk},\TraceP{\Ck})} \\
                                    & \land (\phantom{{} \lor {}} \steq{\Trace{\ddAk}}{\TraceP{\ddAk}} \\
                                    & \phantom{{} \land {}(} \lor (      (\eT{\TraceP{\Ck}} = \EU)
                                                                  \land (\eT{\TraceP{\ddAk}} \neq \EW)
                                                                  \land (\stprefix{\neT{\TraceP{\Ck}}}{\Trace{\ddAk}})))
\end{split}
\label{eqn:Acovered}
\end{equation}
\end{lemma}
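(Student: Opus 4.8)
The plan is to prove \cref{eqn:Acovered} by coinduction on the number of edges of \Xk{} that are executed, dually to the proof of \cref{lemma:XkExec}: given a concrete execution $\execT{\ddAk}{\World{}}{\Trace{\ddAk}}$, I would incrementally construct a matching execution of \Xk{} that reuses exactly the non-deterministic choices made by \ddAk{}, while \Ck{}'s non-determinism is resolved by the determinization map $\DXk{}$. The coinductive invariant I would maintain is that after the product has executed a finite sequence of edges ending at a non-error node $n_{\Xk}=(n_{\ddAk},n_{\Ck})\in{}\NNP{\Xk}$, the \ddAk{}-component of the product state coincides with the standalone \ddAk{} state at $n_{\ddAk}$, and the \ddAk{}-trace $\TraceP{\ddAk}$ emitted by the product so far is stuttering equivalent to the prefix of $\Trace{\ddAk}$ consumed so far. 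The base case places the product at $n^s_{\Xk}=(n^s_{\ddAk},n^s_{\Ck})$ with $\World{\ddAk}=\World{\Ck}=\World{}$ and empty traces, so the invariant holds vacuously.

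For the coinductive step, suppose the standalone \ddAk{} execution continues from $n_{\ddAk}$ along a path $\APathO=\PathT{n_{\ddAk}}{n^o_{\ddAk}}$ to the next cut point. Since $n_{\Xk}$ satisfies the sound node invariant $\phi_{n_{\Xk}}$ (guaranteed by the (Inductive) requirement) and \ddAk{} genuinely traverses $\APathO$ from the synchronized state, $\APathO$ is \emph{not} infeasible at $n_{\Xk}$. Hence (Coverage\ddAk{}) supplies a product edge $\XEdge=\XEdgeT{n_{\Xk}}{\APath;\CPath}{n^t_{\Xk}}\in\EXk$ whose assembly component \APath{} is a prefix of $\APathO$ or vice versa. (Coverage\Ck{}), restated as the path-cover Hoare triple $\hoareTriple{\phi_{n_{\Xk}}}{(\APath;\epsilon)}{\bigvee_j\pathcond{\CPathDN{j}}}$, then guarantees that some correlated \Ck{}-path condition holds, so the product edge executes to completion; (Mutex\Ck{}) makes this completing edge unique.

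The crux is the case analysis on the sink $n^t_{\Xk}=(n^t_{\ddAk},n^t_{\Ck})$. If both components are non-error, (Equivalence) forces the non-silent events emitted on \APath{} and \CPath{} to agree and (Similar-speed) bounds the silent-event mismatch, preserving $\steq{}{}$ and letting the coinduction proceed; should one component be a terminating node, (Termination) forces the other to terminate too, and both emit the error-free exit event, establishing the first disjunct $\steq{\Trace{\ddAk}}{\TraceP{\ddAk}}$ with completed traces. If instead $n^t_{\Ck}=\UCk$ while $n^t_{\ddAk}$ is non-error and \ddAk{} has not reached \WAk{}, the product terminates with $\eT{\TraceP{\Ck}}=\EU$ and $\eT{\TraceP{\ddAk}}\neq\EW$, and because the traces agreed up to this point, $\stprefix{\neT{\TraceP{\Ck}}}{\Trace{\ddAk}}$ holds --- this is precisely the second disjunct of \cref{eqn:Acovered}. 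If $n^t_{\ddAk}\in\{\WAk,\UAk\}$, then by (Well-formedness) and (Safety) the sink is $(\WAk,\WCk)$ or $(\UAk,\UCk)$ respectively, the product faithfully replicates \ddAk{}'s terminating event, and the first disjunct holds. When \ddAk{} runs forever, the coinduction yields an infinite product run establishing the first disjunct in the limit, unless \Ck{} diverts to \UCk{} at some finite stage, which again gives the second disjunct.

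I expect the main obstacle to be the prefix bookkeeping forced by (Coverage\ddAk{}): when $\APathO$ is a strict prefix of the product edge's \APath{}, the product ``wants'' more assembly steps than \ddAk{} has so far taken, so the coinduction must consume \ddAk{}'s edges in cut-point-to-cut-point chunks (after breaking them into single-I/O paths so that (SingleIO) is respected) and argue that the synchronization invariant survives each chunk. Coupled with this is the need to verify that determinizing \Ck{}'s $\theta$-choices through $\DXk{}$---which references \ddAk{}-side values---stays consistent with the reused \ddAk{} choices, and that feasibility of $\APathO$ really does follow from soundness of $\IXk$; these are the points where the argument is most delicate rather than routine.
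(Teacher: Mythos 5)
Your overall strategy --- coinductively building a product execution that replays \ddAk{}'s non-deterministic choices, obtaining the matching edge from (Coverage\ddAk{}), discharging its \Ck{}-side via (Coverage\Ck{}) and (Mutex\Ck{}), and then case-splitting on the sink node using (SingleIO), (Well-formedness), (Safety), (Equivalence) and (Similar-speed) --- is essentially the paper's own proof. However, there is one genuine gap and one misapplied requirement.

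The gap is productivity of your coinduction. Your step assumes the edge supplied by (Coverage\ddAk{}) makes progress along $\APathO$, but (Coverage\ddAk{}) is also satisfied by an edge whose assembly component is $\epsilon$ (the empty path is a prefix of every $\APathO$), i.e.\ an edge on which only \Ck{} moves. If the construction keeps selecting such edges, it never consumes a single edge of the standalone \ddAk{} execution, and in the limit the product's trace $\TraceP{\ddAk}$ ceases to track $\Trace{\ddAk}$, so your invariant yields nothing. The paper treats $\APath{}=\epsilon$ as an explicit case: it executes $k$ product edges until an edge with a non-$\epsilon$ assembly path is reached, and argues that $k$ is finite because \Xk{} has finitely many nodes, so an unbounded run of $\epsilon$-assembly edges would close a cycle whose assembly components are all $\epsilon$, contradicting (Similar-speed). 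You invoke (Similar-speed) only to bound the mismatch in silent events inside the non-error case; without also using it for this productivity argument, your construction is not guaranteed to cover $\Trace{\ddAk}$.

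Separately, you claim that $n^t_{\ddAk}=\WAk$ forces the sink to be $(\WAk,\WCk)$ ``by (Well-formedness)''. That requirement states the converse: a node of the form $(\_,\WCk)$ must be $(\WAk,\WCk)$; nothing prevents sinks of the form $(\WAk,\UCk)$ or $(\WAk,n_{\Ck})$ with $n_{\Ck}$ non-error, and the paper indeed handles $(\WAk,\UCk)$ as a case of its own. Your conclusion survives, because the first disjunct $\steq{\Trace{\ddAk}}{\TraceP{\ddAk}}$ concerns only the \ddAk{}-components --- both end in \EW{}, and (SingleIO) guarantees nothing else non-silent was emitted on that edge --- but the justification as written is backwards and should be repaired.
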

\begin{proof}
Consider an execution of \Xk{} that is currently at
a non-error node $n_{\Xk}=(n_{\ddAk},n_{\Ck})\in{}\NNP{\Xk}$.
We show by coinduction on the number
of edges executed in \ddAk{} starting
at $n_{\ddAk}$, that \cref{eqn:Acovered}
holds
The proof of the lemma follows
by using $n_{\Xk}=n_{\Xk}^s=(n_{\ddAk}^s,n_{\Ck}^s)\in{}\NXk{}$.

By (Coverage\ddAk{}) and (Coverage\Ck{}).
there exists $\XEdge = (\XEdgeT{n_{\Xk}}{\APath;\CPath}{n^t_{\Xk}})\in\EXk$ such that
\APath{} and \CPath{} execute to completion to reach $n^t_{\Xk}=(n^t_{\ddAk},n^t_{\Ck})$.
\begin{itemize}
\item If $\APath{}\neq{}\epsilon{}$:  If $n^t$ is a non-error node,
the lemma holds by the coinductive hypothesis.
If $n^t_{\Ck}=\WCk{}$, then $n^t_{\ddAk}$ must also be
\WAk{} due to (Well-formedness), and
$\steq{\Trace{\ddAk}}{\TraceP{\ddAk}}$
holds due to (SingleIO).
If $n^t_{\Ck}=\UCk{}$ and $n^t_{\ddAk}=\WAk{}$,
$\steq{\Trace{\ddAk}}{\TraceP{\ddAk}}$ holds due to (SingleIO).
If $n^t_{\Ck}=\UCk{}$ and $n^t_{\ddAk}\neq{}\WAk{}$,
the lemma holds by definition and due to (SingleIO).
$n^t_{\Ck}\neq{}\UCk{}$ and $n^t_{\ddAk}=\UAk{}$ is not
possible due to (Safety).
\item If $\APath{}=\epsilon{}$: execute $k$ edges in \Xk{}
before a non-$\epsilon{}$ path is encountered, where
$k$ is the length of the longest sequence of edges in \Xk{}
such that an edge
$\XEdge = (\XEdgeT{n_{\Xk}}{\APath;\CPath}{n^t_{\Xk}})$
with $\APath{}\neq{}\epsilon{}$ is reached; then repeat
the co-inductive step above. Due
to (Similar-speed), $k$ must be defined.
\end{itemize}


\end{proof}

\begin{lemma}[\Xk{}'s trace is derived from \Ck{}'s trace]
\label{lemma:XkTracesC}
The following holds for an execution of \Xk{}:
\begin{equation*}
\begin{split}
\forall{\World{},\TraceP{\ddAk},\TraceP{\Ck}}:
\execT{\Xk}{\World{}}{(\TraceP{\ddAk},\TraceP{\Ck})}
\Rightarrow
  \exists{\Trace{\Ck}} : & \phantom{{} \land {}} \execT{\Ck}{\World{}}{\Trace{\Ck}} \\
                          & \land (\phantom{{} \lor {}} \steq{\TraceP{\Ck}}{\Trace{\Ck}} \\
                          & \phantom{{} \land {} (} \lor (    (\eT{\TraceP{\ddAk}} = \EW)
                                                        \land (\stprefix{\neT{\TraceP{\ddAk}}}{\Trace{\Ck}})))
\end{split}
\end{equation*}
\end{lemma}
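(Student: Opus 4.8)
The plan is to prove this lemma by coinduction on the number of edges executed by \Xk{}, in a manner analogous to \cref{lemma:AkTracesInXk}. The essential observation is that each edge $\XEdge = (\XEdgeT{n_{\Xk}}{\APath;\CPath}{n^t_{\Xk}}) \in \EXk$ carries, as its \Ck{}-component, the determinized path \CPathD{}; and by the semantics of the deterministic choice map \DXk{} (\cref{sec:detX}), every non-deterministic $\theta$-choice that \DXk{} fixes is replaced by a type-correct expression that \Ck{} could itself have chosen under its original semantics. Consequently, any trace produced by executing \CPathD{} in the product is realizable by \Ck{} under some sequence of its own non-deterministic choices. I would therefore show that, starting from any non-error node $n_{\Xk} = (n_{\ddAk}, n_{\Ck}) \in \NNP{\Xk}$, the concatenation of the \Ck{}-components along an \Xk{} execution assembles into a genuine \Ck{} execution from $n_{\Ck}$, producing a trace $\Trace{\Ck}$ with $\steq{\TraceP{\Ck}}{\Trace{\Ck}}$; the lemma then follows by instantiating at the start node $n^s_{\Xk} = (n^s_{\ddAk}, n^s_{\Ck})$.

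For the coinductive step I would analyze the terminal node $n^t_{\Xk} = (n^t_{\ddAk}, n^t_{\Ck})$ of the executed edge, reusing the case split of \cref{lemma:XkExec}. If both components reach non-error nodes, the claim follows from the coinductive hypothesis after prepending the \Ck{}-trace of \CPathD{}. If $n^t_{\Ck} = \UCk$ or $n^t_{\Ck} = \WCk$, then $\TraceP{\Ck}$ already ends in the corresponding error code and is itself a valid \Ck{} trace, so $\steq{\TraceP{\Ck}}{\Trace{\Ck}}$ holds with $\Trace{\Ck} = \TraceP{\Ck}$; the subcase $n^t_{\Ck} = \WCk$ with $n^t_{\ddAk} \neq \WAk$ is ruled out by the (Well-formedness) requirement. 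The remaining, and genuinely different, case is $n^t_{\ddAk} = \WAk$ (with \ddAk{} triggering \EW{}), for which the constructions of \cref{theorem:cvSmallest,theorem:fastEncoding} correlate \APath{} with an empty \Ck{} path $\CPath = \epsilon$, so the \Ck{}-side does not advance on this edge.

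The hard part will be this last case, establishing the second disjunct $(\eT{\TraceP{\ddAk}} = \EW) \land (\stprefix{\neT{\TraceP{\ddAk}}}{\Trace{\Ck}})$: the product halts because \ddAk{} hit an assumption failure, while \Ck{} may still have work to do. I would argue that, up to the point of the \EW{}-error, \cref{lemma:XkExec} guarantees the invariant $\steq{\neT{\TraceP{\ddAk}}}{\TraceP{\Ck}}$; and because \Ck{} at any non-terminating, non-error node always has an enabled outgoing edge (its guards are total and mutually exclusive, per \cref{sec:tgraph}), the partial \Ck{} execution ending at $n_{\Ck}$ can be extended — using \Ck{}'s original, unconstrained $\theta$-semantics beyond the correlated fragment — to a complete \Ck{} execution producing a trace $\Trace{\Ck}$ of which $\TraceP{\Ck}$ is a prefix. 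Combining $\steq{\neT{\TraceP{\ddAk}}}{\TraceP{\Ck}}$ with the fact that $\TraceP{\Ck}$ is a prefix of $\Trace{\Ck}$ yields $\stprefix{\neT{\TraceP{\ddAk}}}{\Trace{\Ck}}$ by definition of the stuttering-prefix relation. The subtlety to handle carefully is confirming that the $\epsilon$-correlation introduced by \cref{theorem:cvSmallest,theorem:fastEncoding} leaves $\TraceP{\Ck}$ consistent with a genuine \Ck{} prefix, and that the extension remains well-defined even when \ddAk{}'s error interrupts what would otherwise be a single correlated \Ck{} path.
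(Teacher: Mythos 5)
Your proposal follows essentially the same route as the paper's proof: coinduction on the number of edges executed by \Xk{}, the key observation that a \DXk{}-determinized \Ck{}-path is executable under \Ck{}'s own non-deterministic semantics (so the \Ck{}-components of a product execution assemble into a genuine \Ck{} execution with stuttering-equivalent trace), and a case analysis on the terminal node of each edge in which (Well-formedness), (Safety), (SingleIO), (Equivalence), and (Similar-speed) dispatch the error cases, with the \WAk{} case settled by extending \Ck{}'s partial execution to a complete one and weakening stuttering equivalence at the source node to a stuttering prefix.

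Two points need repair, though neither defeats the approach. First, the appeal to \cref{theorem:cvSmallest,theorem:fastEncoding} to conclude that every edge into \WAk{} carries $\CPath{}=\epsilon$ is out of place: this lemma is an ingredient of \cref{theorem:witness}, which quantifies over an \emph{arbitrary} product graph satisfying the soundness requirements, and nothing in those requirements forces an empty \Ck{}-path on a \WAk{}-edge (indeed, \toolName{} itself prefers to correlate a path into \WAk{} with a previously-correlated, generally non-empty, pathset). Fortunately your argument does not actually need the claim: a path ending in \Ihalt{\EW} contains no {\tt rd}/{\tt wr} instruction, so by (SingleIO) the correlated \CPath{} is I/O-free and contributes only silent events (or ends at \UCk{}, which falls under your earlier case), and your prefix-extension argument goes through unchanged. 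Second, your case split is not exhaustive: it omits the case where $n^t_{\ddAk}$ is a non-error terminating node while $n^t_{\Ck}$ is a non-terminating node, in which the product halts while \Ck{} must keep executing, so neither disjunct of the lemma need hold. This is precisely what the (Termination) requirement excludes, and the paper treats it as an explicit case; your sketch should invoke it rather than folding every non-error outcome into the coinductive step.
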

\begin{proof}

The proof proceeds through a coinduction on the number of edges
executed by \Xk{}.
Suppose \Xk{} and \Ck{} start execution with states
$\sigma_{\Xk} = (\sigma_{\ddAk},\sigma_{\Ck})$, $\sigma_{\Ck}$
at non-error nodes
$n_{\Xk} = (n_{\ddAk},n_{\Ck})$, $n_{\Ck}$
respectively,
such that
$\steq{\Trace{\Ck}}{\TraceP{\Ck}}$,
where $\Trace{\Ck} \in \sigma_{\Ck}$
and  $(\TraceP{\ddAk},\TraceP{\Ck}) \in \sigma_{\Xk}$,
holds.

Consider the execution of edge
$\XEdge = \XEdgeT{n_{\Xk}}{\APath; \CPath}{n^t_{\Xk}} \in \EXk$,
starting at non-error node $n_{\Xk} \in \NNP{\Xk}$ on state $\sigma_{\Xk}$.
If \CPath{} is executed, as part of \XEdge{}'s execution, using
some sequence of non-deterministic choices determined by \DXk{},
the same path \CPath{} can be executed in \Ck{} for the same sequence of
non-deterministic choices.
As both executions start in identical states, they will produce
identical sequence of trace events till execution
reaches the sink node $n^t_{\Ck}$ where
$\steq{\Trace{\Ck}}{\TraceP{\Ck}}$ will hold
(note that execution of \APath{} may not modify the state elements
of \Ck{} in $\sigma_{\Xk}$ as both have disjoint state space).
If $n^t_{\Xk} = (n^t_{\ddAk},n^t_{\Ck})$ is a non-terminating node,
then the claim holds due to the coinduction hypothesis.
Similarly, if both $n^t_{\Xk}$ and $n^t_{\Ck}$ are terminating nodes,
then the claim holds by definition.

Consider the case when $n^t_{\Xk} = (n^t_{\ddAk},n^t_{\Ck})$ is a terminating node
but $n^t_{\Ck}$ is not a terminating node.
There are three possibilities for $n^t_{\Xk} = (n^t_{\ddAk},n^t_{\Ck})$ in this case:
\begin{enumerate}
  \item $n^t_{\ddAk} = \WAk$:
    Due to (Equivalence) and (Similar-speed),
    $\steq{\TraceP{\ddAk}}{\TraceP{\Ck}}$ holds at $n_{\Xk}$.
    Further, due to (SingleIO), \APath{} cannot produce any non-silent trace event
    other than \EW{}.
    Hence, $\stprefix{\TraceP{\ddAk}}{\Trace{\Ck}}$ holds due to inductive assumption.

  \item $n^t_{\ddAk} = \UAk$:
    Due to (Safety), $n^t_{\Xk} = (n^t_{\ddAk},n^t_{\Ck})$ must be of the form $(\UAk, \UCk)$.
    However, this violates the assumption that $n^t_{\Ck}$ is a non-terminating node.

  \item $n^t_{\ddAk}$ is a non-error terminating node:
    This case is not possible due to (Termination) requiring
    $n^t_{\Ck}$ to be non-error terminating node whenever $n^t_{\ddAk}$
    is a non-error terminating node.

\end{enumerate}

\end{proof}

\begin{proof}[Proof of \cref{theorem:witness}]

Consider an execution of \ddAk{} under world \World{}.
Using \cref{lemma:AkTracesInXk}, we have:
\begin{equation*}
\begin{split}
\forall{\World{},\Trace{\ddAk}}:
\execT{\ddAk}{\World{}}{\Trace{\ddAk}}
\Rightarrow
\exists{\TraceP{\ddAk},\TraceP{\Ck}}: & \phantom{{} \land {}} \execT{\Xk}{\World{}}{(\TraceP{\ddAk},\TraceP{\Ck})} \\
                                    & \land (\phantom{{} \lor {}} \steq{\Trace{\ddAk}}{\TraceP{\ddAk}} \\
                                    & \phantom{{} \land {}} \lor (      (\eT{\TraceP{\Ck}} = \EU)
                                    \land (\eT{\TraceP{\ddAk}} \neq \EW)
                                    \land (\stprefix{\neT{\TraceP{\Ck}}}{\Trace{\ddAk}})))
\end{split}
\end{equation*}
  
  Instantiating \cref{lemma:XkTracesC}, we have:
\begin{equation}\label{eq:Combined}
\begin{split}
\forall{\World{},\Trace{\ddAk}}:
\execT{\ddAk}{\World{}}{\Trace{\ddAk}}
\Rightarrow
\exists{\TraceP{\ddAk},\TraceP{\Ck}}: & \phantom{{} \land {}} \execT{\Xk}{\World{}}{(\TraceP{\ddAk},\TraceP{\Ck})} \\
                                    & \land (\phantom{{} \lor {}} \steq{\Trace{\ddAk}}{\TraceP{\ddAk}} \\
                                    & \phantom{{} \land {}} \lor (      (\eT{\TraceP{\Ck}} = \EU)
                                    \land (\eT{\TraceP{\ddAk}} \neq \EW)
                                    \land (\stprefix{\neT{\TraceP{\Ck}}}{\Trace{\ddAk}}))) \\
                                    & \land (\exists{\Trace{\Ck}} : \execT{\Ck}{\World{}}{\Trace{\Ck}} \\
  & \phantom{{} \land {}} \land (\phantom{{} \lor {}} \steq{\TraceP{\Ck}}{\Trace{\Ck}} \\
  & \phantom{{} \land {}  \land (} \lor (  (\eT{\TraceP{\ddAk}} = \EW)
    \land (\stprefix{\neT{\TraceP{\ddAk}}}{\Trace{\Ck}}))))
\end{split}
\end{equation}
  Consider each minterm in the sum-of-products representation
  of the conjunction of the RHS of the equations in \cref{lemma:AkTracesInXk,lemma:XkTracesC}:
  \begin{enumerate}
  \item $(\steq{\Trace{\ddAk}}{\TraceP{\ddAk}}) \land (\steq{\TraceP{\Ck}}{\Trace{\Ck}})$ holds.

    Instantiating \cref{lemma:XkExec} in \cref{eq:Combined}, there are three cases:
    \begin{itemize}
      \item $\steq{\TraceP{\ddAk}}{\TraceP{\Ck}}$ holds.

        Due to $\steq{}{}$ being an equivalence relation, we have
        $\steq{\Trace{\ddAk}}{\Trace{\Ck}}$
        and therefore $\Ck \refines \ddAk$ holds.

      \item $\eT{\TraceP{\ddAk}} = \EW \land \stprefix{\neT{\TraceP{\ddAk}}}{\TraceP{\Ck}}$ holds.

        As $\steq{}{}$ is congruent with respect to $\stprefix{}{}$, we have
        $\eT{\Trace{\ddAk}} = \EW \land \stprefix{\neT{\Trace{\ddAk}}}{\Trace{\Ck}}$,
        which is equivalent to $W_{\tt pre}^{\World{},\Trace{\ddAk}}(\Ck)$.
        Therefore, $\Ck \refines \ddAk$ holds.

      \item $\eT{\TraceP{\Ck}} = \EU \land \eT{\TraceP{\ddAk}} \neq \EW \land{} \stprefix{\neT{\TraceP{\Ck}}}{\TraceP{\ddAk}}$ holds.

        Using congruence of $\steq{}{}$ with respect to $\stprefix{}{}$, we have
        $\eT{\Trace{\Ck}} = \EU \land \stprefix{\neT{\Trace{\Ck}}}{\Trace{\ddAk}}$,
        which is equivalent to $U_{\tt pre}^{\World{},\Trace{\ddAk}}(\Ck)$.
        Therefore, $\Ck \refines \ddAk$ holds.
    \end{itemize}

  \item $(\steq{\Trace{\ddAk}}{\TraceP{\ddAk}}) \land ((\eT{\TraceP{\ddAk}} = \EW) \land (\stprefix{\neT{\TraceP{\ddAk}}}{\Trace{\Ck}}))$ holds.

    Using definition of $\steq{}{}$ and congruence of $\steq{}{}$ with respect to $\stprefix{}{}$,
    we have 
    $(\eT{\Trace{\ddAk}} = \EW) \land (\stprefix{\neT{\Trace{\ddAk}}}{\Trace{\Ck}})$,
    which is equivalent to $W_{\tt pre}^{\World{},\Trace{\ddAk}}(\Ck)$.
    Therefore, $\Ck \refines \ddAk$ holds.

  \item $((\eT{\TraceP{\Ck}} = \EU) \land (\stprefix{\neT{\TraceP{\Ck}}}{\Trace{\ddAk}})) \land (\steq{\TraceP{\Ck}}{\Trace{\Ck}})$ holds.

    Using definition of $\steq{}{}$ and congruence of $\steq{}{}$ with respect to $\stprefix{}{}$,
    we have 
    $(\eT{\Trace{\Ck}} = \EU) \land (\stprefix{\neT{\Trace{\Ck}}}{\Trace{\ddAk}})$,
    which is equivalent to $U_{\tt pre}^{\World{},\Trace{\ddAk}}(\Ck)$.
    Therefore, $\Ck \refines \ddAk$ holds.

  \item $((\eT{\TraceP{\Ck}} = \EU) \land (\eT{\TraceP{\ddAk}} \neq \EW)  \land (\stprefix{\neT{\TraceP{\Ck}}}{\Trace{\ddAk}})) \land ((\eT{\TraceP{\ddAk}} = \EW) \land (\stprefix{\neT{\TraceP{\ddAk}}}{\Trace{\Ck}}))$ holds.


    This case is not possible due to the mutually unsatisfiable clauses
    $\ldots \land (\eT{\TraceP{\ddAk}} \neq \EW) \land \ldots \land (\eT{\TraceP{\ddAk}} = \EW) \land \ldots$.
    
  \end{enumerate}
\end{proof}

\subsection{Soundness of Callers' Virtual Smallest semantics}
\label{app:cvSmallestProof}

Let \Ak{} and \Ck{} be transition graphs obtained due to original semantics described in
\cref{fig:xlateRuleIR,fig:xlateRuleAsm,fig:xlateRuleAsmStackLocals,fig:xlateRuleAsmAllLocals}.
Let \AkP{} and \CkP{} be obtained from \Ak{} and \Ck{} respectively
by applying the callers' virtual smallest semantics described in
\cref{sec:cvSmallest}.
Let \ddAkP{} be obtained by annotating \AkP{} as described in \cref{sec:refnDefn}.
Let \ddAk{} be obtained by annotating \Ak{} such that annotations
made in \ddAkP{} and \ddAk{} are identical.

Let $\XkP = \ddAkP \boxtimes \CkP = (\NXkP{}, \EXkP{}, \DXkP{})$
be a product
graph such that $\XkP$ satisfies the search-algorithm requirements.
We prove that there exists a product graph
$\Xk{} = \ddAk{} \boxtimes \Ck = (\NXk{},\EXk{},\DXk{})$
such that \Xk{} satisfies the search-algorithm requirements.

\begin{definition}[(Coverage\Ck{}) holds for $\APath$ at $n_{\Xk}$ in \Xk{}]
  \label{defn:covCkForAPath}
  At a node $n_{\Xk} \in \NXk{}$,
  let $\{\XEdgeN{1},\XEdgeN{2},\ldots,\XEdgeN{m}\}$
  be the set of \emph{all} outgoing edges such that
  $\XEdgeN{j}=\XEdgeT{n_{\Xk}}{\APath; \CPathN{j}}{(n^t_{\ddAk}, n_{\Ck}^{t_j})}$ (for $1\leq{}j\leq{}m$).
	Then, \emph{(Coverage\Ck{}) holds for \APath{} at $n_{\Xk}$ in \Xk{}} iff
  \pthcover{\XEdgeN}{n_{\Xk}}{\DXk}{\APath{}}
  holds.
\end{definition}
Notice that this definition is identical to the (Coverage\Ck{}) definition in \cref{sec:reqX},
except that it defines (Coverage\Ck{}) for a specific path \APath{} starting at a
specific node $n_{\Xk}$.
We define (Coverage\ddAk{}) at node $n_{\Xk}$ analogously.

\begin{proof}[Proof of \cref{theorem:cvSmallest}]

Construct $\Xk{} = \XkP{}$.
Add extra edges in \Xk{} to nodes $(\WAk,n_{\Ck})$
where $n_{\Ck}$ is an error-free node
such that (Mutex\ddAk) is not violated.
These extra edges help in ensuring (Coverage\ddAk{}) in \Xk{}.

As the use of callers' virtual smallest semantics does not affect the graph structure of
\Ak{} and \Ck{}
(recall that the changes were limited to modifications to instructions of an edge),
the seven structural requirements,
(Mutex\ddAk{}), (Mutex\Ck{}), (Termination), (SingleIO), (Well-formedness), (Safety), and (Similar-speed),
should continue to hold for \Xk{}.

Let $n_{\XkP} = (n_{\ddAkP},n_{\CkP}) \in \NP{\XkP}$ be a node in \XkP{}
and
let $n_{\Xk} = (n_{\ddAk}, n_{\Ck}) \in \NXk$ be its corresponding node in \Xk{}.
Let \PPath{\ddAkP} be an outgoing path at $n_{\ddAkP}$ in \ddAkP{}
and let \APath{} be its structurally similar path originating at $n_{\ddAk}$ in \ddAk{}. 
Let
$\{ \XPEdgeN{1},\ldots,\XPEdgeN{m}\}$
be the set of all outgoing edges at
$n_{\XkP}$ such that
$\forall_{1 \leq j \leq m}:e^j_{\XkP} = \XEdgeT{n_{\XkP}}{\PPath{\ddAkP}; \CPPathN{j}}{n^t_{\XkP}} \in \EP{\XkP}$.
Let the set
$\{ \XPEdgeN{1},\ldots,\XPEdgeN{m}\}$
be defined analogously for \Xk{}.
Our proof completes by induction on the number of edges executed in \Xk{},
starting at $n_{\Xk}$.

We analyze the instructions in \ddAk{} and \Ck{} affected by the semantics change
and consider the case when an edge
$e_{\ddAk} \in \APath$ or $e_{\Ck} \in \CPPathN{j}$
corresponds to it
\footnote{
  Note that
  \TRule{\Load{\Ck}}, \TRule{\Store{\Ck}},
  \TRule{CallV}, and \TRule{\Call{\Ck}},
  are not affected as the $cv$ region is inaccessible in \Ck{}
  and cannot be returned by $\BasedOn{x}$ for any variable $x$
  and \BasedOnM{r} for any region $r$.
}.
\begin{itemize}
  \item \TRule{\Entry{\Ck}} and \TRule{\Entry{\ddAk}}:
    The $\neg \mathtt{addrSetsAreWF}(\ldots)$ condition
    is weaker in \ddAk{} and \Ck{} than \ddAkP{} and \CkP{} respectively.
    Consequently, the path condition for paths
    $\APath{} = \PathT{n_{\ddAk}}{n_{\ddAk}^{\cancel{\EW}}}$
    (where $n_{\ddAk}^{\cancel{\EW}} \in \NP{\ddAk} \setminus \WAk$)
    and
    $\CPath{} = \PathT{n_{\Ck}}{n_{\Ck}^{\cancel{\EW}}}$
    (where $n_{\Ck}^{\cancel{\EW}} \in \NP{\Ck} \setminus \WCk$)
    that do not go to \WAk{} and \WCk{} respectively
    is stronger in \ddAk{} and \Ck{}
    than \ddAkP{} and \CkP{} respectively.

    Because the address sets returned by the \texttt{rd} instruction
    are arbitrary and identical across \Ck{}
    and \ddAk{}, due to (Equivalence),
    (Coverage\Ck{}) holds by construction in this case.

    As the results of the \texttt{rd} instruction are arbitrary,
    the difference in infeasibility of
    $\PPath{\ddAkP} = \PathT{n_{\ddAkP}}{\WP{\ddAkP}}$
    and
    structurally similar
    $\APath = \PathT{n_{\ddAk}}{\WAk}$
    can only be due to the address set of regions in \F{}
    (see definition of $\mathtt{addrSetsAreWF}(\ldots)$ in \cref{tab:preds})
    As $\il{\ddAkP}{\F} = \il{\ddAk}{\F}$,
    (Coverage\ddAk{}) at $n_{\Xk}$
    should continue to hold in this case.

  \item \TRule{Alloc}, \TRule{AllocV}, and \TRule{AllocS'}:
    As $(\il{\ddAk}{cv} = \il{\Ck}{cv}) \supseteq (\il{\ddAkP}{cv} = \il{\CkP}{cv} = \emptyset)$,
    the $\neg \mathtt{intrvlInSet}_{a}(\ldots)$ condition of \TRule{Alloc} and \TRule{AllocV}
    and $\Overlap{}(\ldots)$ condition of \TRule{AllocS'}
    is weaker in \ddAk{} and \Ck{}
    than \ddAkP{} and \CkP{} respectively.
    Consequently, similarly to previous case, the path condition for paths
    that do not go to \WAk{} and \WCk{} respectively
    is stronger in \ddAk{} and \Ck{}
    than \ddAkP{} and \CkP{} respectively.
    
    Due to (SingleIO), the nodes $n_{\ddAk}$ and $n_{\Ck}$
    must either correspond to PCs due to:
    (1) \TRule{AllocV} and \TRule{Alloc}; or
    (2) \TRule{AllocS'} and \TRule{Alloc}.
    Due to (Equivalence),
    $\il{\ddAk}{\mathtt{comp}(\NS \cup \{cv\})} = \il{\Ck}{\mathtt{comp}(\NS \cup \{cv\})} = \il{\Ck}{\free}$
    must hold at $n_{\Xk}$.
    As, for $P \in \{ \ddAk, \ddAkP, \Ck, \CkP \}$,
    \il{P}{\{\heap,cl\}} is assigned arbitarily at entry,
    the set of possible values for \il{P}{\mathtt{comp}(\NS \cup \{cv\})}
    (note $\il{\ddAkP}{cv} = \il{\CkP}{cv} = \emptyset$)
    remain identical in $P$ at an error-free node $n_{\Xk}$ and $n_{\XkP}$.
    Thus, in case (1), the affected $\neg \mathtt{intrvlInSet}_{a}(\ldots)$
    condition should have identical semantics in both \XkP{} and \Xk{}
    and (Coverage\ddAk{}) and (Coverage\Ck{}) should continue to hold.

    In case (2), a path $\PPath{\ddAkP} = \PathT{n_{\ddAkP}}{\WP{\ddAkP}}$
    with an edge with the $\Overlap{}(\ldots)$ condition
    could be provably infeasible at $n_{\XkP}$ in \XkP{}
    but a similarly structured path \APath{}
    could potentially be feasible at $n_{\Xk}$ in \Xk{} ---
    e.g., when $\ilzv{\ddAkP}{\Zl} = \emptyset$.
    To ensure (Coverage\ddAk{}), we introduce edge
    $e^{\prime}_{\Xk} = \XEdgeT{(n_{\ddAk},n_{\Ck})}{\APath{};\epsilon}{(\WAk,n_{\Ck})}$
    for each such path \APath{} in \Xk{}.
    Notice that (Coverage\Ck) holds for \APath{} at $n_{\Xk}$.
    Because \APath{} does not contain any memory access,
    introduction of $e^{\prime}_{\Xk}$ would not disturb (MAC).

    For a path $\PathT{n_{\ddAk}}{n^{\cancel{\EW}}_{\ddAk}}$
    (where $n_{\ddAk}^{\cancel{\EW}} \in \NP{\ddAk} \setminus \{ \WAk \}$),
    (Coverage\Ck{})
    holds due to (Stack subset) invariant (\cref{theorem:globalInvX})
    and by using identical reasoning as case (1) above.

  \item \TRule{Op-esp'}:
    The condition $\mathtt{intrvlInSet}()$
    is not affected by the semantics change as the address sets
    $\il{\ddAk}{\free} \cup ((\il{\ddAk}{cv} \cup \ilZv{\ddAk}) \setminus \il{\ddAk}{\F})$
    and
    $\il{\ddAkP}{\free} \cup (\ilZv{\ddAkP} \setminus \il{\ddAkP}{\F})$
    must evaluate to identical values
    (on states $\sigma$ and $\sigma'$ at nodes $n_{\Xk}$ and $n_{\XkP}$ in \Xk{} and \XkP{} resp. such that
    $\Inv{n_{\Xk}}(\sigma)$ and $\Inv{n_{\XkP}}(\sigma')$ hold)
    due to new definition of \il{\ddAkP}{\free} in \ddAkP{}.

  \item \TRule{\Load{\ddAk}} and \TRule{\Store{\ddAk}}:
    Identical reasoning as \TRule{Op-esp'} case; the address set expressions
    should evaluate to identical values.
    Hence, no change in semantics for this case too.

\end{itemize}

As the path condition to an error-free node
is only stronger (or equivalent) in \ddAk{} and \Ck{},
the remaining  semantic requirements,
(Inductive), (Equivalence), (MAC), and (MemEq)
should also continue to hold in \Xk{}.

\end{proof}

\subsection{Soundness of Safety-Relaxed Semantics for $\Ak$}
\label{app:fastEncodingProof}

Let \Ak{} be the transition graph obtained due to the callers' virtual smallest
semantics of the assembly procedure, as presented in \cref{sec:cvSmallest}.
Let $\Ak^{\prime}$ be the transition graph obtained due to the safety-relaxed
semantics in \cref{sec:safetyUnchecked}.
Let $\ddAk^{\prime}$ be obtained
by annotating $\Ak^{\prime}$ as described in \cref{sec:refnDefn}.

Let $\XkP = \ddAkP \boxtimes \Ck$ be a product
graph 
such that $\XkP$ satisfies the fast-encoding requirements.
Let
$e_{\XkP} = (\XEdgeT{n_{\XkP}}{\PPath{\ddAkP}; \CPath}{n^t_{\XkP}}) \in \EP{\XkP}$,
be an edge in \XkP{}.

\begin{lemma}[Paths containing memory accesses do not modify allocation state of common regions]
  \label{lemma:memAccessAllocPreserved}
  If $\PPath{\ddAkP}$ contains an edge corresponding to
  \TRule{\Load{\ddAkP}} or \TRule{\Store{\ddAkP}}
  (i.e., a {\tt load} or {\tt store} instruction),
  then $\PPath{\ddAkP}$ does not modify the address sets
  corresponding to regions in $\NS$,
  \il{\ddAkP}{\g} (for each $\g \in G$),
  \il{\ddAkP}{\heap},
  \il{\ddAkP}{cl},
  \il{\ddAkP}{\y} (for each $\y \in \Y$),
  and
  \il{\ddAkP}{\z} (for each $\z \in \Z$).
\end{lemma}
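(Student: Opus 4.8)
The plan is to prove the contrapositive-friendly statement: the only graph instructions that can change any of the listed address sets are the (de)allocation instructions (together with the entry instruction's initializations), and none of these can ride on the same I/O-free edge as a memory access. First I would classify, by inspection of the translation rules in \cref{fig:xlateRuleIR,fig:xlateRuleAsm,fig:xlateRuleAsmStackLocals,fig:xlateRuleAsmAllLocals}, exactly which instructions assign to $\il{\ddAkP}{r}$ for $r\in\NS$. The \TRule{\Load{\ddAkP}} and \TRule{\Store{\ddAkP}} rules touch only $M_{\ddAkP}$ and the ghost accessed-address sets $\ghost{\rdAcc{\ddAkP}},\ghost{\wrAcc{\ddAkP}}$, never an address set. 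By the region-constancy fact recorded in \cref{sec:memregions} (and re-derived as a global invariant in \cref{theorem:globalInv}), the sets $\il{\ddAkP}{r}$ for $r\in G\cup\Y\cup\{\heap,cl\}$ are invariant after procedure entry, and their entry-time initialization occurs only inside the entry block, which itself carries \texttt{rd} I/O. This reduces the claim to the single remaining family of regions, $\il{\ddAkP}{\z}$ for $\z\in\Z$ (equivalently $\ilzs{\ddAkP}{\z}$ and $\ilzv{\ddAkP}{\z}$), which are written only by the annotated instructions \TRule{AllocS'}, \TRule{AllocV}, \TRule{DeallocS'}, and \TRule{DeallocV}; note also that the stackpointer rule \TRule{Op-esp'} writes only $\il{\ddAkP}{\stk}$, and $\stk\notin\NS$, so it is irrelevant here.

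Next I would invoke the (SingleIO) requirement, which $\XkP$ satisfies because it meets the fast-encoding (hence structural) requirements: the assembly component $\PPath{\ddAkP}$ of any product-graph edge is either a single I/O path or an I/O-free path. A \texttt{load}/\texttt{store} is not an I/O instruction, and a single I/O path carries exactly one \texttt{rd}/\texttt{wr} and nothing else, so if $\PPath{\ddAkP}$ contains a memory access it must be I/O-free and hence traverse no \texttt{rd}/\texttt{wr} edge. The key structural observation is that every write to $\il{\ddAkP}{\z}$ lies inside the translation of exactly one (de)allocation instruction, and each such translation emits a \texttt{wr} event (\VallocASym{}/\VallocBSym{} for allocation, $\Vdealloc{\z}$ for deallocation) immediately adjacent to the address-set assignment, while containing no \texttt{load}/\texttt{store} of its own.

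The hard part will be turning this adjacency into the conclusion that an I/O-free $\PPath{\ddAkP}$ cannot simultaneously carry a memory access and an $\il{\ddAkP}{\z}$-assignment. For allocation this is immediate: the assignment sits strictly between the two writes \VallocASym{} and \VallocBSym{}, so the maximal I/O-free segment containing it is confined to the body of the $\mathtt{alloc}$ instruction and admits no memory access. Deallocation is the delicate case, since its address-set assignment \emph{precedes} its single $\Vdealloc{\z}$ write; I would argue, using the cut-point construction and \emph{breakIntoSingleIOPaths()} (which isolates each \texttt{rd}/\texttt{wr} as its own constituent and makes the source and sink of every I/O path a cut point), that the edge bearing the deallocation's update is the constituent ending at that write's source and consists only of the deallocation's own guard and update. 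Establishing that this constituent cannot also absorb a preceding \texttt{load}/\texttt{store} — i.e. that the annotation placement of \emph{asmAnnotOptions()}, which inserts a $\mathtt{dealloc}_s$/$\mathtt{dealloc}_v$ as a self-contained edge adjacent to a stackpointer update, keeps allocation-bookkeeping edges separate from ordinary computation edges — is the crux of the argument. The proof then closes by contraposition: a memory-access-bearing $\PPath{\ddAkP}$ avoids every $\il{\ddAkP}{\z}$-assigning edge and therefore leaves all the listed address sets of \Ck{}'s common regions unchanged.
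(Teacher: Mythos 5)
Your decomposition is the same as the paper's: regions in $\NS\setminus\Z$ are written only by the entry initialization (an I/O-carrying prefix containing no {\tt load}/{\tt store}), regions $\z\in\Z$ are written only inside the {\tt (de)alloc}$_{s,v}$ translations, \TRule{Op-esp'} touches only $\il{\ddAkP}{\stk}\notin\NS$, and (SingleIO) forces any path carrying a memory access to be I/O-free. The paper's entire proof consists of these observations plus the one-line assertion that, ``as evident from the translations,'' the {\tt (de)alloc}$_{s,v}$ instructions cannot lie on longer paths containing {\tt load}/{\tt store}; your sandwiching argument for allocation (the address-set update sits strictly between the \VallocASym{} and \VallocBSym{} writes, so an I/O-free path containing it cannot escape the instruction body) is exactly the justification the paper leaves implicit there.

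The gap is the deallocation case, and you located it correctly: in \TRule{DeallocS'} and \TRule{DeallocV} the address-set assignment \emph{precedes} the single $\Vdealloc{\z}$ write, so (SingleIO) by itself does not prevent an I/O-free path from absorbing earlier {\tt load}/{\tt store} edges together with that assignment --- indeed, the node between the assignment and the write must appear as a product-graph node (the write must form its own I/O edge), so paths ending there genuinely exist in the transition graph. You flag closing this as ``the crux'' but never close it, so the proposal is incomplete. Moreover, the repair you sketch cannot work for the lemma as stated: you appeal to \textit{asmAnnotOptions()}, the cut-point construction, and \textit{breakIntoSingleIOPaths()}, but the lemma quantifies over an \emph{arbitrary} product graph $\XkP$ satisfying the fast-encoding requirements; nothing in its hypotheses says $\XkP$ or its edges were produced by \toolName{}, so the algorithm's annotation-placement discipline is not available to the proof. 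Any complete argument must proceed from the requirements and translation rules alone --- which is precisely the point on which the paper's own proof is silent, since it treats allocation and deallocation symmetrically and never acknowledges the asymmetry you identified.
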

\begin{proof}[Proof of \cref{lemma:memAccessAllocPreserved}]
  Once initialized in \TRule{\Entry{\Ak}} in an I/O path that does not contain any
  {\tt load} or {\tt store} instruction (\cref{fig:xlateRuleAsm}),
  the address sets corresponding to regions $\NS \setminus \Z$ 
  are not modified during the entire execution of \ddAkP{}.

  The address set corresponding to a region $\z \in \Z$
  may only be modified by the {\tt (de)alloc$_{s,v}$} instructions.
  Due to (SingleIO) requirement, these {\tt (de)alloc$_{s,v}$} instructions
  cannot exist as a part of longer paths that may contain {\tt load} or {\tt store} instructions
  (as evident from translations given in \cref{fig:xlateRuleAsmStackLocals,fig:xlateRuleAsmAllLocals}).
\end{proof}

As a corollary, due to (SingleIO), $\CPath$ also does not modify the address
sets corresponding to regions in $\NS$.

\begin{lemma}[\prjM{\il{\ddAkP}{cs}}{M_{\ddAkP}} is not modified in \XkP]
  \label{lemma:csIsPreserved}
  Let $\XkP = \ddAk^{\prime} \boxtimes \Ck$ be a product
  graph for a lockstep execution between $\ddAk^{\prime}$ and \Ck{}.
  If $\XkP$ satisfies the fast-encoding requirements,
  then
  $\prjMEq{\il{\ddAk^{\prime}}{cs}}{\ghost{M^{cs}}}{M_{\ddAk^{\prime}}}$ holds at each non-start,
  non-error node $n_{\XkP} \in \NNP{\XkP}$.
\end{lemma}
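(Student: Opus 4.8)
The plan is to prove the invariant $\prjMEq{\il{\ddAkP}{cs}}{\ghost{M^{cs}}}{M_{\ddAkP}}$ by induction on the number of edges executed by $\XkP$, in the same style as the proofs of \cref{theorem:globalInv,theorem:globalInvX}. For the base case, immediately after the entry path of \TRule{\Entry{\ddAk}} the ghost $\ghost{M^{cs}}$ is assigned $\prjM{\il{\ddAkP}{cs}}{M_{\ddAkP}}$, and $\il{\ddAkP}{cs}$ is thereafter constant (it is one of the regions whose address set never changes), so the equality holds at the first non-start node. The inductive step then reduces to showing that executing the assembly path $\PPath{\ddAkP}$ of any outgoing edge $e_{\XkP}$ leaves $M_{\ddAkP}$ unchanged on every address of $\il{\ddAkP}{cs}$.

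Second, I would enumerate the only three instruction kinds in $\ddAkP$ that mutate $M_{\ddAkP}$: the stack push in \TRule{Op-esp'}, the callee clobber in \TRule{\Call{\ddAk}}, and \TRule{\Store{\ddAk}}. For the push, I would use the (Stack bounds) and ($cs$ and $cl$) invariants of \cref{theorem:globalInvX}, namely $\il{\ddAkP}{\{\stk\}\cup\Y}\cup(\il{\ddAkP}{\Z}\setminus\ilZv{\ddAkP})=[\mathtt{esp},\ghost{\spE}]$ and $\il{\ddAkP}{\{cs,cl\}}=[\ghost{\spE}+1_{\bv{32}},\ghost{\stkE}]$: the pushed interval $[t,\mathtt{esp}-1_{\bv{32}}]$ lies strictly below $\ghost{\spE}$ and is therefore disjoint from $cs$. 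For the call, I would argue that $\il{\ddAkP}{\CalleeRegions\setminus\Gro}$ cannot meet $cs$, since $\CalleeRegions$ is a based-on closure yielding only accessible region identifiers ($G,\Y,\Z,\heap,cl$), each disjoint from $cs$ by (Disjoint regions in \ddAk{}); the only subtlety is that $\CalleeRegions$ excludes virtually-allocated locals, which holds because such locals are never address-taken and hence never reachable from a call argument.

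The hard part will be the \TRule{\Store{\ddAk}} case, because under the safety-relaxed semantics the check $\varphi_s'$ no longer directly forbids writes into $cs$. Here the plan is to route through (MAC). Applying (MAC) to the edge with $\ghost{\wrAcc{\ddAkP}}$ and $\ghost{\wrAcc{\Ck}}$ reset at the start of the path yields $(\ghost{\wrAcc{\ddAkP}}\setminus\ghost{\wrAcc{\Ck}})\subseteq\il{\ddAkP}{\Grw\cup\Frw}\cup[\mathtt{esp},\ghost{\spE}]$. Since $\il{\ddAkP}{cs}$ is disjoint both from the global read-write regions and from $[\mathtt{esp},\ghost{\spE}]$, any stored address $\alpha\in cs$ must already lie in $\ghost{\wrAcc{\Ck}}$, i.e.\ the correlated $\Ck$ path writes $\alpha$ as well. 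I would then use (Equivalence) with \pred{AllocEq} and the $\Ck$-side disjointness invariant to locate $\alpha$ among $\Ck$'s writable regions, observing that the only $\NS$-region whose address set can geometrically overlap $cs$ is a local $\z$ through its virtual part $\ilzv{\ddAkP}{\z}$, since all non-virtual regions are disjoint from $cs$.

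Resolving this residual virtual-overlap case is the crux. I expect it to require combining (MemEq) — which deliberately excludes $\ilZv{\ddAkP}$ from the required memory agreement — with the fact that a virtually-allocated $\z$ models a register-allocated or eliminated local, so the $\Ck$ write to $\alpha\in\ilzv{\ddAkP}{\z}$ is correlated through $\DXkP$ with a register write rather than a memory store in $\ddAkP$; hence $\ddAkP$ does not itself write to $\alpha$ and $\prjM{\il{\ddAkP}{cs}}{M_{\ddAkP}}$ is left intact. Throughout, I would invoke \cref{lemma:memAccessAllocPreserved} to fix the allocation state of all common regions (in particular the $\z$-regions that control where $cs$ may overlap a live local) along the memory-accessing path, so that the region-membership reasoning is carried out against a single static snapshot of the allocation state.
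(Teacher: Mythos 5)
Your overall skeleton matches the paper's proof: induction on the number of executed edges, base case from \TRule{\Entry{\ddAk}}, and, for the inductive step, a (MAC)-based argument that any assembly store into $cs$ forces a correlated \Ck{} store to the same address, after which the \Ck{}-side store semantics \TRule{\Store{\Ck}} rule out the $\il{\Ck}{\free}$ possibility. Your treatment of the non-store memory writers (the $\theta$-clobber of a push in \TRule{Op-esp'} and the {\tt rd}-clobber of \TRule{\Call{\dAk}}) is in fact more explicit than the paper's, which dismisses paths without {\tt store} instructions as trivially preserving $cs$; your disjointness arguments for those two cases (the pushed interval stays at or below $\ghost{\spE}$, and $\CalleeRegions$ can only contain accessible, address-taken regions, all disjoint from $cs$) are correct.

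The genuine gap is exactly at what you call the crux: $\alpha \in \ilZv{\ddAkP} \cap \il{\ddAkP}{cs}$. Your resolution --- that the \Ck{} write to $\alpha$ ``is correlated through $\DXkP$ with a register write rather than a memory store in \ddAkP{}, hence \ddAkP{} does not itself write to $\alpha$'' --- is circular. Whether \ddAkP{} stores to $\alpha$ is precisely what the lemma must rule out; $\DXkP$ only determinizes $\theta$-choices in \Ck{} and imposes no constraint on which \ddAkP{} instruction (if any) a \Ck{} store is ``correlated with,'' and none of the fast-encoding requirements states that writes to virtually-allocated locals in \Ck{} correspond to register writes in \ddAkP{}. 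That is the \emph{intuition} behind virtual allocation, not a consequence of the definitions, so it cannot serve as a proof step; (MemEq) does not help either, since it deliberately exempts $\ilZv{\ddAkP}$. The paper closes this case quite differently: by the \EU{}-check inside \TRule{\Store{\ddAk}} itself --- even under safety-relaxed semantics, a store overlapping a virtually-allocated region outside $\il{\ddAkP}{\Frw}$ and outside the procedure's own frame transitions to \UAk{}, contradicting the assumption that the edge ends at a non-error node; this leaves only $\alpha \in \il{\Ck}{\free}$, which dies on \TRule{\Store{\Ck}}. (For that elimination to bite on $cs \subseteq [\ghost{\spE}+1_{\bv{32}},\ghost{\stkE}]$, the exempted interval in the relaxed store check must be read as the frame $[\mathtt{esp},\ghost{\spE}]$ rather than $[\mathtt{esp},\ghost{\stkE}]$.) To repair your proof, replace the register-write appeal with an argument grounded in the store rule's error semantics, or alternatively exploit that (MAC) must hold for \emph{every} non-deterministic choice of the $\mathtt{alloc}_v$ base address: \ddAkP{}'s store addresses cannot track that choice, so a store into $cs$ would violate (MAC) under some choice. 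As it stands, the crux case is unproved.
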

\begin{proof}[Proof of \cref{lemma:csIsPreserved}]
  For simplicity, let's first assume that there is only one outgoing edge $e^s_{\XkP}$ from
  the start node $n^s_{\XkP}$ to a non-error node $n^{s2}_{\XkP}$, such that
  $e^s_{\XkP} = \XEdgeT{n^s_{\XkP}}{\PPath{\ddAk^{\prime}}^{s}; \CPathN{s}}{n^{s2}_{\XkP}}$;
  where $\PPath{\ddAk^{\prime}}^{s}$ and $\CPathN{s}$ represent
  the program paths corresponding to \TRule{\Entry{\ddAk{}}} and
  \TRule{\Entry{\Ck{}}} respectively. Let's call this the {\em start-edge} assumption.

  The proof proceeds by
  induction over the number of edges executed in \XkP{}.

  $\prjMEq{\il{\ddAk^{\prime}}{cs}}{\ghost{M^{cs}}}{M_{\ddAk^{\prime}}}$ holds
  at $n^{s2}_{\XkP}$ due to \TRule{\Entry{\ddAk{}}}, which forms our base case.

  Consider a node $n_{\XkP}$ such that
  $\prjMEq{\il{\ddAk^{\prime}}{cs}}{\ghost{M^{cs}}}{M_{\ddAk^{\prime}}}$ holds at $n_{\XkP}$,
  and let $e_{\XkP} = \XEdgeT{n_{\XkP}}{\PPath{\ddAk^{\prime}}; \CPath{}}{n^{t}_{\XkP}}\in{}\mathcal{E}_{\XkP}$ such that $n^t_{\XkP} \in \NNP{\XkP}$ is a non-error node.

  If path $\PPath{\ddAk^{\prime}}$ does not contain a {\tt store} instruction, then
  $\prjMEq{\il{\ddAk^{\prime}}{cs}}{\ghost{M^{cs}}}{M_{\ddAk^{\prime}}}$ holds trivially at $n^t_{\XkP}$.

  If path $\PPath{\ddAk^{\prime}}$ contains a {\tt store} instruction, then
  this path cannot modify the allocation state of
  common regions (\NS{}) in
  \ddAkP{}
  (due to \cref{lemma:memAccessAllocPreserved}).
  Let $\alpha$ be an address such that a {\tt store} is performed to
  $\alpha$ in $\PPath{\ddAk^{\prime}}$, such that $\PPath{\ddAk^{\prime}}$ does
  not modify the allocation state of common memory regions (\NS) in $\ddAkP$.
  Similarly, $\CPath{}$ also does not modify the allocation state of common memory regions in \Ck{}.

  If $\alpha \in \il{\ddAk^{\prime}}{cs}$,
  then due to (MAC), there must be a {\tt store} to the
  same address in \Ck{} before execution may reach $n^t_{\Ck}$.
  Due to the global invariants,
  $\il{\ddAk^{\prime}}{cs}\subseteq{}(\ilZv{\ddAk^{\prime}}\cup \il{\Ck}{\free})\cap{}[\ghost{\spE}+1,\ghost{\stkE}]$ 
  must hold during the execution of $e_{\XkP}$,
  and so $\alpha{}\in{}(\ilZv{\ddAk^{\prime}}\cup{}\il{\Ck}{\free})\cap{}[\ghost{\spE}+1,\ghost{\stkE}]$. 
  However, $\alpha \in (\ilZv{\ddAk^{\prime}})\setminus (\il{\ddAk^{\prime}}{\Frw}\cup[\mathtt{esp},\ghost{\stkE}])$ is not
  possible due to \TRule{\Store{\ddAk}} with the safety-relaxed semantics.
  Thus,
  $\alpha{}\in{}(\il{\Ck}{\free}\cap{}[\ghost{\spE}+1,\ghost{\stkE}])$ must hold. 
  However, this is not possible due to \TRule{\Store{\Ck{}}}.
  Thus, by contradiction, a {\tt store} to address $\alpha \in \il{\ddAkP}{cs}$ is infeasible
  in $\ddAkP$.
  Thus, $\prjMEq{\il{\ddAkP}{cs}}{\ghost{M^{cs}}}{M_{\ddAkP}}$ holds at $n^t_{\XkP}$.

  To generalize beyond the start-edge assumption, we only need to show
  that for any outgoing edge
  of the start node $e^s_{\XkP} = \XEdgeT{n^s_{\XkP}}{\PPath{\ddAkP}^{s}; \CPathN{s}}{n^{t}_{\XkP}}$,
  $\prjMEq{\il{\ddAkP}{cs}}{\ghost{M^{cs}}}{M_{\ddAkP}}$ holds at $n^{t}_{\XkP}$.
  We observe that there must exist a node $n^{s2}_{\ddAkP}$
  in $\PPath{\ddAkP}^{s}$ where
  $\prjMEq{\il{\ddAkP}{cs}}{\ghost{M^{cs}}}{M_{\ddAkP}}$ holds. The rest of
  the argument remains identical for the path $\PPath{\ddAkP}^{s2}=n^{s2}_{\ddAkP} \twoheadrightarrow n^t_{\ddAkP}$.
\end{proof}

\begin{proof}[Proof of \cref{theorem:fastEncoding}]
  Construct \Xk{} = \XkP{} with some extra edges from nodes in \Xk{} to
  the error-node $(\UAk{},\UCk{})$ such that (Mutex\ddAk{})
  and (Mutex\Ck{}) are not violated. We later describe what edges are added to \Xk{}
  and why \Xk{} continues to satisfy the fast-encoding requirements even after the addition
  of these edges. It is already possible to
  see that the structural requirements will hold for \Xk{} even after the
  addition of such edges.

  Let \APath{} be a path in $\ddAk$ on which there exists an
  overlap check
  $\varphi=\Overlap([p]_{w}, \il{\ddAk}{\free} \cup ((\ilZv{\ddAk}) \setminus \il{\ddAk}{\F \cup \STACK{}}))$ 
  (for triggering \EU{})
  due to a \TRule{\Load{\ddAk}} instruction
  (or,
  an overlap check
  $\varphi=\Overlap([p]_{w}, \il{\ddAk}{\free} \cup ((\ilZv{\ddAk}) \setminus \il{\ddAk}{\Frw \cup \STACK{}}))$ 
  (for triggering \EU{})
  due to a \TRule{\Store{\ddAk}} instruction).
  In \ddAkP,
  $\varphi$ is replaced by
  $\varphi^{\prime}=\Overlap([p]_{w}, (\ilZv{\ddAk}) \setminus (\il{\ddAk}{\F} \cup [\mathtt{esp}, \ghost{\stkE}]))$, in case of a \TRule{\Load{\ddAk}},
  (or,
  $\varphi^{\prime}=\Overlap([p]_{w}, (\ilZv{\ddAk}) \setminus (\il{\ddAk}{\Frw} \cup [\mathtt{esp}, \ghost{\stkE}]))$, in case of a \TRule{\Store{\ddAk}})
  to obtain \PPath{\ddAk'}.
  Recall that \Ak{}'s translation has ``{\tt if $\varphi$ \Ihalt{\mathscr{U}}}''
  while $\Ak{}^{\prime}$'s translation has ``{\tt if $\varphi{}^{\prime}$ \Ihalt{\mathscr{U}}}''.
  Because $\varphi{}^{\prime}\Rightarrow{}\varphi{}$,
  \ddAk{} may trigger \EU{} when \ddAkP{} would simply execute
  the {\em non-error path}  in \TRule{\Load{\ddAk}} (or, \TRule{\Store{\ddAk}})
  (a path that does not terminate
  in an error node after executing the instructions in \TRule{\Load{\ddAk}} or \TRule{\Store{\ddAk}}
  Conversely, if \ddAk{} executes
  a non-error path
  (of \TRule{\Load{\ddAk}} or \TRule{\Store{\ddAk}})
  on an initial state $\sigma{}$, then \ddAkP{} will also
  execute the same non-error path on $\sigma{}$.

  Similarly, let $\Phi=\neg \prjMEq{\il{\ddAk}{cs}}{\ghost{M^{cs}}}{M_{\ddAk}}$ be a
  check in \ddAk{} (due to \TRule{\Ret{\Ak{}}}), that has been
  replaced with $\Phi^{\prime}=\mathtt{false}$ in $\ddAk{}^{\prime}$.
  Again, if \ddAk{} executes
  a non-error path of \TRule{\Ret{\Ak}} on an initial state $\sigma{}$, then \ddAkP{} will also
  execute the same non-error path on $\sigma{}$.

  Thus, it can be shown through induction that
  four of the six non-structural requirements --- (Inductive), (Equivalence), (MAC), (MemEq) ---
  hold on \Xk{} if they hold on \XkP{} with $\IXk{}=\IXkP{}$.  The common argument in this part of the proof is that the
  path condition of a non-error path in \Xk{} (containing $\neg{}\varphi{}$ or $\neg{}\Phi{}$) is always stronger than the path condition
  of a non-error path in \XkP{} (containing $\neg{}\varphi{}^{\prime}$ or $\neg{}\Phi{}^{\prime}$).

  We next show that if (Coverage\Ck{}) holds for path $\PPath{\ddAkP}$ starting
  at node $n_{\XkP}$ in \XkP{},
  (Coverage\Ck) also holds for corresponding path $\APath{}$
  starting at corresponding node $n_{\Xk}$ in \Xk{}
  (\cref{defn:covCkForAPath}).
  For an edge $\XEdge^j=\XEdgeT{n_{\Xk}}{\APath; \CPathN{j}}{(n^t_{\ddAk}, n_{\Ck}^{t_j})}$,
  if \APath{} ends at a node $n^t_{\ddAk}\neq{}\UAk{}$,
  then this is easy to show by induction on the number of edges executed on a path:
  because the path condition of \APath{} in \ddAk{} is always equal
  or stronger than the path condition
  of a corresponding (structurally identical) path $\PPath{\ddAkP{}}$ in \ddAkP{},
  if (Coverage\Ck{}) holds for
  $\PPath{\ddAkP{}}$ at a node $n_\XkP{}$ in \XkP{}, it must also hold for
  \APath{} at the corresponding node $n_\Xk{}$ in \Xk{}.  We next ensure
  that (Coverage\Ck{}) holds for a path \APath{} terminating in \UAk{}.

Consider a path \APath{} in \ddAk{} and the corresponding path $\PPath{\ddAkP{}}$
in \ddAkP{}. If on a machine state $\sigma$, both paths \APath{}
and $\PPath{\ddAkP{}}$ transition to \UAk{} and \UAkP{} respectively,
then because \XkP{} satisfies (Coverage\Ck{}), $\sigma$ must execute one
of $\CPathN{j}$ (for $1\leq{}j\leq{}m$) to completion, thus
satisfying (Coverage\Ck{}) in \Xk{} in this case. Thus, we only need to
cater to the two situations where execution on \ddAk{} may deviate from \ddAkP{}:
\begin{itemize}
\item \TRule{\Ret{\Ak}}:
  Let $\Phi=\neg \prjMEq{\il{\ddAk}{cs}}{\ghost{M^{cs}}}{M_{\ddAk}}$ be the
  check in \ddAk{} (due to \TRule{\Ret{\Ak{}}}), that has been
  replaced with $\Phi'=\mathtt{false}$ in $\ddAkP{}$.
  We show that
  $\Phi$ must evaluate to {\tt false}
  in \Xk{} at procedure return. In other words, the \ddAk{} path ``{\tt if $\Phi$ \Ihalt{\mathscr{U}}}''
  is infeasible (and so \ddAk{} does not deviate from \ddAkP{} in this case).

  By \cref{lemma:csIsPreserved},
  $\prjMEq{\il{\ddAk^{\prime}}{cs}}{\ghost{M^{cs}}}{M_{\ddAk^{\prime}}}$
  holds at every non-error node $n_{\Xk{}} \in{} \NNP{\Xk}$.
  Further,
  using the (MAC) requirement at the non-error terminating node {\tt exit}, this can be generalized to show that
  $\prjMEq{\il{\ddAk^{\prime}}{cs}}{\ghost{M^{cs}}}{M_{\ddAk^{\prime}}}$ holds
  at the beginning of the path corresponding to \TRule{\Ret{\Ak}} in \ddAkP{}.
  Thus, because the \ddAk{} path ``{\tt if $\Phi$ \Ihalt{\mathscr{U}}}'' is infeasible,
  (Coverage\Ck{}) holds trivially for this path at $n_{\Xk}$ in \Xk{}.

\item \TRule{\Load{\ddAk}} or \TRule{\Store{\ddAk}}:
  Let $\PPath{\ddAk}^{U} =\PathT{n_{\ddAk}}{\UAk}$
  be a path that terminates with \UAk{}.
  \begin{lemma}
  \label{lemma:CtermAtU}
  Let $\sigma$ be a state at a non-error node $n_{\Xk}=(n_{\ddAk},n_{\Ck})\in{}\NNP{\Xk}$ such that $\Inv{n_{\Xk}}(\sigma)$ holds
  and $\sigma$ executes $\PPath{\ddAk}^{U}=\PathT{n_{\ddAk}}{\UAk{}}$ to completion.
  Then $\sigma$ must execute some path
  $\CPath=\PathT{n_{\Ck}}{\UCk{}}$ to completion in \Ck{}.
  \end{lemma}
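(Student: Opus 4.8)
My plan is to prove \cref{lemma:CtermAtU} by transporting the strict \UAk{}-going path $\PPath{\ddAk}^{U}$ to its structurally identical \emph{relaxed} continuation in \ddAkP{}, and then letting the coverage and (MAC) requirements that \XkP{} already satisfies do the work. The path $\PPath{\ddAk}^{U}$ reaches \UAk{} through the overlap check of a \TRule{\Load{\ddAk}} or \TRule{\Store{\ddAk}} instruction, so its final edge fires because the strict check $\varphi$ holds while its alignment check passes. First I would split on the relaxed check $\varphi'$: if $\varphi'$ also holds, then $\sigma$ drives \ddAkP{} to \UAkP{} along the same path and (Safety) together with (Coverage\Ck{}) for \XkP{} immediately yield a correlated $\CPath=\PathT{n_{\Ck}}{\UCk}$. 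The substantive case is the deviation $\varphi\land\neg\varphi'$, treated below.

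The second step is to pin down a witness address. Since $\varphi$ holds there is $\alpha\in[p]_{w}$ with $\alpha\in\il{\ddAk}{\free}$ or $\alpha\in(\ilZv{\ddAk})\setminus\il{\ddAk}{\F\cup\STACK}$ (for a store, $\il{\ddAk}{\{\free\}\cup\Gro\cup\Fro}$ replaces $\il{\ddAk}{\free}$). I would rule out the second disjunct: under $\neg\varphi'$ such an $\alpha$ must lie in $[\mathtt{esp},\ghost{\stkE}]$, and expanding this interval with \apred{StkBd} and \apred{$cs$Bd}, then discarding the \STACK{} part (as $\alpha\notin\il{\ddAk}{\STACK}$), forces $\alpha\in\il{\ddAk}{\{cl\}\cup\Y}\cup(\il{\ddAk}{\Z}\setminus\ilZv{\ddAk})$; but these are \Ck-allocated ($\NS$) addresses, which are separate from every virtually-allocated region, contradicting $\alpha\in\ilZv{\ddAk}$. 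Hence the witness lies in $\il{\ddAk}{\free}$ (or, for a store, in $\il{\ddAk}{\Gro\cup\Fro}$). Finally, \pred{AllocEq} (from (Equivalence)) transfers this to \Ck{}: $\alpha\in\il{\ddAk}{\free}$ gives $\alpha\notin\il{\Ck}{\NS}$, a store-witness in $\Gro$ gives $\alpha\in\il{\Ck}{\Gro}$, and a store-witness in $\Fro$ gives $\alpha\in\il{\Ck}{\free}$ because $\F$ is absent in \Ck{}.

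The third step uses coverage and (MAC). Because $\neg\varphi'$, the same state $\sigma$ takes the non-error branch of the load/store in \ddAkP{} and follows a relaxed continuation covered in \XkP{}; by (Coverage\ddAk{}) and (Coverage\Ck{}), $\sigma$ executes some correlated $\CPathN{j}=\PathT{n_{\Ck}}{n^{t_j}_{\Ck}}$ to completion. Suppose $n^{t_j}_{\Ck}$ were a non-error node. Then the correlating edge's target is not of the form $(\_,\UCk)$, so (MAC) applies and gives $(\ghost{\rdAcc{\ddAk}}\setminus\ghost{\rdAcc{\Ck}})\subseteq\il{\ddAk}{G\cup\F}\cup[\mathtt{esp},\ghost{\spE}]$ (the write inclusion into $\il{\ddAk}{\Grw\cup\Frw}\cup[\mathtt{esp},\ghost{\spE}]$ for a store). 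The relaxed access reads (writes) all of $[p]_{w}$, so $\alpha\in\ghost{\rdAcc{\ddAk}}$ (resp.\ $\alpha\in\ghost{\wrAcc{\ddAk}}$); since $\alpha\in\il{\ddAk}{\free}$ (resp.\ $\il{\ddAk}{\Gro\cup\Fro}$) is disjoint from the right-hand sides, (MAC) forces $\alpha\in\ghost{\rdAcc{\Ck}}$ (resp.\ $\alpha\in\ghost{\wrAcc{\Ck}}$). But on a non-error \Ck{} path every access is safe, so $\alpha$ would lie in $\il{\Ck}{\BasedOn{p'}}\subseteq\il{\Ck}{\NS}$ for a load, or in $\il{\Ck}{\BasedOn{p'}\setminus\Gro}\subseteq\il{\Ck}{\NS\setminus\Gro}$ for a store, contradicting $\alpha\notin\il{\Ck}{\NS}$ (resp.\ $\alpha\in\il{\Ck}{\Gro}$ or $\alpha\in\il{\Ck}{\free}$). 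Therefore $n^{t_j}_{\Ck}=\UCk{}$, and $\CPath=\CPathN{j}$ is the required path.

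The main obstacle I expect is the second step, confining the witness to $\il{\ddAk}{\free}$ (or the read-only regions). This is the only place where the exact shapes of the strict and relaxed overlap checks interact with the interval invariants \apred{StkBd}, \apred{$cs$Bd}, and with the separation of virtually-allocated regions from $\NS$; the set-algebra must be carried out precisely, and I must confirm that this separation is available as a global invariant (\cref{theorem:globalInv,theorem:globalInvX}) at the relevant node rather than only at the moment of allocation. Once the witness is located, the coverage-plus-(MAC) contradiction is essentially mechanical.
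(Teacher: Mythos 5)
Your proposal is correct and follows essentially the same route as the paper's own proof: split on whether the relaxed check also fires (in which case (Safety) plus (Coverage\Ck{}) of \XkP{} finish immediately), and otherwise, in the deviation case, pin down a witness address outside the common allocated regions (your set-algebra ruling out a virtually-allocated witness via \apred{StkBd}/\apred{$cs$Bd} and region disjointness is exactly the justification the paper states only tersely), then use (MAC) to force that address to be accessed in \Ck{} and contradict $\mathtt{accessIsSafeC}$ on any non-\UCk{} destination. The one detail the paper makes explicit that you leave implicit is that paths containing memory accesses are I/O-free and therefore cannot modify the address sets of the common regions (\cref{lemma:memAccessAllocPreserved}), which is what licenses comparing the witness characterized at the access point against the sets appearing in (MAC)'s postcondition and in \Ck{}'s safety check.
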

  \begin{proof}[Proof of \cref{lemma:CtermAtU}]

  Consider the execution of $\sigma$ on \XkP{} starting at $n_{\XkP}=(n_{\ddAkP},n_{\Ck})$,
  such that $n_{\XkP}$ in \XkP{} is structurally identical to $n_{\Xk}$ in \Xk{}.
  Due to (Mutex\ddAk) and (Coverage\ddAk),
  there can be only two cases:
\begin{enumerate}
  \item $\sigma$ executes some path $\PPath{\ddAkP}^{x}=\PathT{n_{\ddAkP}}{\UAk{}}$ to completion in \ddAkP{}.
  In this case, due to (Coverage\Ck{}) and (Safety),
  some $\CPathN{x}=\PathT{n_{\Ck}}{\UCk{}}$ must be executed to completion on $\sigma$ in \Ck{}.
  In this case, the lemma holds with $\CPath=\CPathN{x}$.




  \item $\sigma$ executes some path $\PPath{\ddAkP}^{x}=\PathT{n_{\ddAkP}}{n^{x}_{\ddAkP}}$
    to completion in \ddAkP{}, where $n^{x}_{\ddAkP}\neq{}\EU_{\ddAkP{}}$
    and $e^{x_v}_{\XkP} = (\XEdgeT{n_{\XkP}}{\PPath{\ddAkP}^{x}; \CPathN{x_v}}{n^{x_v}_{\XkP}}) \in \EP{\XkP}$
    (for $1 \leq v \leq w$) are $w\geq{}1$ edges in \XkP{}, where $n^{x_v}_{\XkP}=(n^{x}_{\ddAkP},n^{x_v}_{\Ck})$.
Because \XkP{} satisfies (Coverage\Ck{}),
$\sigma$ must execute a path $\CPathN{x_v}=\PathT{n_{\Ck}}{n^{x_v}_{\Ck}}$ to completion in \Ck{}, for some $1\leq{}v\leq{}w$.
We show by contradiction that $\forall{1\leq{}v\leq{}w}:{n^{x_v}_{\Ck}=\UCk{}}$ must hold.

Assume $n^{x_v}_{\Ck}\neq{}\UCk{}$.
Let memory access instructions $d_1$, $d_2$, \ldots, $d_k$ exist on path
$\PPath{\ddAkP}^{x}$, such that
$\PPath{\ddAkP}^{x}$ deviates from
$\PPath{\ddAk}^{U}$
on one of these memory access instructions $d_{r}$ ($1\leq{}r\leq{}k$), so
that $\PPath{\ddAk}^U$ transitions to \UAk{}
due to $\varphi{}$ evaluating to
{\tt true} in a check ``{\tt if $\varphi{}$ \Ihalt{\mathscr{U}}}'' in a
\TRule{\Load{\ddAk}} or
\TRule{\Store{\ddAk}} in \ddAk{},
while $\PPath{\ddAkP}^{x}$
continues execution to reach $n^{x}_{\ddAkP}\neq{}\EU_{\ddAkP}$
due to $\varphi{}'$ evaluating to
{\tt false} in a corresponding check ``{\tt if $\varphi{}'$ \Ihalt{\mathscr{U}}}''
in \ddAkP{}.

Let $[p]_{w}$ represent
the addresses being accessed by the memory access
instruction $d_r$.
It must be true that
$\exists{\alpha{}\in{}[p]_{w}:\alpha \in \mathtt{comp}(\il{\ddAk'}{\NS \cup \F \cup \STACK})}$
if $d_r$ is a {\tt load} instruction
and
$\exists{\alpha{}\in{}[p]_{w}:\alpha \in \mathtt{comp}(\il{\ddAk'}{(\NS \setminus \Gro) \cup \Frw \cup \STACK})}$
if $d_r$ is a {\tt store} instruction;
this is because
$\varphi^{\prime}$ evaluates to {\tt false} but $\varphi$ evaluates to {\tt true}
(for \TRule{\Load{\ddAk}} and \TRule{\Store{\ddAk}} instructions).
Because \XkP{} satisfies (MAC), the execution of $\sigma$ starting
at $n_{\Ck{}}$ must cause all addresses in
$[p]_{w}$ to be accessed
before execution can reach $n^{x_v}_{\Ck}$ in \Ck{} (and $n^{x_v}_{\XkP}$ in \XkP{}).
Further, because $\PPath{\ddAkP}^x$ contains a memory access instruction,
due to \cref{lemma:memAccessAllocPreserved},
both $\PPath{\ddAkP}^x$ and $\CPath^x$ cannot modify the address sets of common regions \NS{}.
Thus, during the execution of $\sigma$ starting at $n_{\Ck{}}$,
the $\mathtt{accessIsSafeC}_{\tau,a}$ check must necessarily evaluate to {\tt false}
and the execution must transition to \UCk{}. This is a contradiction, and
so it must be true that $n^{x_v}_{\Ck}=\UCk{}$.
Hence, the lemma holds in this case with
$\CPath{}=\CPathN{x}=\PathT{n_{\Ck}}{\UCk{}}$.
\end{enumerate}

\end{proof}

Using \cref{lemma:CtermAtU},
we enumerate all such paths $\CPath=\PathT{n_{\Ck}}{\UCk{}}$ that
can be executed in \Ck{} if $\PPath{\ddAk}^{U}=\PathT{n_{\ddAk}}{\UAk{}}$
is executed in \ddAk{}
starting at node $n_{\Xk{}}\in{}\NXk{}$.
As described in the proof of \cref{lemma:CtermAtU},
there are only a finite
number of such paths.
For each such path $\CPath$,
we add an
edge $e^{x}_{\Xk}=(\XEdgeT{n_{\Xk}}{\PPath{\ddAk}^{U}; \CPath{}}{(\EU_{\ddAkP}, \UCk{})})$
to \EXk{} if it does not exist already.
(Coverage\Ck{}) thus follows from \cref{lemma:CtermAtU}.
Further, (Coverage\ddAk{}) also holds for \Xk{} because all assembly paths that exist in \XkP{} also exist in \Xk{} and additional paths, only potentially feasible in \ddAk{}, are added.


\end{itemize}
\end{proof}

\subsection{Soundness of Interval Encoding}
\label{app:intervalEncodingProof}


Let the
Hoare triple representation of a proof obligation $\proofObligation{}$
generated by \toolName{}
be $\hoareTriple{pre}{(\APath;\CPath)}{post}$, where
$\APath = \PathT{n_{\ddAk}}{n^t_{\ddAk}}$,
and
either $\CPath = \epsilon$ or
$\CPath = \PathT{n_{\Ck}}{n^t_{\Ck}}$,
$n_{\Xk} = (n_{\ddAk},n_{\Ck}) \in \NNP{\Xk}$ is a non-error node,
$n_{\Xk}^t = (n^t_{\ddAk},n^t_{\Ck}) \in \NXk$,
if $\CPath = \PathT{n_{\Ck}}{n^t_{\Ck}}$, then
$\XEdge = (\XEdgeT{n_{\Xk}}{\APath; \CPath}{n^t_{\Xk}}) \in \EXk$,
and,
$\APath$ and $\CPath$ are I/O-free
execution paths
in \ddAk{} and \Ck{} respectively.

Let $n^0_{\ddAk}, n^1_{\ddAk}, n^2_{\ddAk}, \ldots, n^m_{\ddAk}$
be the nodes on
path $\APath=\PathT{n_{\ddAk}}{n^t_{\ddAk}}$,
such that $n^0_{\ddAk}=n_{\ddAk}$ and $n^m_{\ddAk}=n^t_{\ddAk}$.
Let $\espmin(\APath)$ represent the the
minimum value of $\mathtt{esp}$ observed at any
node $n^j_{\ddAk}$ ($0 \leq j\leq m$) visited
during the execution of path \APath{}.
Similarly, let
$\zlunion(\APath)$ be the union of the values
of set $\il{\ddAk}{\Z{lv}}$ observed at any $n^j_{\ddAk}$ ($0\leq j\leq m$)
visited during \APath{}'s execution.

Let $\HP={\tt comp}(\il{\ddAk}{G\cup\F}\cup\zlunion(\APath)\cup[\espmin(\APath),\ghost{\stkE}])$,
$\CL=[\ghost{\spE}+1_{\bv{32}},\ghost{\stkE}]\setminus \zlunion(\APath)$,
and
$\CS=[\ghost{\spE}+1_{\bv{32}},\ghost{\stkE}]\cap \zlunion(\APath)$.

Let $\proofObligation^{\prime}=\hoareTriple{pre}{(\APath; \CPath)}{post}$ be
obtained by strengthening precondition $pre$ to
$pre^{\prime}=pre
\land (\il{\ddAk}{\heap}=\HP)
\land (\il{\ddAk}{cl}=\CL)
\land (\il{\ddAk}{cs} = \CS)$
in $\proofObligation{}^{\prime}$.
We need to show that $\proofObligation{}\Leftrightarrow{}\proofObligation{}^{\prime}$ holds.

($\Rightarrow$) Proving $\proofObligation{}\Rightarrow{}\proofObligation{}^{\prime}$ is trivial,
as $\proofObligation{}^{\prime}$ requires a stronger
precondition than $\proofObligation{}$ (with everything else identical).

($\Leftarrow$) Assume that $\proofObligation{}^{\prime}$ holds. We are
interested in showing that $\proofObligation{}$ holds.
Assume a machine state $\sigma$ of product program \Xk{}
that satisfies the weaker precondition $pre$, and
executes to
completion over $\APath$
and $\CPath$.
We are interested in showing that $\sigma$ satisfies the postcondition
$post$
after completing the execution.

We define ``error-free execution'' to be the case where the execution on a state
$\sigma$ across $(\APath;\CPath)$ does not end at an error node in \Xk{}.

\begin{lemma}[$\HP$,$\CL$ overapproximate $\heap$,$cl$]
  \label{lemma:HCLlargest}
  $(\il{\ddAk}{\heap}\subseteq{}\HP)\land{}(\il{\ddAk}{cl}\subseteq{}\CL)$ holds on $\sigma{}$ for an error-free execution.
\end{lemma}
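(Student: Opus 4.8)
The plan is to establish the two containments $\il{\ddAk}{\heap}\subseteq\HP$ and $\il{\ddAk}{cl}\subseteq\CL$ separately, relying on the global invariants of \cref{theorem:globalInv,theorem:globalInvX} together with the safety-relaxed semantics. For the heap containment, recall that $\HP={\tt comp}(\il{\ddAk}{G\cup\F}\cup\zlunion(\APath)\cup[\espmin(\APath),\ghost{\stkE}])$, so the goal is equivalent to showing that throughout the error-free execution of $\APath$ no heap address can lie in $\il{\ddAk}{G\cup\F}$, in $\zlunion(\APath)$, or in $[\espmin(\APath),\ghost{\stkE}]$. The first of these follows directly from the (Disjoint regions in \ddAk{}) invariant, which asserts $\neg\Overlap(\il{\ddAk}{\heap},\ldots,\il{\ddAk}{\g},\ldots,\il{\ddAk}{\f},\ldots)$. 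The key additional ingredient is the (Heap subset) invariant of \cref{theorem:globalInv}, namely $\il{\ddAk}{\heap}\subseteq{\tt comp}(\il{\ddAk}{G\cup\F}\cup\ilZv{\ddAk}\cup[\mathtt{esp},\ghost{\stkE}])$.

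First I would fix an arbitrary address $\alpha\in\il{\ddAk}{\heap}$ and argue that $\alpha\notin[\espmin(\APath),\ghost{\stkE}]$ and $\alpha\notin\zlunion(\APath)$. The subtlety is that $\espmin(\APath)$ and $\zlunion(\APath)$ are computed over all intermediate nodes $n^j_{\ddAk}$ visited along $\APath$, not just the endpoints, so I must use the fact that the heap region is not modified along an I/O-free path (the heap address set is fixed after \TRule{\Entry{\ddAk}}, and no I/O-free instruction allocates or deallocates heap). Then for each intermediate node $n^j_{\ddAk}$ with stackpointer value $\mathtt{esp}^j$ and virtual-local set $\il{\ddAk}{\Z{lv}}^j$, the (Heap subset) invariant instantiated at $n^j_{\ddAk}$ gives $\alpha\notin\il{\ddAk}{G\cup\F}\cup\il{\ddAk}{\Z{lv}}^j\cup[\mathtt{esp}^j,\ghost{\stkE}]$. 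Taking the union over all $j$ of the excluded sets $[\mathtt{esp}^j,\ghost{\stkE}]$ yields exactly $[\espmin(\APath),\ghost{\stkE}]$ (since $\ghost{\stkE}$ is constant and the smallest lower endpoint is $\espmin(\APath)$), and the union of the $\il{\ddAk}{\Z{lv}}^j$ is exactly $\zlunion(\APath)$; hence $\alpha\in\HP$.

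For the $cl$ containment I would combine the invariant $\il{\ddAk}{\{cs,cl\}}=[\ghost{\spE}+1,\ghost{\stkE}]$ (so $\il{\ddAk}{cl}\subseteq[\ghost{\spE}+1,\ghost{\stkE}]$) with the disjointness of $cl$ from the virtual-local regions. Since $\CL=[\ghost{\spE}+1,\ghost{\stkE}]\setminus\zlunion(\APath)$, it remains to show $\il{\ddAk}{cl}\cap\zlunion(\APath)=\emptyset$; this follows because at each intermediate node a virtually-allocated region is required to be separate from $cl$ (the \TRule{AllocV} separation constraint places virtual allocations in ${\tt comp}(\il{\ddAk}{\NS\cup\{cv\}})$, and $cl\in\NS$), and again the argument aggregates the per-node disjointness over all nodes visited along $\APath$.

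The main obstacle I anticipate is the aggregation-over-intermediate-nodes step: the global invariants are stated pointwise at each non-error node, but $\HP$, $\CL$, and $\zlunion(\APath)$ are defined as extrema/unions over the whole path, so I must carefully justify that instantiating each invariant at every $n^j_{\ddAk}$ and then unioning the exclusions reconstructs the path-level quantities exactly, while simultaneously relying on the fact that an I/O-free path does not change $\heap$ or $cl$ membership (so the fixed address $\alpha$ remains in the same region at every node). Establishing that this aggregation is tight, and that the safety-relaxed semantics guarantee the virtual-local separation at every step, is where the real work lies; the individual invariant instantiations are routine.
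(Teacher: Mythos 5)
Your proof is correct, but it is organized differently from the paper's. The paper argues by contradiction directly at the level of the path: if $\il{\ddAk}{\heap}\not\subseteq\HP$ or $\il{\ddAk}{cl}\not\subseteq\CL$, then either one of the no-overlap invariants \apred{NoOverlap\Ak{}}, \apred{NoOverlap\Ck{}} in the precondition is already violated (for addresses lying in the start-node stack/virtual regions), or at the moment during \APath{}'s execution when the stackpointer sweeps past the offending address \TRule{Op-esp'}, or a virtual allocation claims it \TRule{AllocV}, error \EW{} would be triggered --- contradicting error-freeness. You instead work forward: you fix $\alpha$ in the (constant) $\heap$ or $cl$ set, instantiate the pointwise global invariants of \cref{theorem:globalInv} --- (Heap subset), ($cs$ and $cl$), and (Disjoint regions in \ddAk{}) --- at \emph{every} node visited along \APath{}, and then union the per-node exclusions, using that the intervals $[\mathtt{esp}^j,\ghost{\stkE}]$ share the upper endpoint $\ghost{\stkE}$ so their union is exactly $[\espmin(\APath),\ghost{\stkE}]$, and that the per-node virtual-local sets union to $\zlunion(\APath)$. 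The two routes rest on the same underlying facts --- \cref{theorem:globalInv}'s (Heap subset) invariant is itself established by induction using precisely the \EW{}-triggering semantics of \TRule{Op-esp'} and \TRule{AllocV} that the paper invokes inline --- so your proof essentially factors the paper's contradiction argument through an already-proven invariant theorem. What your organization buys is a cleaner, non-contradictory argument whose only new content is the aggregation step (which you correctly flag as the real work and justify); what it costs is reliance on \cref{theorem:globalInv} holding at every visited node, which inherits the same entry-node caveat the paper's proof also glosses over (the invariants are stated for non-entry nodes, and a proof obligation whose \APath{} emanates from the product-graph start node needs a separate, if routine, treatment).
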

\begin{proof}
  Recall that $pre\Rightarrow{}\phi_{n_\Xk{}}$.
  If $\il{\ddAk}{\heap}\supset{}\HP$
  or $\il{\ddAk}{cl}\supset{}\CL$,
  then either at least one of
  \apred{NoOverlap\Ak{}} or \apred{NoOverlap\Ck{}} will
  evaluate to {\tt false} in $\phi_{n_{\Xk{}}}$ (and $pre$),
  or during
  the execution of path \APath; error
  $\mathscr{W}$ will be triggered in \ddAk{} because either the allocation
  of stack space through stackpointer decrement
  will overstep $\il{\ddAk}{\{\heap,cl\}}$ \TRule{Op-esp'},
  or the virtual allocation
  of a local variable will overstep $\il{\ddAk}{\{\heap,cl\}}$ \TRule{AllocV}.
  However, by assumption, $\sigma{}$ satisfies $pre$ (and $\phi_{n_{\Xk{}}}$) and
  executes
  $\APath$ and $\CPath$ to completion to a non-error node; thus proved by contradiction.
\end{proof}

\begin{lemma}[$\CS$ underapproximates $cs$]
  \label{lemma:CSSmallest}
  $(\il{\ddAk}{cs}\supseteq \CS)$ holds on $\sigma$ for an error-free execution.
\end{lemma}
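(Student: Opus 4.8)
The plan is to deduce this lemma almost entirely from the companion \cref{lemma:HCLlargest} by a complementation argument inside the interval $I = [\ghost{\spE}+1_{\bv{32}}, \ghost{\stkE}]$, rather than reasoning about virtual allocations address by address. The key observation is that both the region pair $(\il{\ddAk}{cs}, \il{\ddAk}{cl})$ and the approximant pair $(\CS, \CL)$ partition this same interval $I$, so an upper bound on $cl$ immediately becomes a lower bound on $cs$.

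First I would record the two partitions. For the regions, the ($cs$ and $cl$) global invariant of \cref{theorem:globalInv} gives $\il{\ddAk}{\{cs,cl\}} = I$, and the (Disjoint regions in \ddAk{}) invariant gives $\neg\Overlap(\il{\ddAk}{cs}, \il{\ddAk}{cl})$; since $cs$ and $cl$ are among the regions whose address sets stay constant throughout execution, both facts are available on $\sigma$ through $pre \Rightarrow \phi_{n_{\Xk}}$. Hence $\il{\ddAk}{cs} = I \setminus \il{\ddAk}{cl}$. For the approximants, the definitions $\CL = I \setminus \zlunion(\APath)$ and $\CS = I \cap \zlunion(\APath)$ give, purely set-theoretically, $\CL \cap \CS = \emptyset$ and $\CL \cup \CS = I$, so that $\CS = I \setminus \CL$. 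Note that this holds regardless of how $\zlunion(\APath)$ sits relative to $I$, which is precisely why no per-node argument about where each virtually-allocated interval lies is needed at this stage.

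Next I would invoke \cref{lemma:HCLlargest}, which for an error-free execution yields $\il{\ddAk}{cl} \subseteq \CL$. Taking relative complements within $I$ reverses the inclusion, $I \setminus \CL \subseteq I \setminus \il{\ddAk}{cl}$, and combining this with the two partition identities gives
\[
\CS = I \setminus \CL \subseteq I \setminus \il{\ddAk}{cl} = \il{\ddAk}{cs},
\]
which is exactly the claim.

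The only delicate points are (i) confirming that the region facts $\il{\ddAk}{\{cs,cl\}} = I$ and $\neg\Overlap(\il{\ddAk}{cs},\il{\ddAk}{cl})$ are legitimately usable at the state $\sigma$ — this rests on $cs$ and $cl$ being constant regions, so their relationship is fixed once established at procedure entry and is carried forward by the node invariant $\phi_{n_{\Xk}}$ — and (ii) checking that the error-freeness hypothesis under which \cref{lemma:HCLlargest} is stated matches the one assumed here, namely that execution of $(\APath;\CPath)$ from $\sigma$ reaches no error node; this holds by construction. The genuinely hard content about virtual allocations, that $\zlunion(\APath)$ cannot intrude into $cl$, is entirely absorbed into \cref{lemma:HCLlargest}; once that lemma is in hand, \cref{lemma:CSSmallest} is a one-line consequence of the fact that $cl/cs$ and $\CL/\CS$ are complementary decompositions of the very same interval $I$.
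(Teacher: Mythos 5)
Your proof is correct and takes essentially the same route as the paper's: the paper likewise derives the claim from \cref{lemma:HCLlargest} together with the identity $\il{\ddAk}{cs}=[\ghost{\spE}+1_{\bv{32}},\ghost{\stkE}]\setminus\il{\ddAk}{cl}$, which it cites directly from the assignment in \TRule{\Entry{\Ak}} (the $cs$ and $cl$ sets being constant thereafter), whereas you reassemble that identity from the global partition and disjointness invariants. The complementation-within-the-interval step is identical in both arguments.
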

\begin{proof}
Follows from
\cref{lemma:HCLlargest} and $\il{\ddAk}{cs}=[\ghost{\spE}+1_{\bv{32}}, \ghost{\stkE}] \setminus \il{\ddAk}{cl}$ \TRule{\Entry{\Ak}}.
\end{proof}

\begin{lemma}[$\HP$ and $\CL$ borrow from the \free{} and $cs$ regions]
  \label{lemma:HminusHpIsFree}

  The following hold on $\sigma$ for an error-free execution.
  \begin{enumerate}
    \item $(\HP\setminus{}\il{\ddAk}{\heap})\subseteq{}\il{\ddAk}{\free{}}\subseteq{}\il{\Ck}{\free}$
    \item $(\CL\setminus{}\il{\ddAk}{cl})\subseteq{}\il{\ddAk}{cs}\subseteq{}\il{\Ck}{\free}$
  \end{enumerate}
\end{lemma}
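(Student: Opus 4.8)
The plan is to prove the two inclusion chains independently. In each case I would first relate the interval approximation to $\ddAk{}$'s own \free{}/$cs$ region using the global invariants of \cref{theorem:globalInv,theorem:globalInvX}, and then cross over to $\Ck{}$ using \pred{AllocEq}, i.e.\ $\il{\Ck{}}{r}=\il{\ddAk{}}{r}$ for all $r\in\NS{}$, which is implied by (Equivalence). All reasoning is carried out at the source node $n_{\ddAk{}}$ of $\APath{}$ in the state $\sigma{}$, where the global invariants hold because the execution is error-free; I would use that $\espmin{}(\APath{})\le_u\mathtt{esp}$ and $\ilZv{\ddAk{}}\subseteq\zlunion{}(\APath{})$ at $n_{\ddAk{}}$, so that passing from the start-node values to the over-the-path quantities only enlarges the sets subtracted inside $\HP{}$ and $\zlunion{}$ --- exactly the direction the two inclusions need.

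For the first chain I would begin with $\il{\ddAk{}}{\free{}}\subseteq\il{\Ck{}}{\free{}}$, which follows immediately from the smallest-semantics identity $\il{P}{\free{}}={\tt comp}(\il{P}{\NS{}\cup\F{}\cup\STACK{}})$ together with $\il{\Ck{}}{\F{}\cup\STACK{}}=\emptyset{}$ and \pred{AllocEq}: $\il{\ddAk{}}{\free{}}={\tt comp}(\il{\ddAk{}}{\NS{}}\cup\il{\ddAk{}}{\F{}\cup\STACK{}})\subseteq{\tt comp}(\il{\ddAk{}}{\NS{}})={\tt comp}(\il{\Ck{}}{\NS{}})=\il{\Ck{}}{\free{}}$. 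For $\HP{}\setminus\il{\ddAk{}}{\heap{}}\subseteq\il{\ddAk{}}{\free{}}$ I would show that an address $\alpha$ in this set avoids every region of $\NS{}\cup\F{}\cup\STACK{}$: it avoids $G\cup\F{}$ and $\heap{}$ by the construction of $\HP{}$ and the subtraction of $\il{\ddAk{}}{\heap{}}$; it avoids the virtually-allocated locals since $\HP{}$ subtracts $\zlunion{}(\APath{})\supseteq\ilZv{\ddAk{}}$; and it avoids $\{\stk{},cs,cl\}\cup\Y{}$ and the stack-allocated locals because $[\espmin{}(\APath{}),\ghost{\stkE{}}]\supseteq[\mathtt{esp},\ghost{\spE{}}]\cup[\ghost{\spE{}}+1,\ghost{\stkE{}}]$, which by (Stack bounds) and ($cs$ and $cl$) of \cref{theorem:globalInv} covers precisely those regions.

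For the second chain the inclusion $(\CL{}\setminus\il{\ddAk{}}{cl})\subseteq\il{\ddAk{}}{cs}$ is immediate from ($cs$ and $cl$) and (Disjoint regions in \ddAk{}): since $[\ghost{\spE{}}+1,\ghost{\stkE{}}]=\il{\ddAk{}}{cs}\sqcup\il{\ddAk{}}{cl}$, any $\alpha\in\CL{}=[\ghost{\spE{}}+1,\ghost{\stkE{}}]\setminus\zlunion{}(\APath{})$ with $\alpha\notin\il{\ddAk{}}{cl}$ lies in $\il{\ddAk{}}{cs}$. For the crossing into $\il{\Ck{}}{\free{}}$ I would use that $\CL{}$ has already removed $\zlunion{}(\APath{})\supseteq\ilZv{\ddAk{}}$, so the borrowed addresses are \emph{non-virtual} $cs$ addresses; such an $\alpha$ lies in $[\ghost{\spE{}}+1,\ghost{\stkE{}}]$, hence is disjoint from the stack-allocated locals $\il{\ddAk{}}{\Z{}}\setminus\ilZv{\ddAk{}}\subseteq[\mathtt{esp},\ghost{\spE{}}]$ by (Stack bounds), disjoint from $G\cup\Y{}\cup\{\heap{}\}$ and from $\il{\ddAk{}}{cl}$, and not virtually allocated, so by \pred{AllocEq} it belongs to no region of $\il{\Ck{}}{\NS{}}$ and is therefore in $\il{\Ck{}}{\free{}}$.

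The main obstacle is the interaction with virtual allocation. Because an $\mathtt{alloc}_v$ region may be placed inside the $cs$ (or $\stk{}$) interval while the semantics never remove it from $\il{\ddAk{}}{cs}$, the set $\il{\ddAk{}}{cs}$ genuinely overlaps $\ilZv{\ddAk{}}$, and by \pred{AllocEq} every such overlap address lies in $\il{\Ck{}}{\Z{}}\subseteq\il{\Ck{}}{\NS{}}$ rather than in $\il{\Ck{}}{\free{}}$. Consequently the crossing to $\il{\Ck{}}{\free{}}$ holds only for the non-virtual part of $cs$, and the proof must rely on the fact that the addresses actually borrowed by $\CL{}$ (and symmetrically by $\HP{}$) are exactly those outside $\zlunion{}(\APath{})$; the subtraction of $\zlunion{}$ in the definitions of $\HP{}$ and $\CL{}$ is precisely what sidesteps this overlap. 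I expect the routine-but-fiddly bookkeeping --- verifying that $\espmin{}$ and $\zlunion{}$ dominate their start-node counterparts so the invariants at $n_{\ddAk{}}$ apply, and that the unsigned intervals compose without wraparound --- to be straightforward, while establishing this virtual-region exclusion cleanly is the conceptual crux. \cref{lemma:HCLlargest,lemma:CSSmallest} supply the complementary over/under-approximation bounds that I would invoke wherever the reverse containments are needed.
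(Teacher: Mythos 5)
Your proof is correct and, at bottom, takes the same route as the paper's: the paper disposes of this lemma in a single sentence (``the proof follows from the definition of \HP{} and \CL{}, as these sets are not allowed to overlap with $\il{\ddAk}{\NS\cup\F\cup\STACK}$ or $\il{\Ck}{\NS\cup\F\cup\STACK}$''), and your chain-1 argument together with the left inclusion of chain 2 is exactly the careful version of that sentence, assembled from (Stack bounds), ($cs$ and $cl$), (Disjoint regions in \ddAk{}) of \cref{theorem:globalInv}, and \pred{AllocEq}. What you add --- and you are right about it --- is that the literal middle-to-right inclusion of chain 2, $\il{\ddAk}{cs}\subseteq\il{\Ck}{\free}$, is too strong: \TRule{AllocV} only forces a virtually-allocated interval into $\mathtt{comp}(\il{\ddAk}{\NS})$, so it may land inside $cs$; the $cs$ address set is constant after entry, and \apred{NoOverlap\Ak{}} constrains only the $\ilzs{\ddAk}{\z}$ components, so $\il{\ddAk}{cs}\cap\ilZv{\ddAk}\neq\emptyset$ is reachable; and by \pred{AllocEq} every address in that intersection lies in $\il{\Ck}{\Z{}}$, hence outside $\il{\Ck}{\free}$. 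The paper's one-line proof does not engage with this subtlety at all (read literally, its non-overlap claim is even off about \CL{}, which by ($cs$ and $cl$) sits inside $\il{\ddAk}{\{cs,cl\}}$ and therefore does overlap $\il{\ddAk}{\NS\cup\F\cup\STACK}$). Your repaired formulation --- the crossing into $\il{\Ck}{\free}$ holds for $\CL\setminus\il{\ddAk}{cl}$ precisely because \CL{} subtracts $\zlunion(\APath)\supseteq\ilZv{\ddAk}$ --- is exactly the statement the lemma is actually used for: the construction of $\sigma'$ transfers only addresses in $(\HP\cup\CL)\setminus\il{\ddAk}{\{\heap,cl\}}$, so \cref{theorem:rewriteHeapCl} and the surrounding lemmas go through unchanged with your weaker (and true) version. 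In short: same approach as the paper, executed more carefully, with a legitimate correction to how the second chain should be stated.
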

\begin{proof}
  The proof follows from the definition of $\HP$ and $\CL$, as these sets are not allowed to overlap
  with $\il{\ddAk}{{\NS}\cup\F\cup\STACK}$ or $\il{\Ck}{{\NS}\cup\F\cup\STACK}$.
\end{proof}

Construct a state $\sigma'$ that is identical to $\sigma$ with the
following modifications made in sequence:
\begin{enumerate}
  \item The region identified by addresses (that would
    belong to region \free{} in \Ck{} by \cref{lemma:HminusHpIsFree})
    $(\HP\cup{}\CL)\setminus{}\il{\ddAk}{\{\heap,cl\}}$ in $\sigma'$'s $M_\Ck{}$
    is updated through $M_{\Ck{}} := \updM{(\HP\cup{}\CL)\setminus{}\il{\ddAk}{\{\heap,cl\}}}{M_{\Ck{}}}{M_{\ddAk{}}}$.
  \item The address sets
    $\il{\ddAk}{\heap}$,
    $\il{\ddAk}{cl}$,
    $\il{\Ck}{\heap}$, and
    $\il{\Ck}{cl}$
    are expanded
    and the address set \il{\ddAk}{cs} is shrunk
    so that
    $\il{\ddAk}{\heap}=\il{\Ck}{\heap}=\HP$,
    $\il{\ddAk}{cl}=\il{\Ck}{cl}=\CL$,
    and
    $\il{\ddAk}{cs}=\CS$
    (this involves
    the transfer of addresses from the {\tt free} region (\cref{lemma:HminusHpIsFree})
    to $\heap$ and $cl$ regions in \Ck{}, and from the {\tt free} and $cs$ regions
    to $\heap$ and $cl$ regions respectively in \ddAk{}).
\end{enumerate}
The constructed state $\sigma'$ thus satisfies
the stronger precondition $pre^{\prime}$.

Let $\il{\sigma{}}{\heap}$ ($\il{\sigma'}{\heap}$),
$\il{\sigma{}}{cl}$ ($\il{\sigma'}{cl}$),
$\il{\sigma{}}{cs}$ ($\il{\sigma'}{cs}$), and
$\il{\sigma{}}{\mathtt{free}}$ ($\il{\sigma'}{\mathtt{free}}$)
denote the values of
$\il{\ddAk{}}{\heap}$,
$\il{\ddAk{}}{cl}$,
$\il{\ddAk{}}{cs}$, and
$\il{\ddAk{}}{\mathtt{free}}$ in state $\sigma$ ($\sigma'$) respectively.
Similarly, let $M^{\sigma}_{\ddAk}$ ($M^{\sigma}_{\Ck}$)
and $M^{\sigma'}_{\ddAk}$ ($M^{\sigma'}_{\Ck}$) represent the state of
procedure \ddAk{}'s (\Ck{}'s) memory $M_{\ddAk}$ ($M_{\Ck{}}$)
in machine states $\sigma$ and $\sigma'$ respectively.

To relate $\sigma$ and $\sigma'$, we define relation
$sim(\sigma{},\sigma{}')$
as the conjunction
of the following conditions:
\begin{enumerate}
  \item (\heap{} subset in $\sigma{}$) $\il{\sigma{}}{\heap} \subseteq{}\il{\sigma{}'}{\heap}$.
  \item ($cl$ subset in $\sigma{}$) $\il{\sigma{}}{cl} \subseteq{}\il{\sigma{}'}{cl}$.
  \item ($cs$ superset in $\sigma{}$) $\il{\sigma{}}{cs} \supseteq{}\il{\sigma{}'}{cs}$.
  \item ($\mathtt{free}$ superset in $\sigma{}$) $\il{\sigma{}}{\mathtt{free}} \supseteq{}\il{\sigma{}'}{\mathtt{free}}$.
  \item (\ddAk{}'s memory states are equal)
    $M^{\sigma}_{\ddAk} = M^{\sigma{}'}_{\ddAk}$
  \item (\Ck{}'s memory states are equal except at the updated regions)
    $\prjMEq{\mathtt{comp}(\il{\sigma'}{\{\heap,cl\}}\setminus{}\il{\sigma{}}{\{\heap,cl\}})}{M^{\sigma{}}_{\Ck{}}}{M^{\sigma{}'}_{\Ck{}}}$.
  \item The remaining state elements have equal values in $\sigma$ and $\sigma{}'$.
\end{enumerate}
By construction, $sim(\sigma{},\sigma')$ holds.

\begin{lemma}[$sim(\sigma,\sigma')$ is preserved for error-free execution across all non-I/O edges in $\mathcal{E}_{\ddAk{}}$]
  \label{lemma:singleAsmInstruction}
  If a non-I/O edge $e_{\ddAk{}}\in{}\mathcal{E}_{\ddAk{}}$
  is executed
  on both machine states $\sigma$ and $\sigma'$, and
  if $sim(\sigma,\sigma')$ holds before the execution,
  and if the execution on $\sigma$ completes without error,
  then there exists a sequence of non-deterministic choices during the
  execution on $\sigma'$ such that the execution is error-free
  and
  $sim(\sigma,\sigma')$ holds at the end of both error-free executions.
\end{lemma}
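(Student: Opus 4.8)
The plan is to prove \cref{lemma:singleAsmInstruction} by a case analysis on the graph instruction labelling $e_{\ddAk}$, exploiting that $sim(\sigma,\sigma')$ leaves the two states differing \emph{only} in how the stack-adjacent free space is partitioned among the $\heap$, $cl$, $cs$, and $\free{}$ regions of \ddAk{}; by construction $M_{\ddAk}$, all registers and temporaries, and every address set other than $\il{\ddAk}{\heap},\il{\ddAk}{cl},\il{\ddAk}{cs},\il{\ddAk}{\free}$ coincide in $\sigma$ and $\sigma'$ (in particular $\ilZv{\ddAk}$, $\il{\ddAk}{\F}$, $\il{\ddAk}{\stk}$, $\mathtt{esp}$, and $\ghost{\stkE}$ agree), and since $e_{\ddAk}$ is an assembly edge it never touches $M_{\Ck}$ or any \Ck{} address set. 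Because the lemma is applied along the I/O-free path $\APath$ of \cref{theorem:rewriteHeapCl}, $e_{\ddAk}$ comes from the translation of a register operation \TRule{Op-Nesp}, a $\mathtt{load}$ \TRule{\Load{\ddAk}}, a $\mathtt{store}$ \TRule{\Store{\ddAk}}, or a stackpointer update \TRule{Op-esp'}; the $\mathtt{(de)alloc}_{s,v}$ and $\mathtt{call}$ instructions are I/O and do not occur. On every non-deterministic step taken on $\sigma'$ I will simply mirror the choices made on $\sigma$.

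First I would dispatch the easy cases. For register operations and for any assignment or guard whose expressions mention none of $\heap,cl,cs,\free{}$, both the branch taken and the resulting update are determined by state elements on which $\sigma$ and $\sigma'$ already agree, so $sim$ is preserved immediately. This disposes of \TRule{\Load{\ddAk}} and \TRule{\Store{\ddAk}}: under the safety-relaxed semantics their fault guards are $\varphi_l'=\Overlap([p]_{w},(\ilZv{\ddAk})\setminus(\il{\ddAk}{\F}\cup[\mathtt{esp},\ghost{\stkE}]))$ and $\varphi_s'=\Overlap([p]_{w},(\ilZv{\ddAk})\setminus(\il{\ddAk}{\Frw}\cup[\mathtt{esp},\ghost{\stkE}]))$, which refer only to sets that agree in the two states; hence the error-free branch is taken in both, the ensuing write to $M_{\ddAk}$ stores the same value at the same address into equal memories, and $M^{\sigma}_{\ddAk}=M^{\sigma'}_{\ddAk}$ is maintained. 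The deallocation (pop) branch of \TRule{Op-esp'} guards only on $\il{\ddAk}{\stk}$ and mutates only $\il{\ddAk}{\stk}$, so it too is immediate.

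The one case that genuinely exercises the shape of $\sigma'$ is the allocation (push) branch of \TRule{Op-esp'}, whose guard $\intervalContainedInAddrSet{t}{\mathtt{esp}-1_{\bv{32}}}{\il{\ddAk}{\free}\cup(\ilZv{\ddAk}\setminus\il{\ddAk}{\F})}$ mentions $\il{\ddAk}{\free}$, which differs between $\sigma$ and $\sigma'$. The key step is to show that the freshly stacked interval is disjoint from the addresses reclassified while building $\sigma'$, i.e.\ from $R=(\HP\cup\CL)\setminus\il{\sigma}{\{\heap,cl\}}$. Since the destination of $e_{\ddAk}$ is visited during $\APath$ we have $t\ge_u\espmin(\APath)$, and by \apred{StkBd} the interval $[t,\mathtt{esp}-1_{\bv{32}}]$ lies in the stack frame $[\espmin(\APath),\ghost{\spE}]$. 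By definition $\HP=\mathtt{comp}(\il{\ddAk}{G\cup\F}\cup\zlunion(\APath)\cup[\espmin(\APath),\ghost{\stkE}])$ excludes $[\espmin(\APath),\ghost{\stkE}]$, while $\CL=[\ghost{\spE}+1_{\bv{32}},\ghost{\stkE}]\setminus\zlunion(\APath)$ lies strictly above $\ghost{\spE}$; hence $R\cap[t,\mathtt{esp}-1_{\bv{32}}]=\emptyset$ and therefore $[t,\mathtt{esp}-1_{\bv{32}}]\cap\il{\sigma}{\free}=[t,\mathtt{esp}-1_{\bv{32}}]\cap\il{\sigma'}{\free}$. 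The guard thus holds on $\sigma'$ exactly when it holds on $\sigma$, so the error-free push is available; both executions then move the same interval from $\free{}$ into $\il{\ddAk}{\stk}$ and initialize it with $\theta$, and mirroring the $\theta$-values keeps $M^{\sigma}_{\ddAk}=M^{\sigma'}_{\ddAk}$ and the $\free{}$-superset clause of $sim$.

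I expect this push case to be the main obstacle, since it is the only point at which the region-repartitioning in $\sigma'$ could, a priori, alter control flow; the crux is confining the allocated interval to $[\espmin(\APath),\ghost{\spE}]$ and invoking the path-global definitions of $\HP$ and $\CL$ (together with the global invariants \apred{StkBd} and \apred{$cs$Bd} from \cref{theorem:globalInv}) to certify disjointness from $R$. The remaining $sim$ clauses—including $cs$, $M_{\Ck}$, and the other \ddAk{} address sets—concern elements left untouched by a single non-I/O assembly edge and are preserved by inspection.
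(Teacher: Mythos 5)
Your handling of the cases you do consider is correct and follows the same route as the paper: for \TRule{\Load{\ddAk}} and \TRule{\Store{\ddAk}} you rightly exploit that the safety-relaxed guards mention only state on which $\sigma$ and $\sigma'$ agree, and for the push branch of \TRule{Op-esp'} your disjointness argument (the pushed interval lies inside $[\espmin(\APath),\ghost{\spE}]$, which is disjoint from $\HP\cup\CL$ by their definitions) is exactly the paper's reasoning. The problem is the premise on which you base the case split. You assert that, because $\APath$ is I/O-free, $e_{\ddAk}$ can only arise from \TRule{Op-Nesp}, \TRule{\Load{\ddAk}}, \TRule{\Store{\ddAk}}, or \TRule{Op-esp'}, the $\mathtt{(de)alloc}_{s,v}$ and {\tt call} instructions being ``I/O''. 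This is false: those translations are not single edges but sequences in which {\tt rd}/{\tt wr} edges are interleaved with non-I/O edges (e.g., in \TRule{AllocS} the guard against $\il{\dAk}{\stk}$ and the address-set transfer sit strictly between the two {\tt wr} edges; in \TRule{AllocV} the choice of $v$, the WF check, and the update of $\ilzv{\ddAk}{\zl}$ do as well; the entry translation \TRule{\Entry{\ddAk}} similarly mixes {\tt rd} edges with the $\mathtt{addrSetsAreWF}$ and $\mathtt{stkIsWF}$ checks). Precisely these non-I/O segments form I/O-free paths between cut points, so they do generate proof obligations to which the lemma is applied, and your proposal leaves them unproven.

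Most of the omitted cases are easy and would go through by your ``easy case'' argument --- the non-I/O edges of \TRule{AllocS}, \TRule{DeallocS}, \TRule{DeallocV}, \TRule{\Call{\dAk}}, and \TRule{\Ret{\Ak}} reference only sets on which $\sigma$ and $\sigma'$ agree (under the callers'-virtual-smallest semantics $cv$ is gone and $\stk$, $\Z$, $\Y$, registers, and $M_{\ddAk}$ coincide). But two of them are genuinely nontrivial and are treated as separate cases in the paper. First, the WF check in \TRule{AllocV} tests the chosen interval against $\mathtt{comp}(\il{\ddAk}{\NS})$, which mentions $\il{\ddAk}{\heap}$ and $\il{\ddAk}{cl}$, the very sets that differ between $\sigma$ and $\sigma'$; here one must argue that $\HP\cup\CL$ is disjoint from $\zlunion(\APath)$, which contains $[v]_w$. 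Second, the non-I/O edges of \TRule{\Entry{\ddAk}} contain overlap checks ($\Overlap$ of $\heap$ and $cl$ with globals, arguments, and $[\mathtt{esp},\ghost{\stkE}]$) that again reference the repartitioned regions; here one needs that $\HP$ avoids $\il{\ddAk}{G\cup\F}$ and $[\espmin(\APath),\ghost{\stkE}]$ while $\CL\subseteq[\ghost{\spE}+1_{\bv{32}},\ghost{\stkE}]$. Both follow by the same technique you already developed for the push case, so the gap is repairable, but as written your case analysis silently excludes edges for which the conclusion is not immediate.
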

\begin{proof}
  For each non-I/O \ddAk{} instruction that does not refer to the
  $\{\heap,cl,cs,\mathtt{free}\}$ regions
  (\TRule{Op-Nesp},\TRule{AllocS}, \TRule{DeallocS},
  \TRule{\Call{\dAk}},
  \TRule{\Ret{\Ak}},
  \TRule{DeallocV}),
the execution will have identical behaviour on
both $\sigma$ and $\sigma'$, as identical values will be observed in
$\sigma$ and $\sigma'$. Thus, if an execution on $\sigma'$ makes
the same non-deterministic choice as the execution on $\sigma{}$, the execution
on $\sigma'$ will complete without error and $sim(\sigma,\sigma')$ will
hold at the end of both executions.

We consider each remaining non-I/O instruction in \ddAk{} below:
\begin{itemize}
  \item \TRule{\Entry{\ddAk}}.
  Consider the overlap conditions
  $\Upsilon{}_1=\Overlap(\il{\ddAk}{\heap},\il{\ddAk}{cl},\ldots,\ii{\ddAk}{\g}, \ldots, \il{\ddAk}{\f},\ldots,\ii{\ddAk}{\y},\ldots, \il{\ddAk}{\yv})$ (due to $\neg{}\mathtt{addrSetsAreWF}$),
  $\Upsilon{}_2=\Overlap([\mathtt{esp},\mathtt{esp}+3_{\bv{32}}], \il{\ddAk}{\NS{}\cup \F})$,
  $\Upsilon{}_3=\Overlap([\ghost{\spE}+1_{\bv{32}},\ghost{\stkE}], \il{\ddAk}{\{\heap\} \cup G \cup \F})$,
  and
  $\Upsilon{}_4=\Overlap(\il{\ddAk}{cl}, \mathtt{comp}([\ghost{\spE}+1_{\bv{32}}, \ghost{\stkE}]))$ (due to $\mathtt{stkIsWF}$).
During an execution
on $\sigma$, all four conditions must evaluate to {\tt false}, as
we assume an error-free execution on $\sigma$.
For the same non-deterministic
choices made in both executions (over $\sigma$ and $\sigma'$), by
the definitions of $\HP$ and $\CL$, $\Upsilon{}_1$, $\Upsilon{}_2$, $\Upsilon{}_3$, and $\Upsilon{}_4$
will also evaluate
to {\tt false} for an execution on $\sigma'$
--- recall that \HP{} cannot overlap with $[\mathtt{esp},\ghost{\stkE}]$ (which
includes the arguments)
and global variable regions
(due to \cref{lemma:HminusHpIsFree});
and
$\CL$ is a subset of $[\ghost{\spE}+1_{\bv{32}}, \ghost{\stkE}]$ (by definition).
Further, because all other
state elements observed during the execution of the non-I/O edges in \TRule{\Entry{\ddAk}}
are identical in both $\sigma$ and $\sigma'$, $sim(\sigma,\sigma')$ will
hold at the end of error-free executions.

\item \TRule{Op-esp}.
  The negated subset
check $\Upsilon{}=\neg{}([t, \mathtt{esp}-1_{\bv{32}}]\subseteq{}\il{\ddAk}{\mathtt{free}}\cup{}\ilZv{\ddAk})$
(due to $\neg{}\intervalContainedInAddrSet{t}{\allowbreak \mathtt{esp}-1_{\bv{32}}}{\il{\ddAk}{\mathtt{free}}\cup{}\ilZv{\ddAk}}$)
depends (indirectly) on the addresses of the set $\il{\ddAk}{\{\heap,cl\}}$
(as {\tt free} is defined as complement of the allocated region).
The execution on $\sigma$ must evaluate $\Upsilon{}$
to {\tt false} as we assume an error-free execution.
By the definitions of $\HP$ and $\CL$,
for the same non-deterministic choices made in both executions (over $\sigma$ and $\sigma'$),
$\Upsilon{}$ will
also evaluate to {\tt false} for an execution on $\sigma'$ --- recall that $(\HP\cup{}\CL)$
cannot overlap with $[\espmin{}(\APath),\ghost{\spE{}}]$, and the latter includes 
$[t, \mathtt{esp}-1_{\bv{32}}]$.
All other state elements observed in the other instructions of \TRule{Op-esp}
are identical in both $\sigma$, $\sigma'$ and $sim(\sigma,\sigma')$ will hold
at the end of error-free executions.

\item \TRule{AllocV}.
Consider the negated subset
check $\Upsilon{}=\neg{}([v]_{w}\subseteq{}\il{\ddAk}{\mathtt{comp}(\il{\ddAk}{\NS})})$
(due to $\neg{}\intervalContainedInAddrSetAndAligned{v}{v+w-1_{\bv{32}}}{\il{\ddAk}{\mathtt{comp}(\il{\ddAk}{\NS})}}{a}$).
The execution on $\sigma$ must evaluate $\Upsilon{}$
to {\tt false} as we assume an error-free execution.
By the definitions of $\HP$ and $\CL$,
for the same non-deterministic choices made in both executions (over $\sigma$ and $\sigma')$,
$\Upsilon{}$ will
also evaluate to {\tt false} for an execution on $\sigma'$ --- recall that $(\HP\cup{}\CL)$
cannot overlap with
$\zlunion{}(\xi_{\ddAk})$,
and the latter includes the interval $[v]_{w}$.
All other state elements observed in the other instructions of \TRule{AllocV}
are identical in both $\sigma$, $\sigma'$
and $sim(\sigma,\sigma')$ will hold at the end of
error-free executions.

\item \TRule{\Load{\ddAk}} and \TRule{\Store{\ddAk}}.
The overlap checks,
$\Overlap([p]_{w}, (\ilZv{\ddAk}) \setminus (\il{\ddAk}{\F} \cup [\mathtt{esp},\ghost{\stkE}]))$
for \TRule{\Load{\ddAk}}
and
$\Overlap([p]_{w}, (\ilZv{\ddAk}) \setminus (\il{\ddAk}{\Frw} \cup [\mathtt{esp},\ghost{\stkE}]))$
for \TRule{\Store{\ddAk}},
in the modified semantics
of \TRule{\Load{\ddAk}} and \TRule{\Store{\ddAk}}
will evaluate to {\tt false} for $\sigma$
due to the assumption
of error-free execution.
As these checks do not refer to the potentially
modified regions $\{\heap,cl,cs,\mathtt{free}\}$,
$\sigma'$ must also evaluate the check to {\tt false}
(for the same sequence of non-deterministic choices).
Notice that this reasoning relies on the safety-relaxed semantics,
and would not hold on the original semantics.
All other state elements observed in the other instructions of \TRule{\Load{\ddAk}}
and \TRule{\Store{\ddAk}} are identical in both $\sigma$, $\sigma'$
and $sim(\sigma,\sigma')$ will hold at the end of
error-free executions.
\end{itemize}
\end{proof}

Recall that the \toolName{} algorithm
populates the deterministic choice map \DXk{}
such that the result of the {\em choose} instruction (\Ichoose{\bv{32}})
for $\alpha_b$ in an
{\tt alloc} instruction in \CPath{}
matches the address $v$ in an $\mathtt{alloc}_{s,v}$ instruction in \APath{}
and
the result of the {\em choose} instruction for memory contents (\IchooseM{})
of the freshly allocated interval $[\alpha_b, \alpha_e]$
matches the memory contents of the interval $[v]_w$
(in the {\tt alloc} and $\mathtt{alloc}_{s,v}$ instructions respectively).
We use this fact in the following theorem on the execution of \CPathD{}.

\begin{lemma}[$sim(\sigma,\sigma')$ is preserved for error-free execution across all non-I/O edges in $\mathcal{E}_{\Ck{}}$]
\label{lemma:singleCInstruction}
If a non-I/O edge $e_{\Ck{}} \in \mathcal{E}_{\Ck}$ in the path \CPathD{}
is executed
on both machine states $\sigma$ and $\sigma'$, and
if $sim(\sigma,\sigma')$ holds before the execution,
and if the execution on $\sigma$, with non-deterministic choices determinized by \DXk, completes without error,
then, for the same sequence of non-deterministic choices, the
execution on $\sigma'$ completes without error
and
$sim(\sigma,\sigma')$ holds at the end of both error-free executions.
\end{lemma}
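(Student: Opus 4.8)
The plan is to prove \cref{lemma:singleCInstruction} by the same structure as \cref{lemma:singleAsmInstruction}: a case analysis over the non-I/O instruction labelling the edge $e_{\Ck}$ in \CPathD{}, showing in each case that if the execution on $\sigma$ terminates without error, then the execution on $\sigma'$ under the \emph{same} determinized non-deterministic choices also terminates without error and re-establishes $sim(\sigma,\sigma')$. The key observation I would lean on throughout is that $\sigma$ and $\sigma'$ differ only in the $\heap$, $cl$, $cs$, and $\free$ regions of \Ck{} (conditions 1--4 of $sim$), that \ddAk{}'s memory is identical (condition 5), and that \Ck{}'s memory agrees everywhere except on the freshly borrowed addresses $\il{\sigma'}{\{\heap,cl\}}\setminus\il{\sigma}{\{\heap,cl\}}$ (condition 6). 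So the only instructions that can behave differently are those whose edge conditions or state updates touch these four regions.

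First I would dispatch the instructions that never refer to $\{\heap,cl,cs,\free\}$ --- constant assignments (\TRule{AssignConst}), arithmetic operations (\TRule{Op}), deallocation (\TRule{Dealloc}), virtual-pointer setup (\TRule{VaStartPtr}), and return (\TRule{\Ret{\Ck}}) --- observing that they read and write only state elements that are equal in $\sigma$ and $\sigma'$, so identical behaviour and preservation of $sim$ are immediate. The substantive cases are the three instructions that do interact with the borrowed regions: \TRule{\Entry{\Ck}}, \TRule{\Load{\Ck}}, \TRule{\Store{\Ck}}, and \TRule{Alloc}. For \TRule{\Entry{\Ck}} the argument mirrors the \ddAk{} entry case: the $\neg\mathtt{addrSetsAreWF}$ overlap check must evaluate to {\tt false} on $\sigma$ by the assumed error-freedom, and by the definitions of $\HP$ and $\CL$ (\cref{lemma:HminusHpIsFree}) the expanded $\heap$ and $cl$ sets in $\sigma'$ still cannot overlap the global/parameter/stack regions, so the check stays {\tt false}.

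The delicate cases are \TRule{\Load{\Ck}}, \TRule{\Store{\Ck}}, and \TRule{Alloc}. For the load/store, the safety check $\mathtt{accessIsSafeC}_{\tau,a}(p,\il{\Ck}{\BasedOn{p}})$ requires the access interval to lie inside the regions $p$ is based on; when $\BasedOn{p}$ includes $\heap$ or $cl$, the accessed interval lies in $\il{\sigma}{\{\heap,cl\}}\subseteq\il{\sigma'}{\{\heap,cl\}}$, so the check can only become \emph{easier} to satisfy in $\sigma'$, and the memory contents read or written at such addresses agree because condition 6 of $sim$ guarantees $M_{\Ck}$ agreement on the original (non-borrowed) addresses while the borrowed addresses were seeded from $M_{\ddAk}=M^{\sigma}_{\ddAk}=M^{\sigma'}_{\ddAk}$ during the construction of $\sigma'$ --- a write preserves condition 6 because it touches only addresses inside the based-on region. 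For \TRule{Alloc}, the separation check uses $\il{\Ck}{\free}$, and since $\il{\sigma'}{\free}\subseteq\il{\sigma}{\free}$, I must argue that the determinized address $v$ (supplied by \DXk{} and chosen to match the $\mathtt{alloc}_{s,v}$ address in \APath{}) still lands in $\il{\sigma'}{\free}$; here I appeal to the fact that \DXk{} ties $\alpha_b$ to an address that \ddAk{} has already carved out of its stack, which by the global invariants is disjoint from the borrowed region, so $v\in\il{\sigma}{\free}\cap\il{\sigma'}{\free}$.

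The main obstacle I anticipate is exactly this \TRule{Alloc} case together with maintaining condition 6 of $sim$ across a \Ck{} store into a borrowed $\heap$/$cl$ address: I must confirm that the set $\il{\sigma'}{\{\heap,cl\}}\setminus\il{\sigma}{\{\heap,cl\}}$ over which $M_{\Ck}$ is \emph{allowed} to differ does not shrink during execution (allocation can only grow the \Ck{} regions away from the borrowed set, and deallocation of local regions $\z$ does not touch $\heap$/$cl$), so that condition 6 remains an invariant rather than being silently violated. Once the per-edge preservation lemma is in place, it composes along the whole path \CPathD{} by induction on path length, and combined with \cref{lemma:singleAsmInstruction} it yields that $\sigma'$'s error-free execution over $(\APath;\CPathD{})$ re-establishes $sim$, whence $post$ holds on $\sigma$ because $\proofObligation'$ holds on the $pre'$-satisfying state $\sigma'$ and the postcondition predicates (limited by the invariant grammar) are insensitive to the $\heap$/$cl$/$cs$/$\free$ differences captured by $sim$.
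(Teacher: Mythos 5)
Your proposal follows essentially the same route as the paper's own proof: the identical case split (trivial cases for instructions that never reference $\heap$, $cl$, $cs$, $\free$; \TRule{\Entry{\Ck}} handled via the non-overlap of $\HP\cup\CL$ from \cref{lemma:HminusHpIsFree}; \TRule{\Load{\Ck}}/\TRule{\Store{\Ck}} handled by noting the region sets only grow in $\sigma'$ while the borrowed addresses are never observed on an error-free $\sigma$-execution; and \TRule{Alloc} handled by the fact that \DXk{} determinizes $\alpha_b$ and the fresh memory contents to match the correlated assembly allocation). The one small caveat is that your \TRule{Alloc} justification (``an address \ddAk{} has already carved out of its stack'') covers only the $\mathtt{alloc}_s$ correlation; for an $\mathtt{alloc}_v$ correlation the determinized address instead lies in $\zlunion(\APath)$, which is likewise excluded from $\HP$ and $\CL$ by their definitions, so the same conclusion goes through unchanged.
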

\begin{proof}
For a non-I/O \Ck{} instruction that does not refer to the $\{\heap,cl,cs,\mathtt{free}\}$
regions
(\TRule{Op}, \TRule{AssignConst}, \TRule{Dealloc}, \TRule{VaStartPtr},
\TRule{CallV}, \TRule{\Call{\Ck}},
\TRule{\Ret{\Ck}}, \TRule{RetV}),
the execution will have identical behaviour on
both $\sigma$ and $\sigma'$ as identical values will be observed in
both $\sigma$ and $\sigma'$. Thus, if an execution on $\sigma'$ makes
the same non-deterministic choice as the execution on $\sigma{}$, the exection
on $\sigma'$ will complete without error and $sim(\sigma,\sigma')$ will
hold at the end of both executions.

We consider each remaining non-I/O instruction in \Ck{} below:
\begin{itemize}

  \item \TRule{\Entry{\Ck}}
  Consider the overlap check $\Upsilon{}=\Overlap(\il{\Ck}{\heap},\il{\Ck}{cl},\ldots,\ii{\Ck}{\g},\ldots,\il{\Ck}{\f},\ldots,\ii{\Ck}{\y},\ldots, \il{\Ck}{\yv})$ (due to $\neg{}\mathtt{addrSetsAreWF}$).
During an execution on $\sigma$,
this condition must evaluate to {\tt false}, as
we assume an error-free execution on $\sigma$.
For the same non-deterministic
choices made in both executions (over $\sigma$ and $\sigma'$), by
the definitions of $\HP$ and $\CL$, $\Upsilon{}$
will also evaluate
to {\tt false} for an execution on $\sigma'$
--- recall that $(\HP\cup{}\CL)$ cannot overlap with other allocated regions
(due to \cref{lemma:HminusHpIsFree}).
Further, because all other
state elements observed during the execution of the non-I/O edges in \TRule{\Entry{\Ck}}
are identical in both $\sigma$ and $\sigma'$, $sim(\sigma,\sigma')$ will
hold at the end of error-free executions.

\item \TRule{Alloc}
Consider the negated subset
check $\Upsilon=\neg([\alpha_b, \alpha_e]\subseteq\il{\Ck}{\mathtt{free}})$
(due to
$\neg\intervalContainedInAddrSetAndAligned{\alpha_b}{\alpha_e}{\il{\Ck}{\mathtt{free}}}{a}$).
The execution on $\sigma$ must evaluate $\Upsilon{}$
to {\tt false} as we assume an error-free execution.
By the definitions of $\HP$ and $\CL$,
for the same non-deterministic choices made in both executions (over $\sigma$ and $\sigma'$),
$\Upsilon$ will
also evaluate to {\tt false} for an execution on $\sigma'$ --- recall that
during execution on $\sigma$,
the deterministic choice map \DXk{} will be used
for the non-deterministc choices
of address $\alpha_b$ and memory \prjM{[\alpha_b,\alpha_e]}{M_{\Ck}}
such that the 
freshly allocated interval $[\alpha_b,\alpha_e]$ matches
(in both address and data)
the allocated interval $[v]_w$ in an $\mathtt{alloc}_{s,v}$
instruction in \APath{};
because
the same \DXk{} is used in both $\sigma$ and $\sigma'$
executions, $\Upsilon$ will also evaluate to \texttt{false} in $\sigma'$.
All other state elements observed in the other instructions of \TRule{Alloc}
are identical in both $\sigma$, $\sigma'$.

\item \TRule{\Load{\Ck}} and \TRule{\Store{\Ck}}.
An $\mathtt{accessIsSafeC}_{\tau,a}()$ check must evaluate
to {\tt true} for $\sigma$ due to the
assumption of error-free execution.  Because the allocated
space $\il{\Ck}{\NS}$ can only be
bigger in $\sigma'$ (by \cref{lemma:HCLlargest}), the {\tt accessIsSafeC} check will
also evaluate to {\tt true} for $\sigma'$ (for the
same sequence of non-deterministic choices).
Further, for an execution on $\sigma$,
the contents
of the memory region
$\prjM{\il{\sigma'}{\{\heap,cl\}}\setminus\il{\sigma}{\{\heap,cl\}}}{M^\sigma_{\Ck}}$
cannot be observed on an error-free path; and because
all other state elements observed in \TRule{\Load{\Ck}}
and \TRule{\Store{\Ck}} are identical in both $\sigma$ and $\sigma'$,
the contents
of the memory region
$\prjM{(\il{\sigma'}{\{\heap,cl\}}\setminus\il{\sigma}{\{\heap,cl\}}}{M^{\sigma'}_{\Ck}}$
will also remain unobserved during an execution on $\sigma'$ (that uses
the same sequence of non-deterministic choices as an execution on $\sigma$).
All other state elements observed in the other instructions of \TRule{\Load{\Ck}}
and \TRule{\Store{\Ck}} are identical in both $\sigma$, $\sigma'$.

\end{itemize}
\end{proof}

\begin{lemma}[$sim(\sigma,\sigma')$ is preserved for error-free execution across $\APath;\CPath$]
\label{lemma:simPreservedAcrossPath}
Recall that $\xi_{\ddAk}$ contains only non-I/O instructions (by assumption). Thus,
due to the (SingleIO)
requirement, $\xi_{\Ck}$ also contains only non-I/O instructions.

If $\xi_{\ddAk}$ is executed on machine
states $\sigma$ and $\sigma'$, and if
the execution of $\sigma$ completes without error,
then there exists a sequence of non-deterministic choices during the execution
of $\sigma'$ such that the execution is error-free
and $sim(\sigma{},\sigma')$
holds
at the end of both error-free executions.

Similarly,
if $\xi_{\Ck}$ is next executed on machine
states $\sigma$ and $\sigma'$, and if
the execution of $\sigma$ completes without error,
then there exists a sequence of non-deterministic choices during the execution
of $\sigma'$ such that the execution is error-free
and $sim(\sigma{},\sigma')$
holds
at the end of both error-free executions.
\end{lemma}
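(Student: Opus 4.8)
The plan is to prove \cref{lemma:simPreservedAcrossPath} by a straightforward induction on the number of edges in each path, bootstrapping off the two single-edge lemmas (\cref{lemma:singleAsmInstruction,lemma:singleCInstruction}) that have just been established. Since $\xi_{\ddAk}$ is I/O-free by assumption, the (SingleIO) requirement forces $\xi_{\Ck}$ to also be I/O-free, so every constituent edge of both paths is a non-I/O edge and falls within the scope of the two single-edge lemmas. The composite relation $sim(\sigma,\sigma')$ is designed precisely to be the loop invariant of this induction.

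First I would handle the assembly path $\xi_{\ddAk} = \PathT{n_{\ddAk}}{n^t_{\ddAk}}$. Write $\xi_{\ddAk}$ as a sequence of edges $e^1_{\ddAk}, e^2_{\ddAk}, \ldots, e^k_{\ddAk}$. The base case is immediate: $sim(\sigma,\sigma')$ holds at the start of execution by the construction of $\sigma'$ (established just before \cref{lemma:singleAsmInstruction}). For the inductive step, suppose we have executed a prefix $e^1_{\ddAk}, \ldots, e^{i}_{\ddAk}$ on both $\sigma$ and $\sigma'$, arriving at intermediate states $\sigma_i$ and $\sigma'_i$ with $sim(\sigma_i,\sigma'_i)$ holding, and suppose the $\sigma$-execution of $e^{i+1}_{\ddAk}$ is error-free. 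Because $e^{i+1}_{\ddAk}$ is a non-I/O edge in $\mathcal{E}_{\ddAk}$, \cref{lemma:singleAsmInstruction} directly supplies a sequence of non-deterministic choices for the $\sigma'_i$-execution of $e^{i+1}_{\ddAk}$ that keeps the execution error-free and re-establishes $sim(\sigma_{i+1},\sigma'_{i+1})$. Concatenating these per-edge choice sequences yields a single sequence of non-deterministic choices witnessing the claim for the whole of $\xi_{\ddAk}$.

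The second half, for $\xi_{\Ck}$, proceeds identically but invokes \cref{lemma:singleCInstruction} instead of \cref{lemma:singleAsmInstruction}; the only subtlety is that the edges of $\CPath$ must be read as the \emph{determinized} path $\CPathD$, since \cref{lemma:singleCInstruction} is stated for edges in $\CPathD$ whose choose-instructions are resolved through $\DXk$. I would note that $sim$ is preserved from the end of the $\xi_{\ddAk}$-execution through the start of the $\xi_{\Ck}$-execution without modification, because the machine-state space is shared and no reset occurs at the node $n^t_{\ddAk}$; this lets the two inductions chain cleanly into a single error-free execution of $(\xi_{\ddAk};\xi_{\Ck})$ on $\sigma'$.

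The main obstacle I anticipate is not the inductive scaffolding itself, which is routine, but ensuring that the \emph{same} determinizing choices made via $\DXk$ in the $\sigma$-execution remain valid and error-free in the $\sigma'$-execution --- this is exactly the content guaranteed by \cref{lemma:singleCInstruction} for the \TRule{Alloc} case, where the address $\alpha_b$ and freshly-allocated memory are tied to the assembly $\mathtt{alloc}_{s,v}$ values through $\DXk$. I would emphasize that since $\sigma$ and $\sigma'$ agree on all state elements of \ddAk{} (condition $M^{\sigma}_{\ddAk} = M^{\sigma'}_{\ddAk}$ and clause 7 of $sim$), the $\DXk$-supplied values, being expressions over $\vv{x}_{\ddAk}$, evaluate identically in both states, so reusing the identical choice sequence is well-defined. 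With this observation the chaining of the two inductions completes the proof.
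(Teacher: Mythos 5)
Your proposal is correct and follows essentially the same route as the paper's proof: induction on the edges executed, with the base case given by the construction of $\sigma'$ and the inductive step discharged by \cref{lemma:singleAsmInstruction,lemma:singleCInstruction} (the paper phrases this as a single lockstep induction over the combined sequence $(\APath;\CPath)$, whereas you chain two inductions, a purely presentational difference). Your added remarks — that the C-side edges must be read as the determinized path $\CPathD{}$, and that the $\DXk$-supplied expressions over $\vv{x}_{\ddAk}$ evaluate identically in $\sigma$ and $\sigma'$ because $sim$ forces agreement on all \ddAk{} state — are correct elaborations of details the paper leaves implicit.
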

\begin{proof}
To show this, we execute the
sequence of paths ($\xi_{\ddAk};\xi_{\Ck{}}$) in lockstep
on both $\sigma$ and $\sigma'$, i.e., in a single
step, one instruction is executed
on both states modifying the states in place.
The proof proceeds by induction on the number of steps. The base case holds by assumption.
For the inductive step, we rely on \cref{lemma:singleAsmInstruction,lemma:singleCInstruction}.
\end{proof}

\begin{lemma}[$\sigma$ and $\sigma{}'$ execute the same path in \ddAk{}]
\label{lemma:Apath}
If \APath{} executes to completion on state $\sigma$, it will also execute to
completion on $\sigma{}'$.
\end{lemma}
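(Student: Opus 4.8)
The plan is to prove \cref{lemma:Apath} by induction on the number of edges of $\APath$, reusing the per-edge preservation already established in \cref{lemma:singleAsmInstruction} (indeed, the argument is the \ddAk{}-restricted analogue of the induction behind \cref{lemma:simPreservedAcrossPath}). The witness for the claim is the very sequence of non-deterministic $\theta$-choices that made $\APath$ execute to completion on $\sigma$: I would replay that same sequence on $\sigma'$ and show that the execution tracks $\APath$ edge-for-edge, reaching the same (non-error) sink node $n^t_{\ddAk}$. Since we operate under the error-free assumption of the surrounding lemmas, $\APath$ ends at a non-error node, so every error-triggering guard encountered along it is resolved in its non-error direction on $\sigma$.

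First I would record the crucial structural fact about \ddAk{}'s semantics: in the translation rules of \cref{fig:xlateRuleAsm,fig:xlateRuleAsmStackLocals,fig:xlateRuleAsmAllLocals}, every guard that inspects one of the four regions that differ between $\sigma$ and $\sigma'$ --- namely $\{\heap, cl, cs, \free\}$ --- is an \emph{error-triggering} guard, branching only to $\WAk$ or $\UAk$ (as in \TRule{Op-esp'}, \TRule{AllocV}, \TRule{\Load{\ddAk}}, \TRule{\Store{\ddAk}}, and \TRule{\Entry{\ddAk}}). Every remaining guard reads only state elements that condition~7 of $sim(\sigma,\sigma')$, together with $M^{\sigma}_{\ddAk}=M^{\sigma'}_{\ddAk}$, forces to be equal. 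Hence the only way execution on $\sigma'$ could diverge from $\APath$ would be to take an error branch at a region-sensitive guard. In the inductive step I would take the first edge $e_{\ddAk}$ of $\APath$ out of a state with $sim$ holding: for a non-region-sensitive guard the identical branch follows immediately from equality of the inspected state; for a region-sensitive (error-triggering) guard, \cref{lemma:singleAsmInstruction} shows that, under the \emph{same} non-deterministic choices, the non-error direction taken on $\sigma$ is also taken on $\sigma'$ and that $sim$ is re-established at the head of the successor node. I would explicitly flag the reliance on the safety-relaxed semantics here, since the \TRule{\Load{\ddAk}}/\TRule{\Store{\ddAk}} overlap checks are there phrased over $\ilZv{\ddAk}$ and $[\mathtt{esp},\ghost{\stkE}]$ rather than over $\free$ or $cs$, which is exactly what renders them insensitive to the $\sigma$--$\sigma'$ discrepancy. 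Thus $\sigma'$ takes $e_{\ddAk}$, lands at the same node with $sim$ holding against the advanced $\sigma$, and the induction hypothesis applies to the suffix of $\APath$; the base case $\APath=\epsilon$ is vacuous.

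The main obstacle is the exhaustiveness of the guard case analysis: the whole argument hinges on the claim that \emph{no} non-error control-flow branch in \ddAk{} depends on $\{\heap, cl, cs, \free\}$. I would discharge this by walking through each \ddAk{} translation rule and confirming that region-dependent conditions occur solely inside error checks, leaning on \cref{lemma:singleAsmInstruction}'s per-instruction treatment to certify the non-error resolution on $\sigma'$. A secondary subtlety needing care is that the \emph{same} choice sequence (not merely \emph{some} sequence) must be replayed on $\sigma'$ so that the two runs stay synchronized; this is precisely what the proofs in \cref{lemma:singleAsmInstruction} deliver (``for the same non-deterministic choices made in both executions''), so I would phrase the induction to carry the witnessing choice sequence for $\sigma$ through unchanged.
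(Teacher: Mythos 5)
There is a genuine gap: you have quietly restricted the lemma to paths ending at non-error nodes ("Since we operate under the error-free assumption of the surrounding lemmas, \APath{} ends at a non-error node"), but \cref{lemma:Apath} carries no such hypothesis, and its entire point is the case your proof excludes. The lemma must cover $\APath{}=\PathT{n_{\ddAk}}{\WAk}$ and especially $\APath{}=\PathT{n_{\ddAk}}{\UAk}$: the proof obligations generated for (Coverage\Ck{}) include Hoare triples whose assembly path terminates at an error node, and the downstream transfer lemmas (\cref{lemma:postHoldsForW,lemma:postHoldsForError}) presuppose that \emph{both} $\sigma$ and $\sigma'$ reach $(\WAk,\_)$ or $(\UAk,\_)$ --- which is exactly what \cref{lemma:Apath} must supply. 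The paper's own (terse) proof is a case analysis on all edge conditions that singles out precisely the $\PathT{n_{\ddAk}}{\UAk}$ case as the one relying on the safety-relaxed semantics; your induction, as phrased, never has to show that an error guard that evaluates to \texttt{true} on $\sigma$ also evaluates to \texttt{true} on $\sigma'$, because you assume every error guard on \APath{} is resolved in its non-error direction.

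Ironically, your central structural observation almost closes this hole, but you deploy it for the wrong direction. Under the safety-relaxed semantics, the \EU{}-triggering checks in \TRule{\Load{\ddAk}}/\TRule{\Store{\ddAk}} mention only $\ilZv{\ddAk}$, $\il{\ddAk}{\F}$ ($\il{\ddAk}{\Frw}$), and $[\mathtt{esp},\ghost{\stkE}]$, the checks in \TRule{Op-esp'} (pop), \TRule{AllocS'}, \TRule{DeallocS'}, \TRule{DeallocV} mention only $\il{\ddAk}{\stk}$ and $\ilzv{\ddAk}{\z}$ (all unchanged between $\sigma$ and $\sigma'$), and the \TRule{\Ret{\Ak}} check is literally $\mathtt{false}$; so every \EU{}-guard evaluates \emph{identically} on $\sigma$ and $\sigma'$, giving the $\UAk{}$ case --- and this is where the argument demonstrably fails for the original semantics, since there the \texttt{load}/\texttt{store} checks reference $\il{\ddAk}{\free}$, which shrinks in $\sigma'$. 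For the $\WAk{}$-ending paths (\TRule{Op-esp'} push, \TRule{AllocV}, \TRule{\Entry{\ddAk}}), you need the opposite monotonicity: the guards are non-containment conditions with respect to sets that only shrink when $\heap$ and $cl$ are expanded to $\HP$ and $\CL$, so a guard that is \texttt{true} on $\sigma$ stays \texttt{true} on $\sigma'$. Neither of these two arguments is an instance of \cref{lemma:singleAsmInstruction}, which is stated only for error-free completions; so your proof needs an explicit third and fourth case, not just the appeal to the error-free per-edge lemma.
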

\begin{proof}
By case analysis on all edge conditions in \cref{fig:xlateRuleAsm}.
For $\APath{}=n_{\ddAk}\twoheadrightarrow{}\UAk{}$,
the proof relies on the safety-relaxed semantics,
and would not hold on the original semantics.
\end{proof}

\begin{lemma}[$\sigma$ and $\sigma{}'$ execute the same non-\EU{} path in \Ck{}]
\label{lemma:Cpath}
If \CPath{} does not terminate
in \UCk{}, and $\sigma$ executes \CPath{} to completion, then $\sigma{}'$ will also
execute \CPath{} to completion.
\end{lemma}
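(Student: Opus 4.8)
The plan is to establish \cref{lemma:Cpath} as the \Ck{}-side analogue of \cref{lemma:Apath}, by an induction along \CPath{} that proceeds edge by edge in lockstep on $\sigma$ and $\sigma'$ while maintaining $sim(\sigma,\sigma')$ (which holds at the start by construction). Because the guards leaving any node of \Ck{} are mutually exclusive and exhaustive, at each step exactly one outgoing edge is enabled; I only need to show that the edge enabled in $\sigma'$ coincides with the edge $\sigma$ takes along \CPath{}. The non-deterministic choices are resolved identically in both runs by the map \DXk{} (so we are really tracing \CPathD{}), and \cref{lemma:singleCInstruction} then re-establishes $sim(\sigma,\sigma')$ after each edge, feeding the next induction step. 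First I would fix the inductive frame and reduce the goal to a case analysis over the \Ck{} edge conditions of \cref{fig:xlateRuleIR}.

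The case analysis falls into three groups. For guards that inspect only registers, temporaries, or memory lying outside the grown $\{\heap,cl\}$ region --- for instance the branch guards built from boolean values and the $\mathtt{UB}_{\Ck}$ test of \TRule{Op} --- the relevant operands are equal in $\sigma$ and $\sigma'$ by $sim$, so the guard evaluates identically and the same edge fires. The decisive support here is that, along an error-free prefix, \cref{lemma:singleCInstruction} forbids any read from the donated addresses $\il{\sigma'}{\{\heap,cl\}}\setminus\il{\sigma}{\{\heap,cl\}}$; hence no loaded value, and therefore no temporary feeding a later guard, can drift between the two states. For the entry and allocation guards (\TRule{\Entry{\Ck}}, \TRule{Alloc}), \cref{lemma:singleCInstruction} already shows identical behaviour once \DXk{} pins the allocated interval, so these edges introduce no divergence either.

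The central case is the $\mathtt{accessIsSafeC}_{\tau,a}$ guard of \TRule{\Load{\Ck}} and \TRule{\Store{\Ck}}, and here the hypothesis that \CPath{} does not terminate in \UCk{} is essential. Since no edge of \CPath{} targets \UCk{}, every load and store takes the safe edge in $\sigma$, i.e. the access interval $[p]_{\mathtt{sz}(\tau)}$ lies inside $\il{\Ck}{\BasedOn{p}}$ in $\sigma$. By \cref{lemma:HCLlargest} (with (Equivalence) transporting the bound to \Ck{}) and the construction of $\sigma'$, the allocated regions $\il{\Ck}{\{\heap,cl\}}$ are no smaller in $\sigma'$, while all other region bounds agree by $sim$; thus the same access is safe in $\sigma'$ and the safe edge fires there too. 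Symmetrically, $\sigma'$ cannot branch to \UCk{} where $\sigma$ did not, since an unsafe access in $\sigma'$ --- whose feasible region is a subset of $\sigma$'s --- would already be unsafe in $\sigma$. This monotonicity is exactly why the statement must exclude \UCk{}-terminating paths: the regions only grow from $\sigma$ to $\sigma'$, so an access that raises \EU{} in $\sigma$ may become safe in $\sigma'$, and the two runs need not agree on \UCk{}-bound paths.

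The main obstacle I expect is keeping control flow in lockstep despite the deliberate memory-content mismatch between $\sigma$ and $\sigma'$ on the newly donated $\{\heap,cl\}$ addresses. The argument must guarantee that this mismatch never reaches a guard, which is secured by the unobservability clause of \cref{lemma:singleCInstruction}, combined with the region-monotonicity used for the safety guards. Discharging these two points at every edge closes the induction and yields that $\sigma'$ executes \CPath{} through to its (non-\UCk{}) sink, completing the proof.
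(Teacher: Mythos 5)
Your proposal is correct and follows essentially the same route as the paper: the paper's proof is exactly a case analysis on the edge conditions of \cref{fig:xlateRuleIR} reusing the arguments of \cref{lemma:singleCInstruction}, which is what you carry out (with the region-monotonicity argument for $\mathtt{accessIsSafeC}_{\tau,a}$ and the unobservability of the donated $\{\heap,cl\}$ addresses doing the real work). Your added explanation of why the non-\UCk{} hypothesis is essential — that growth of the allocated regions from $\sigma$ to $\sigma'$ can turn an unsafe access into a safe one — is a faithful elaboration of why the paper handles \UCk{}-terminating paths separately in \cref{lemma:postHoldsForError}.
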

\begin{proof}
By case analysis on all edge conditions in \cref{fig:xlateRuleIR}
with same arguments as used in \cref{lemma:singleCInstruction}.
\end{proof}

\begin{lemma}[$post(\sigma') \land sim(\sigma, \sigma') \Rightarrow post(\sigma)$ holds for a non-error node $(n_{\ddAk}^t,n_{\Ck}^t)$]
\label{lemma:postHolds}
For two states $\sigma$ and $\sigma'$ at node $(n^t_{\ddAk},n^t_{\Ck})$,
where $n^t_{\ddAk}$ and $n^t_{\Ck}$ are non-error nodes,
$post(\sigma')\land{}sim(\sigma,\sigma')\Rightarrow{}post(\sigma)$ holds.
\end{lemma}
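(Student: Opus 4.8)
\textbf{Proof plan for \cref{lemma:postHolds}.}
The plan is to show that the postcondition $post$ is, like the precondition, built only from a limited repertoire of predicate shapes, and that each such shape is preserved when moving from $\sigma'$ to $\sigma$ under the relation $sim(\sigma,\sigma')$. First I would observe that $post$ is a node invariant $\Inv{n_{\Xk}^t}$ (possibly conjuncted with the guards collected along $\CPathD$), so by \cref{fig:invGrammar} together with the edge conditions in the execution semantics, every conjunct of $post$ falls into one of a handful of syntactic categories: affine/inequality relations over bitvector registers and ghost variables (\pred{affine}, \pred{ineq}, \pred{ineqC}, \pred{spOrd}, \pred{spzBd}, \pred{spzBd'}), emptiness facts (\pred{zEmpty}, \apred{Empty}), the region-bound and size predicates (\apred{gfyIntvl}, \apred{zlIntvl}, \apred{zaBd}, \apred{StkBd}, \apred{$cs$Bd}), the non-overlap predicates (\apred{NoOverlap\Ck{}}, \apred{NoOverlap\Ak{}}), the read-only-memory predicates (\apred{ROM\Ck}, \apred{ROM\Ak}), and the two relational requirements \pred{AllocEq} (\pred{MemEq}). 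The argument is then a case analysis: for each category I show $\llbracket \phi \rrbracket_{\sigma'} \Rightarrow \llbracket \phi \rrbracket_\sigma$ using the seven clauses defining $sim(\sigma,\sigma')$.

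The bulk of the categories are immediate. Any conjunct mentioning only bitvector registers, stack slots, ghost variables, or regions $r\in\NS\setminus\{\heap,cl\}$ (or the assembly-only regions and $\stk$) evaluates identically in $\sigma$ and $\sigma'$ by clauses~(5),~(7) of $sim$ (\ddAk's memory states are equal and all remaining state elements agree), since the only state elements that differ are $\il{\ddAk}{\{\heap,cl,cs,\free\}}$, $\il{\Ck}{\{\heap,cl,\free\}}$, and the contents of $M_\Ck$ on the borrowed region. So the real work is confined to the predicates that actually refer to $\heap$, $cl$, $cs$, or to $M_\Ck$ on addresses in $(\il{\sigma'}{\{\heap,cl\}}\setminus\il{\sigma}{\{\heap,cl\}})$. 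For such predicates I would exploit the direction of the set inclusions in $sim$: clauses~(1)--(4) give $\il{\sigma}{\heap}\subseteq\il{\sigma'}{\heap}$, $\il{\sigma}{cl}\subseteq\il{\sigma'}{cl}$, $\il{\sigma}{cs}\supseteq\il{\sigma'}{cs}$, $\il{\sigma}{\free}\supseteq\il{\sigma'}{\free}$. For the (MemEq) conjunct \pred{MemEq} I would note it constrains $M_\Ck$ and $M_{\ddAk}$ only on $\il{\ddAk}{\NS}\setminus\ilZv{\ddAk}$, a region disjoint from the borrowed addresses $(\il{\sigma'}{\{\heap,cl\}}\setminus\il{\sigma}{\{\heap,cl\}})$, so by clause~(6) ($M_\Ck^\sigma$ and $M_\Ck^{\sigma'}$ agree off the borrowed region) and clause~(5), \pred{MemEq}$(\sigma')\Rightarrow$\pred{MemEq}$(\sigma)$. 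The \apred{ROM\Ck}/\apred{ROM\Ak} predicates similarly constrain memory only on read-only global/assembly regions, again disjoint from the borrowed addresses, so they transfer directly.

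The main obstacle I expect is the family of predicates whose truth depends on the \emph{exact} extent of $\heap$, $cl$, and $cs$ rather than merely on membership of already-allocated addresses: specifically the non-overlap predicates \apred{NoOverlap\Ck{}}, \apred{NoOverlap\Ak{}} and the interval-shape predicates \apred{$cs$Bd}, \apred{StkBd}, where $\heap$, $cl$, $cs$ appear in a way that is not monotone in the obvious direction. Here the trick is that $\sigma'$ has the \emph{larger} $\heap$ and $cl$ and the \emph{smaller} $cs$ and $\free$, so a non-overlap statement that holds for the inflated regions in $\sigma'$ holds \emph{a fortiori} for the shrunken regions in $\sigma$ --- non-overlap is downward-closed under subset, and $\sigma$'s $\heap,cl$ are subsets of $\sigma'$'s while all other regions are unchanged. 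For the interval predicates I would invoke \cref{lemma:HminusHpIsFree}: the addresses by which $\heap,cl$ differ between $\sigma$ and $\sigma'$ come from $\free$ (in \Ck) and from $\free$ and $cs$ (in \ddAk) inside $[\ghost{\spE}+1_{\bv{32}},\ghost{\stkE}]$, so the union $\il{\ddAk}{\{cs,cl\}}$ and the stack-side unions are invariant between the two states even though $cs$ and $cl$ individually move; thus \apred{$cs$Bd} and \apred{StkBd}, being stated on these stable unions, hold in $\sigma$ exactly when they hold in $\sigma'$. Assembling these cases, every conjunct of $post$ transfers from $\sigma'$ to $\sigma$, which proves $post(\sigma')\land sim(\sigma,\sigma')\Rightarrow post(\sigma)$; I would close by remarking that this, combined with \cref{lemma:simPreservedAcrossPath,lemma:Apath,lemma:Cpath}, completes the $(\Leftarrow)$ direction and hence \cref{theorem:rewriteHeapCl}.
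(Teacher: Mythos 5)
Your case analysis over the invariant grammar matches the paper's treatment of the (Inductive)/(MemEq) obligations, and the monotonicity and disjointness arguments you give there are sound. However, you have mischaracterized what $post$ can be, and this leaves a genuine gap. The postconditions of the Hoare triples generated by \toolName{} are not of the form ``node invariant possibly conjuncted with the guards collected along $\CPathD$'': besides the (Inductive), (MAC), and (Equivalence) obligations, the lemma must also cover the (Coverage\Ck{}) obligations, whose postcondition is a \emph{disjunction} of determinized path conditions, $\bigvee_{j=1}^{m}\pathcond{\CPathDN{j}}$, taken over a maximal pathset (recall that \cref{theorem:rewriteHeapCl} explicitly applies to path-cover obligations, with $\CPath{}=\epsilon$). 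This case cannot be dispatched by your per-conjunct transfer argument. A \Ck{} path condition contains checks $\mathtt{accessIsSafeC}_{\tau,a}(p,\il{\Ck}{\BasedOn{p}})$ where $\BasedOn{p}$ may contain $\heap$ or $cl$, and such a check is monotone in the \emph{wrong} direction for your purposes: it can hold in $\sigma'$ (inflated $\heap,cl$) yet fail in $\sigma$ (actual $\heap,cl$). Indeed, if $post$ really were a conjunction containing such a guard, the lemma would be false. The paper's resolution is not monotonicity but \emph{maximality} of the correlated pathset: for a non-error destination, the pathset contains, alongside the paths on which $\mathtt{accessIsSafeC}_{\tau,a}$ evaluates to {\tt true}, the sibling paths ending at \UCk{} on which it evaluates to {\tt false}; hence the disjunction of path conditions is insensitive to which way the check goes in $\sigma$ versus $\sigma'$, and so transfers. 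Without this argument, the $(\Leftarrow)$ direction of \cref{theorem:rewriteHeapCl} does not go through.

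Two smaller points. First, (MAC) postconditions are missing from your category list; they happen to be covered by your ``mentions none of $\heap,cl,cs,\free$'' case (they constrain \ghost{\rdAcc{\ddAk}}, \ghost{\wrAcc{\ddAk}} against $\il{\ddAk}{G\cup\F}\cup[\mathtt{esp},\ghost{\spE}]$), but this should be stated, since the paper enumerates (MAC) as a separate possibility. Second, most of the double-boxed predicates you labor over (\apred{NoOverlap\Ck{}}, \apred{StkBd}, \apred{$cs$Bd}, \apred{ROM\Ck}, \ldots) are global invariants that hold by construction and are never emitted as proof obligations, so that portion of your analysis, while plausible, is unnecessary; conversely \pred{AllocEq}, which does appear as an (Inductive) postcondition and does touch $\heap$ and $cl$, receives no argument in your detailed cases (the paper derives it from \pred{AllocEq} being guaranteed in $pre$ via (Equivalence), together with agreement of $\sigma$ and $\sigma'$ on the $\Z$ regions).
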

\begin{proof}
  The $post$ condition that may appear in
  a Hoare Triple proof obligation
  generated by \toolName{} can be one of the following:
  \begin{itemize}
    \item (Coverage\Ck) where
      $post=\bigvee_{1 \leq j \leq m}{\pathcond{\CPathDN{j}}}$
      for $e^j_{\Xk} = \XEdgeT{(n_{\ddAk},n_{\Ck}}{\APath{};\CPathN{j}}{(n^t_{\ddAk},n^t_{\Ck})} \in \EXk$ ($1 \leq j \leq m$).

    \item (Inductive) where $post$ is
      one of the predicate shapes listed in \cref{fig:invGrammar}.
      Note that the \pred{MemEq} shape in \cref{fig:invGrammar} represents the
      proof obligation for the (MemEq) requirement.

    \item (Equivalence) where $post$ is either $\World{\ddAk}=\World{\Ck}$ or
      $\steq{\Trace{\ddAk}}{\Trace{\Ck}}$.
      I/O free paths do not mutate world states so 
      $\World{\ddAk}=\World{\Ck}$ cannot appear as $post$.
      Further,
      the only I/O free paths that may modify trace
      must contain {\tt halt} instruction, appearing as
      the last edge of the sequence.
      As the generated trace event for {\tt halt}
      does not observe any procedure state variable,
      we ignore this case.

    \item (MAC) where $post$ checks the
      address of each memory access in \ddAk{}
      against the addresses of a set of memory accesses in \Ck{} for equality.
      Also, (MAC) checks if a memory access overlaps with address
      regions $\il{\ddAk}{G\cup{}F}\cup{}[\mathtt{esp},\ghost{\spE}]$
      or $\il{\ddAk}{G_w\cup{}F_w}\cup{}[\mathtt{esp},\ghost{\spE}]$.

  \end{itemize}

  \noindent
  {\em Case: When $post$ is one of the predicate shapes in \cref{fig:invGrammar} or is a (MAC) proof obligation}.
  \begin{itemize}
  \item The predicate shapes
  \pred{affine}, \pred{ineqC}, \pred{ineq}, \pred{spOrd}, \pred{zEmpty}, \pred{spzBd}, \pred{spzBd'},
  and a (MAC) proof obligation do not involve operations over address sets
  $\{ \heap, cl, cs, \mathtt{free} \}$ or
  memory operations in the updated region
  $\il{\sigma'}{\{\heap,cl\}} \setminus \il{\sigma'}{\{\heap,cl\}}$.
  Thus, $post(\sigma')\land{}sim(\sigma,\sigma')\Rightarrow{}post(\sigma)$ holds in this case.

  \item Consider the case when $post$ is \pred{AllocEq}.
    Due to (Equivalence), \pred{AllocEq} is guaranteed to in $pre$
    and therefore
  $\il{\ddAk}{\heap} = \il{\Ck}{\heap}$ and $\il{\ddAk}{cl} = \il{\Ck}{cl}$
  must
  hold for $\sigma'$.
  Due to $sim(\sigma, \sigma')$, $\sigma$ and $\sigma'$ agree on the remaining state elements,
  including the address sets for each region $\z{} \in \Z{}$.
  Thus, $post(\sigma')\land{}sim(\sigma,\sigma')\Rightarrow{}post(\sigma)$ holds in this case.

  \item Consider the case when $post$ is \pred{MemEq}.
  $sim(\sigma, \sigma')$ ensures that the address sets of regions
  $\{ \heap, cl \}$ in $\sigma$ are a subset of respective address sets in $\sigma'$.
  Further, due to $sim(\sigma,\sigma')$,
  the memory states of \Ak{} in $\sigma$ and $\sigma'$ are identical,
  $M^{\sigma}_{\ddAk} = M^{\sigma'}_{\ddAk}$,
  and the memory states of \Ck{} in $\sigma$ and $\sigma'$ disagree
  only over the updated (expanded) address sets,
  $\prjMEq{\mathtt{comp}(\il{\sigma'}{\{\heap,cl\}} \setminus \il{\sigma}{\{\heap,cl\}})}{M^{\sigma}_{\Ck}}{M^{\sigma'}_{\Ck}}$.
  Because the allocated regions in $\sigma{}$ belong to these
  addresses, $post(\sigma)$ follows from $post(\sigma')$.

  \end{itemize}

  \noindent
  {\em Case: When $post$ is a proof obligation for (Coverage\Ck)}.
  In this case, $post$ must be of the form
  $\bigvee_{1 \leq j \leq m}{\pathcond{\CPathDN{j}}}$
  for $e^j_{\Xk} = \XEdgeT{(n_{\ddAk},n_{\Ck})}{\APath{};\CPathN{j}}{(n^t_{\ddAk},n^t_{\Ck})} \in \EXk$ ($1 \leq j \leq m$).
  The
  edge conditions in \Ck{} are independent of
  the
  regions $\{ \heap, cl, cs, \mathtt{free} \}$, except for \TRule{\Load{\Ck}} and \TRule{\Store{\Ck}}.
  If the edge condition is independent of these address regions, then $post(\sigma)$
  follows trivially from $post(\sigma')$.
  For a non-error node, the maximal set of paths
  $\{\CPathN{1},\ldots,\CPathN{m}\}$ includes both the
  paths that evaluate
$\mathtt{accessIsSafeC}_{\tau,a}$ to {\tt true} and {\tt false} respectively.  Thus, even in
  this case, $post(\sigma{})$ holds if $post(\sigma{}')$ holds.

\end{proof}

\begin{lemma}[$post(\sigma') \Rightarrow post(\sigma)$ for $n^t_{\ddAk}=\WAk{}$]
\label{lemma:postHoldsForW}
For two states $\sigma$ and $\sigma'$ at node $(\WAk{},n_{\Ck}^t)$,
$post(\sigma')\Rightarrow{}post(\sigma)$ holds.
\end{lemma}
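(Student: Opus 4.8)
The plan is to reuse the case analysis of \cref{lemma:postHolds}, but to exploit that the sink node $(\WAk{},n^t_{\Ck})$ is an error node of \Xk{}. First I would pin down which proof obligations \toolName{} can generate with such a sink. Because $(\WAk{},n^t_{\Ck})$ is an error node, the requirements (Inductive), (Equivalence), and (MemEq)---each stated only over non-error edges or non-error nodes---cannot produce an obligation landing here, and a path-infeasibility obligation (post $=\mathtt{false}$) is vacuous since by hypothesis $\sigma$ reaches the sink. Hence the only post-conditions $post$ that can arise are those of a (Coverage\Ck{}) path-cover obligation, $post=\bigvee_{j}\pathcond{\CPathDN{j}}$, and of a (MAC) obligation (the latter only when $n^t_{\Ck}\neq{}\UCk{}$). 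This is precisely why the statement needs no $sim(\sigma,\sigma')$ hypothesis: neither surviving obligation constrains the memory contents of the over-approximated regions, so only the address-set facts of \cref{lemma:HCLlargest,lemma:HminusHpIsFree} are required, not a full memory simulation.

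The (MAC) case is immediate. Its post-condition mentions only the ghost access sets $\rdAcc{\ddAk}$, $\rdAcc{\Ck}$, $\wrAcc{\ddAk}$, $\wrAcc{\Ck}$ together with $\il{\ddAk}{G\cup\F}$ and $[\mathtt{esp},\ghost{\spE}]$, none of which belongs to the regions $\{\heap{},cl,cs,\free{}\}$ that distinguish $\sigma'$ from $\sigma$. By \cref{lemma:Apath} the two states drive \ddAk{} along the same path to $\WAk{}$ (this step relies on the safety-relaxed semantics), and for the \CPath{} portion \cref{lemma:Cpath} gives the same non-\UCk{} \Ck{} path; therefore both executions accumulate identical read/write sets and $post$ evaluates identically on $\sigma$ and $\sigma'$, yielding $post(\sigma')\Rightarrow{}post(\sigma)$.

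The (Coverage\Ck{}) case is the crux, and I expect it to be the main obstacle. Here the argument stays entirely at the level of address sets. The only \Ck{} edge conditions occurring in the $\CPathDN{j}$ that depend on the modified regions are the $\mathtt{accessIsSafeC}$ guards of \TRule{\Load{\Ck}} and \TRule{\Store{\Ck}}, and these depend on $\heap{}$ and $cl$ monotonically. By \cref{lemma:HCLlargest}, combined with (Equivalence), i.e.\ \pred{AllocEq}, which forces the \Ck{} copies of these sets to match \ddAk{}'s, the sets $\il{\ddAk}{\heap}$ and $\il{\ddAk}{cl}$ are no larger in $\sigma$ than in $\sigma'$, so an access judged safe in $\sigma'$ may instead be judged unsafe in $\sigma$. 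The key observation is that \toolName{} always correlates a \emph{maximal} pathset, so for every such \Ck{} memory access the pathset $\{\CPathDN{j}\}$ contains both the access-safe continuation and the continuation terminating in \UCk{}. Consequently, whenever the disjunction $\bigvee_{j}\pathcond{\CPathDN{j}}$ is satisfied in $\sigma'$, the branch enabled in $\sigma$---the access-safe branch, or else its \UCk{} sibling---is among the disjuncts, so the disjunction also holds in $\sigma$, giving $post(\sigma')\Rightarrow{}post(\sigma)$. The delicate points I would need to verify carefully are (i) that this monotonicity covers every guard shape that can appear once \APath{} is I/O-free, so that the remaining \Ck{} edge conditions are genuinely independent of $\{\heap{},cl,cs,\free{}\}$, and (ii) that the determinization recorded in \DXk{} along the edge to $\WAk{}$ does not introduce a disjunct whose enabledness flips in the unfavourable direction.
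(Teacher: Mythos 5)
Your proposal is correct and follows essentially the same route as the paper: the paper's proof likewise observes that the only obligations whose post-condition can target $(\WAk{},n^t_{\Ck})$ are (Coverage\Ck{}) and (MAC), and then reuses verbatim the arguments from \cref{lemma:postHolds} (independence of the (MAC) post-condition from the regions $\{\heap{},cl,cs,\free{}\}$, and maximality of the correlated pathset so that both the access-safe and \UCk{} continuations appear as disjuncts). Your additional preliminary step ruling out (Inductive), (Equivalence), and (MemEq), and your remarks on the delicate points, merely make explicit what the paper leaves implicit.
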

\begin{proof}
  The $post$ condition of this type may appear in
  a Hoare Triple proof obligation
  generated by \toolName{} for one of the following:
  \begin{itemize}
    \item (Coverage\Ck) where
      $post=\bigvee_{1 \leq j \leq m}{\pathcond{\CPathDN{j}}}$
      for $e^j_{\Xk} = \XEdgeT{(n_{\ddAk},n_{\Ck}}{\APath{};\CPathN{j}}{(n^t_{\ddAk},n^t_{\Ck})} \in \EXk$ ($1 \leq j \leq m$).

    \item (MAC) where $post$ checks the
      address of each memory access in \ddAk{}
      against the addresses of a set of memory accesses in \Ck{} for equality.
      Also, (MAC) checks if a memory access overlaps with address
      regions $\il{\ddAk}{G\cup{}F}\cup{}[\mathtt{esp},\ghost{\spE}]$
      or $\il{\ddAk}{G_w\cup{}F_w}\cup{}[\mathtt{esp},\ghost{\spE}]$.
  \end{itemize}
  The proof arguments for both these cases are identical to the ones made in the proof for \cref{lemma:postHolds}.
\end{proof}

\begin{lemma}[$post(\sigma') \Rightarrow post(\sigma)$ for $n^t_{\ddAk}=\UAk{}$]
\label{lemma:postHoldsForError}
For two states $\sigma$ and $\sigma'$ at node $(\UAk{},n_{\Ck}^t)$,
$post(\sigma')\Rightarrow{}post(\sigma)$ holds.
\end{lemma}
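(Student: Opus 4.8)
The plan is to reuse the case analysis of \cref{lemma:postHolds,lemma:postHoldsForW} after first showing that almost no proof obligation can be evaluated at a node of the form $(\UAk,n^t_{\Ck})$. By the (Safety) requirement, every product-graph edge whose assembly component ends at \UAk{} must target $(\UAk,\UCk)$, so for an edge-based obligation $n^t_{\Ck}=\UCk$. Going through the obligations generated by \toolName{}: (Inductive), (Equivalence), and (MemEq) are imposed only at non-error edges or non-error nodes and hence do not apply; (MAC) is required only for edges whose target differs from $(\_,\UCk)$ and is therefore vacuous here; and a path-infeasibility obligation has $post=\mathtt{false}$, for which $post(\sigma')\Rightarrow post(\sigma)$ is immediate (and in fact such an obligation cannot hold, since in the enclosing argument $\sigma$ executes \APath{} to completion). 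This leaves a single substantive case, namely a (Coverage\Ck{}) path-cover obligation, for which $\CPath=\epsilon$ and $post=\bigvee_{j}\pathcond{\CPathDN{j}}$; by (Safety) again, every correlated path $\CPathN{j}$ in this disjunction terminates at \UCk{}.

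First I would re-establish the region containments used throughout \cref{app:intervalEncodingProof}. Although \cref{lemma:HCLlargest} is stated for an error-free execution, its proof only relies on the absence of an \EW{} event along \APath{}; since \APath{} here reaches \UAk{} rather than \WAk{}, no stackpointer decrement or virtual allocation on \APath{} oversteps $\il{\ddAk}{\{\heap,cl\}}$, so $\il{\ddAk}{\heap}\subseteq\HP$ and $\il{\ddAk}{cl}\subseteq\CL$ still hold in $\sigma$. Combined with (Equivalence) and the construction of $\sigma'$ (which sets $\il{\Ck}{\heap}=\HP$ and $\il{\Ck}{cl}=\CL$), this makes the value of $\il{\Ck}{\heap}$ in $\sigma$ a subset of its value in $\sigma'$, and likewise for $cl$; these containments persist at the evaluation point of $post$ because $\CPath=\epsilon$ and the I/O-free assembly path \APath{} does not touch any \Ck{} address set. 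Every \Ck{} edge condition in $\pathcond{\CPathDN{j}}$ is independent of the regions $\{\heap,cl,cs,\free\}$ except for the $\mathtt{accessIsSafeC}_{\tau,a}(p,\il{\Ck}{\BasedOn{p}})$ guard of \TRule{\Load{\Ck}} and \TRule{\Store{\Ck}}, which contains only the membership test $[p]_{\mathtt{sz}(\tau)}\subseteq\il{\Ck}{\BasedOn{p}}$; since this set can only grow under the containments above, the UB-triggering condition $\neg\mathtt{accessIsSafeC}$ on the final edge of each $\CPathN{j}$ is anti-monotone, i.e. $\neg\mathtt{accessIsSafeC}(\sigma')\Rightarrow\neg\mathtt{accessIsSafeC}(\sigma)$.

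The main obstacle is that an intermediate \emph{safe}-access guard along a path to \UCk{} is monotone in the opposite direction, so the particular witnessing path may not remain feasible when the \Ck{} regions shrink from $\sigma'$ to $\sigma$. I would close this gap using the maximality of the enumerated pathset (\cref{app:pathEnum}): assuming $\pathcond{\CPathDN{j_0}}(\sigma')$ holds for some $j_0$, replay the same determinized choices (fixed by \DXk{}) on the \Ck{} state of $\sigma$. Either every intermediate access stays in bounds, in which case anti-monotonicity of the final guard drives the execution to \UCk{} along $\CPathN{j_0}$ itself; or the execution first becomes unsafe at an earlier access---possibly a read of an address that $\sigma'$ had reassigned from \free{}/$cs$ into $\heap\cup cl$ but that is still free in $\sigma$---in which case \Ck{} reaches \UCk{} via a strictly shorter prefix, of no greater multiplicity, which by maximality is again one of the $\CPathN{j}$. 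In either case some disjunct of $post$ is satisfied in $\sigma$, so $post(\sigma')\Rightarrow post(\sigma)$, completing the lemma. As in \cref{lemma:singleAsmInstruction}, I would flag that this last argument leans on the safety-relaxed semantics of \ddAk{}.
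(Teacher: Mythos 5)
Your proposal is correct and follows essentially the same route as the paper's own proof: both reduce the claim to the (Coverage\Ck{}) obligation (using (Safety) to force every correlated path \CPathN{j} to end at \UCk{}), exploit the \heap{}/$cl$ subset relations between $\sigma$ and $\sigma'$ to get anti-monotonicity of the $\neg\mathtt{accessIsSafeC}$ guard, and invoke maximality of the correlated pathset. The only difference is expository: you make explicit the deviation argument (an intermediate access that is safe in $\sigma'$ may turn unsafe in $\sigma$, sending execution to \UCk{} along a shorter mutually-exclusive path that maximality guarantees is among the $\CPathN{j}$), which the paper compresses into its terse ``maximal set'' remark, and you add the useful observation that \cref{lemma:HCLlargest} still applies here because its proof needs only the absence of \EW{}, not of \EU{}.
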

\begin{proof}
  The $post$ condition of this type may appear in
  only one type of proof obligation
  generated by \toolName{}:
  \begin{itemize}
    \item (Coverage\Ck) where
      $post=\bigvee_{1 \leq j \leq m}{\pathcond{\CPathDN{j}}}$
      for $e^j_{\Xk} = \XEdgeT{(n_{\ddAk},n_{\Ck})}{\APath{};\CPathN{j}}{(n^t_{\ddAk},n^t_{\Ck})} \in \EXk$ ($1 \leq j \leq m$).
  \end{itemize}



  Let the (Coverage\Ck{}) proof obligation
  be $\hoareTriple{\phi_{n_{\Xk}}}{(\APath;\epsilon{})}{\bigvee_{1 \leq j \leq m}{\pathcond{\CPathDN{j}}}}$,
  where $n_{\Xk}=(n_{\ddAk},n_{\Ck})$.
  Due to (Safety),
  each path $\CPathN{j}$ must end at \UCk{}.

From the semantics in \cref{fig:xlateRuleIR},
if the path condition for $\CPathN{j}$
evaluates to {\tt true} on
$\sigma{}'$ (for some $j$), it must also
evaluate to {\tt true} on $\sigma$ --- in other words, whenever
$\sigma{}'$ transitions to \UCk{}, $\sigma{}$ is guaranteed
to transition to \UCk{}.
This is because the edge
conditions in \Ck{} will evaluate either identically on $\sigma$
and $\sigma'$
(due to $\{\CPathN{1},\ldots,\CPathN{m}\}$ being a maximal set),
or in the case of
$\neg{}\mathtt{accessIsSafeC}_{\tau,a}()$,
the edge condition will evaluate to {\tt true} on $\sigma$
if it evaluates to {\tt true} on $\sigma'$
(due to \heap{},$cl$ subset in $\sigma$).
%

Thus, if $post(\sigma')$ evaluates to {\tt true}, $post(\sigma)$ also
evaluates to {\tt true}.
\end{proof}

\begin{proof}[Proof for ($\Leftarrow$)]
  Follows from \cref{lemma:simPreservedAcrossPath,lemma:Apath,lemma:Cpath,lemma:postHolds,lemma:postHoldsForW,lemma:postHoldsForError}.
\end{proof}

\begin{proof}[Proof of \cref{theorem:rewriteHeapCl}]
Follows from ($\Rightarrow$) and ($\Leftarrow$).
\end{proof}

\subsection{Encoding of address set relations}
\label{app:derived_encoding}

\begin{table}
\renewcommand{\arraystretch}{1.2} 
\begin{footnotesize}
  \begin{tabularx}{\textwidth}{|p{6cm}|X|}
  \hline
  Relation & Encoding using $\alpha \in \il{P}{r}$
  \\
  \hline
  \hline
  $\alpha \in \il{P}{\R{}}  \hfill \R{} \subseteq \Rall{}$
            & $\bigvee\limits_{r\in \R{}} \alpha \in \il{P}{r}$
  \\
  \hline
  $\forall_{r \in \NS}{\il{\Ck}{r} = \il{\ddAk}{r}}$
   	        & $\forall{\alpha}: (\alpha \in \il{\Ck}{\NS{}} \Leftrightarrow \alpha \in \il{\ddAk}{\NS{}})$
  \\
  \hline
  $\il{P}{r} = \emptyset$
            & $\forall{\alpha}: \neg (\alpha \in \il{P}{r})$
  \\
  \hline
  $(\ghost{\LB{\z{}}} = \LBr{\il{\Ck}{\z{}}} \land \ghost{\UB{\z{}}} = \UBr{\il{\Ck}{\z{}}})$
   	        & $\forall{\alpha}: \alpha \in \il{\Ck}{\z{}} \Rightarrow (\ghost{\LB{\z{}}} \le_u \alpha \le_u \ghost{\UB{\z{}}})$
  \\
  \hline
  $\Overlap([\alpha_{b},\alpha_{e}], \il{P}{\R{}}) \hfill \R{} \subseteq \Rall{}$
            & $\exists{\alpha}:({\alpha_b \leq_u \alpha \leq_u \alpha_e }) \land \alpha \in \il{P}{\R{}}$
  \\
  \hline
  $[\alpha_{b},\alpha_{e}] \subseteq \il{P}{\R{}} \hfill \R{} \subseteq \Rall{}$
            & $\forall{\alpha}:({\alpha_b \leq_u \alpha \leq_u \alpha_e }) \Rightarrow \alpha \in \il{P}{\R{}}$
  \\
  \hline
  $[\alpha_{b},\alpha_{e}] = \il{P}{r}$
            & $\forall{\alpha}:(\alpha_{b} \leq_u \alpha \leq_u \alpha_{e}) \Leftrightarrow \alpha \in \il{P}{r})$
  \\
  \hline
  $\il{\ddAk}{\{\stk\}\cup\Y} \cup (\il{\ddAk}{\Z{}} \setminus (\ilZv{\ddAk})) = [\mathtt{esp},\ghost{\spE}]$
            & $\forall{\alpha}: (\alpha \in \il{\ddAk}{\{\stk\} \cup \Y} \lor (\alpha \in \il{\ddAk}{\Z{}} \land \neg (\alpha \in \ilZv{\ddAk}))) \Leftrightarrow (\mathtt{esp} \leq_u \alpha \leq_u \ghost{\spE})$
  \\
  \hline
  $\il{\ddAk}{\{cs,cl\}} = [\ghost{\spE}+1,\ghost{\stkE}]$
            & $\forall{\alpha}: (\alpha \in \il{\ddAk}{\{cs,cl\}}) \Leftrightarrow (\ghost{\spE}+1 \leq_u \alpha \leq_u \ghost{\stkE})$
  \\
  \hline
\end{tabularx}
\end{footnotesize}
\caption{\label{tab:addressSetEncodings} Encodings of address set relations using the address set membership predicate.
  $\Rall{}$ is the set of all region identifiers.
}
\end{table}

\Cref{tab:addressSetEncodings} shows the encodings of various address set relations
using the address set membership predicate, $\alpha \in \il{P}{r}$.
These encodings follow from the definition of each relation in a straightforward manner.

\begin{table}
  \begin{footnotesize}
\renewcommand{\arraystretch}{1.2} 
    \begin{tabularx}{\textwidth}{|p{0.40\textwidth}|X|}
  \hline
  Instruction & SMT encoding using \MemallocP{P} \\
  \hline
  \hline
  $\il{P}{r} \Assign \il{P}{r} \cup [\alpha_b,\alpha_e];$ \hfill $r \in \{ \stk \} \cup \Z{}$
      & ${\MemallocP{P}}' = \updMA{\MemallocP{P}}{x \in [\alpha_b,\alpha_e]}{r}$
      \\
  \hline
  $\il{P}{\z{}} \Assign \emptyset;$
      & ${\MemallocP{P}}' = \updMA{\MemallocP{P}}{\selectMath_1(\MemallocP{P}, x) = \z{}}{\mathtt{free}}$
      \\
  \hline
  $\il{\ddAk}{\stk} \Assign \il{\ddAk}{\stk} \setminus [\alpha_b,\alpha_e];$
      & ${\MemallocP{\ddAk}}' = \updMA{\MemallocP{\ddAk}}{x \in [\alpha_b,\alpha_e]}{\mathtt{free}}$
      \\
  \hline
  $\il{\ddAk}{\stk} \Assign \{ [\mathtt{esp}, \ghost{\spE}]\} \setminus \il{\ddAk}{\Y}; $
      & ${\MemallocP{\ddAk}}' = \updMA{\MemallocP{\ddAk}}{x \in [\mathtt{esp},\ghost{\spE}] \land \bigwedge_{\y \in \Y} x \notin \il{\ddAk}{\y} }{\stk}$
      \\
  \hline
\end{tabularx}
  \end{footnotesize}
\caption{\label{tab:instructionEncodings}
SMT encoding of address set updating instructions using allocation state array \MemallocP{P}.  $P \in \{ \Ck, \ddAk \}$.  ${\MemallocP{P}}'$ is the allocation state array after executing the instruction.}
\end{table}

\Cref{tab:instructionEncodings} shows the allocation state array \MemallocP{P} based SMT encoding
of the transfer functions of the transition graph instructions that involve address sets --- these encodings follow
from the definition of an allocation state array.
The interval SMT encodings
utilize ghost variables $\ghost{\Empty{\z}}$, $\ghost{\LB{\z}}$, $\ghost{\UB{\z}}$ (as shown in \cref{tab:intervalEncodings})
and the update logic for these ghost variables is available in \cref{fig:xlateRuleIR}.

Given an input allocation state array \MemallocP{P}, an address set updating instruction produces a new allocation state array
${\MemallocP{P}}'$.
To show the encodings in \cref{tab:instructionEncodings}, we use an auxiliary operator, \updMCSym{},
to encode the update of an allocation state array $\MemallocP{P}$:
if ${\MemallocP{P}}' = \updMA{\MemallocP{P}}{c}{v}$,
then,
\begin{align*}
  \forall{\alpha}: \phantom{{} \land{} \neg} (\lambda x {.} c) (\alpha) &\Rightarrow \selectMath_1({\MemallocP{P}}',\alpha) = v
  \\
  {} \land{} \neg (\lambda x {.} c) (\alpha) &\Rightarrow \selectMath_1({\MemallocP{P}}',\alpha) = \selectMath_1(\MemallocP{P}, \alpha)
\end{align*}
Here, $(\lambda x {.} c)$ represents a function that takes as input value $x$ and returns a boolean
value evaluated through expression $c$, and $(\lambda x {.} c) (\alpha)$ represents the application of
this function to value $\alpha$.
Thus, \updMA{\MemallocP{P}}{c}{v} represents the modification of allocation
state array $\MemallocP{P}$ to value $v$ for all
addresses $\alpha$ that satisfy
the boolean condition $c$.
In other words, \updMA{\MemallocP{P}}{c}{v} is equivalent to
$$
\store_1(\ldots\store_1(\ldots\store_1(\MemallocP{P}, \alpha_1, v),\ldots, \alpha_i, v),\ldots, \alpha_n, v)
$$
for all $\alpha_1,\ldots,\alpha_i,\ldots,\alpha_n$ where the predicate $c$ holds.

As an example, in \cref{tab:instructionEncodings}, $\il{P}{\z{}} \Assign \il{P}{\z{}} \cup [\alpha_b,\alpha_e]$ is encoded as
${\MemallocP{P}}' = \updMA{\MemallocP{P}}{x \in [\alpha_b,\alpha_e]}{\z{}}$
which translates to
``mark the addresses in interval $[\alpha_b,\alpha_e]$ as belonging to region \z{} in ${\MemallocP{P}}'$''.

\subsection{Reasons for failures}
\label{app:failReasons}

\begin{table}[H]
    \centering
    \begin{tabularx}{\textwidth}{|p{2.5cm}|p{2cm}|X|}
        \hline
        Benchmark & Compiler & Failure reason \\
        \hline
        \hline
        \texttt{vcu}       & GCC & \multirow{2}{=}{SMT query timeout during affine invariant inference}  \\
        \cline{1-2}
        \texttt{vcu}       & ICC & \\
        \hline
        \texttt{vsl}       & GCC & Limitation of $\mathtt{dealloc}_s$ annotation --- see \cref{sec:asmAnnotLimitation} \\
        \hline
        \texttt{vsl}       & \multirow{6}{=}{ICC} & \multirow{6}{=}{Non-affine invariant required --- see \cref{sec:nonAffineInv}} \\
        \cline{1-1}
        \texttt{vilcc}     & & \\
        \cline{1-1}
        \texttt{vilce}     & & \\
        \cline{1-1}
        \texttt{fib}       & & \\
        \cline{1-1}
        \texttt{rod}       & & \\
        \hline
        \texttt{min}       & GCC & Incompleteness in affine invariant inference due to the chosen set of procedure variables --- see \cref{sec:invInterestingExprs} \\
        \hline
    \end{tabularx}
    \caption{\label{tab:fails}Failure reasons for benchmarks shown in \cref{fig:graph_lt}.}
\end{table}

\Cref{tab:fails} lists the failures and their reasons for benchmarks in \cref{fig:graph_lt}.
We discuss the reasons for failures in detail in following sections.

\subsubsection{Limitation of the $\mathtt{alloc}_s/\mathtt{dealloc}_s$ annotation algorithm in the blackbox setting}
\label{sec:asmAnnotLimitation}

As mentioned in \cref{sec:algo}, in the blackbox setting,
when hints from the compiler are not available, the annotation algorithm ({\em asmAnnotOptions})
limits the insertion of a $\mathtt{dealloc}_s$ instruction to only those PCs that occur
just before an instruction that updates the stackpointer register {\tt esp}.
This limitation may cause a refinement proof to fail in some (not all)
of the situations
where a compiler implements merging of multiple allocations (deallocations) into a single stackpointer
decrement (increment) instruction. This is the reason for the failure to validate GCC's compilation
of {\tt vsl}.

\begin{figure}[H]
\begin{subfigure}[b]{.35\textwidth}
\begin{myexamplefnsize}
~{\tiny \textcolor{mygray}{C0:}}~ int vsl(int n)
~{\tiny \textcolor{mygray}{C1:}}~ {
~{\tiny \textcolor{mygray}{C2:}}~   if (n <= 0)
~{\tiny \textcolor{mygray}{C3:}}~     return 0;
~{\tiny \textcolor{mygray}{C4:}}~   int v[n];
~{\tiny \textcolor{mygray}{C5:}}~   for (int i = 0; i < n; ++i) {
~{\tiny \textcolor{mygray}{C6:}}~     v[i] = i*(i+1);
~{\tiny \textcolor{mygray}{C7:}}~   }
~{\tiny \textcolor{mygray}{C8:}}~   return v[0]+v[n-1];
~{\tiny \textcolor{mygray}{C9:}}~ }
\end{myexamplefnsize}
	\caption{\label{fig:deallocSunkC} C source code}
\end{subfigure}
\begin{subfigure}[b]{.6\textwidth}
	\centering
  	\includegraphics{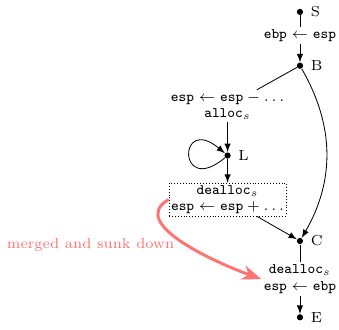}
	\caption{\label{fig:deallocSunkCFG} Abbreviated control-flow graph (CFG) of GCC compiled assembly}
\end{subfigure}
\caption{\label{fig:deallocSunk} {\tt vsl} procedure from \cref{tab:benchmarks} and its CFG of GCC compiled assembly.}
\end{figure}

\Cref{fig:deallocSunk} shows the {\tt vsl} procedure (\cref{fig:deallocSunkC}) and a sketch of the
CFG of the assembly procedure generated by GCC at O3 optimization level (\cref{fig:deallocSunkCFG}).
The assembly path $S\rightarrow{}B\rightarrow{}C\rightarrow{}E$ represents the
case when $n<=0$ and the procedure exits early (without allocating any local variable).
PC with label $L$ represents the loop head in the assembly procedure, and the allocation and
deallocation
of the VLA {\tt v} is supposed to happen before entering the loop and after exiting the loop respectively.

On the assembly procedure path $L\rightarrow{}C$, the assembly instruction that reclaims stack space (by
incrementing the stackpointer) for deallocating {\tt v} has been merged with an instruction that
restores the stackpointer to its value at the beginning of the function ({\tt ebp}). Thus, while
the original stackpointer increment instruction would have been at the end of the $L\rightarrow{}C$
path, the
merged instruction is sunk by the compiler to lie within the path $C\rightarrow{}E$.
As can be seen, this transformation saves an extra instruction
to update the stack pointer on the path $L\rightarrow{}C\rightarrow{}E$.

In the absence of compiler hints (blackbox setting), our tool considers
the annotation of a {\tt dealloc$_s$} instruction in assembly
only at a PC that immediately precedes an instruction that updates the stackpointer.
In this example, the only candidate PC for annotating {\tt dealloc$_s$} (considered by
our blackbox algorithm) is on
the path $C\rightarrow{}E$.
However,
the required position of the {\tt dealloc$_s$} instruction was at
the end of the path $L\rightarrow{}C$ (which is not considered
because there is no instruction that updates the
stackpointer at the end of the path $L\rightarrow{}C$). Thus, our blackbox
algorithm cannot find a refinement proof.
On the other hand, providing a manual hint to the tool that it should consider annotating a {\tt dealloc$_s$}
instruction at the end of the $L\rightarrow{}C$ path causes the algorithm
to successfully return a refinement proof for GCC's compilation of {\tt vsl}.

It is worth asking the question: what happens if the tool simply annotates a {\tt dealloc$_s$} instruction
just before the instruction that updates the stackpointer to {\tt ebp} on the $C\rightarrow{}E$ path?
Such an annotation violates the stuttering trace equivalence condition on the procedure
path $S\rightarrow{}B\rightarrow{}C\rightarrow{}E$: in the C procedure, there is no deallocation (or allocation)
on the early exit path ($n<=0$), but this annotation will
cause a {\tt dealloc$_s$} instruction to be executed on the correlated path ($S\rightarrow{}B\rightarrow{}C\rightarrow{}E$)
in
the assembly procedure. Because a {\tt dealloc$_s$} instruction
generates a trace event through the {\tt wr} instruction, this candidate annotation
therefore fails to show the equivalence of traces on at least one pair of correlated paths. Thus, this
candidate annotation
is discarded by our algorithm.

\subsubsection{Non-affine invariant shape requirement in some ICC benchmarks}
\label{sec:nonAffineInv}

Some compilations of VLA containing code by ICC have a certain assembly code pattern
which require a particular non-affine invariant shape for completing the
refinement proof.

For allocation of a VLA, ICC uses the following sequence of instructions for decrementing the stackpointer:
\begin{align*}
  reg_1 &\gets \mathtt{``Allocation\ size\ in\ bytes"}
  \\
  reg_2 &\gets (reg_1 + \mathtt{C}) \mathrel{\&} {\sim \mathtt{C}}
  \\
  \mathtt{esp} &\gets \mathtt{esp} - reg_2
\end{align*}
Here, $reg_1$ and $reg_2$ are assembly registers (other than $\mathtt{esp}$),
$\mathtt{esp}$ is the stackpointer register, and $\mathtt{C}$ is a bitvector constant. 
The value in $reg_1$ is the allocation size of VLA in bytes; it matches the corresponding allocation
size in the C procedure.  For example, for a VLA declaration \texttt{int v[n]}, $reg_1$ would have value
\texttt{n*4} (recall that 4 is the size of an \texttt{int} in 32-bit configuration).
The value in $reg_2$ is the allocation amount after adjusting for alignment requirements, e.g.,
\texttt{v} (of \texttt{int v[n]}) would have an alignment of at least 4 in 32-bit x86.

At time of deallocation, the stackpointer register is simply incremented by the same value as during allocation:
\begin{align*}
  reg_1 &\gets \mathtt{``Allocation\ size\ in\ bytes"}
  \\
  reg_2 &\gets (reg_1 + \mathtt{C}) \mathrel{\&} {\sim \mathtt{C}}
  \\
  \mathtt{esp} &\gets \mathtt{esp} + reg_2
\end{align*}

Thus, in order to prove that the stack deallocation consumes only the stack region (and
procedure does not go to \UAk{}),
we must have an invariant stating that the current stackpointer value, $\mathtt{esp}$,
is at least $reg_2$ bytes below the stackpointer value at assembly procedure entry,
$\ghost{\spV{entry}}$ (recall that $\ghost{\spV{entry}}$ holds the {\tt esp} value at entry
of the assembly procedure $\ddAk$ and is guaranteed to be a part of the stack region).
However, the value in $reg_2$ has a non-affine relationship with the allocation size,
(which is tracked in a ghost variable).
This non-affine relationship cannot be captured by any shape in our predicate grammar for candidate
invariants at a product graph node.

Therefore, we fail to prove that the deallocated region was part of stack and, consequently,
fail to
prove that the assembly procedure will not go to \UAk{} during deallocation
(if the \Ck{} procedure does not go to \UCk{}).

Note that, however, our invariants shapes are able to capture the invariant
that stack is large enough to allocate $reg1$ bytes through the \pred{spOrd}
shape in the predicate grammar (\cref{fig:invGrammar}).
This is required to ensure that \ddAk{} does not go to \UAk{} during allocation.


\subsubsection{Choice of program variables for invariant inference}
\label{sec:invInterestingExprs}

In the \pred{affine} invariant shape $\sum_{i}{c_iv_i}=c$ of the predicate grammar
(\cref{fig:invGrammar}),
the program variables $v_i$ are chosen from a
set $V$ that includes the pseudo-registers in $\Ck$
and registers and stack slots in $\Ak$.
The candidate variables for correlation in $V$ do not include
``memory slots'' in $\Ck$ of the
shape {\tt \select$_\sizeMath{}${}(mem,$\alpha$)}
(little-endian concatenation
of \size{} bytes starting at $\alpha$ in the array {\tt mem})
to avoid
an explosion in the number of candidate invariants (and consequently
the running time of the algorithm).

This causes a failure while
validating the GCC compilation (at O3)
of the {\tt min} benchmark ({\tt minprintf} function from K\&R \cite{knr}).
GCC register-allocates the \texttt{va\_list} variable (that maintains the
current position in the variadic argument). On the
other hand, the LLVM$_d$ IR maintains this
pointer value in a local variable (allocated using an {\tt alloca} instruction)
--- the loads and stores to this local variable
$\llangle ap \rrangle$ can be seen in \cref{fig:c2IRVariadic}.
Thus,
for a refinement proof to succeed,
a validator must relate the assembly register's value with the value stored inside
the local variable's memory region ({\tt \select$_\sizeMath{}${}(mem,\&va\_list\_var)}).
Because our invariant inference algorithm does
not consider memory slots, this required relation is not identified, resulting in
a proof failure.

It may be worth asking the question: why does our choice of program
variables work for the other benchmarks?
Due to the \texttt{mem2reg} pass used in $\Ck$
before computing equivalence,
the only memory
slots that remain in procedure $\Ck$ pertain to potentially
address-taken variables.
Our requirements on the product graph \Xk{}
ensure that the memory regions corresponding to address-taken
local variables (and global variables) of $\Ck$ and $\Ak$
are equated in \Xk{}.
Thus, relating the {\em addresses} of potential memory accesses in $\Ck$
and $\Ak$ using affine invariants
and considering only the memory slots from $\Ak$
largely suffices for invariant inference to validate most compilations
(but not for GCC's compilation of {\tt min}).

\subsection{Evaluation of programs with multiple VLAs}

\begin{table}[t]
  \begin{footnotesize}
\centering
\begin{tabularx}{\textwidth}{lccccccccc}
\toprule
Name                       & {\tt ALOC} & \# of locals & Time (s) & Nodes  & Edges  & \# of Qs & Avg. Q time (s) & Max. Q time (s) & Avg. inf. inv./node \\
\midrule
{\tt vil1}                 & 33         & 1            & 305      & 8      & 9      & 1923     & 0.09            & 6.40            & 75.5  \\
{\tt vil2}                 & 35         & 2            & 692      & 12     & 13     & 2498     & 0.17            & 4.37            & 93.6  \\
{\tt vil3}                 & 37         & 3            & 1295     & 16     & 17     & 3468     & 0.23            & 16.83           & 120.7 \\
{\tt vil4}                 & 41         & 4            & 6617     & 20     & 21     & 10026    & 0.37            & 129.31          & 166.7 \\
\bottomrule
\end{tabularx}
\caption{\label{tab:vil}Statistics obtained by running \toolName{} on functions with variable number of locals in a loop.} 
  \end{footnotesize}
\end{table}

\Cref{tab:vil} shows the quantitative results for validating the GCC O3 translations
of \texttt{vil1}, \texttt{vil2}, \texttt{vil3}, and \texttt{vil4}, containing one, two, three, and four VLA(s) in a loop respectively.
The general structure of the programs is shown in \cref{fig:vilN}.
\Cref{tab:vil} shows the run time in seconds (Time (s)), number of product graph nodes and edges (Nodes and Edges),
number of SMT queries (\# of Qs), average and maximum query time in seconds (Avg. Q time (s) and Max. Q time (s)),
and average number of invariants inferred at a product graph node (Avg. inf. inv./node) for the four programs. As the search space increases, the search algorithm takes longer.  Further, each SMT proof obligation also increases in size (and complexity) as the number of inferred invariants (at a node) that participate in an SMT query (translated from a Hoare triple) increase with the number of VLAs.

\begin{figure}[H]
  \begin{subfigure}[t]{.3\textwidth}
    \hfill
  \end{subfigure}
  \begin{subfigure}[t]{.7\textwidth}
\begin{myexamplesmall}
int vil~$\mathscr{N}$~(unsigned n)
{
  int r = 0;
  for (unsigned i = 1; i < n; ++i) {
    int v1[4*i], v2[4*i], ~{\ldots}~, v~$\mathscr{N}$~[4*i];
    r += foo~$\mathscr{N}$~(v1,v2,~{\ldots}~,v~$\mathscr{N}$~, i);
  }
  return r;
}
\end{myexamplesmall}
  \end{subfigure}
  \caption{\label{fig:vilN} General structure of the programs in \cref{tab:vil}. $\mathscr{N}$ can be substituted with 1,2,3,4 to obtain \texttt{vil1}, \texttt{vil2}, \texttt{vil3}, \texttt{vil4} respectively.}
\end{figure}

We discuss the validation of \texttt{vil3} in more detail through \cref{fig:example2}.  The addresses of \texttt{v2} and \texttt{v3} depend on the address and allocation size of
\texttt{v1}, which are different in each loop iteration. Our algorithm identifies an annotation of the assembly program such that relations between local variable addresses (in C) and stack addresses (in assembly) can be identified. These address relations rely on a lockstep correlation of the annotated (de)allocation instructions in assembly with the originally present (de)allocation instructions in C.  The positions and the arguments of the annotated {\tt alloc$_s$} and {\tt dealloc$_s$} instructions in \cref{fig:example2a} determine these required address relations.

\begin{figure}
\begin{minipage}[t]{0.4\textwidth} 
\begin{subfigure}[t]{\textwidth}
\begin{myexamplesmall}
int vil3(unsigned n)
{
  int r = 0;
  for (unsigned i = 1; i < n; ++i) {
   int v1[4*i];
   int v2[4*i];
   int v3[4*i];
   r += foo(v1,v2,v3,i);
  }
  return r;
}
\end{myexamplesmall}
\caption{\label{fig:example2c}{\tt vil3} C Program}
\end{subfigure}
\begin{subfigure}[t]{\textwidth}
\begin{myexamplesmallir}
~{\tiny \textcolor{mygray}{I0: }}~ vil3(unsigned* n):
~{\tiny \textcolor{mygray}{I1: }}~   r=0;
~{\tiny \textcolor{mygray}{I2: }}~   i=1;
~{\tiny \textcolor{mygray}{I3: }}~   if(i ~$>=_u$~ *n) goto I22;
~{\tiny \textcolor{mygray}{I4: }}~     ~$p_{I4}$~=alloc 4*i,int;
~{\tiny \textcolor{mygray}{I5: }}~     ~$p_{I5}$~=alloc 4*i,int;
~{\tiny \textcolor{mygray}{I6: }}~     ~$p_{I6}$~=alloc 4*i,int;
~{\tiny \textcolor{mygray}{I7: }}~     ~$p_{I7}$~=alloc 1,int*;
~{\tiny \textcolor{mygray}{I8: }}~     ~$p_{I8}$~=alloc 1,int*;
~{\tiny \textcolor{mygray}{I9: }}~     ~$p_{I9}$~=alloc 1,int*;
~{\tiny \textcolor{mygray}{I10:}}~     ~$p_{I10}$~=alloc 1,int;
~{\tiny \textcolor{mygray}{I11:}}~     *~$p_{I7}$~=~$p_{I4}$~; *~$p_{I8}$~=~$p_{I5}$~; *~$p_{I9}$~=~$p_{I6}$~; *~$p_{I10}$~=i;
~{\tiny \textcolor{mygray}{I12:}}~     t=call int foo(~$p_{I7}$~, ~$p_{I8}$~, ~$p_{I9}$~, ~$p_{I10}$~);
~{\tiny \textcolor{mygray}{I13:}}~     dealloc ~$p_{I7}$~;
~{\tiny \textcolor{mygray}{I14:}}~     dealloc ~$p_{I8}$~;
~{\tiny \textcolor{mygray}{I15:}}~     dealloc ~$p_{I9}$~;
~{\tiny \textcolor{mygray}{I16:}}~     dealloc ~$p_{I10}$~;
~{\tiny \textcolor{mygray}{I17:}}~     r = r + t;
~{\tiny \textcolor{mygray}{I18:}}~     dealloc ~$p_{I6}$~;
~{\tiny \textcolor{mygray}{I19:}}~     dealloc ~$p_{I5}$~;
~{\tiny \textcolor{mygray}{I20:}}~     dealloc ~$p_{I4}$~;
~{\tiny \textcolor{mygray}{I21:}}~     i++; goto I3;
~{\tiny \textcolor{mygray}{I22:}}~   ret r;
\end{myexamplesmallir}
\caption{\label{fig:example2i}(Abstracted) IR of {\tt vil3} (after the mem2reg pass)}
\end{subfigure}
\end{minipage}
\hfill
\begin{minipage}[t]{0.55\textwidth}
\begin{subfigure}[t]{\textwidth}
\begin{myexamplesmallasm}
~{\tiny \textcolor{mygray}{A0: }}~ vil3:
~{\tiny \textcolor{mygray}{A1: }}~   push ebp~;~ ebp = esp~;~
~{\tiny \textcolor{mygray}{A2: }}~   push {edi, esi, ebx}~;~ esp = esp-12~;~ esi = 0~;~
~{\tiny \textcolor{mygray}{A3: }}~   if(mem~$_4$~[ebp+8] ~$\le_u$~ 1) jmp A18
~{\tiny \textcolor{mygray}{A4: }}~   ebx = 1~;~
~{\tiny \textcolor{mygray}{A5: }}~     edi = esp~;~
~{\tiny \textcolor{mygray}{A6: }}~     eax = ebx~;~ eax = eax << 4~;~ ; eax = 4*4*i
~{\tiny \textcolor{mygray}{A7: }}~     esp = esp - eax~;~
~{\tiny \textbf{\textcolor{red}{A7.1: }}}~    ~\textbf{\textcolor{red}{alloc$_s$ esp,eax,4,I4; }}~ ; allocation of v1
~{\tiny \textcolor{mygray}{A8: }}~     edx = esp~;~
~{\tiny \textcolor{mygray}{A9: }}~     esp = esp - eax~;~
~{\tiny \textbf{\textcolor{red}{A9.1: }}}~    ~\textbf{\textcolor{red}{alloc$_s$ esp,eax,4,I5; }}~ ; allocation of v2
~{\tiny \textcolor{mygray}{A10:}}~     ecx = esp~;~
~{\tiny \textcolor{mygray}{A11:}}~     esp = esp - eax~;~
~{\tiny \textbf{\textcolor{red}{A11.1: }}}~   ~\textbf{\textcolor{red}{alloc$_s$ esp,eax,4,I6; }}~ ; allocation of v3
~{\tiny \textcolor{mygray}{A12:}}~     eax = esp~;~
~{\tiny \textcolor{mygray}{A13:}}~     push {ebx, eax, ecx, edx}~;~
~{\tiny \textbf{\textcolor{red}{A13.1:}}}~    ~\textbf{\textcolor{red}{alloc$_s$ esp,\ \ \ \ 4, 4, I7; }}~
~{\tiny \textbf{\textcolor{red}{A13.2:}}}~    ~\textbf{\textcolor{red}{alloc$_s$ esp+4,\ \ 4, 4, I8; }}~
~{\tiny \textbf{\textcolor{red}{A13.3:}}}~    ~\textbf{\textcolor{red}{alloc$_s$ esp+8,\ \ 4, 4, I9; }}~
~{\tiny \textbf{\textcolor{red}{A13.4:}}}~    ~\textbf{\textcolor{red}{alloc$_s$ esp+12, 4, 4, I10; }}~
~{\tiny \textcolor{mygray}{A14:}}~     call ~\textbf{\textcolor{teal}{int}}~ foo~\textbf{\textcolor{teal}{(int* esp, int* esp+4, int* esp+8, unsigned esp+12)}};~
~{\tiny \textbf{\textcolor{red}{A14.1:}}}~    ~\textbf{\textcolor{red}{dealloc$_s$ I10;}}~
~{\tiny \textbf{\textcolor{red}{A14.2:}}}~    ~\textbf{\textcolor{red}{dealloc$_s$ I9;}}~
~{\tiny \textbf{\textcolor{red}{A14.3:}}}~    ~\textbf{\textcolor{red}{dealloc$_s$ I8;}}~
~{\tiny \textbf{\textcolor{red}{A14.4:}}}~    ~\textbf{\textcolor{red}{dealloc$_s$ I7;}}~
~{\tiny \textcolor{mygray}{A15:}}~     esi = esi + eax~;~ ebx = ebx + 1~;~
~{\tiny \textbf{\textcolor{red}{A15.1:}}}~    ~\textbf{\textcolor{red}{dealloc$_s$ I6;}}~
~{\tiny \textbf{\textcolor{red}{A15.2:}}}~    ~\textbf{\textcolor{red}{dealloc$_s$ I5;}}~
~{\tiny \textbf{\textcolor{red}{A15.3:}}}~    ~\textbf{\textcolor{red}{dealloc$_s$ I4;}}~
~{\tiny \textcolor{mygray}{A16:}}~     esp = edi~;~
~{\tiny \textcolor{mygray}{A17:}}~     if(mem~$_4$~[ebp+8] ~$\ne$~ ebx) jmp A5~;~
~{\tiny \textcolor{mygray}{A18:}}~   esp = ebp-12~;~ eax = esi~;~
~{\tiny \textcolor{mygray}{A19:}}~   pop {ebx, esi, edi, ebp}~;~
~{\tiny \textcolor{mygray}{A20:}}~   ret~;~
\end{myexamplesmallasm}
\caption{\label{fig:example2a}(Abstracted) 32-bit x86 Assembly Code for {\tt vil3}.}
\end{subfigure}
\end{minipage}
\caption{\label{fig:example2}{\tt vil3} program with three VLAs in a loop and its lowerings to IR and assembly.
Subscript $_u$ denotes unsigned comparison.
Bold font (parts of) instructions are added by our algorithm.
}

\end{figure}

 \subsection{Soundness and Completeness Implications of {\tt isPush()} Choice}
 \label{app:isPush}

An update to the stackpointer {\tt esp} in the assembly procedure \Ak{}
can be through any arbitrary instruction, such as {\tt esp $\Assign$ $\mathbb{Y}$}.
If the previous {\tt esp} value, just before this instruction was executed, was
$\mathbb{X}$,
then the stackpointer update distance
is $\mathbb{D} = \mathbb{X}-\mathbb{Y}$.  In general,
it is impossible to tell whether this instruction intends a stack growth
by $\mathbb{D}$ bytes (push) or a shrink by $(2^{32}-\mathbb{D})$ bytes (pop).
The modeling for the two cases is different: for stack push,
an overlap of the interval representing the push with non-stack region causes a
\EW{} error,
while for stack pop, the stackpointer going outside stack region causes \EU{} error.
Refinement is trivially proven if \Ak{} terminates with \EW{} error.
Unfortunately, this seems impossible to disambiguate just
by looking at the assembly code --
to tackle this dilemma, we assume an oracle
function,
${\tt isPush}(\PC{\Ak}{j}, \mathbb{X}, \mathbb{Y})$, that
returns {\tt true} iff the assembly instruction at PC \PC{\Ak}{j}
represents a stack push.

In \cref{sec:translationA}, we define an
$\mathtt{isPush}(\PC{\Ak}{j}, \mathbb{X}, \mathbb{Y})$
operator for an assembly instruction at \PC{\Ak}{j}
based on thresholding of the update distance $\mathbb{D} = \mathbb{X} - \mathbb{Y}$
by a threshold value $\mathbb{K} = 2^{31}-1$:
$$
\mathrm{\tt isPush}(\PC{\Ak}{j}, \mathbb{X}, \mathbb{Y}) \Leftrightarrow  \mathbb{X} - \mathbb{Y} \le_u \mathbb{K}
$$
Here, $\mathbb{K}$ represents the threshold value for the stack update
distance $\mathbb{X} - \mathbb{Y}$, below which we consider the update to be a push.

If $\mathbb{K}$ is smaller than required, then we risk misclassifying
stack pushes (stack growth) as stack pops (stack shrink).  On the other
hand, if $\mathbb{K}$ is bigger than required, then we risk
misclassifying stack pops (stack shrink) as stack pushes (stack growth).
In the latter case (when $\mathbb{K}$ is bigger than required), we would
incorrectly trigger \EW{}, instead of \EU{},
and that would cause the refinement proof to complete incorrectly (soundness
problem).  In the extreme case, if $\mathbb{K}=2^{d}-1$ (where the address
space has size $2^d$),
then even 4-byte stack pops (e.g., through the x86 {\tt pop} instruction)
would be considered as stack pushes (growth), and we would incorrectly
trigger in every situation where \EU{} was expected,
and the refinement proof would complete
trivially (and unsoundly).

On the other hand, if $\mathbb{K}$ is smaller than required, we may
incorrectly count some stack growth operations as stack pops.
In these cases, we will have show to absence of \EU{}
(as part of (Safety)) for a stack pop for which a stack push
never happened.
This would result in an refinement failure (completeness problem).

\subsubsection{$\mathbb{K}$ needs to be at least $2^{d-1}$ in the presence of VLAs}

Consider a VLA declaration, ``{\tt char v[n]}'' in \Ck{}.
In this case,
{\tt n} could be any positive integer $\le_u${\tt INT\_MAX}; this upper
bound of {\tt INT\_MAX} comes from the variable size limits
imposed by the C language.  The corresponding allocation statement
in assembly code would be something like ``{\tt $\PC{\Ak}{j}$: esp $\Assign$ esp $-$ n}''.
The resulting condition for not triggering \EU{} is
(from \TRule{Op-esp} of \cref{fig:xlateRuleAsm}):
\begin{equation*}
  \begin{split}
    \neg(
    \begin{aligned}[t]
     & \neg \mathtt{isPush}(\PC{\Ak}{j}, \mathtt{esp}, \mathtt{esp} - \mathtt{n}) \\
      \land \, & \mathtt{esp} \ne \mathtt{esp} - \mathtt{n} \\
      \land \, & \neg \intervalContainedInAddrSet{\mathtt{esp}}{\mathtt{esp}-\mathtt{n}}{\il{\Ak}{\stk}})
    \end{aligned}
  \end{split}
\end{equation*}
or equivalently,
\begin{equation}\label{eqn:leastk}
\begin{split}
(\mathtt{n} >_u \mathbb{K}) \Rightarrow (&
\begin{aligned}[t]
           & \mathtt{n} = 0_{\bv{32}} \\
  \lor \,  & (
  \begin{aligned}[t]
            & \mathtt{esp} \ne 0_{\bv{32}} \\
   \land \, & (\mathtt{esp} \le_u \mathtt{esp}-\mathtt{n}) \\
   \land \, & [\mathtt{esp}, \mathtt{esp}-\mathtt{n}] \subseteq \il{\Ak}{\stk}))
  \end{aligned}
\end{aligned}
\end{split}
\end{equation}
Now, if $\mathbb{K}$ is smaller than the biggest possible value of {\tt n}, then
there exist values of {\tt n} where the left clause
(left of $\Rightarrow$)
of \cref{eqn:leastk}
would evaluate to \texttt{true}.
Consequently, there exist values of \texttt{n} for which the right clause has to be proven
\texttt{true},
i.e., prove that the stack region is at least $2^d - \mathtt{n}$ bytes large.
It may not be possible to prove such strong conditions in all cases and thus we
get false refinement check failures.
Because the C language
constrains {\tt n} to be $\le_u \mathrm{\tt INT\_MAX} (= 2^{d-1}-1)$,
$\mathbb{K}\ge_u 2^{d-1}-1$ seems sufficient to
be able to validate such translations.

However, $\mathbb{K} = 2^{d-1}-1$ is also insufficient, because typically the
code generated by a compiler for
``{\tt char v[n]}'' also aligns {\tt n} using
a rounding factor $r=2^i$:
``${\tt esp} \Assign \mathtt{esp}-(\lceil \frac{\mathtt{n}}{r} \rceil\cdot r)$''.
In this scenario, even though $\mathtt{n} \le_u (2^{d-1}-1)$,  it is possible for
$\mathbb{D} = \lceil \frac{\mathrm{\tt n}}{r} \rceil\cdot r$ to be
greater than $(2^{d-1}-1)$. Thus, if $\mathbb{K} = 2^{d-1}-1$, there exist legal
values of {\tt n} for which stack region is at least
$2^d - \mathtt{n}$
bytes large has to be proven
to demonstrate absence of \EU{}.
The choice $\mathbb{K} = 2^{d-1}$ allows for such alignment
padding, and thus allows the refinement proof to be completed in these situations.

\subsubsection{$\mathbb{K} = 2^{d-1}$ can still lead to completeness problems}
\label{sec:completenessProblem}
{\em If a single stack update allocates two VLAs at once, we can incorrectly classify
a stack growth as a stack shrink}.

Consider two C statements in sequence, ``{\tt char v1[m]; char v2[n];}''.  In this
case both {\tt m} and {\tt n} can individually be as large as $2^{d-1}-1$.  If the
compiler decides to use a single assembly instruction to allocate both these
variables, then it is possible for a single stack update distance $\mathbb{D}$
to be greater than $\mathbb{K} = 2^{d-1}$.
Thus, in these cases,
the refinement
proof may fail if we are not able to prove that stack is large enough
to contain $2^d - \mathbb{D}$ bytes (for the classified stack pop).
This is a completeness problem.

\subsubsection{$\mathbb{K} = 2^{d-1}$ can also lead to soundness problems}
{\em If a single stack update deallocates two VLAs at once, we can incorrectly
classify a stack shrink as a stack growth}.

Consider two C statements in sequence, ``{\tt char v1[$2^{d-1}-1$]; char v2[2];}''.
If during deallocation,
the compiler decides to use a single instruction to deallocate both the
arrays, e.g., ``{\tt  esp $\Assign$ esp + (2$^{d-1}$-1) + 2}''
for a total update distance of:
$$
\mathbb{D} = -((2^{d-1}-1) + 2) = 2^{d-1}-1\ \ \ \ \ \ \  ({\tt mod}\ 2^d)
$$
Here,
because $2^{d-1}-1 \leq_u \mathbb{K}$
we will classify this ``deallocation'' as a stack push
(allocation) of $(2^{d-1}-1)$ bytes
and trigger \EW{} if allocation of $(2^{d-1}-1)$ bytes is not possible.
This is a soundness problem
because triggering \EW{} under such a weaker condition
may lead the refinement proof to succeed incorrectly.

\subsubsection{Solution}
Thus, it seems impossible in general to be able to distinguish a push
from a pop in a sound manner.  This problem is unavoidable in the
presence of VLAs. CompCert side-stepped this problem by disabling VLA support
and thus being able to statically bound the overall
stack size.  For a bounded stack, it becomes possible to distinguish pushes
from pops.  But it is not possible to bound the stack in the presence of a VLA.

Thus we propose that
the compiler must explicitly emit trustworthy
information that distinguishes a push from a pop.  Hence, {\tt isPush()}
can simply leverage this information emitted by the compiler.

As explained in \cref{sec:translationA}, in our work, we use
a threshold of $2^{31}-1$ on the update distance
to disambiguate stack pushes from pops.
We rely on manual verification for soundness.

\subsection{Running the Validator on the Bugs Identified by Compiler Fuzzing Tools (involving address-taken local variables)}
\label{app:emiBugs}
We discuss the
operation of our validator
on two bugs reported by Sun et. al. \cite{emi16} in GCC-4.9.2.
Each of these bugs is representative of a class of bugs found in
modern compilers, and it is interesting to see how the validator
behaves for each of them.
\subsubsection{Incorrect Hoisting of Local Variable Access}

\begin{figure}
\begin{subfigure}[c]{.45\textwidth}
\begin{myexamplesmall}
int a;
int main()
{
  int b = -1, d, e = 0;
  int f[2] = { 0 };
  unsigned short c = b;

  for (; e < 3; e++)
    for (d = 0; d < 2; d++)
      /* a=0, b=-1, c=65535,
         d={0,1}, e={0,1,2},
         f[0]=0 */
      if (a < 0)
        for (d = 0; d < 2; d++)
          if (f[c])
            break;
  return 0;
}
\end{myexamplesmall}
\end{subfigure}
\begin{subfigure}[c]{.45\textwidth}
\begin{myexamplesmall}
main:
        leal    4(
        andl    $-8, 
        movl    $3, 
        pushl   -4(
        pushl   
        movl    
        pushl   
        subl    $12, 
        movl    a, 
.L2:
        testl   
        movl    262124(
        #FIX: the above instruction should
        #be sunk to the beginning of .L3
        js      .L3
.L5:
        subl    $1, 
        jne     .L2
        addl    $12, 
        xorl    
        popl    
        popl    
        leal    -4(
        ret
.L3:
        #FIX: movl 262124(
        #     should be here
        testl   
        je      .L5
        jmp     .L3
\end{myexamplesmall}
\end{subfigure}
\caption{\label{fig:emi1}GCC-4.9.2 bug reproduced from Figure 1 of \cite{emi16}. The assembly code is generated using {\tt -O3} for 32-bit x86.}
\end{figure}
\begin{figure}
\begin{subfigure}[c]{.45\textwidth}
\begin{myexamplesmall}
struct S {
  int f0;
  int f1;
};
int b;
int main()
{
  struct S a[2] = { 0 };
  struct S d = { 0, 1 };
  for (b = 0; b < 2; b++) {
    a[b] = d;
    d = a[0];
  }
  return d.f1 != 1;
}
\end{myexamplesmall}
\end{subfigure}
\begin{subfigure}[c]{.45\textwidth}
\begin{myexamplesmall}
main:
        subl    $16, 
        movl    $1, 
        movl    $2, b
        addl    $16, 
        ret
\end{myexamplesmall}
\end{subfigure}
\caption{\label{fig:emi9f}GCC-4.9.2 bug reproduced from Figure 9f of \cite{emi16}. The assembly code is generated using {\tt -O3} for 32-bit x86.}
\end{figure}

\Cref{fig:emi1} shows the C code and its (incorrect) assembly implementation generated using {\tt gcc-4.9.2 -O3} for 32-bit x86
ISA.  The problem occurs because the ``{\tt movl 262124(\%ebp), \%edx}'' instruction (in the basic block
starting at {\tt .L2}) {\em reads} from the local variable at {\tt f[c]} but does that even if the branch
condition {\tt a < 0} (implemented by the {\tt testl} and {\tt js} instructions in the {\tt .L2} basic block)
evaluates to {\tt false}. Consider what happens when {\tt a = 0} --- the memory access to {\tt f[c]} is out-of-bounds and thus this
compilation could potentially trigger a segmentation fault (or other undefined behavior) in the assembly
code when the source code would expect an error-free execution.
The assembly code can be fixed by sinking the {\tt movl 262124(\%ebp), \%edx} to
the beginning of the basic block starting at {\tt .L3}.

Our validator is able to compute the equivalence proof for
the fixed program at unroll-factor three
or higher in less than five minutes.
The resulting product-graph has five nodes and five edges. The only loop in the
resulting product-graph correlates the second inner loop (on {\tt d})
with the path {\tt .L2 $\rightarrow$ .L5 $\rightarrow$ .L2} in the assembly
program. Both
the top-level loop (on {\tt e}) and the inner-most loop on {\tt d} are
completely unrolled in the product-graph (which is supported at unroll factors
of three or higher).

For
the original program, our validator fails to compute equivalence at all unroll
factors due to the violation of the (MAC) constraint
in the correlated
path for the second inner loop (that iterates on the variable {\tt d}), i.e.,
{\tt .L2 $\rightarrow$ .L5 $\rightarrow$ .L2} in the original (unfixed) assembly program.

\subsubsection{Incorrect Final Value of Local Variable of Aggregate Type after Loop Unrolling}
\label{sec:emi9f}

\Cref{fig:emi9f} shows the C code and its (incorrect) assembly implementation generated using {\tt gcc-4.9.2 -O3} for 32-bit x86
ISA.  The compiler
fully unrolls the loop in this program
to generate a straight-line sequence
of assembly instructions that
directly sets the return values to the statically-computed
constants.  However, the correct return value in the {\tt eax} register should be {\tt 0} while the generated assembly code
sets it to 1.  Our validator fails to compute equivalence for
this pair of programs because it is unable to prove the observable
equivalence of return values.

When the assembly
code is fixed to set {\tt eax} to {\tt 0} (instead of 1),
the validator is correctly
able to prove equivalence at unroll factors of three or higher.
The validator is able to compute equivalence for the fixed
assembly program within around two minutes and the resulting product-graph has six
nodes and six edges.  There are no cycles in the resulting product-graph, i.e., all the loops
are fully unrolled in the product-graph (at an unroll factor of three or higher).

\subsection{System components, trusted computing base, and overview of contributions}
\label{app:sysComponents}

\begin{figure}
	\center
	\includegraphics[width=\textwidth]{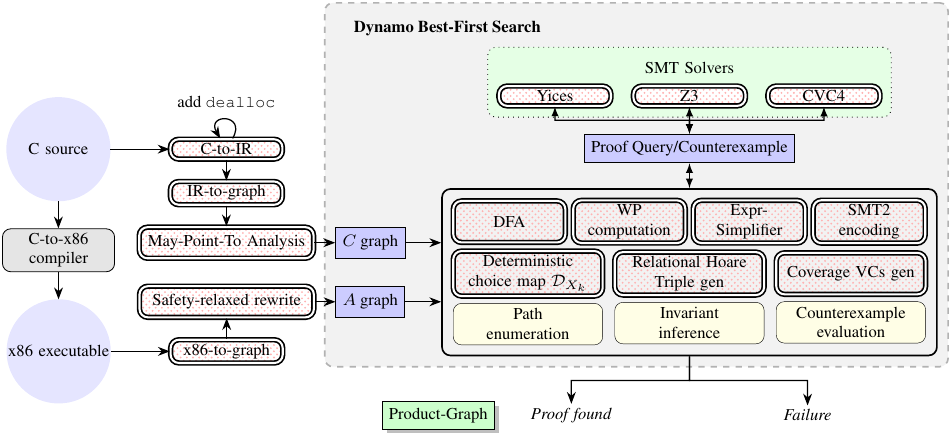}
	\caption{\label{fig:blocks}High-level components, including the TCB, of our counterexample-guided best-first search algorithm.}
\end{figure}

The soundness of a verification effort is critically dependent
on the correctness of the {\em trusted computing base} (TCB)
of the verifier.  \Cref{fig:blocks} shows the high-level
components and the flowchart of our best-first search algorithm,
where the components belonging to the TCB are marked with double
borders and a dotted background pattern.
Roughly speaking,
\toolName{} is around 400K Source Lines of Code (SLOC) in C/C++, of which
the TCB is around 70K SLOC.
Within the TCB, around 30K SLOC is
due to the expression handling and
simplification logic (Simplifier and
SMT encoding logic in \cref{fig:blocks}), 
less than 10K SLOC
for the graph representation
and the weakest-precondition logic (WP computation in \cref{fig:blocks}), 
less than 3K SLOC for the may-point-to analysis and simplifications, 
and less than 1K SLOC for a common dataflow analysis framework.
The x86-to-graph translation is around 18K SLOC of C code (for disassembly)
and 5K SLOC of OCaml code (for logic encoding), and the
IR-to-graph translation is around 3K SLOC of C++ code (including the addition of {\tt dealloc}).
We rely on the Clang framework for the C-to-IR translation --- one can
imagine replacing this with a verified frontend, such as CompCert's.
The modeling of the deterministic choice map,
Hoare triple and coverage 
verification conditions
is relatively simple (less than 1K SLOC total).

A soundness bug is a bug that causes the equivalence proof to succeed
incorrectly.
Over several person years of development, we have rarely
encountered soundness bugs in x86-to-graph or IR-to-graph --- it is unlikely that
both pipelines, written independently, have the same bug that results in an unsound
equivalence proof.  Similarly, it has been
rare to find a soundness bug in the SMT solvers --- we once discovered a bug
in {\tt Yices v2.6.1}, but it was easily caught because the other SMT solvers
disagreed with {\tt Yices}'s result. The {\tt Yices} bug was fixed upon our
reporting. For each proof obligation
generated by the equivalence checker as a Hoare triple, we
check the weakest-precondition and
SMT encoding logic by confirming that the counterexamples generated
by the SMT solver
satisfy
the pre- and post-conditions of the Hoare triple.
A rare soundness bug in the expression simplification,
may-point-to analysis, and graph translation
was relatively more common in the early stages of \toolName{}'s development.
As development matures, soundness bugs in an equivalence checker become scarce.
Compared to a
modern optimizing compiler, an equivalence checker's TCB is roughly 1000x smaller.



\subsection{Command-line used for compiling benchmarks in experiments}

\begin{enumerate}
  \item Programs in \cref{tab:benchmarks}
    \begin{itemize}
      \item Clang/LLVM v12.0.0
      \begin{lstlisting}[breaklines=true]
    clang -m32 -S -no-integrated-as -g -Wl,--emit-relocs -fdata-sections -g -fno-builtin -fno-strict-aliasing -fno-optimize-sibling-calls -fwrapv -fno-strict-overflow -ffreestanding -fno-jump-tables -fcf-protection=none -fno-stack-protector -fno-inline -fno-inline-functions -D_FORTIFY_SOURCE=0 -D__noreturn__=__no_reorder__ -I/usr/include/x86_64-linux-gnu/c++/9/32 -I/usr/include/x86_64-linux-gnu/c++/9  -mllvm -enable-tail-merge=false -mllvm -nomerge-calls -std=c11 -O3 <file.c> -o <file.s>
      \end{lstlisting}
      \item GCC v8.4.0
      \begin{lstlisting}[breaklines=true]
    gcc-8 -m32 -S -g -Wl,--emit-relocs -fdata-sections -g -no-pie -fno-pie -fno-strict-overflow -fno-unit-at-a-time -fno-strict-aliasing -fno-optimize-sibling-calls -fkeep-inline-functions -fwrapv -fno-reorder-blocks -fno-jump-tables -fno-caller-saves -fno-inline -fno-inline-functions -fno-inline-small-functions -fno-indirect-inlining -fno-partial-inlining -fno-inline-functions-called-once -fno-early-inlining -fno-whole-program -fno-ipa-sra -fno-ipa-cp -fcf-protection=none -fno-stack-protector -fno-stack-clash-protection -D_FORTIFY_SOURCE=0 -D__noreturn__=__no_reorder__ -fno-builtin-printf -fno-builtin-malloc -fno-builtin-abort -fno-builtin-exit -fno-builtin-fscanf -fno-builtin-abs -fno-builtin-acos -fno-builtin-asin -fno-builtin-atan2 -fno-builtin-atan -fno-builtin-calloc -fno-builtin-ceil -fno-builtin-cosh -fno-builtin-cos -fno-builtin-exp -fno-builtin-fabs -fno-builtin-floor -fno-builtin-fmod -fno-builtin-fprintf -fno-builtin-fputs -fno-builtin-frexp -fno-builtin-isalnum -fno-builtin-isalpha -fno-builtin-iscntrl -fno-builtin-isdigit -fno-builtin-isgraph -fno-builtin-islower -fno-builtin-isprint -fno-builtin-ispunct -fno-builtin-isspace -fno-builtin-isupper -fno-builtin-isxdigit -fno-builtin-tolower -fno-builtin-toupper -fno-builtin-labs -fno-builtin-ldexp -fno-builtin-log10 -fno-builtin-log -fno-builtin-memchr -fno-builtin-memcmp -fno-builtin-memcpy -fno-builtin-memset -fno-builtin-modf -fno-builtin-pow -fno-builtin-putchar -fno-builtin-puts -fno-builtin-scanf -fno-builtin-sinh -fno-builtin-sin -fno-builtin-snprintf -fno-builtin-sprintf -fno-builtin-sqrt -fno-builtin-sscanf -fno-builtin-strcat -fno-builtin-strchr -fno-builtin-strcmp -fno-builtin-strcpy -fno-builtin-strcspn -fno-builtin-strlen -fno-builtin-strncat -fno-builtin-strncmp -fno-builtin-strncpy -fno-builtin-strpbrk -fno-builtin-strrchr -fno-builtin-strspn -fno-builtin-strstr -fno-builtin-tanh -fno-builtin-tan -fno-builtin-vfprintf -fno-builtin-vsprintf -fno-builtin -I/usr/include/x86_64-linux-gnu/c++/9/32 -I/usr/include/x86_64-linux-gnu/c++/9  -fno-tree-tail-merge --param max -tail-merge-comparisons=0 --param max-tail-merge-iterations=0  -std=c11 -O3 <file.c> -o <file.s>
      \end{lstlisting}
      \item ICC v2021.8.0
      \begin{lstlisting}[breaklines=true]
    icc -m32 -D_Float32=__Float32 -D_Float64=__Float64 -D_Float32x=__Float32x -D_Float64x=__Float64x -S -g -Wl,--emit-relocs -fdata-sections -g -no-ip -fno-optimize-sibling-calls -fargument-alias -no-ansi-alias -falias -fno-jump-tables -fno-omit-frame-pointer -fno-strict-aliasing -fno-strict-overflow -fwrapv -fabi-version=1 -nolib-inline -inline-level=0 -fno-inline-functions -finline-limit=0 -no-inline-calloc -no-inline-factor=0 -fno-builtin-printf -fno-builtin-malloc -fno-builtin-abort -fno-builtin-exit -fno-builtin-fscanf -fno-builtin-abs -fno-builtin-acos -fno-builtin-asin -fno-builtin-atan2 -fno-builtin-atan -fno-builtin-calloc -fno-builtin-ceil -fno-builtin-cosh -fno-builtin-cos -fno-builtin-exp -fno-builtin-fabs -fno-builtin-floor -fno-builtin-fmod -fno-builtin-fprintf -fno-builtin-fputs -fno-builtin-frexp -fno-builtin-isalnum -fno-builtin-isalpha -fno-builtin-iscntrl -fno-builtin-isdigit -fno-builtin-isgraph -fno-builtin-islower -fno-builtin-isprint -fno-builtin-ispunct -fno-builtin-isspace -fno-builtin-isupper -fno-builtin-isxdigit -fno-builtin-tolower -fno-builtin-toupper -fno-builtin-labs -fno-builtin-ldexp -fno-builtin-log10 -fno-builtin-log -fno-builtin-memchr -fno-builtin-memcmp -fno-builtin-memcpy -fno-builtin-memset -fno-builtin-modf -fno-builtin-pow -fno-builtin-putchar -fno-builtin-puts -fno-builtin-scanf -fno-builtin-sinh -fno-builtin-sin -fno-builtin-snprintf -fno-builtin-sprintf -fno-builtin-sqrt -fno-builtin-sscanf -fno-builtin-strcat -fno-builtin-strchr -fno-builtin-strcmp -fno-builtin-strcpy -fno-builtin-strcspn -fno-builtin-strlen -fno-builtin-strncat -fno-builtin-strncmp -fno-builtin-strncpy -fno-builtin-strpbrk -fno-builtin-strrchr -fno-builtin-strspn -fno-builtin-strstr -fno-builtin-tanh -fno-builtin-tan -fno-builtin-vfprintf -fno-builtin-vsprintf -fno-builtin -D_FORTIFY_SOURCE=0 -D__noreturn__=__no_reorder__ -qno-opt-multi-version-aggressive -ffreestanding -unroll0 -no-vec -I/usr/include/x86_64-linux-gnu/c++/9/32 -I/usr/include/x86_64-linux-gnu/c++/9  -std=c11 -O3 <file.c> -o <file.s>
      \end{lstlisting}
    \end{itemize}

  \item TSVC
    \begin{lstlisting}[breaklines=true]
    clang -m32 -S -no-integrated-as -g -Wl,--emit-relocs -fdata-sections -g -fno-builtin -fno-strict-aliasing -fno-optimize-sibling-calls -fwrapv -fno-strict-overflow -ffreestanding -fno-jump-tables -fcf-protection=none -fno-stack-protector -fno-inline -fno-inline-functions -D_FORTIFY_SOURCE=0 -D__noreturn__=__no_reorder__ -I/usr/include/x86_64-linux-gnu/c++/9/32 -I/usr/include/x86_64-linux-gnu/c++/9 -msse4.2  -mllvm -enable-tail-merge=false -mllvm -nomerge-calls -std=c11 -O3 <file.c> -o <file.s>
    \end{lstlisting}

  \item bzip2 \texttt{O1-}
    \begin{lstlisting}[breaklines=true]
    clang bzip2.c -Wl,--emit-relocs -fno-unroll-loops -fdata-sections -fno-inline -fno-inline-functions -fcf-protection=none -fno-stack-protector -mllvm -enable-tail-merge=false -O1 -mllvm -nomerge-calls -mllvm -no-early-cse -mllvm -no-licm -mllvm -no-machine-licm -mllvm -no-dead-arg-elim -mllvm -no-ip-sparse-conditional-constant-prop -mllvm -no-dce-fcalls -mllvm -replexitval=never -std=c11 -fno-builtin -fno-strict-aliasing -fno-optimize-sibling-calls -fwrapv -fno-strict-overflow -ffreestanding -fno-jump-tables -D_FORTIFY_SOURCE=0 -D__noreturn__=__no_reorder__ -fno-builtin-printf -fno-builtin-malloc -fno-builtin-abort -fno-builtin-exit -fno-builtin-fscanf -fno-builtin-abs -fno-builtin-acos -fno-builtin-asin -fno-builtin-atan2 -fno-builtin-atan -fno-builtin-calloc -fno-builtin-ceil -fno-builtin-cosh -fno-builtin-cos -fno-builtin-exp -fno-builtin-fabs -fno-builtin-floor -fno-builtin-fmod -fno-builtin-fprintf -fno-builtin-fputs -fno-builtin-frexp -fno-builtin-isalnum -fno-builtin-isalpha -fno-builtin-iscntrl -fno-builtin-isdigit -fno-builtin-isgraph -fno-builtin-islower -fno-builtin-isprint -fno-builtin-ispunct -fno-builtin-isspace -fno-builtin-isupper -fno-builtin-isxdigit -fno-builtin-tolower -fno-builtin-toupper -fno-builtin-labs -fno-builtin-ldexp -fno-builtin-log10 -fno-builtin-log -fno-builtin-memchr -fno-builtin-memcmp -fno-builtin-memcpy -fno-builtin-memset -fno-builtin-modf -fno-builtin-pow -fno-builtin-putchar -fno-builtin-puts -fno-builtin-scanf -fno-builtin-sinh -fno-builtin-sin -fno-builtin-snprintf -fno-builtin-sprintf -fno-builtin-sqrt -fno-builtin-sscanf -fno-builtin-strcat -fno-builtin-strchr -fno-builtin-strcmp -fno-builtin-strcpy -fno-builtin-strcspn -fno-builtin-strlen -fno-builtin-strncat -fno-builtin-strncmp -fno-builtin-strncpy -fno-builtin-strpbrk -fno-builtin-strrchr -fno-builtin-strspn -fno-builtin-strstr -fno-builtin-tanh -fno-builtin-tan -fno-builtin-vfprintf -fno-builtin-vsprintf -fno-builtin -I/usr/include/x86_64-linux-gnu/c++/9/32 -I/usr/include/x86_64-linux-gnu/c++/9 -o bzip2.c.o -c -g -m32
    \end{lstlisting}

  \item bzip2 \texttt{O1}
    \begin{lstlisting}[breaklines=true]
    clang -m32 -S -no-integrated-as -g -Wl,--emit-relocs -fdata-sections -g -fno-builtin -fno-strict-aliasing -fno-optimize-sibling-calls -fwrapv -fno-strict-overflow -ffreestanding -fno-jump-tables -fcf-protection=none -fno-stack-protector -fno-inline -fno-inline-functions -D_FORTIFY_SOURCE=0 -D__noreturn__=__no_reorder__ -I/usr/include/x86_64-linux-gnu/c++/9/32 -I/usr/include/x86_64-linux-gnu/c++/9 -fno-unroll-loops -mllvm -enable-tail-merge=false -mllvm -nomerge-calls -std=c11 -O1 bzip2.c -o bzip2.s
    \end{lstlisting}

  \item bzip2 \texttt{O2}
    \begin{lstlisting}[breaklines=true]
    clang -m32 -S -no-integrated-as -g -Wl,--emit-relocs -fdata-sections -g -fno-builtin -fno-strict-aliasing -fno-optimize-sibling-calls -fwrapv -fno-strict-overflow -ffreestanding -fno-jump-tables -fcf-protection=none -fno-stack-protector -fno-inline -fno-inline-functions -D_FORTIFY_SOURCE=0 -D__noreturn__=__no_reorder__ -I/usr/include/x86_64-linux-gnu/c++/9/32 -I/usr/include/x86_64-linux-gnu/c++/9 -fno-unroll-loops -mllvm -enable-tail-merge=false -mllvm -nomerge-calls -std=c11 -O2 bzip2.c -o bzip2.s
    \end{lstlisting}
\end{enumerate}

\clearpage

\subsection{More details on {\tt bzip2} experiments}
\label{app:bzip2}


\begin{figure}
\begin{subfigure}[t]{.48\textwidth}
  \includegraphics[width=\textwidth]{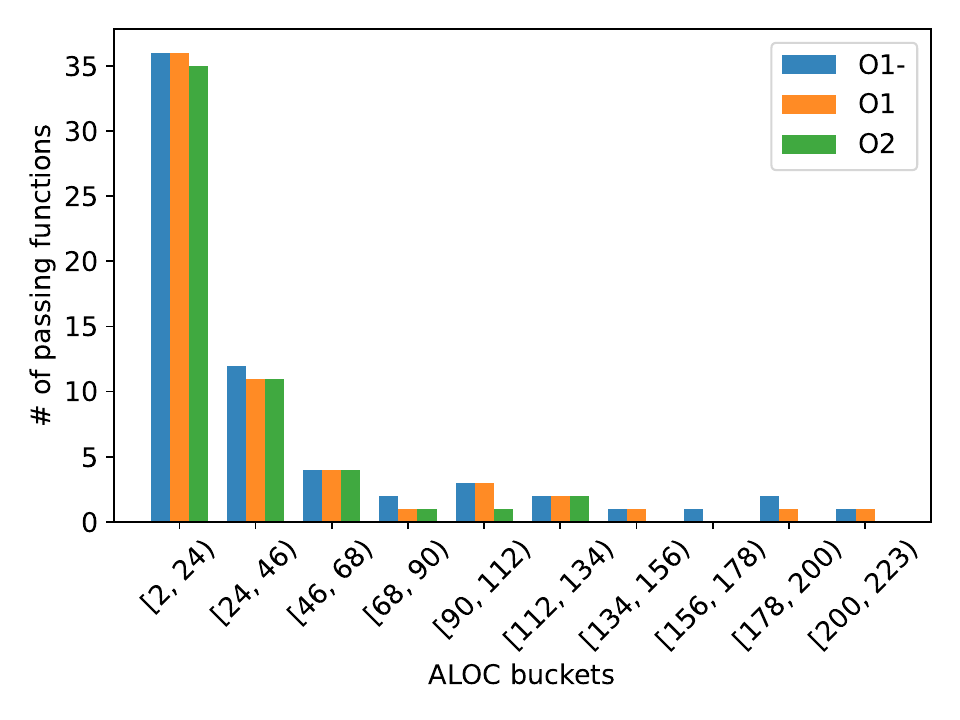}
\caption{\label{fig:histALOC} Histogram for the number of equivalence check passing functions in an ALOC range for {\tt bzip2} functions compiled with three different optimization levels: {\tt O1-}, {\tt O1}, {\tt O2}}
\end{subfigure}
\begin{subfigure}[t]{.48\textwidth}
  \includegraphics[width=\textwidth]{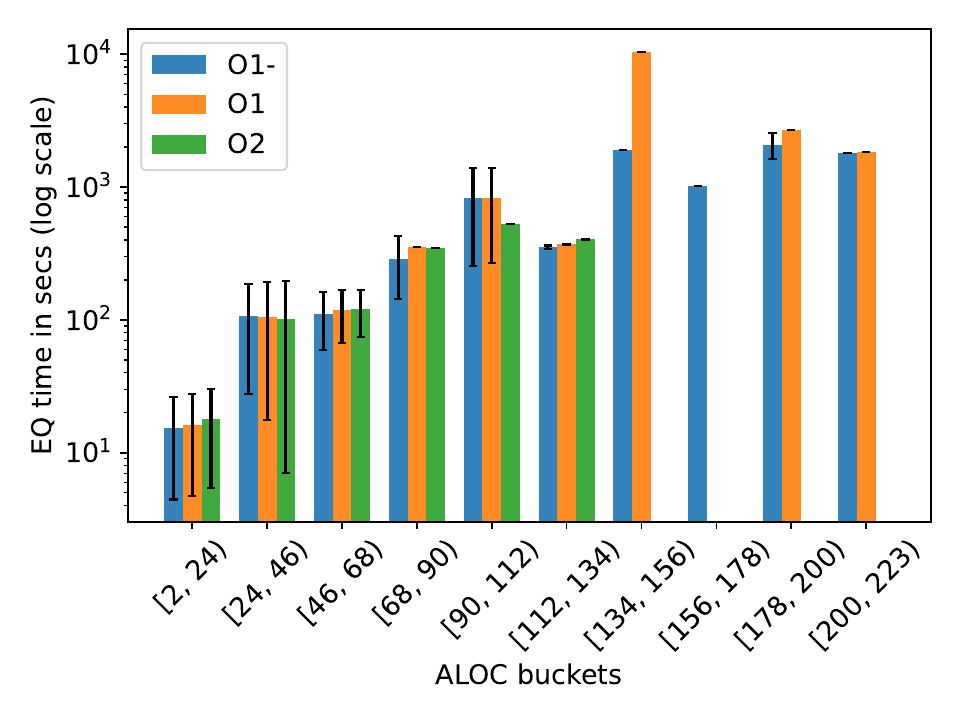}
\caption{\label{fig:histALOCvsEqT} Graph for ALOC range vs mean equivalence time (in seconds) for {\tt bzip2} functions compiled with three different optimization levels: {\tt O1-}, {\tt O1}, {\tt O2}.  Y-axis is logarithmically scaled.  The black lines indicate standard deviation.}
\end{subfigure}
\begin{subfigure}[t]{.48\textwidth}
  \includegraphics[width=\textwidth]{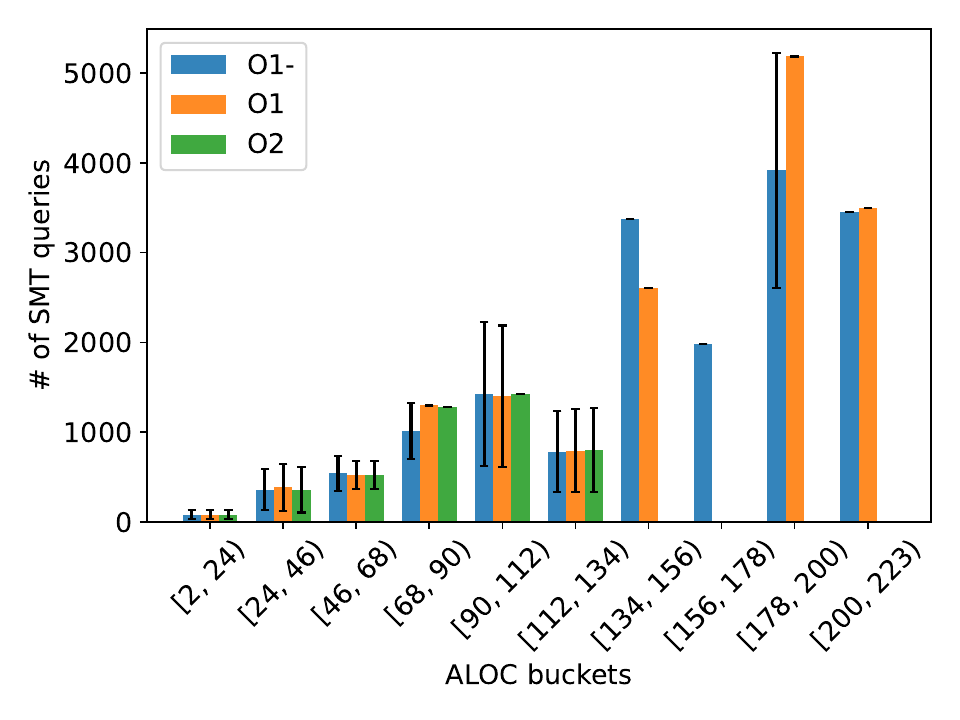}
\caption{\label{fig:histALOCvsNumQ} Graph for ALOC range vs mean number of SMT queries made for {\tt bzip2} functions compiled with three different optimization levels: {\tt O1-}, {\tt O1}, {\tt O2}.  The black lines indicate standard deviation.}
\end{subfigure}
\end{figure}

\Cref{fig:histALOC} shows the histogram for the number of equivalence check passing functions in a given ALOC (assembly lines of code) range for SPEC CPU2000's {\tt bzip2} functions compiled with three different optimization levels: {\tt O1-}, {\tt O1}, and {\tt O2} (the optimization levels are discussed in \cref{sec:experiments}).
The number of successful equivalence check passes start falling with the increase in optimization level and ALOC: at {\tt -O2}, we do not get any passes beyond 134 ALOCs.

\Cref{fig:histALOCvsEqT,fig:histALOCvsNumQ} show the mean equivalence time (in seconds) and mean number of SMT queries made by {\tt bzip2} functions grouped by ALOC ranges for each of the three optimization levels.  A missing bar indicates that no equivalence check passing function lies in that range for that particular optimization level.
The time taken for successful passes is almost similar across all three optimization levels (with the exception of a single function in the [134, 156) range).
A similar pattern is observed for the number of SMT queries made.

\Cref{tab:bzip2_full} shows the full list of {\tt bzip2} functions with their assembly lines of code (ALOC) and equivalence check times (in seconds) for the three Clang/LLVM compiler configurations ({\tt O1-}, {\tt O1}, {\tt O2}).

\begin{footnotesize}
\begin{longtable}{lllllll}
    \toprule
    Name & \multicolumn{3}{c}{ALOC} & \multicolumn{3}{c}{Equivalence time (seconds)} \\
    \midrule
         & {\tt O1-} & {\tt O1} & {\tt O2} & {\tt O1-} & {\tt O1} & {\tt O2}\\
    \midrule

{\tt allocateCompressStructures}  & 47 & 47 & 51 & 43.2 & 47.2 & 50.6 \\
{\tt badBGLengths}  & 13 & 13 & 13 & 23.2 & 25.3 & 28.5 \\
{\tt badBlockHeader}  & 13 & 13 & 13 & 21.9 & 23.0 & 27.3 \\
{\tt bitStreamEOF}  & 13 & 13 & 13 & 22.7 & 21.2 & 26.1 \\
{\tt blockOverrun}  & 13 & 13 & 13 & 23.5 & 25.3 & 27.2 \\
{\tt bsFinishedWithStream}  & 22 & 22 & 25 & 24.0 & 23.2 & 26.0 \\
{\tt bsGetInt32}  & 4 & 4 & 4 & 6.0 & 6.0 & 7.2 \\
{\tt bsGetIntVS}  & 6 & 6 & 6 & 7.3 & 8.1 & 9.8 \\
{\tt bsGetUChar}  & 5 & 5 & 5 & 6.0 & 7.4 & 7.1 \\
{\tt bsGetUInt32}  & 24 & 24 & 24 & 13.3 & 16.9 & 20.6 \\
{\tt bsPutInt32}  & 6 & 6 & 6 & 6.6 & 7.6 & 9.0 \\
{\tt bsPutIntVS}  & 6 & 6 & 6 & 9.6 & 9.8 & 11.1 \\
{\tt bsPutUChar}  & 8 & 8 & 8 & 7.8 & 8.4 & 10.9 \\
{\tt bsPutUInt32}  & 32 & 32 & 32 & 30.8 & 30.7 & 36.4 \\
{\tt bsR}  & 46 & 46 & 46 & 42.8 & 42.8 & 51.0 \\
{\tt bsSetStream}  & 9 & 9 & 9 & 3.7 & 3.7 & 4.7 \\
{\tt bsW}  & 31 & 32 & 36 & 34.4 & 33.8 & 40.0 \\
{\tt cadvise}  & 6 & 6 & 6 & 20.3 & 20.2 & 26.9 \\
{\tt cleanUpAndFail}  & 48 & 46 & 46 & 187.9 & 179.0 & 227.8 \\
{\tt compressOutOfMemory}  & 14 & 14 & 14 & 33.6 & 33.0 & 42.6 \\
{\tt compressStream}  & 124 & 124 & 124 & 342.0 & 369.2 & 402.6 \\
{\tt compressedStreamEOF}  & 16 & 16 & 16 & 21.5 & 24.0 & 27.6 \\
{\tt crcError}  & 15 & 15 & 15 & 31.4 & 30.9 & 36.6 \\
{\tt debug\_time}  & 2 & 2 & 2 & 1.6 & 1.8 & 2.0 \\
{\tt doReversibleTransformation}  & 48 & 49 & 47 & 93.3 & 102.9 & 129.2 \\
{\tt fullGtU}  & 120 & 113 & 113 & 363.0 & 375.4 & 404.4 \\
{\tt generateMTFValues}  & 144 & 144 & 166 & 1909.3 & 10441.4 & \xmark \\
{\tt getAndMoveToFrontDecode}  & 299 & 296 & 305 & \xmark & \xmark & \xmark \\
{\tt getFinalCRC}  & 3 & 3 & 3 & 1.9 & 2.2 & 2.3 \\
{\tt getGlobalCRC}  & 2 & 2 & 2 & 2.1 & 1.8 & 2.2 \\
{\tt getRLEpair}  & 72 & 73 & 73 & 144.6 & \xmark & \xmark \\
{\tt hbAssignCodes}  & 37 & 37 & 37 & 296.4 & 325.4 & 330.7 \\
{\tt hbCreateDecodeTables}  & 94 & 94 & 107 & 1610.3 & 1622.3 & \xmark \\
{\tt hbMakeCodeLengths}  & 261 & 249 & 292 & \xmark & \xmark & \xmark \\
{\tt indexIntoF}  & 23 & 23 & 23 & 30.6 & 32.0 & 41.2 \\
{\tt initialiseCRC}  & 2 & 2 & 2 & 2.3 & 1.9 & 2.3 \\
{\tt ioError}  & 15 & 15 & 15 & 17.9 & 18.3 & 23.1 \\
{\tt loadAndRLEsource}  & 96 & 96 & 96 & 336.2 & 366.7 & \xmark \\
{\tt main}  & 190 & 132 & 183 & \xmark & \xmark & \xmark \\
{\tt makeMaps}  & 16 & 16 & 16 & 14.5 & 15.8 & 17.9 \\
{\tt med3}  & 14 & 14 & 14 & 3.8 & 4.1 & 4.2 \\
{\tt moveToFrontCodeAndSend}  & 9 & 9 & 9 & 15.1 & 16.0 & 15.5 \\
{\tt mySIGSEGVorSIGBUScatcher}  & 35 & 23 & 23 & 178.3 & \xmark & \xmark \\
{\tt mySignalCatcher}  & 10 & 10 & 10 & 16.9 & 18.9 & 25.2 \\
{\tt panic}  & 13 & 13 & 13 & 32.3 & 36.1 & 30.3 \\
{\tt qSort3}  & 297 & 297 & 363 & \xmark & \xmark & \xmark \\
{\tt randomiseBlock}  & 35 & 37 & 38 & 155.1 & 177.9 & \xmark \\
{\tt recvDecodingTables}  & 199 & 193 & 295 & 2539.8 & 2690.8 & \xmark \\
{\tt sendMTFValues}  & 691 & 692 & 832 & \xmark & \xmark & \xmark \\
{\tt setDecompressStructureSizes}  & 79 & 79 & 81 & 426.1 & 351.8 & 345.0 \\
{\tt setGlobalCRC}  & 3 & 3 & 3 & 2.8 & 3.0 & 3.1 \\
{\tt showFileNames}  & 8 & 8 & 8 & 15.6 & 18.4 & 17.0 \\
{\tt simpleSort}  & 194 & 185 & 215 & \xmark & \xmark & \xmark \\
{\tt sortIt}  & 409 & 406 & 421 & \xmark & \xmark & \xmark \\
{\tt spec\_compress}  & 11 & 11 & 11 & 16.0 & 16.0 & 16.0 \\
{\tt spec\_getc}  & 29 & 29 & 29 & 40.6 & 43.7 & 46.6 \\
{\tt spec\_init}  & 48 & 49 & 49 & 120.4 & 134.1 & 123.7 \\
{\tt spec\_initbufs}  & 9 & 9 & 9 & 11.0 & 9.6 & 11.0 \\
{\tt spec\_load}  & 110 & 105 & 105 & 512.4 & 499.8 & 524.1 \\
{\tt spec\_putc}  & 29 & 29 & 29 & 52.5 & 51.2 & 57.3 \\
{\tt spec\_read}  & 44 & 46 & 46 & 133.5 & 131.2 & 166.7 \\
{\tt spec\_reset}  & 16 & 16 & 16 & 21.8 & 20.1 & 23.4 \\
{\tt spec\_rewind}  & 5 & 5 & 5 & 3.4 & 3.3 & 3.5 \\
{\tt spec\_uncompress}  & 10 & 10 & 10 & 15.0 & 16.4 & 14.3 \\
{\tt spec\_ungetc}  & 45 & 48 & 48 & 176.1 & 188.2 & 183.7 \\
{\tt spec\_write}  & 34 & 34 & 34 & 73.3 & 77.0 & 73.8 \\
{\tt testStream}  & 195 & 194 & 196 & 1619.5 & \xmark & \xmark \\
{\tt uncompressOutOfMemory}  & 14 & 14 & 14 & 48.6 & 50.5 & 45.8 \\
{\tt uncompressStream}  & 169 & 174 & 176 & 1010.5 & \xmark & \xmark \\
{\tt undoReversibleTransformation\_fast}  & 221 & 223 & 248 & 1794.0 & 1836.8 & \xmark \\
{\tt undoReversibleTransformation\_small}  & 273 & 271 & 281 & \xmark & \xmark & \xmark \\
{\tt vswap}  & 27 & 27 & 27 & 63.3 & 61.1 & 54.0 \\

   \bottomrule

   \caption{\label{tab:bzip2_full}List of {\tt bzip2} functions with their assembly lines of code (ALOC) and equivalence check times (in seconds) for the three Clang/LLVM compiler configurations ({\tt O1-}, {\tt O1}, {\tt O2}).\\\xmark{} denotes equivalence check failure for that function-compiler pair.}
\end{longtable}
\end{footnotesize}

\clearpage

\subsection{Using a translation validator for checking alignment}
\label{app:alignmentCheck}
A translation validator has more applications than just compiler validation.
For example, compilers often use higher alignment factors than
those necessitated by the C standard, e.g., the ``{\tt long long}'' type is
often aligned at eight-byte boundaries to reduce cache misses.
This is easily checked by changing the well-formedness condition
for alignment (\cref{sec:trans}) to reflect the higher alignment
value.  Using our first set of benchmarks (containing different programming
patterns),
we validated that all the three production compilers ensure that {\tt long long}
variables are eight-byte aligned for these benchmarks.  In contrast, using
the validator, we found
that the ACK compiler \cite{ack_compiler} only ensures four-byte
alignment.

\subsection{Full source code for discussed benchmarks}
\label{app:benchmarks_source_code}
We provide the full source code of the benchmarks from \cref{tab:benchmarks} in \cref{fig:app_vilcce,fig:app_rod,fig:app_acll,fig:app_mp} below (the source code for {\tt fib} is already listed in \cref{fig:example1c}).

\noindent
The loops of validated {\tt bzip2} benchmarks are shown in \cref{fig:bzip2_src}.

\newpage

\begin{figure}
\begin{subfigure}[t]{.45\textwidth}
\begin{myexamplesmall}
int vsl(int n)
{
  if (n <= 0)
    return 0;
  int v[n];
  for (int i = 0; i < n; ++i) {
    v[i] = i*(i+1);
  }
  return v[0]+v[n-1];
}
\end{myexamplesmall}
\end{subfigure}
\begin{subfigure}[t]{.45\textwidth}
\begin{myexamplesmall}
int vcu(int n, int k)
{
   int a[n];
   if (k > 0 && k <= n) {
      a[0] = 0;
      a[k-1] = 10;
      return a[0];
   }
   return 0;
}
\end{myexamplesmall}
\end{subfigure}

\vfill

  \begin{subfigure}[t]{.45\textwidth}
\begin{myexamplesmall}
// substitute ~$\mathscr{N}$~ with 1, 2, 3
// to obtain vil1, vil2, vil3
int vil~$\mathscr{N}$~(unsigned n)
{
  int r = 0;
  for (unsigned i = 1; i < n; ++i) {
    int v1[4*i], v2[4*i], ~{\ldots}~, v~$\mathscr{N}$~[4*i];
    r += foo~$\mathscr{N}$~(v1,v2,~{\ldots}~,v~$\mathscr{N}$~, i);
  }
  return r;
}
\end{myexamplesmall}
\end{subfigure}

\vfill

\begin{subfigure}[t]{.45\textwidth}
\begin{myexamplesmall}
int vilcc(int n)
{
  int ret = 0;
  int i = 1;
  while (i < n) {
    char t[i];
    if (init(t, i) < 0)
      continue;
    ret += t[i-1];
    ++i;
  }
  return ret;
}
\end{myexamplesmall}
\end{subfigure}
\begin{subfigure}[t]{.45\textwidth}
\begin{myexamplesmall}
int vilce(int n)
{
  int ret = 0;
  int i = 1;
  while (i < n) {
    char t[i];
    if (init(t, i) < 0)
      break;
    ret += t[i-1];
    ++i;
  }
  return ret;
}
\end{myexamplesmall}
\end{subfigure}

\caption{\label{fig:app_vilcce}Benchmarks with VLAs.}
\end{figure}

\begin{figure}
\begin{subfigure}[t]{.45\textwidth}
\begin{myexamplesmall}
#include <alloca.h>
int as(int n)
{
  if (n < 1) {
    return 0;
  }
  int* p = alloca(n*sizeof(n));
  for (int i = 0; i < n; ++i) {
    p[i] = i*i;
  }
  return p[0]*p[n-1];
}
\end{myexamplesmall}
\end{subfigure}
\begin{subfigure}[t]{.45\textwidth}
\begin{myexamplesmall}
int ac(char* s, int fd, int* a)
{
  int n;
  if (!s || (n = strlen(s)) <= 0)
    return 0;
  if (!a) {
    a = alloca(sizeof(int)*n);
  }
  for (int i = 0; i < n; ++i) {
    a[i] = s[i] + 32;
  }
  return write(fd, a, n);
}
\end{myexamplesmall}
\end{subfigure}
\begin{subfigure}[t]{.45\textwidth}
\begin{myexamplesmall}
#include <alloca.h>
int n;
int all()
{
  typedef struct lln {
    int data;
    struct lln* next;
  } Node;
  if (n > 4096)
    return 0;
  Node* hd = 0;
  for (int i = 0; i < n; ++i) {
    Node* t = alloca(sizeof(Node));
    t->data = next_data();
    t->next = hd; hd = t;
  }
  Node* t = hd;
  int ret = 0;
  while (t != 0) {
    ret += t->data;
    t = t->next;
  }
  return ret;
}
\end{myexamplesmall}
\end{subfigure}
\caption{\label{fig:app_acll}Benchmarks with use of {\tt alloca}}
\end{figure}

\begin{figure}
\begin{subfigure}[c]{.48\textwidth}
\begin{myexamplesmall}
const int cts[] = { 0x66, 0x65, 0x67, 0x60 };
int rod(int n)
{
  char zz[] = "0123456789";
  printf("Scanning 
  char t[n];
  scanf("
  int ret = 0;
  for (int i = 0, j = 0; i < n; ++i) {
    printf("Round #...\n", i);
    zz[j] ^= t[i];
    if (++j >= sizeof zz) j = 0;
  }
  ret += zz[0] + cts[n
  printf("Returning 
  return ret;
}
\end{myexamplesmall}
\end{subfigure}
\caption{\label{fig:app_rod}{\tt rod} with mixed use of VLA and address-taken variable}
\end{figure}

\begin{figure}
\begin{subfigure}[c]{.48\textwidth}
\begin{myexamplesmall}
#include <stdarg.h>
void minprintf(char *fmt, ...)
{
  va_list ap;
  char *p, *sval;
  int ival;
  va_start(ap, fmt);
  for (p = fmt; *p; p++) {
    check(p);
    if (*p != '
      putchar(*p);
      continue;
    }
    switch (*++p) {
      case 'd':
        ival = va_arg(ap, int);
        print_int(ival);
        break;
      case 's':
        for (sval = va_arg(ap, char*); *sval; sval++)
          putchar(*sval);
        break;
      default:
        break;
    }
  }
  va_end(ap);
}
\end{myexamplesmall}
\end{subfigure}
\caption{\label{fig:app_mp}{\tt minprintf} with variable argument list.  Adapted from K\&R}
\end{figure}

\begin{figure}
\begin{subfigure}[b]{.44\textwidth}
\begin{subfigure}[b]{\textwidth}
\begin{myexamplescriptsz}
void recvDecodingTables() {
  unsigned char inUse16[16];
  for (~$\ldots$~) { /* write:inUse16 ~$\ldots$~ */ }
  for (~$\ldots$~) { /* ~$\ldots$~ */ }
  for (~$\ldots$~) { /* read:inUse16 ~$\ldots$~*/
    for (~$\ldots$~) { /* ~$\ldots$~ */ }
  }
  for (~$\ldots$~) { while (~$\ldots$~) {/* ~$\ldots$~ */ }  }
  { unsigned char pos[6];
    for (~$\ldots$~) { /* write:pos ~$\ldots$~ */ }
    for (~$\ldots$~) { /* read,write:pos ~$\ldots$~ */
      while (~$\ldots$~) {/* ~$\ldots$~ */ }
    }
  }
  for (~$\ldots$~) {
    for (~$\ldots$~) {
      while (~$\ldots$~) {/* ~$\ldots$~ */ }
    }
  }
  for (~$\ldots$~) { for (~$\ldots$~) { /* ~$\ldots$~ */ } }
}
\end{myexamplescriptsz}
\caption{\label{fig:rDT}Loops in {\tt recvDecodingTables()}}
\end{subfigure}
\\
\end{subfigure}
\begin{subfigure}[b]{.52\textwidth}
\begin{subfigure}[b]{\textwidth}
\begin{myexamplescriptsz}
void generateMTFValues() {
  unsigned char yy[256];
  for (~$\ldots$~) { /* ~$\ldots$~ */ }
  for (~$\ldots$~) { /* write:yy ~$\ldots$~ */ }
  for (~$\ldots$~) { /* read,write:yy ~$\ldots$~ */
    while (~$\ldots$~) {/* ~$\ldots$~ */ }
    while (~$\ldots$~) {/* ~$\ldots$~ */ }
  }
  while (~$\ldots$~) {/* ~$\ldots$~ */ }
}
\end{myexamplescriptsz}
\caption{\label{fig:gMV}Loops in {\tt generateMTFValues()}}
\end{subfigure}
\begin{subfigure}[b]{\textwidth}
\begin{myexamplescriptsz}
void undoReversibleTransformation_fast() {
  int cftab[257];
  for (~$\ldots$~) { /* write:cftab ~$\ldots$~ */ }
  for (~$\ldots$~) { /* read,write:cftab ~$\ldots$~ */ }
  for (~$\ldots$~) { /* read,write:cftab ~$\ldots$~ */ }
  if (~$\ldots$~) { while (~$\ldots$~) for (~$\ldots$~) { /* ~$\ldots$~ */ } }
  else { while (~$\ldots$~) for (~$\ldots$~) { /* ~$\ldots$~ */ } }
}
\end{myexamplescriptsz}
\caption{\label{fig:uRTf}Loops in {\tt undoReversibleTransformation\_fast()}}
\end{subfigure}
\\
\end{subfigure}
\caption{\label{fig:bzip2_src}Structure of {\tt bzip2}'s functions}
\end{figure}

\end{document}